\documentclass[a4paper,11pt,leqno]{amsart}
\usepackage{mathtools}
\usepackage{a4wide,array}
\usepackage{cite}
\usepackage{hyperref}
\usepackage{amsmath,amssymb,amsthm}
\usepackage[english]{babel}
\usepackage[applemac]{inputenc}
\usepackage{stmaryrd}
\usepackage[cyr]{aeguill}
\usepackage{esint}
\usepackage[usenames,dvipsnames]{color}
\usepackage{xargs}
\usepackage{enumitem}
\usepackage[normalem]{ulem}
\setlist{nosep} 
\usepackage{xcolor}

\newcommand{\N}{\mathbb{N}}

\newcommand{\R}{\mathbb{R}}
\newcommand{\C}{\mathbb{C}}

\newtheorem{theo}{Theorem}
\newtheorem{prop}{Proposition}[section]
\newtheorem{lem}[prop]{Lemma}
\newtheorem{coro}[prop]{Corollary}

\newtheorem{defi}[prop]{Definition}

\theoremstyle{plain}

\numberwithin{equation}{section}

\def\t0{\rightarrow 0} 
\def\ti{\rightarrow \infini} 

\newcommand{\f}{\frac}
\newcommand{\infini}{\infty}
\newcommand{\ep}{\varepsilon}

\newcommand{\hal}{\frac{1}{2}}
\renewcommand{\(}{\left(}
\renewcommand{\)}{\right)}

\def\div{\mathrm{div} \, } 
\def\1{\mathbf{1}} 


\def \mc{\mathcal}
\def \p{\partial}
\def \ep{\epsilon}
\def\nab{\nabla}
\renewcommand{\epsilon}{\varepsilon}

\def \carr{C} 
\def \Nn{\mc{N}}
\def \D{\mc{D}}

\def\Lip{\mathrm{Lip}_1} 
\def \Lploc{L^p_{\mathrm{loc}}}

\def\dconfig{d_{\config}}


\def \PNbeta{\P_{N, \beta}} 
\def \PgN2{\mathbf{P}_{N,2}} 

\def \bQpN{\bar{\mathfrak{Q}}_{N}} 

\def\Esp{\mathbf{E}} 

\def\lc{\left\langle}
\def\rc{\right\rangle}

\def \fbarbeta{\overline{\mathcal{F}}_{\beta}} 

\def \A{\mathcal{A}} 
\def \conf{\mathrm{Conf}} 
\def \Eloc{E^{\mathrm{loc}}}

\def \K{\mathcal{K}}

\def \dist{d}

\def \dist{\mathrm{dist}}


\def\XXint#1#2#3{{\setbox0=\hbox{$#1{#2#3}{\int}$}
     \vcenter{\hbox{$#2#3$}}\kern-.5\wd0}}

\def \Pelec{\Pst^{\mathrm{elec}}}
\def \bPelec{\bPst^{\mathrm{elec}}}

\def \WN{w_N} 
\def \XN{\vec{X}_N} 
\def \XpN{\vec{X}'_N}
\def \YN{\vec{Y}_N}

\def \YpN{\vec{Y}'_N}
\def \Np{N_{+}} 
\def \Nm{N_{-}}

\def \Box{\Lambda} 
\def \PNbeta{\mathbb{P}^{\beta}_{N}} 
\def \ZNbeta{Z_{N,\beta}} 



\def\config{\mathcal{X}^0} 

\def\configs{\mathcal{X}} 
\def \C{\mathcal{C}}
\def \Cs{\mathcal{C}}
\def \Cp{\mathcal{C}^+}
\def \Cm{\mathcal{C}^-}

\def\probas{\mathcal{P}} 
\def\Pst{P} 
\def\bPst{\bar{P}} 
\def \Pelec{P^{\mathrm{elec}}} 

\def\ro{r}

\def\cW{\mathcal{W}^o}
\def\dW{\mathbb{W}^o}
\def\tdW{\widetilde{\mathbb{W}}^o}
\def\bdW{\overline{\mathbb{W}}^o}

\def\Detat{\mathbb{W}^{\star}_{\tau}}
\def\Deta{\mathbb{W}^{\star}}
\def\tDeta{\widetilde{\mathbb{W}}^{\star}}
\def\bDeta{\overline{\mathbb{W}}^{\star}}
\def\tDetat{\widetilde{\mathbb{W}}^{\star}_{\tau}}
\def\bDetat{\overline{\mathbb{W}}^{\star}_{\tau}}

\def\mup{\mu^+}

\def\mum{\mu^-}

\def\muNp{\mu^+_N}
\def\muNm{\mu^-_N}
\def\rhop{\rho^+}
\def\rhom{\rho^-}

\def \Pst{P} 
\def \bPst{\bar{P}} 
\def \Psts{P} 
\def \bPsts{\bar{P}} 
\def \bPstsp{\bar{P}^{+}}
\def \bPstsm{\bar{P}^{-}}
\def \bPN{\bar{P}_N}

\def \Ent{\mathrm{Ent}}   
\def \ERS{\mathsf{ent}} 
\def \bERS{\overline{\mathsf{ent}}} 

\def \Leb{\mathbf{Leb}}
\def \Poisson{\mathbf{\Pi}}
\def \B{\mathbf{B}} 

\def\pb{p}
\def\qb{q}
\def\iN{i_N}

\def\H{H}
\def \nuppN{\hat{\nu}^+_N}
\def \nupmN{\hat{\nu}^-_N}
\def \nupN{\nu^+_N}
\def \numN{\nu^-_N}

\def \indic{\mathbf{1}}
\def \Poissons{\Poisson^{s}}

\def \tWs{\widetilde{\mathbb{W}}}
\def \bWs{\overline{\mathbb{W}}}
\def \bPstsx{\bar{P}^{x}}
\def \VNr{V_{N,r}}
\def \Er{E_r}
\def \Cscr{\C^{\rm{scr}}}
\def \Escr{E^{\rm{scr}}}
\def \Escrr{\Escr_{r}}

\def \bPsN{\bar{P}_{N}}
\def \VpN{V'_{N,r}}
\def\VN{V'_N}
\def \KNbeta{K_{N,\beta}}
\def \Phiscr{\Phi^{\rm{scr}}}

\def \Ccp{\mathbf{C}^{+}}
\def \Ccm{\mathbf{C}^{-}}
\def \Ccs{\mathbf{C}}
\def \BM{\dot{\mathfrak{M}}}
\def \HM{\widehat{\mathfrak{M}}}
\def \Amod{A^{\rm{mod}}}
\def \Cmods{\C^{\rm{mod}}}
\def \Cmodp{\C^{\rm{mod}, +}}
\def \Cmodm{\C^{\rm{mod}, -}}
\def \Emod{E^{\rm{mod}}}

\def \mNR{m_{N,R}}

\def \Cabss{\C^{\rm{abs,s}}}

\def \Aabs{A^{\rm{abs}}}
\def \Phiscr{\Phi^{\rm{scr}}}

\def \Phimod{\Phi^{\rm{mod}}}

\def \FR{F_{R}}
\def \OR{\mc{O}_{R}}
\def \G{\Gamma}

\def \Pelec{P^{\rm{elec}}}
\def \bPelec{\bar{P}^{\rm{elec}}}
\def \conf{\rm{Conf }}
\def \bQpN{\bar{\mathfrak{Q}}^{\beta}_{N}} 
\def \probinv{\probas_{\rm{inv}}} 
\def \Elec{\mathsf{Elec}}
\def \bPgotNb{\overline{\mathfrak{P}}^{\beta}_N}
\def \Pstsp{P^{+}}
\def \Pstsm{P^{-}}
\def \fbarbeta{\overline{\mathcal{F}}_{\beta}} 
\def \fbsc{\overline{\mathcal{F}}^{\rm{sc}}_{\beta}}
\def \SN{\vec{S}_{2N}}
\def \bQ{\bar{Q}}

\newcommand{\np}{n^{+}}
\newcommand{\nm}{n^{-}}
\newcommand{\nintp}{n^{+, \epsilon}_{\mathrm{int}}}
\newcommand{\nintm}{n^{-, \epsilon}_{\mathrm{int}}}
\newcommand{\nint}{n^{\epsilon}_{\mathrm{int}}}

\newcommand{\Pru}{\mathrm{Pr}}


\def \BNtau{\Box_{N,\tau}^n}

\title{Large Deviations for the Two-Dimensional Two-Component Plasma}
\author{Thomas Lebl\'e, Sylvia Serfaty, Ofer Zeitouni, with an appendix by Wei Wu}
\address[Thomas Leblé]{Sorbonne Universit\'es, UPMC Univ. Paris 06, CNRS, UMR 7598, Laboratoire Jacques-Louis Lions, 4, place Jussieu 75005, Paris, France.}
\email{leble@ann.jussieu.fr}
\address[Sylvia Serfaty]{Sorbonne Universit\'es, UPMC Univ. Paris 06, CNRS, UMR 7598, Laboratoire Jacques-Louis Lions, 4, place Jussieu 75005, Paris, France.
 \newline \& Institut Universitaire de France \newline \& 
Courant Institute, New York University, 251 Mercer st, New York, NY 10012, USA.} 
\email{serfaty@ann.jussieu.fr}

\address[Ofer Zeitouni]{Department of Mathematics,
Weizmann Institute of Science, POB 26, Rehovot 76100, Israel
\newline \& Courant Institute, New York University, 251 Mercer Street, New York,
NY 10012, USA.}
\email{ofer.zeitouni@weizmann.ac.il}
\address[Wei Wu]{Courant Institute, New York University, 
251 Mercer st, New York, NY 10012, USA \newline
\& New York University - Shanghai, 1555 Century Ave, Pudong Shanghai, CN 200122,
China.} 
\email{weiwu@cims.nyu.edu}
\begin{document}

\begin{abstract}
We derive a large deviations principle 
for the two-dimensional two-component plasma in a box. 
As a consequence,
we obtain a variational representation for the free energy, and also show
that the macroscopic empirical measure of either positive or negative charges 
converges to the uniform measure. An appendix,
 written by Wei Wu, discusses applications to the supercritical complex Gaussian
multiplicative chaos.
\end{abstract}
\maketitle

\section{Introduction}
\subsection{General setting}
The two-dimensional two-component plasma is a standard ensemble of statistical mechanics, in which $N$ particles of positive charge and $N$ particles of negative charge  interact logarithmically in the plane, cf \cite{Frohlich,GunPan,DeutschLavaud,forrester}. The associated Gibbs measure at inverse temperature $\beta>0$  is given by 
\begin{equation}
\label{def:Gibbs}
d\PNbeta(\XN, \YN) :=\frac{1}{\ZNbeta} e^{- \frac{\beta}{2} \WN(\XN, \YN)}d\XN \, d\YN,
\end{equation}
where $\ZNbeta$ is the normalizing constant, i.e. the  \textit{partition function}
\begin{equation}
\label{def:ZNbeta}
\ZNbeta := \int_{\Box^{2N}} e^{-\frac{\beta}{2} \WN(\XN, \YN)} d\XN d\YN,
\end{equation}
and we have written 
\begin{equation}
\label{def:Ham} \WN(\XN, \YN) := \sum_{1\leq i \neq j \leq N} - \log|x_i-x_j| - \log |y_i - y_j| + \sum_{1 \leq i,j \leq N} \log |x_i - y_j|
\end{equation} for 
any $N \geq 1$ and any $N$-tuples $\XN = (x_1, \dots, x_N)$ and $\YN = (y_1, \dots, y_N)$ of points in, say, the unit cube $\Box := [0,1]^2$ of $\R^2$. The notation $ d\XN \, d\YN$ refers to the Lebesgue measure on $\Box^{2N}$.
The choice of ${\beta}/{2}$ 
instead of $\beta$ in the exponent of \eqref{def:Gibbs} is made in order to match the existing literature. In physical terms, $\WN(\XN, \YN)$ computes the two-dimensional electrostatic (or logarithmic) interaction of the point charges $(x_1, \dots, x_N)$ and $(y_1, \dots, y_N)$, the former carrying a $+1$ charge and the latter a $-1$ charge. 

We are interested in proving a  Large Deviation Principle (LDP) 
on the Gibbs measure $\PNbeta$, which is inspired by 
\cite{LebSer}, where
such a result was obtained for the one-component plasma in 
arbitrary dimension. 
The (say, two-dimensional) one-component plasma corresponds to a system 
of point charges which all have {\it same sign}  and interact logarithmically, 
but that need to be confined by some external 
potential, acting in effect like a slowly varying neutralizing 
(opposite) charge distribution.  

Motivations for studying two-component plasmas are numerous. Besides its intrinsic interest as a toy model for classical electrons and ions, 
it is also related to the  so-called Sine-Gordon model: the grand canonical partition function of the two component plasma can be related to the Euclidean version of the sine-Gordon partition function, and the Coulomb gas on a lattice   itself related to the XY model and the Kosterlitz-Thouless phase transition (see the review \cite{spencer} and references therein). 
 Another motivation, which will be described in more details 
 in the appendix, is the connection with the partition function of 
 ``complex multiplicative Gaussian chaos" (cf. \cite{LRV}), which may be 
 formally written as
$\int e^{i \beta h(x)} \, dx$, where $h(x)$ is the Gaussian Free Field. 
This question is 
itself related to height functions of dimer models and 
to
the Lee-Yang theorem for the XY model.
It turns out that computing moments of $e^{i\beta h}$ makes the 
Gibbs measure of the two-component plasma appear 
and, as described in the appendix,
our results yield a rate of decay in terms of $\beta $ of the 
tails of this partition function.

Due to the presence of point charges of opposite signs, 
the system is unstable at low temperature because the 
thermal excitation does not compensate the energetical trend for   configurations with $- \infty$ energy, in which at least two particles of opposite signs collide. The domain of stability of the system (i.e. the range of the parameter $\beta$ for which the integral in \eqref{def:ZNbeta} converges, so that \eqref{def:Gibbs} makes sense) was found to be $\beta < 2$ in \cite{DeutschLavaud}, together with a first bound on $\ZNbeta$.  A more accurate estimate and the existence of the thermodynamic limit were proven in \cite{Frohlich} by Euclidean quantum field techniques, and \cite{GunPan} by classical methods (the question of obtaining it by classical methods was apparently first raised in \cite{sm}). 
The result can be summarized as follows.
\begin{prop}[Gunson-Panta, \cite{GunPan}] \label{prop:FGP}
For any $\beta < 2$,
\begin{equation} \label{expZN1}
\log \ZNbeta = \frac{\beta}{2} N \log N + C_{\beta}N + o(N),
\end{equation}
with a constant $C_{\beta}$ and an error $o(N)$ both depending on $\beta$. 
\end{prop}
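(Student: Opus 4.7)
The plan is to prove matching upper and lower bounds for $\log Z_{N,\beta}$, each accurate up to $o(N)$, from which the asymptotic follows. The physical picture, valid in the stability regime $\beta<2$, is that the integral is dominated by \emph{dipolar} configurations: to avoid driving the attractive $|x_i-y_j|^{-\beta/2}$ singularity, most positive charges pair with a unique nearby negative partner, so that at leading order the system behaves as an ideal gas of $N$ neutral dipoles. The leading coefficient $\frac{\beta}{2}$ in front of $N\log N$ then emerges from combining Stirling's entropy $\log(N!)\sim N\log N$, accounting for the $N!$ possible pairings of positive and negative charges, with the per-dipole self-integral, cut off at the natural inter-dipole scale $\sim N^{-1/2}$.

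The \textbf{lower bound} is obtained from an explicit trial construction. Partition $\Box=[0,1]^2$ into $N$ cells of area $1/N$. In each cell one integrates a positive charge uniformly over the cell and the matched negative charge over a disk of radius $\sim N^{-1/2}$ around it. Because the resulting dipole gas is essentially charge-neutral on each scale, the dipole--dipole cross contributions in $W_N$ are bounded uniformly, so the Boltzmann weight factorizes, to leading order, into a product of single-dipole integrals times an $e^{O(N)}$ correction. Summing over the $N!$ permutations $\pi\in S_N$ that assign negatives to positives, and applying Stirling's formula, produces $\log Z_{N,\beta}\ge \frac{\beta}{2}N\log N+O(N)$.

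The \textbf{upper bound} follows the classical Gunson--Panta strategy: extract the singular part via an Onsager-type stability inequality. For each positive $x_i$ identify its nearest opposite-charge neighbour $y_{j_\star(i)}$, factor out the short-range singular term $|x_i-y_{j_\star(i)}|^{-\beta/2}$, and bound the remaining long-range factors by a universal constant per particle, using that the same-sign repulsion $|x_i-x_k|^{\beta/2}$ cancels the residual cross interactions in the neutrality limit. Integrating the resulting bound splits the partition function, up to multiplicative factors of order $e^{O(N)}$, into a sum over nearest-neighbour assignments (of cardinality at most $N!$) times a product of per-dipole integrals, giving the matching upper bound $\log Z_{N,\beta}\le \frac{\beta}{2}N\log N+O(N)$.

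The main obstacle is the upper bound, in particular controlling configurations with multi-particle clusters where several charges of alternating signs collapse simultaneously at microscopic scale. Such events carry additional combinatorial entropy and must be shown to be irrelevant at the $o(N)$ level; this becomes delicate as $\beta$ approaches the stability threshold $2$, where the dipole singularity is only barely integrable and the Onsager-type estimate must be applied inductively on cluster size. In the regime $\beta<2$ the quantitative multi-scale arguments of \cite{DeutschLavaud,Frohlich,GunPan} suffice: Fr\"ohlich's Euclidean field-theoretic renormalization and the Gunson--Panta classical inductive argument each pin down the sub-leading constant $C_\beta N$, completing the proof.
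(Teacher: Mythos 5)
The paper does not prove Proposition~\ref{prop:FGP} on its own: the full two-sided thermodynamic limit (including the existence of $C_\beta$) is cited to \cite{GunPan} and \cite{Frohlich}, and Section~\ref{appGP} only re-derives the one-sided upper bound \eqref{ZNUB}, which is all the subsequent LDP arguments require; the paper also remarks that Proposition~\ref{prop:FGP} follows a posteriori from Theorem~\ref{theo:LDP}. So your plan, with an explicit trial-configuration lower bound and an Onsager-type upper bound, is a genuinely different route, and the lower bound sketch in particular has no analogue in the paper.

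Two concrete gaps. For the upper bound, the claim that one sums over "nearest-neighbour assignments (of cardinality at most $N!$)" is false and conceals the real combinatorial difficulty. In Gunson--Panta (Lemma~\ref{lem:graphe}) the relevant object is the nearest-neighbour map on all $2N$ points \emph{of either sign}, whose graph is a functional digraph with one $2$-cycle per connected component; the number of such isomorphism classes with $K$ components is of order $\Gamma(2N+1)(2N)^{2N-2K}/\bigl(2^K\Gamma(K+1)\Gamma(2N-2K+1)\bigr)$ (Lemma~\ref{lem:combin}), which for small $K$ vastly exceeds $N!$. The bound works only because this combinatorial factor is exactly compensated by the factor $\Gamma\bigl((2N-K)-N\beta/2+1\bigr)$ in the denominator of the per-component Dirichlet integral (Lemma~\ref{lemdiri}), and the resummation over $K$ in \eqref{beforereass}--\eqref{ZNUB} is where $\beta<2$ is used. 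Your sketch, by invoking a bound of $N!$ and working only with the nearest \emph{opposite-charge} neighbour of each $x_i$ (which is $\ge r(x_i)$, so does not follow from the electrostatic identity of Lemma~\ref{lem:WIPP}), misses this balance entirely.

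For the lower bound, the assertion that "the dipole--dipole cross contributions in $W_N$ are bounded uniformly" is unjustified at the scale you must use, namely intra-dipole separation $\epsilon\sim N^{-1/2}$ with inter-dipole distances also $\sim N^{-1/2}$. The residual interaction between two dipoles of separation $\epsilon$ at distance $d$ is $\Theta(\epsilon^2/d^2)$, which is $\Theta(1)$ for nearest pairs, and summing the absolute values over all pairs gives $\Theta(N\log N)$, not $O(N)$. Reducing this to $O(N)$ requires exploiting the angular structure of the dipole kernel (e.g.\ a polarized lattice arrangement) or restricting the dipole orientations to a good subset of controlled phase-space volume; neither is addressed. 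This is precisely the kind of subtlety the paper handles at length in its own LDP lower bound via the screening Proposition~\ref{lem:screening}.
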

Later, a number of other results, such as the asymptotics of 
two-point correlation functions and  the
thermodynamic properties of the two-component plasma,
were obtained in the physics literature. 
We refer to the review \cite{samaj} and references therein. 

For any $\XN, \YN$, let $\muNp$ and $\muNm$ be the empirical measures associated to the positive and negative charges
$$
\muNp := \frac{1}{N} \sum_{i=1}^N \delta_{x_i}, \quad \muNm  := \frac{1}{N} \sum_{i=1}^N \delta_{y_i}.
$$
A natural question is to ask for the large $N$ behavior of these 
\textit{macroscopic} quantities in the space $\probas(\Box)$ of 
probability measures on $\Box$. At the microscopic level, 
we may also ask whether the point process induced by $\PNbeta$ has 
a typical behavior. In this paper, we give an LDP for 
a 
spatially averaged
 microscopic behavior and as a consequence we show that both empirical measures $\muNp, \muNm$ converge a.s. to the uniform measure on $\Lambda$.
We remark that in case $\beta$ scales as $1/N$, a LDP and Gaussian
fluctuations limits
for these quantities are derived in \cite{BoG}; the techniques 
needed to handle the scaling in this paper
are completely different.

\subsection{Main result} Before stating our main result, we need to 
introduce some notation and concepts, which are all ``signed" 
versions of those introduced in \cite{SS2d, LebSer} 
(in the latter papers,
 all the charges have the same sign).  We  denote by $\configs$  the set of 
 locally finite 
signed point configurations with the topology of local convergence (for more details see Section \ref{sec:PrelimNot}).  If $(\XN, \YN)$ is a pair of $N$-tuples of points in the square $\Box$, we may see it as an element of the space $\configs$ by associating to $\XN$ (resp. $\YN$) the point configuration $\nupN := \sum_{i=1}^N \delta_{x_i}$ (resp. $\numN := \sum_{i=1}^N \delta_{y_i}$). When starting from $(\XN, \YN)$, we first rescale the associated finite signed configurations by a factor $\sqrt{N}$ to get 
$$
\nuppN := \sum_{i=1}^N \delta_{\sqrt{N} x_i} \quad \nupmN  :=  \sum_{i=1}^N \delta_{\sqrt{N} y_i},
$$
and we then define the map
\begin{equation}
\label{def:iN}
\begin{array}{rcl}
\iN : (\R^2)^N \times (\R^2)^N & \longrightarrow & \probas(\Box \times \configs) \\
(\XN, \YN) & \longmapsto & \bar{P}_{(\XN,\YN)}:= \displaystyle \int_{\Box} \delta_{(x, \theta_{\sqrt{N}x} \cdot (\nuppN, \nupmN))} dx
\end{array}
\end{equation}
where 
for any Borel space $X$,
$\probas(X)$ denotes  the set of  Borel probability measures on $X$ and
$\theta_{\lambda}$ denotes the action of translation by a 
vector $\lambda\in \R^2$, that is,  for $\nu\in \probas(\R^2)$, we have
$\theta_\lambda \nu(A)=\nu(A+\lambda)$ for any measurable set $A$.  
 The variable $x$ is a ``tag" that is keeping track of the point $x \in \Box$  
 around which the configuration was blown-up, and this way we build from 
 any signed  point configuration the law of a 
 ``tagged signed point process", $\bar{P}_{(\XN,\YN)}$. 
The laws of these signed point processes will 
easily be shown to be tight, and 
any accumulation point
 as $N \ti$ is a stationary probability measure on $\Lambda \times \config$
 (i.e. the law of a stationary tagged point process)
whose first marginal is the Lebesgue measure on $\Box$.
We will generally denote with bars the quantities corresponding to tagged point processes and without bars the quantitites corresponding to non-tagged point processes.

Throughout we will always consider  the subset 
\[
\left\lbrace \bPsts \in \probas(\Box \times \configs), \bPsts(A \times \configs) = \Leb(A), \forall A \text{ Borel}\right\rbrace,
\]
and continue, with some abuse of notation, to denote it by $\probas(\Box \times \configs)$. This assumption allows us to consider the disintegration probability measures
$\bPstsx \in \probas(\configs)$ for any $x \in \Lambda$.
We denote by $\probinv(\configs)$ the set of stationary laws of signed point processes, and we denote by  $\probinv(\Box \times \configs)$ the set of stationary laws of tagged signed point processes, that is those $\bPsts \in \probas(\Box \times \configs)$ so that the corresponding disintegration measure $\bPstsx$ is stationary for Lebesgue-a.e. $x \in \Box$. Finally we denote by $\probas_{\rm{inv},1}(\Lambda \times \configs)$ the set of $\bPsts \in \probas(\Box \times \configs)$ such that $\bPsts$ has total intensity $1$ (i.e. there is, in average, one point of each sign per unit volume).

In Section \ref{sec:RenormEnerg} we will define an interaction energy functional $\tWs$ on the space $\probinv(\configs)$. It can be understood as the expectation of the infinite-volume limit of the logarithmic interaction in the system of charges described by the signed configurations. We then define the interaction  energy of  $\bPsts \in \probinv(\Box \times \configs)$ as
\begin{equation} \label{def:bWs1}
\bWs(\bPsts) := \int_{\Box} \tWs(\bPstsx) dx.
\end{equation}

Next, we define the \textit{specific relative entropy} of the law of a signed point process as the infinite-volume limit of the usual relative entropy with respect to a reference measure. 
\begin{defi} Let $\Psts \in \probinv(\configs)$. 
The relative specific entropy $\ERS[\Psts]$ with respect to the signed Poisson point process of uniform intensity $1$ is given by
\begin{equation} \label{def:ERS1}
\ERS[\Psts] := \lim_{R \ti} \frac{1}{R^2} \Ent\left( \Psts_{R} | \Poissons_{R} \right), 
\end{equation}
where $\Psts_{R}$ denotes the restriction of $\Psts$ to $\carr_R:= [-R/2,R/2]^2$, and 
\[
\Ent( \mu | \nu) =
\begin{cases} \int \log \frac{d\mu}{d\nu} d\mu & \text{ if $\mu$ is absolutely continuous with respect to $\nu$, } \\
+ \infty & \text{ otherwise}
\end{cases}
\]
 is the usual relative entropy. The reference measure is the law of a signed Poisson point process $$\Poissons := \Poisson^1 \otimes \Poisson^1,$$ which is nothing but the law of two independent Poisson point processes of intensity $1$.
\end{defi}
The  good definition of such an infinite-volume relative entropy $\ERS$ is known in the “non-signed” case where one deals with standard point processes (see e.g. \cite{seppalainen}, and \cite{LebSer} for an extension to the case of tagged point processes), and we recast its properties in the setting of signed point processes in Section \ref{sec:ERS}. We may then define the specific relative entropy of $\bPsts \in \probinv(\Box \times \configs)$ as
\begin{equation} \label{def:bERSa}
\bERS(\bPsts) := \int_{\Box} \ERS[\bPstsx] dx.
\end{equation}

Using \eqref{def:bWs1} and \eqref{def:bERSa}, 
we introduce
the function $\fbarbeta$ defined  on the space 
$\probas_{\rm{inv},1}(\Lambda \times \configs)$, 
\begin{equation}
\label{def:fbarbeta} \fbarbeta(\bPsts) := 
\begin{cases}
\frac{\beta}{2} \bWs(\bPsts)  + \bERS[\bPsts] & \text{ if } \bERS[\bPsts] < + \infty, \\
+ \infty & \text{ otherwise.}
\end{cases}
\end{equation}
We let $\fbsc$ be the lower semi-continuous regularization of $\fbarbeta$ on $\probas_{\rm{inv},1}(\Lambda \times \configs)$ i.e.
\begin{equation} \label{def:fbsc}
\fbsc(\bPsts) := \lim_{\epsilon \t0} \inf_{B(\bPsts, \epsilon)} \fbarbeta.
\end{equation}
In particular it is standard that $\fbarbeta$ and $\fbsc$ have the same infimum on $\probas_{\rm{inv},1}(\Lambda \times \configs)$.
We remark that we do not know whether $\fbarbeta$ is lower semi-continuous, and in particular we do not rule out the possibility that
$\fbarbeta=\fbsc$.

When $\beta < 2$ is fixed, we let $\bPsN$ be the random variable $\iN(\XN, \YN)$ (as in \eqref{def:iN})  when $(\XN, \YN)$ are sampled according to $\PNbeta$, and we let $\bPgotNb$ be its law. In other terms $\bPgotNb$ is the push-forward of $\PNbeta$ by $\iN$.

We may now state our main result.
\begin{theo} \label{theo:LDP}
The sequence $\{\bPgotNb\}_N$ satisfies a Large Deviations Principle at speed $N$ with good rate function given by
$$\fbsc - \inf_{\probas_{\rm{inv},1}(\Lambda \times \configs)} \fbarbeta.$$
\end{theo}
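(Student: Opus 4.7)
The plan is to follow the energy-entropy method developed in \cite{LebSer} for the one-component log-gas, adapting it to the signed setting. The LDP will follow from matching upper and lower asymptotics
\begin{equation*}
\frac{1}{N} \log \int_{\iN^{-1}(B(\bPsts, \epsilon))} e^{-\frac{\beta}{2} \WN(\XN, \YN)} \, d\XN\, d\YN \;=\; -\fbarbeta(\bPsts) + o_\epsilon(1) + o_N(1),
\end{equation*}
for every $\bPsts \in \probas_{\rm{inv},1}(\Lambda \times \configs)$ at which $\fbarbeta$ is continuous enough, after which normalization by $\ZNbeta$ produces the claimed rate function; the lower semi-continuous envelope $\fbsc$ arises because this local estimate only upgrades to a global upper bound on closed sets in the lsc envelope. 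A preliminary step is tightness of $\{\bPgotNb\}$ in $\probas(\Box \times \configs)$ and concentration on $\probas_{\rm{inv},1}(\Lambda \times \configs)$, both of which follow from a priori entropy bounds for $\beta < 2$ together with exchangeability within each sign class.

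\textbf{Upper bound.} The main task is a quantitative lower bound for $\WN$ in terms of $\bWs$ on configurations whose tagged empirical process is close to $\bPsts$. I would introduce the electric potential $h_\eta$ generated by the smeared configuration in which each $\pm\delta_{x_i}$ is replaced by a uniform mass $\pm 1$ on a circle of radius $\eta$, and use the identity
\begin{equation*}
\WN(\XN, \YN) \;=\; \frac{1}{2\pi} \int_{\R^2} |\nabla h_\eta|^2 \;+\; \mathsf{Self}_\eta \;+\; \mathsf{Interact}_\eta,
\end{equation*}
where $\mathsf{Self}_\eta = -2N \log \eta$ is the self-energy of the smeared charges, producing the $\frac{\beta}{2}N\log N$ prefactor in $\log \ZNbeta$ upon choosing $\eta \sim N^{-1/2}$, and $\mathsf{Interact}_\eta$ collects the corrections for pairs at distance less than $2\eta$. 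On configurations with tagged process near $\bPsts$, a signed variant of the $L^1_{\mathrm{loc}}$-lower semi-continuity argument of \cite{SS2d, LebSer} yields $\frac{1}{2\pi N} \int |\nabla h_\eta|^2 \geq \bWs(\bPsts) + o_\eta(1) + o_N(1)$. The singular remainder $\mathsf{Interact}_\eta$ is essentially concentrated on close $+/-$ pairs; it is not pointwise bounded, but its exponential moments under Lebesgue measure are controlled by $\int_{\Box^2}|x-y|^{-\beta}\,dx\,dy < +\infty$ for $\beta < 2$, which is precisely the integrability ensuring stability of the ensemble. Combining with a Sanov-type argument that produces $\bERS[\bPsts]$ and integrating over $x \in \Box$ then gives the upper bound.

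\textbf{Lower bound and main obstacle.} For the lower bound, given $\bPsts$ with $\fbarbeta(\bPsts) < +\infty$, I would construct a set of trial configurations whose image under $\iN$ lies in $B(\bPsts, \epsilon)$, whose Lebesgue volume is bounded below by $e^{-N \bERS[\bPsts] + o(N)}$, and whose Hamiltonian is bounded above by $N \bWs(\bPsts) + o(N)$. The construction partitions $\Box$ into mesoscopic squares of side $\tau$, draws in each square (after rescaling by $\sqrt{N}$) an independent sample from the local disintegration of $\bPsts$, and applies a \emph{screening} procedure to modify each sample in a thin boundary layer so that the pieces glue into a genuine element of $\Box^{2N}$ with the correct charge counts and controlled electric field at the boundary of each cell. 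The main obstacle is precisely this two-sign screening: while the screening of \cite{SS2d, LebSer} only had to absorb a signed discrepancy by moving like-charged particles, here one must simultaneously enforce local electroneutrality (without which the logarithmic energy would blow up), prevent the creation of close $+/-$ pairs (which would drive $\WN$ to $-\infty$), and keep the entropic cost of the modification $o(N)$. Carrying out this quantitative two-sign screening, together with the near-diagonal bookkeeping that makes the restriction $\beta < 2$ sharp, is the technical heart of the argument.
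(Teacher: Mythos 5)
Your overall architecture (electric-field formulation, energy--entropy decomposition, Sanov plus screening, matching local upper and lower estimates normalized by $\ZNbeta$, lsc regularization on the upper bound side) is the same as the paper's. But the key energy decomposition in your upper bound is different from the paper's, and the difference is not benign. You propose a truncation at a \emph{fixed} radius $\eta\sim N^{-1/2}$, writing $\WN = \frac{1}{2\pi}\int|\nabla h_\eta|^2 + \mathsf{Self}_\eta + \mathsf{Interact}_\eta$ with $\mathsf{Self}_\eta = -2N\log\eta$ and $\mathsf{Interact}_\eta$ a remainder supported on pairs at distance $<2\eta$, and you claim that the exponential moments of $\mathsf{Interact}_\eta$ are controlled by $\int_{\Box^2}|x-y|^{-\beta}\,dx\,dy$. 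This claim is a genuine gap: $\mathsf{Interact}_\eta$ is a sum over \emph{all} close pairs, and under Lebesgue measure these terms do not factorize. If $k$ negative charges cluster near a single positive charge, the $x$-integral of $\prod_{j}|x-y_j|^{-\beta/2}$ over the cluster scales like $\int|x|^{-k\beta/2}dx$, which diverges for $k>4/\beta$. Pairwise integrability simply does not see these clusters, and they are precisely what makes $\beta<2$ the sharp stability threshold. The paper instead truncates at the (configuration-dependent) half nearest-neighbor distance $r$: Lemma \ref{lem:WIPP} is then an \emph{exact} identity $\WN = \frac{1}{2\pi}\int|\nabla\VNr|^2 + \sum_i\log r(x_i)+\log r(y_i)$ with no interaction remainder at all, each point contributes exactly one $\log r$ term, and the dangerous negative part is then controlled by the Gunson--Panta analysis (functional digraphs, Lemmas \ref{lem:graphe}--\ref{lemdiri} and \eqref{partitionD}), which is the combinatorial mechanism that resolves exactly the clustering issue your sketch glosses over and delivers both the $\frac{\beta}{2}N\log N$ normalization and the $\beta<2$ threshold. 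In the paper the upper bound further splits $\log r$ into $\log(r\vee\tau) + \log(\frac{r}{\tau}\wedge 1)$ (Section \ref{sec:boundNN}), treating the $\tau$-regularized part by continuity and the close-dipole part by Gunson--Panta; nothing analogous is worked out in your proposal.

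For the lower bound you correctly identify two-sign screening (Proposition \ref{lem:screening}) as the technical heart, but you explicitly leave it as an open obstacle rather than carry it out; the paper's contribution is precisely the quantitative version of this with the dipole-aware nearest-neighbor truncation built in (Steps 1--5 of the proof of Proposition \ref{lem:screening}) together with the volume bookkeeping \eqref{preservVolume}. There is also a secondary omission worth naming: to pass from the weak LDP to the claimed theorem you need to show that the rate function is infinite off $\probas_{\rm{inv},1}(\Lambda\times\configs)$ (Lemma \ref{lem:bonneintensite}, resting on the uniform integrability Lemma \ref{lem:UI} and the micro/macro comparison Lemma \ref{lem:micromacro}) and that its minimum is attained only at unit intensity (Lemma \ref{lem:minimmemerho}); your ``exchangeability'' remark does not substitute for these.
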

As a consequence we obtain the following expansion for $\log \ZNbeta$ with $\beta < 2$:
\begin{coro} \label{coro:ZN}
For any $\beta < 2$ it holds
\begin{equation} \label{expanZN}
\log \ZNbeta = \frac{\beta}{2} N \log N - \left(\inf_{\probas_{\rm{inv}, 1}(\Lambda, \configs)} \fbarbeta\right) N + o(N),
\end{equation}
where the term $o(N)$ depends on $\beta$.
\end{coro}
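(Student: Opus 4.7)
The plan is to extract the free energy expansion \eqref{expanZN} from the proof of Theorem \ref{theo:LDP}, rather than from its statement alone. The LDP statement only controls ratios $\bPgotNb[\,\cdot\,]$, which become a tautology when applied to the whole space; however, the matching upper and lower bounds underlying the LDP, read \emph{before} normalizing by $\ZNbeta$, provide sharp control on the partition function itself. The natural intermediate object is the unnormalized push-forward
$$\widetilde{\mathfrak{P}}^{\beta}_N := (\iN)_{\#}\bigl(e^{-\frac{\beta}{2}\WN}\, d\XN\, d\YN\bigr)$$
on $\probas(\Lambda \times \configs)$, for which $\bPgotNb = \widetilde{\mathfrak{P}}^{\beta}_N / \ZNbeta$ and $\log \ZNbeta = \log \widetilde{\mathfrak{P}}^{\beta}_N\bigl(\probas_{\rm{inv},1}(\Lambda \times \configs)\bigr)$.

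The core of the proof of Theorem \ref{theo:LDP} will consist in establishing, for every $\bPsts \in \probas_{\rm{inv},1}(\Lambda \times \configs)$ and every sufficiently small neighborhood $V$ of $\bPsts$, matching local estimates
$$-\fbarbeta(\bPsts) + o(1) \;\leq\; \frac{1}{N}\Bigl(\log \widetilde{\mathfrak{P}}^{\beta}_N(V) - \tfrac{\beta}{2}N\log N\Bigr) \;\leq\; -\fbsc(\bPsts) + o(1)$$
as $N \to \infty$ then $V \to \{\bPsts\}$. I would obtain the lower bound by exhibiting an explicit family of $N$-point configurations whose images under $\iN$ lie in $V$, and by performing a joint energy--entropy expansion of the Gibbs density. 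The upper bound would come from a covering/variational argument based on the inequality $\log \int e^{-\beta E}\, d\mu \leq \Ent(\mu'|\mu) - \beta \int E\, d\mu'$ applied to carefully chosen reference measures on configurations. Dividing by $\ZNbeta$ and performing the usual covering step over closed and open sets of $\probas_{\rm{inv},1}(\Lambda \times \configs)$ then converts these local estimates into the LDP of Theorem \ref{theo:LDP}.

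For Corollary \ref{coro:ZN}, I would use the same two local estimates \emph{without} dividing by $\ZNbeta$. The lower bound on $\log \ZNbeta$ follows by applying the local lower bound with $V$ a neighborhood of a near-minimizer of $\fbsc$ and using that $\inf \fbsc = \inf \fbarbeta$. The upper bound follows by covering $\probas_{\rm{inv},1}(\Lambda \times \configs)$ by finitely many neighborhoods on which the local upper bound applies, summing, and letting the diameter of the cover shrink to $0$ (invoking the good rate function property to control the level sets of $\fbsc$). Combining both directions yields \eqref{expanZN}.

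The main obstacle is the exponential tightness needed to pass from local to global in the upper bound: one must rule out the possibility that a non-negligible portion of the mass of $\widetilde{\mathfrak{P}}^{\beta}_N$ concentrates on degenerate configurations for which $\fbarbeta = +\infty$, notably those with many pairs of opposite charges at arbitrarily small distance. This is precisely where the stability threshold $\beta < 2$ enters: near a pair of opposite charges at distance $r$, the integrand $e^{-\frac{\beta}{2}\WN}$ blows up like $r^{-\beta}$, which is integrable on $\R^2$ exactly when $\beta < 2$, and this local integrability has to be upgraded to a uniform exponential tail bound on near-collisions. Once that tightness is in hand, the remainder of the argument is a routine consequence of the estimates underlying Theorem \ref{theo:LDP}.
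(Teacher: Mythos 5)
Your route is the paper's: Propositions~\ref{prop:LDPLB} and~\ref{prop:LDPUB} are exactly the unnormalized local estimates you describe, with the factor $e^{\frac{\beta}{2}N\log N}$ absorbed into the renormalized partition function $\KNbeta := \ZNbeta\, e^{-\frac{\beta}{2}N\log N}$ via \eqref{rwP2}; combining these with exponential tightness and Lemma~\ref{lem:bonneintensite} (to restrict to total intensity~$1$) yields $\lim_{N}\frac{1}{N}\log\KNbeta = -\inf_{\probas_{\rm{inv},1}}\fbarbeta$, from which \eqref{expanZN} follows by undoing the shift. The one thing to correct is where the stability threshold $\beta<2$ actually enters. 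Exponential tightness here is free and has nothing to do with energies: by construction the expected number of points of $\bPsN$ in any disk $D(0,R)$ is deterministically bounded by $2\pi R^2$, so the whole sequence $\{\bPgotNb\}_N$ lives on a single compact subset of $\probas(\Lambda\times\configs)$ for every $\beta$ (Lemma~\ref{lem:expotight}). The condition $\beta < 2$ instead enters through the Gunson--Panta estimate \eqref{GPDg}, which drives the a priori entropy bounds (Lemma~\ref{lem:aprioribounds}) and the lower bound on the rate function (Lemma~\ref{lem:minoRF}, yielding that $\fbsc$ is a good rate function, hence with compact sub-level sets for the covering argument), and through the close-dipole estimate \eqref{uppbdD} in the upper bound of Section~\ref{secUB} --- the latter being the precise quantitative form of your remark about the integrability of $r^{-\beta}$.
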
 
In comparison with Proposition \ref{prop:FGP}, we now have a characterization of the constant $C_{\beta}$ in front of the term of order $N$. We also have
information on the asymptotic behavior of the empirical measures.
\begin{theo} \label{theo:mesure}
The sequence $\{(\muNp, \muNm)\}_N$ converges $\PNbeta$-a.s. to $\Leb_{\Lambda} \otimes \Leb_{\Lambda}$, where $\Leb_{\Lambda}$ is the uniform probability measure on $\Lambda$.
\end{theo}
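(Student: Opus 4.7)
The plan is to deduce Theorem~\ref{theo:mesure} from the LDP of Theorem~\ref{theo:LDP}. The first step is to recognize the macroscopic empirical measures $(\muNp,\muNm)$ as approximately continuous functionals of the tagged empirical process $\bPsN$: for a Lipschitz test function $f$ on $\Lambda$ and a small $\tau>0$,
\begin{equation*}
\int_{\Lambda} f \, d\muNp \;=\; \frac{1}{\tau^2} \int_{\Lambda\times\configs} f(x)\, N_\tau^+(\mathcal{C})\, d\bPsN(x,\mathcal{C}) \;+\; O\!\left(\|f\|_{\mathrm{Lip}}\, \tau/\sqrt{N}\right),
\end{equation*}
where $N_\tau^+(\mathcal{C})$ counts positive points of $\mathcal{C}$ in the square $[-\tau/2,\tau/2]^2$; this identity is a direct Fubini computation based on the definition \eqref{def:iN} of $\iN$. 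The main term is a bounded continuous functional of $\bPsN$ once $N_\tau^+$ is truncated and $\tau$ is chosen away from point discontinuities, with an analogous formula for $\muNm$. Sending $N\to\infty$ and then $\tau\to 0$, the contraction principle transfers the $\bPsN$--LDP to an LDP for $(\muNp,\muNm)$ with rate function
\[
I(\mu^+,\mu^-) \;:=\; \inf\{\fbsc(\bPsts) - \inf\fbarbeta \;:\; \bPsts\in\probas_{\rm{inv},1}(\Lambda\times\configs) \text{ has macroscopic intensities } (\mu^+,\mu^-)\}.
\]

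The second step, which is the heart of the matter, is to show that $I$ is uniquely minimized at $(\Leb_\Lambda,\Leb_\Lambda)$. Since
\[
\fbarbeta(\bPsts) \;=\; \int_{\Lambda}\Big[\tfrac{\beta}{2}\tWs(\bPstsx) + \ERS(\bPstsx)\Big]\,dx
\]
is an integral over the tag $x$ of a purely local functional of the disintegration $\bPstsx$, subject only to $\int_\Lambda \rho^\pm(x)\,dx = 1$ on its intensities $\rho^\pm(x)$, the minimization decouples in $x$. I would apply the standard Georgii-type lower bound $\ERS[P]\geq h(\rho^+)+h(\rho^-)$ with $h(t)=t\log t - t + 1$ (reflecting that the Poisson process minimizes specific entropy at fixed intensity), which combined with the convexity of $h$ and Jensen's inequality applied to $\int_\Lambda \rho^\pm(x)\,dx = 1$ strictly penalizes non-constant intensity profiles. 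The delicate point is to rule out a compensating energy gain from $\tWs$ when $\rho^\pm$ is non-uniform; this can be handled by a rearrangement argument, averaging $\bPsts$ over translations of $\Lambda$ (viewed at the macroscopic scale, e.g.\ as a torus) to produce a translation-invariant competitor with uniform intensities $(1,1)$ and no larger $\fbarbeta$, using the translation-invariance of $\tWs$ on $\configs$. This forces $\rho^\pm\equiv 1$ at any minimizer, i.e.\ the macroscopic intensities of a minimizer of $\fbsc$ are $(\Leb_\Lambda,\Leb_\Lambda)$.

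Finally, almost-sure convergence is obtained via Borel--Cantelli: for any weak-$*$ neighbourhood $U$ of $(\Leb_\Lambda,\Leb_\Lambda)$, the LDP upper bound gives $\PNbeta((\muNp,\muNm)\notin U) \leq e^{-cN}$ for some $c>0$ and all $N$ large, which is summable. Metrizability of the weak-$*$ topology on $\probas(\Lambda)^2$ then yields the stated almost-sure convergence. The treatment of $\muNm$ follows identically, or at once from the charge-swap symmetry of $\PNbeta$.
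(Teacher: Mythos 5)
Your high-level strategy --- use the LDP to show $\bPsN$ concentrates near minimizers of the rate function, identify the macroscopic intensities of minimizers, then finish by Borel--Cantelli --- is the same as the paper's. The reduction of $(\muNp,\muNm)$ to a truncated local functional of $\bPsN$ is the analogue of the paper's Lemma~\ref{lem:micromacro}; be aware, though, that after truncating $N_\tau^+$ the approximation error must be shown to vanish superexponentially for the (approximate) contraction principle to apply, which is exactly what the uniform integrability estimate of Lemma~\ref{lem:UI} supplies and which you pass over.

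The genuine gap is in the step identifying minimizer intensities. You propose a Georgii-type entropy bound $\ERS[P]\ge h(\rho^+)+h(\rho^-)$ plus Jensen, and then rule out compensating energy gains by averaging $\bPsts$ over tag-translations. This does not close the argument. Since $\dW$, $\Deta$ enter $\fbarbeta$ linearly (they appear as expectations) and $\ERS$ is affine (Proposition~\ref{prop:ERS}), $\fbarbeta$ is affine in $\bPsts$; averaging over the tag therefore produces a uniform-intensity competitor with the \emph{same} value of $\fbarbeta$. This shows a uniform-intensity minimizer exists, but it does not show that every minimizer has $\rho\equiv 1$, which is what almost-sure convergence requires. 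And the Georgii lower bound alone cannot supply the missing strictness, because $\tWs$ is unbounded below and a nonuniform $\rho$ could in principle generate enough short dipoles to pay the entropy cost.

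The paper instead gets strictness from a \emph{joint} scaling of energy and entropy. Under the dilation $\sigma_\rho$ that rescales a process of intensity $\rho$ to intensity $1$, the pieces $\tdW$, $\tDeta$, $\ERS$ transform by explicit scaling relations; substituting them into $\fbarbeta$ (Lemma~\ref{lem:minimmemerho}) gives
\begin{equation*}
\fbarbeta(\bPsts) \ge \inf_{\probas_{\rm{inv},1}} \fbarbeta + \left(1-\tfrac{\beta}{2}\right)\int_\Lambda \rho\log\rho\,dx ,
\end{equation*}
with $\int_\Lambda\rho\log\rho\ge 0$ by Jensen (since $\int_\Lambda\rho=1$) and equality only if $\rho\equiv 1$. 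The factor $1-\beta/2>0$ is precisely where $\beta<2$ enters, and it is this coupled scaling --- not an entropy lower bound alone --- that tames the negative energy term. One also needs Lemma~\ref{lem:discr} to know $\rho^+=\rho^-$ (a consequence of finite energy and entropy) before the single scaling parameter $\rho$ can be used; your write-up omits this.
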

We emphasize that in the present case, in contrast with the one-component case, the optimal macroscopic distribution of the points cannot be deduced from the leading order behavior of the system. Indeed, a leading order LDP 
(see Section \ref{sec-first-order})
only shows that $\mu_N^+ $ and $\mu_N^-$ have the same limit. The next order analysis of Theorem \ref{theo:LDP} allows to identify {\it at the same time} the macroscopic distribution of the particles, and their microscopic behavior. 

\subsection{Interpretation and method}
\paragraph{\textbf{Comparison with the one-component case.}}
To explain these results and their proof, it is useful to return to the case of the one-component plasma, as was studied in \cite{LebSer} using tools introduced in \cite{SS2d,RougSer,PetSer} (see also \cite{serfatyZur}). 
We recall that in  the one-component plasma, the particles have 
the 
same (positive) sign and are confined by an external potential, which can be shown to act like a neutralizing negative diffuse background charge. 
It is well-known that   the macroscopic distribution of the particles can be identified by a leading order LDP (at speed $N^2$) to be the  so-called {\it equilibrium measure}, uniquely determined by the confining potential (cf. \cite{hiaipetz,bz}). Then the  next order  LDP analysis performed in \cite{LebSer} allows to identify the behavior of the particles at the microscopic scale as minimizing a certain rate function, which is of similar nature to \eqref{def:fbarbeta}.  This analysis relied on expressing the logarithmic interaction energy via the {\it electric field}, or {\it electric potential} that the set of charges generated.  A crucial tool was then the so-called {\it screening result}, which allows, via the electric field formulation,  to localize the interaction energy in microscopic boxes which can then be seen as (essentially) independent and non-interacting.

The main difference between the one-component and two-component cases is that in the one-component case, the interaction energy has a sign and is bounded below, hence no configuration can give a large contribution to the partition function. In the two-component case, the interaction energy is not bounded below, and it is only thanks to the \textit{entropy term}, corresponding to the volume in phase-space, that configurations with very negative energy do not weigh too much in the partition function. Another heuristic way of saying this is that the Lebesgue measure (in phase-space) behaves like a ``Lebesgue repulsion" which prevents particle of opposite signs from getting too close to each other too often, and this is only true because $\beta <2$, i.e. when this Lebesgue repulsion is strong enough. In other words, energy and volume considerations always have to be worked with jointly, and we always need to exploit the fact that $\beta<2$. 

\paragraph{\textbf{Positive part of the energy and dipole contributions.}}
Because the number of positively charged particles and the number of negatively charged particles are the same, it is natural to see a configuration as a set of {\it dipoles} of particles of opposite sign, matched by  nearest neighbor pairing (or minimal matching, also called minimal connection).  
It is tempting to try to prove directly an  LDP  on the pairing, 
but 
we have not been able to do so. 
Instead, we exploit the idea of matching (borrowed 
from \cite{GunPan})
in conjunction with simple computations originating in  \cite{SS2d,RougSer}, which relies on the expression of the interaction energy via the electric potential generated by the system of charges. We rewrite the interaction energy as the sum of a positive part and a part corresponding only to 
 nearest neighbor interactions, which can be thought of as a ``dipole contribution". More precisely for each $x_i$ or $y_i$, we define 
\[
r(x_i)= \min \left( 1 , \hal \min_{j\neq i} |x_i-x_j|, \hal \min_{j} |x_i-y_j|\right)
\]
to be the (half) nearest neighbor distance truncated at $1$. We then prove the identity, valid for every pair of $N$-tuples $\XN, \YN$ 
with distinct coordinates:
\begin{equation} \label{rewritingWpN0}
\WN(\XN, \YN) =  \frac{1}{2\pi} \int_{\R^2} |\nabla \VNr|^2 + \sum_{i=1}^N \log \ro(x_i) + \sum_{i=1}^N \log \ro(y_i) .
\end{equation}
where 
\[\VNr= \log * \left( \sum_{i=1}^N \delta_{x_i}^{(r(x_i))}-\sum_{i=1}^N \delta_{y_i}^{(r(y_i))}\right)
\]
and where $\delta_x^{(\eta)}$ denotes the uniform  measure of mass $1$ on the sphere of center $x$ and radius $\eta$ (for $x \in \R^2$ and $\eta > 0$). 

The identity \eqref{rewritingWpN0} is similar to  the “electrostatic inequality”  found in \cite{GunPan} (however we do not discard the positive term as they do), and it allows us to use the method of \cite{GunPan}, which controls the negative (and possibly unbounded) dipole contributions. The analysis of \cite{GunPan} exploits in a quantitative way the fact that $\beta<2$ and that the Lebesgue repulsion dominates the dipole attraction.  

Similarly, the interaction energy $\bWs$ is the sum of two terms, one positive part corresponding to the large $N$ limit (after blow-up at the scale $\sqrt{N}$) of the positive quantity $ \frac{1}{2\pi} \int_{\R^2} |\nabla \VNr|^2 $, and a negative part corresponding to the large $N$ limit of the nearest-neighbor contributions 
$\sum_{i=1}^N \log \ro(x_i) + \sum_{i=1}^N \log \ro(y_i)$.

\paragraph{\textbf{Interpretation of Theorem \ref{theo:LDP}.}}
The way to read our result at the microscopic scale is to say that the Gibbs measure must concentrate on minimizers of $\fbarbeta$. Unfortunately, we do not know whether a minimizer can be shown to be unique, but in any case it reduces to a minimization problem (with the structure of a \textit{free energy} as in statistical physics)
which should identify some optimal random signed point processes. For comparison, in the two-dimensional one-component case, the analogous result allows to say that the law of the well-known Ginibre point process minimizes the rate function for a certain value of $\beta$. 
We may interpret the minimization of $\fbarbeta$ heuristically as follows. The term $\bWs(\bPsts)$  in \eqref{def:fbarbeta} favors signed configurations which minimize the logarithmic interaction, hence we expect that it favors 
 short dipoles (and as such, is clearly not bounded below).  On
the contrary, the \textit{specific relative entropy} in \eqref{def:fbarbeta} 
favors disorder, and thus tend to ``separate" the dipole points. When $\beta<2$, the sum of the two terms can be shown to be bounded below. The competition between the two terms depends of course on the value of $\beta$. When $\beta $ is 
 small (i.e. the temperature is 
 large) then the entropy term (or ``thermal agitation") dominates, whereas 
as $\beta$ gets larger and approaches $2$ the dipole attraction gets 
stronger, until the system can no longer sustain (or spontaneously generate) 
dipoles.  \\
 
\paragraph{\textbf{Plan of the paper.}}
The rest of the paper is organized as follows: in Section \ref{sec:PrelimNot}, we gather all the definitions and notation that we use, and we present the rewriting of the energy which isolates the dipole contribution. 
In Section \ref{sec:prelimstudy} we use the results of \cite{GunPan} to prove 
that $\fbsc$ is a good rate function and we establish the main result, 
postponing some of the main proofs. In  Section \ref{appGP} we recall, 
for the reader's convenience, the main computations of \cite{GunPan} 
(on which we rely heavily). In Section \ref{secLB} we 
prove the LDP lower bound, and in Section \ref{secUB} we prove the 
LDP upper bound. 
Section \ref{sec-first-order} is devoted to the proof of the large deviations
principle for the empirical measures $(\mu_N^+,\mu_N^-)$,
at the leading order speed
(which is $N^2$).
\\

\paragraph{\textbf{Acknowledgments:}} Part of this project was carried 
out during visits of Thomas Lebl\'{e} and Sylvia Serfaty to the Weizmann Institute.
They would like to warmly thank the institute for its hospitality. 
The work of Sylvia Serfaty was supported
by the Institut Universitaire de France.
The work of Ofer Zeitouni
was supported by an Israel Science Foundation 
grant. Wei Wu thanks Hubert Lacoin and Chuck Newman for useful discussions.

\section{Definitions, notation, and preliminary results}\label{sec:PrelimNot}
\subsection{General notation}
If $X$ is a topological space we denote by $\probas(X)$ the set of Borel probability measures on $X$, and if $P$ is in $\probas(X)$ we denote by $\Esp_{P} \left[ \cdot \right]$ the expectation under $P$. We endow the space $\probas(X)$ of Borel probability measures on $X$ with the Dudley distance:
\begin{equation} \label{DudleyDistance}
d_{\probas(X)}(P_1, P_2) = \sup \left\lbrace \int F (dP_1 - dP_2) |\  F \in \Lip(X) \right\rbrace
\end{equation}
where $\Lip(X)$ denotes the set of functions $F : X \rightarrow \R$ that are $1$-Lipschitz with respect to $d_X$ and such that $\|F\|_{\infty} \leq 1$. It is well-known that the distance $d_{\probas(X)}$ metrizes the topology of weak convergence on $\probas(X)$. We denote by $C^0(X)$ (resp. $C^0_c(X)$) the space of continuous functions (resp. with compact support on $X$), and by $C^0_{b}(X)$ the set of continuous, bounded functions on $X$. 

If $R > 0$ we let $\carr_R := [-R/2, R/2]^2$ be a square of center $0$ and sidelength $R$. If $U$ is a Borel subset of $\R^2$ we denote by $|U|$ its Lebesgue measure (or area). If $(X, d_X)$ is a metric space, $x \in X$ and $r > 0$ we denote by $B(x,r)$ the closed ball of center $x$ and radius $r$ for $d_X$. In $\R^2$ we will use the notation $D(x,r)$ for the closed disk of center $x$ and radius $r$. If $A \subset X$ we denote by $\mathring{A}$ its interior and by $\bar{A}$ its closure.

\paragraph{\textbf{(Signed) point configurations.}}
If $A$ is a Borel set of $\R^2$ we denote by $\config(A)$ the set of locally finite point configurations in $A$ or equivalently the set of non-negative, purely atomic Radon measures on $A$ giving an integer mass to singletons (see \cite{dvj}). We let $\config := \config(\R^2)$. We endow the sets $\config(A)$ (for $A$ Borel) with the topology induced by the topology of weak convergence of Radon measure (also known as vague convergence or convergence against compactly supported continuous functions).

If $B$ is a compact subset of $\R^2$ we endow $\config(B)$ with the following distance: 
\begin{equation} \label{defdistanceconfigB}
d_{\config(B)}(\C_1,\C_2) := \sup \left\lbrace \int F (d\C_1 - d\C_2) |\  F \in \Lip(B) \right\rbrace.
\end{equation}
Similarly we endow $\config := \config(\R^2)$ with the following distance:
\begin{equation} \label{dconfig}
\dconfig(\C_1,\C_2) := \sum_{k \geq 1} \f{1}{2^k} \left( 
\f{d_{\config(\carr_k)}(\C_1,\C_2)}{(\C_1(\carr_k) + 
\C_2 (\carr_k)) \vee 1 }  
  \right).
\end{equation}

We now define the analogue  
of $\config(A)$ and $\config$ in the setting of signed point configuration. 
\begin{defi} \label{def:pointconfig}
A signed point configuration in $\R^2$ (resp. in $A$) 
is defined as an element 
$\Cs=(\Cp,\Cm)$
of $\configs := \config \times \config$ 
(resp. $\configs(A) := \config(A) \times \config(A)$).
We say that $\Cs$ is \textit{simple} if the mass of each singleton is exactly one and the supports of 
$\Cp$ and $\Cm$ are disjoint.
\end{defi}

We endow these product spaces with the product topology and the 
usual $1$-product metric (the sum of distances componentwise). 
We will sometimes abuse notation and write $\int f d\Cs$ as the integral of a test function $f$ against the signed measure $d\Cp - d\Cm$.
For $A \subset \R^2$ a measurable subset we let $|\Cs|(A)$ be the total number of points in $A$ i.e. $|\Cs|(A) := \Cp(A) + \Cm(A)$.

\begin{defi} \label{def:pruning}
We define the ``pruning"
 map $\Pru : \configs \to \configs$ by associating to any 
signed point configuration $\Cs = (\Cp, \Cm)$ the Jordan decomposition 
of the signed measure $\Cp - \Cm$. 
\end{defi}
The effect of $\Pru$ is to 
remove any dipole which would not be felt at the level of the signed measure. 
For example,
if $\Cs = (\delta_{x_0} + \delta_{x_1}, \delta_{x_0} + \delta_{x_2})$ with $x_i
\in \R^2$ distinct,
we have $\Pru(\Cs) = (\delta_{x_1}, \delta_{x_2})$.

The additive group $\R^2$ acts on $\config$ by translations $\{\theta_t\}_{t \in \R^2}$: if $\C = \{x_i, i \in I\} \in \config$ we let 
\begin{equation}\label{actiontrans}
\theta_t \cdot \C := \{x_i - t, i \in I\}.
\end{equation}
We extend this action (while keeping the same notation) to the setting of signed configurations by acting on both components $\Cp$ and $\Cm$ of a given $\Cs \in \configs(\R^2)$.

For any integer $N$ we identify a configuration $\C$ with $N$ points with 
an unordered
$N$-tuples of points in $\R^2$, which we still denote by $\C$.
Denoting by $\pi_N$ the projection from
$(\R^2)^N$ to unordered $N$-tuples in $\R^2$, for a set $A$ of configurations with $N$ points we write
$\Leb^{\otimes N}(A)=\Leb^{\otimes N}(\pi_N^{-1}(A))$.

\paragraph{\textbf{Random tagged signed point configurations.}}
The space $\probas(\Box \times \configs)$ can be viewed as the space of laws of tagged signed point configurations, where we keep as a tag the point $x \in \Box$ around which the signed configuration was blown up. It is equipped with the topology of weak convergence of measures on $\Box \times \configs$. Throughout we will always consider the subset 
\[
\left\lbrace \bPsts \in \probas(\Box \times \configs), \bPsts(A \times \configs) = \Leb(A), \forall A \text{ Borel}\right\rbrace,
\]
and continue, with some abuse of notation, to denote it by $\probas(\Box \times \configs)$. This assumption allows us to consider the disintegration probability measures
$\bPstsx \in \probas(\configs)$ for any $x \in \Lambda$, which satisfy by definition
$$
\int F(x,\Cs) d\bPsts(x,\Cs) = \int_{\Box} \left(\int F(x, \Cs) d\bPstsx(\Cs)\right) dx,
$$
for any function $F \in C^0_{b}(\Lambda \times \configs)$. We refer to \cite[Section 5.3]{AGS} for a proof of the existence of disintegration measures. 

\paragraph\textbf{{Intensity.}}
To any $\Psts \in \probas(\configs)$ we may associate two probability measures $\Pstsp, \Pstsm$  on $\config$ as the push-forwards of $\Psts$ by the two canonical projections of $\configs$ on $\config$, namely $(\Cp, \Cm) \mapsto \Cp$ and $(\Cp, \Cm) \mapsto \Cm$.

Let $\Pst \in \probas(\config)$. If there exists a measurable function $\rho_{1,P}$ such that for any function $\varphi \in C^0_c(\R^2)$ we have
\begin{equation} \label{defintensite}
\Esp_{\Pst}\left[\sum_{x\in \C} \varphi(x)\right] = \int_{\R^2} \rho_{1,P}(x) \varphi(x) dx,
\end{equation}
then we say that $\rho_{1,P}$ is the one-point correlation function (or \textit{intensity}) of $P$. For $m \geq 0$ we say that $P$ is of intensity $m$ when the function $\rho_{1,P}$ of \eqref{defintensite} exists and satisfies $\rho_{1,P} \equiv m$. 

When $\bPst \in \probas(\Lambda \times \config)$ we let $\rho_{\bPst}$ be the intensity measure of $\bPst$ defined by $\rho_{\bPst}(x) := \rho_{1,\bPst^{x}}$. If $\bPsts \in \probas(\Lambda \times \configs)$ we let $\rhop_{\bPsts}$, $\rhom_{\bPsts}$ be the respective intensity measure of $\bPstsp$, $\bPstsm$.
We denote by $\probas_{\rm{inv},1}(\Box \times \configs)$ the set of all $\bPsts \in \probinv(\Box \times \configs)$ such that 
$$ \int_{\Box} \rhop_{\bPsts} = \int_{\Box} \rhom_{\bPsts} = 1.$$

\subsection{Rewriting the interaction energy} Here we adapt computations from \cite{RougSer,PetSer} to rewrite the interaction energy in terms of the electric potential generated by the points, seen as charges (this is also the analogue of the 
``electrostatic inequality''
of \cite{GunPan}). Comparing with \cite{RougSer,PetSer}, instead of using a fixed (small) truncation distance we use  the nearest neighbor distance.
\paragraph{\textbf{Truncation of the logarithmic interaction.}}
Following \cite{PetSer} we define $\delta_p^{(\eta)}$ to be 
the normalized surface measure on $\p D(p, \eta)$ (it coincides with the Dirac mass at $p$ if $\eta=0$). 
We will also need the notion of truncated logarithmic kernel defined for  $1>\eta>0$ and  $x \in \R^2$ by 
\begin{equation}
\label{def:feta} f_\eta(x)= \left(-\log|x| - \log(\eta)\right)_+,
\end{equation} and by $f_\eta\equiv 0$ if $\eta=0$.
We note that the function  $f_\eta$ vanishes outside the disk $D(0, \eta)$ and satisfies that
 \begin{equation}\label{def:de}
 \frac{1}{2\pi} \div (\nabla f_\eta) + \delta_0=\delta_0^{(\eta)}.
\end{equation}

\paragraph{\textbf{Nearest-neighbor distance.}}
If $\XN, \YN$ are two $N$-tuples of points, we define the nearest-neighbor (half-)distance as
 \begin{equation} \label{def:rofini}
 \ro(x_i) =  \min\( 1,\hal \min_{j\neq i} |x_i-x_j|, \hal \min_{j} |x_i-y_j|\),
 \end{equation}
 for any $i = 1, \dots, N$, and similarly for $\ro(y_i)$.
 
\paragraph{\textbf{Rewriting of the energy functional.}}
Let $\XN, \YN$ be two $N$-tuples of points in $\Box$. 
 We let $\VNr$ be the electric potential generated by $\mc C$,
after ``smearing out" 
the charges on a distance $\ro$.
More precisely,
\begin{equation} \label{def:HN}
\VNr := 2\pi (-\Delta)^{-1} \left(\sum_{i=1}^N \delta^{(\ro(x_i))}_{x_i} - \sum_{i=1}^N \delta^{(\ro(y_i))}_{y_i}\right),
\end{equation}
where $2\pi (-\Delta)^{-1}$ is the convolution by $\log$. By the properties of $\delta^{(\eta)}_p$, this is the same as setting 
\begin{equation} \label{2edefh}
\VNr (x) = \sum_{i=1}^N \left( - \log |x-x_i| - f_{\ro(x_i)}(x-x_i) \right)  +\sum_{i=1}^N \left( \log |x-y_i|+ f_{\ro(y_i)} (x-y_i)\right).
\end{equation}
We also write
	\begin{equation}
		\label{def:VN0}
		V_{N,0}:=\sum_{i=1}^N 
		\left( - \log |x-x_i| +
	 \log |x-y_i|\right).
	\end{equation}

The next crucial identity expresses the fact that the interaction energy $\WN(\XN,\YN)$ can be computed using the electric potential $\VNr$ (more precisely its gradient, the truncated electric field) even if the interaction has been truncated at distance $\ro$.
\begin{lem} \label{lem:WIPP}
Let $\XN, \YN$ be such that the associated global point configuration 
$\mc C$ is simple.
Then,
\begin{equation} \label{rewritingWpN}
\WN(\XN, \YN) =  \frac{1}{2\pi} \int_{\R^2} |\nabla \VNr|^2 + \sum_{i=1}^N \log \ro(x_i) + \sum_{i=1}^N \log \ro(y_i) .
\end{equation}
\end{lem}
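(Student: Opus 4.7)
My plan is to prove \eqref{rewritingWpN} by a purely electrostatic bookkeeping argument: turn $\int|\nabla \VNr|^2$ into a distributional pairing $\int \VNr\, d\mu$ by integration by parts, and then extract the pairwise interactions via the mean value property of the logarithm.

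To set things up, let $\mu:=\sum_i \delta_{x_i}^{(\ro(x_i))}-\sum_j \delta_{y_j}^{(\ro(y_j))}$. Applying \eqref{def:de} at each of the $2N$ smeared charges and summing with the appropriate signs yields $\frac{1}{2\pi}(-\Delta \VNr)=\mu$; the hypothesis that $\mc C$ is simple guarantees all $\ro(x_i),\ro(y_j)>0$, so the smeared measures are well defined. Because $\mu$ carries zero total charge, $\VNr(z)=O(|z|^{-1})$ and $\nabla \VNr(z)=O(|z|^{-2})$ as $|z|\ti$, so $|\nabla \VNr|^2\in L^1(\R^2)$ and a standard integration by parts with no boundary contribution gives
\begin{equation*}
\frac{1}{2\pi}\int_{\R^2}|\nabla \VNr|^2 = \int_{\R^2} \VNr\, d\mu.
\end{equation*}

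Next I expand the right-hand side. Setting $G_\eta(w):=-\log|w|-f_\eta(w)$, formula \eqref{2edefh} reads $\VNr(z)=\sum_i G_{\ro(x_i)}(z-x_i)-\sum_j G_{\ro(y_j)}(z-y_j)$, so the calculation reduces to evaluating integrals $\int G_{\ro(p)}(z-p)\, d\delta_q^{(\ro(q))}(z)$ for pairs $(p,q)\in (\XN\cup\YN)^2$. The definition \eqref{def:rofini} forces $2\ro(p)\leq|p-q|$ and $2\ro(q)\leq|p-q|$ whenever $p\neq q$; consequently $\partial D(q,\ro(q))$ lies entirely in $\{|z-p|\geq \ro(p)\}$, where $f_{\ro(p)}(\cdot-p)$ vanishes identically, while $-\log|\cdot-p|$ is harmonic on $D(q,\ro(q))$. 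The mean value theorem then gives $\int G_{\ro(p)}(z-p)\, d\delta_q^{(\ro(q))}(z)=-\log|p-q|$ for $p\neq q$. In the diagonal case $p=q$, the circle $\partial D(q,\ro(q))$ lies at distance exactly $\ro(q)$ from $q$, so $f_{\ro(q)}$ vanishes there (since $f_\eta$ vanishes on $\partial D(0,\eta)$) while $-\log|z-q|=-\log\ro(q)$, giving the value $-\log\ro(q)$.

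Finally I sum these contributions against $\mu$ with the correct charge-dependent signs. Off-diagonal terms with matching signs recombine into $\sum_{i\neq j}(-\log|x_i-x_j|)$ and $\sum_{i\neq j}(-\log|y_i-y_j|)$, while terms with opposite signs assemble into the cross interaction $+\log|x_i-y_j|$ coming from $\WN$; the diagonal terms yield $-\sum_i\log\ro(x_i)-\sum_j\log\ro(y_j)$. Rearranging produces \eqref{rewritingWpN}. The only steps that require any care are the justification of the integration by parts (standard thanks to neutrality of $\mu$) and the verification that the off-diagonal $f$-terms genuinely vanish on the relevant circles, which follows directly from the definition of $\ro$; the rest is careful bookkeeping of indices and signs, and I do not expect any essential conceptual difficulty.
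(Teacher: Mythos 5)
Your proof is correct and follows essentially the same route as the paper's: rewrite $\frac{1}{2\pi}\int|\nabla \VNr|^2$ as the pairing $\int \VNr\,d\mu$ with $\mu=\sum_i\delta_{x_i}^{(\ro(x_i))}-\sum_j\delta_{y_j}^{(\ro(y_j))}$ via integration by parts (justified by the decay forced by global neutrality), then use the disjointness of the balls $D(p,\ro(p))$ and Newton's theorem / the spherical mean value property to evaluate each term and isolate the diagonal contributions $-\log\ro(p)$. You are slightly more explicit than the paper about why simplicity is needed (so that all $\ro>0$) and about verifying that $\partial D(q,\ro(q))$ avoids the support of $f_{\ro(p)}(\cdot-p)$ for $p\neq q$, but these are small elaborations of the identical argument.
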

\begin{proof}This can be seen as a simple application of Newton's theorem.
Since the system is globally neutral (there are $N$ positive charges and $N$ negative charges), the electric potential $\VNr$ decays like $1/|x|$ as $|x| \ti$, and $\nab \VNr$ decays like $1/|x|^2$. 
We may thus integrate by parts and find that 
the boundary terms tend to zero, and therefore,
using \eqref{2edefh}, 
we obtain
\begin{multline*}
\frac{1}{2\pi} \int_{\R^2} |\nabla \VNr|^2  = \int_{\R^2} -\frac{1}{2\pi} \Delta \VNr \VNr 
= \sum_{i=1}^N \int_{\R^2}
 \VNr \left(\delta_{x_i}^{(\ro(x_i))} -  \delta_{y_i}^{(\ro(y_i))} \right)
   \\ = \sum_{i,j=1}^N \left(-\log |x-x_j|-f_{\ro(x_j)} (x-x_j) +\log |x-y_j|+f_{\ro(y_j)} (x-y_j)\right) \left(\delta_{x_i}^{( \ro(x_i))} -  \delta_{y_i}^{(\ro(y_i))}\right).
   \end{multline*}
Next, we use the fact that the disks $D(x_i, \ro(x_i))$ and 
$D(y_i, \ro(y_i))$ are disjoint by the
definition of $\ro$, and that for any $p, \eta$, the measure $\delta_p^{(\eta)}$ is supported on $\p D(p, \eta)$ where $f_\eta(x-p)$ vanishes, to obtain 
   \begin{multline}
\frac{1}{2\pi} \int_{\R^2} |\nabla \VNr|^2  = 
 \sum_{i\neq j} \left( - \log |x-x_j|\delta_{x_i}^{(\ro(x_i))} -\log|x-y_j|\delta_{y_i}^{(\ro(y_i))} \right) 
 \\+ \sum_{i,j} \left( \log |x-x_j|\delta_{y_i}^{(\ro(y_i))}  +\log |x-y_j| \delta_{x_i}^{(\ro(x_i))} \right)
   -\sum_{i=1}^N \left(\log \ro(x_i)+\log \ro(y_i)\right).
   \end{multline}
In addition, by Newton's theorem (or by the mean-value property), the average of $\log |x-p|$ over any disk $D(q,\eta)$ not containing $p$ is $\log |x-q|$. Since $\delta_p^{(\eta)}$ is precisely the uniform measure of mass $1$ on $\p D(p, \eta)$, and using again the fact that the 
disks
$D(x_i, \ro(x_i))$ and $D(y_i, \ro(y_i))$ are disjoint, we conclude that 
    \begin{multline}
\frac{1}{2\pi} \int_{\R^2} |\nabla \VNr|^2  = 
 \sum_{i\neq j}\left(- \log |x_i-x_j| -\log|y_i-y_j|  \right)
 \\+\sum_{i,j} \left( \log |y_i-x_j|  +\log |x_i-y_j|  \right)
   -\sum_{i=1}^N \left( \log \ro(x_i)+\log \ro(y_i) \right),
   \end{multline}
which is the desired result.
\end{proof}

\subsection{Blow-up coordinates} \label{sec:blowup}
In view of Lemma \ref{lem:WIPP}, 
and since the finite point configurations $\XN, \YN$ are simple $\PNbeta$-a.s.,
we may rewrite the Gibbs measure as the probability measure whose density with respect to the Lebesgue measure on $\Box^{2N}$ is given by
\begin{equation}\label{rwP}
\frac{1}{\ZNbeta} \exp\left(-\frac{\beta}{2}\left( \frac{1}{2\pi} \int_{\R^2} |\nab \VNr|^2  +\sum_{i=1}^N \log \ro(x_i)+ \log \ro(y_i) \right) \right).
\end{equation}
We rescale the finite configurations by a factor $\sqrt{N}$ and use a prime symbol for the new quantities. In particular we let 
$\XpN = (x'_1, \dots, x'_N)$ with $x'_i = \sqrt{N} x_i$ (and the same for $\YpN$), and we have of course $\ro(x'_i) = \sqrt{N} \ro(x_i)$. We let $\VpN$ be the electric potential generated by the rescaled point configuration after truncation $$
\VpN := \VNr\left(\frac{\cdot}{\sqrt{N}}\right),
$$
and 
\begin{equation}\label{defvn}\VN:= V_{N,0} 
\left(\frac{\cdot}{\sqrt{N}}\right).\end{equation}
We have, by 
a change of variables
$$\int_{\R^2} |\nab \VpN|^2= \int_{\R^2} |\nab \VNr|^2,$$ whereas the nearest-neigbor distance term scales as
$$ \sum_{i=1}^N \log \ro(x_i)+ \log \ro(y_i) = \sum_{i=1}^N \log \ro'(x_i)+ \log \ro'(y_i) + N \log N.$$
Combining these identities with \eqref{expZN1} and \eqref{rwP} we may write the Gibbs measure as the probability measure whose density with respect to the Lebesgue measure on $\Box^{2N}$ is given by
\begin{equation} \label{rwP2}
\frac{1}{\KNbeta} \exp \left(-\frac{\beta}{2}\left( \frac{1}{2\pi} \int_{\R^2} |\nab \VpN|^2  +\sum_{i=1}^N \log \ro(x'_i)+ \log \ro(y'_i) \right)\right),
\end{equation}
where the new normalizing constant $\KNbeta$ satisfies $\log \KNbeta = C_{\beta}N + o(N)$,
 with $C_\beta$ as in \eqref{expZN1}.

The computations in the last two subsections serve as a motivation for the upcoming definition of the interaction energy for the infinite configurations which arise after taking the limit $N \to \infty$.  We note in particular that $\VNr$ solves 
\[
-\Delta \VNr= 2\pi \left( \sum_{i=1}^N \delta_{x_i}^{(r(x_i))}- \sum_{i=1}^N \delta_{y_i}^{(r(y_i))}\right),
\]
a relation which will pass to the limit (in the sense of distributions)  as $N \to \infty$. The electric field associated to the potential $\VNr$ is $\nab \VNr$ and its divergence is $\Delta \VNr$. 

\subsection{Interaction energy for signed configurations}
\label{sec:RenormEnerg}
\paragraph{\textbf{Electric fields and electric processes.}}
We may thus define the class $\Elec$ of “electric” vector fields on $\R^2$ to be the set of vector fields $E$ which belong to $\Lploc(\R^2, \R^2)$ for some $p < 2$ and satisfy
\begin{equation}\label{eqclam}
- \div E = 2\pi  (\Cp-\Cm) \quad  \text{ in } \R^2
\end{equation}
for some signed point configuration $\Cs = (\Cp, \Cm)$. When $E$ satisfies \eqref{eqclam} we write $E~\in~\Elec(\Cs)$ and say that $E$ is \textit{compatible} with $\Cs$. We note that two elements of $\Elec(\Cs)$ differ by a divergence-free vector field.
Unless stated otherwise, we endow the space $\Lploc(\R^2, \R^2)$ with the 
weak topology.  If $E \in \Elec$ we let $\mathrm{Conf}({E})$ be the underlying signed point configuration i.e. the signed point configuration $\Cs = (\Cp, \Cm)$ where $(\Cp, \Cm)$ is the Jordan decomposition of $\frac{-1}{2\pi} \div E$. 
In particular, if $\Cs$ is a signed point configuration with
$\C = \Pru(\C)$ (the latter is implied if $\Cs$ is simple),
 and such that $E \in \Elec(\C)$, then $\Cs = \mathrm{Conf}(E)$.

We define an electric field law as an element of $\probas(\Lploc(\R^{2}, \R^{2}))$ where $p<2$, concentrated on $\Elec$. It will usually be denoted by $\Pelec$.  We say that $\Pelec$ is stationary when it is invariant under the (push-forward by) translations $\theta_x \cdot E = E(\cdot - x)$ for any $x \in \R^2$. A tagged electric field law is an element of $\probas(\Box \times \Lploc(\R^{2}, \R^{2}))$ whose first marginal is the normalized Lebesgue measure on $\Lambda$ and whose disintegration slices are electric field laws. It will be denoted by $\bPelec$. We say that $\bPelec$ is stationary if for a.e.~$z \in \Box$, the disintegration measure $\bar{P}^{\mathrm{elec},z}$ is stationary.

\paragraph{\textbf{Nearest-neighbor distance and truncation.}}
If $\Cs = (\Cp, \Cm)$ is a signed point configuration we define the nearest-neighbor (half-)distance as
 \begin{equation} \label{def:roinfini}
 \ro(x) =  \min\(1 ,\hal \min_{x' \in \Cp, x' \neq x} |x - x'|, \hal \min_{y \in \Cm} |x-y|\),
 \end{equation}
for any $x \in \Cp$ and we define similarly $\ro(y)$ for $y \in \Cm$.

For any electric field $E$ we define its truncation
\begin{equation}\label{defeeta}
\Er :=  E - \sum_{p \in \Cp} \nabla f_{\ro(p)}(x-p) + \sum_{p \in \Cm} \nabla f_{ \ro(p)}(x-p),
\end{equation}
where $\C = \mathrm{Conf}(E)$ and $\ro$ is defined above, 
computed with respect to $\mathrm{Conf}(E)$. This is the ``infinite configuration" equivalent of the truncated electric field $\nabla \VNr$ as in \eqref{rewritingWpN}. We may now define the interaction energy of an admissible electric field in a similar fashion as what arises in Lemma \ref{lem:WIPP}.

A control on the $L^2$-norm of $E_r$ can be 
translated into a bound of the $L^p$-norm on $E$ as follows.
\begin{lem} \label{lem:L2Lp}
Let $\C$ be a 
point configuration, let $E \in \Elec(\C)$ and let $R > 0$. We have
\begin{equation}
\label{L2Lp}
\|E\|_{L^p(C_R)} \leq  L_1 \|E_r\|_{L^2(C_R)} + L_2 |\C|(C_{R+1})
\end{equation}
for some $L_1 > 0$ depending only on $R$ and $p$ and some  universal constant $L_2$.
\end{lem}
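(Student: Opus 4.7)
\textbf{Proof plan for Lemma \ref{lem:L2Lp}.} The plan is to undo the truncation. By the definition \eqref{defeeta},
\[
E = E_r + F, \qquad F(x) := \sum_{p \in \Cp} \nabla f_{\ro(p)}(x-p) \;-\; \sum_{p \in \Cm} \nabla f_{\ro(p)}(x-p),
\]
so by the triangle inequality it suffices to bound $\|E_r\|_{L^p(\carr_R)}$ and $\|F\|_{L^p(\carr_R)}$ separately.

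For the first term, since $p < 2$ and $\carr_R$ has finite Lebesgue measure $R^2$, H\"older's inequality gives
\[
\|E_r\|_{L^p(\carr_R)} \;\leq\; |\carr_R|^{\frac{1}{p}-\frac{1}{2}} \,\|E_r\|_{L^2(\carr_R)} \;=\; R^{\frac{2}{p}-1} \|E_r\|_{L^2(\carr_R)},
\]
which contributes the factor $L_1 = L_1(R,p)$.

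For the second term I would use that the corrective field $\nabla f_\eta$ associated to a single charge has finite $L^p$ mass that is controlled uniformly in $\eta\in(0,1]$: from \eqref{def:feta}, $\nabla f_\eta(x) = -x/|x|^2$ on $D(0,\eta)$ and vanishes outside, so by polar coordinates
\[
\|\nabla f_\eta\|_{L^p(\R^2)}^p \;=\; \int_{D(0,\eta)} |x|^{-p}\,dx \;=\; \frac{2\pi\, \eta^{2-p}}{2-p} \;\leq\; \frac{2\pi}{2-p},
\]
since $\ro(\cdot)\leq 1$ by \eqref{def:roinfini}. Moreover $\nabla f_{\ro(p)}(\cdot-p)$ is supported in the disk $D(p,\ro(p))\subset D(p,1)$, so it contributes to $F|_{\carr_R}$ only when $p$ lies within distance $1$ of $\carr_R$, i.e.\ $p$ belongs to an enlargement of $\carr_R$ of side $R+2$ (in particular, up to a harmless adjustment of constants, to the enlargement $\carr_{R+1}$ as in the statement, once one fixes the convention). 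By the triangle inequality applied to the (at most $|\C|(\carr_{R+1})$-many) terms of $F$, one obtains
\[
\|F\|_{L^p(\carr_R)} \;\leq\; \sum_{p\in \C\cap \carr_{R+1}} \|\nabla f_{\ro(p)}\|_{L^p(\R^2)} \;\leq\; L_2\, |\C|(\carr_{R+1}),
\]
with $L_2$ depending only on $p$ (and hence regarded as universal once $p$ is fixed beforehand).

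Combining the two bounds yields \eqref{L2Lp}. There is no real obstacle here: the only thing to watch is that the local integrability of $\nabla f_\eta$ fails at $p = 2$, which is exactly why the statement requires $p<2$, and that the nearest-neighbor truncation $\ro\leq 1$ is what keeps $L_2$ independent of the configuration. The geometric point --- that only points in a bounded enlargement of $\carr_R$ contribute, thanks to $\ro\leq 1$ --- is what produces the linear dependence on $|\C|(\carr_{R+1})$ rather than on the whole configuration.
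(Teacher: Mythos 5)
Your proof is correct and follows essentially the same approach as the paper: decompose $E = E_r + F$ using \eqref{defeeta}, apply H\"older to get $\|E_r\|_{L^p(\carr_R)} \leq L_1 \|E_r\|_{L^2(\carr_R)}$, and bound $\|F\|_{L^p(\carr_R)}$ by the triangle inequality together with the uniform $L^p$ bound on $\|\nabla f_\eta\|_{L^p}$ and the fact that only points within distance $1$ of $\carr_R$ contribute. You merely supply the explicit polar-coordinate computation that the paper omits; your side remark that the clean enlargement is really $\carr_{R+2}$ rather than $\carr_{R+1}$ (since $\ro \leq 1$) is a fair observation about the paper's own stated constant, but as you say it is harmless.
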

\begin{proof}
From the definition \eqref{defeeta} we get
\[
|E| \leq  |E_r| + \sum_{q \in \Cp} |\nabla f_{\ro(q)}(x-q)| + \sum_{q \in \Cm} |\nabla f_{ \ro(q)}(x-q)|,
\]
and \eqref{L2Lp} follows by using the triangle inequality 
for the $L^p$-norm, 
 H\"{o}lder's inequality which yields $\|E_r\|_{L^p(C_R)} \leq 
L_1 \|E_r\|_{L^2(C_R)}$ for some 
{$L_1$} depending on $R$ and $p$, and by observing that the terms $\|\nabla f_{\ro(q)}\|_{L^p}$ are uniformly bounded by some $L_2 > 0$ and that 
the number of such terms is
bounded by 
the total number of points of $\C$ in $C_{R+1}$.
\end{proof}

\subsubsection{Positive part of the energy}
For any $E \in \Elec$ we define $\cW(E)$ as
\begin{equation}
\label{def:cW}
\cW(E) := \limsup_{R \to \infty} \frac{1}{R^2} \int_{\carr_R} |\Er|^2,
\end{equation} 
and we call it the ``positive part" of the energy of $E$. 
Recall that the truncation $E_r$ of $E$ at nearest-neighbor distance is 
defined with respect to the ``minimal" underlying signed point 
configuration $\mathrm{Conf}(E)$.

Next, if $\Cs$ is a signed point configuration we let $\dW(\Cs)$ be the infimum of $\cW(E)$ over the electric fields $E$ compatible with $\Cs$
\begin{equation}
\label{def:dW}
\dW(\Cs) := \inf \left\lbrace \cW(E), E \in \Elec(\Cs) \right\rbrace.
\end{equation}
We emphasize that the definition of $\dW(\Cs)$ proceeds by 
considering the energy of associated electric fields, and 
that the truncation of an electric field is defined with respect to the 
``minimal" underlying signed point configuration $\mathrm{Conf}(E)$. 
In particular if $\Pru(\C_1) = \Pru(\C_2)$ then $\dW(\C_1) = \dW(\C_2) = 
\dW(\Pru(\C_1))$, where $\Pru$ is the pruning map 
of Definition \ref{def:pruning}.

\subsubsection{Negative part of the energy} 
If $\chi : \R^2 \to \R$ is a 	nonnegative measurable function with compact support and $\Cs$ a signed point configuration we let
\begin{align*}
\Deta(\chi, \Cs) & := - \int \chi(x) \log(\ro(x)) (d\Cp + d\Cm)(x)\\
\Deta(\Cs) & := \limsup_{R \ti} \frac{1}{R^2} \Deta(\1_{\carr_R}, \Cs).
\end{align*}
Similarly, for any $0 < \tau < 1$, we let
\begin{align*}
\Detat(\chi, \Cs) & := - \int \chi \log(\ro(x) \vee \tau) (d\Cp + d\Cm)(x)\\
\Detat(\Cs) & := \limsup_{R \ti} \frac{1}{R^2} \Detat(\1_{\carr_R}, \Cs).
\end{align*}
The function $\Deta$ is a non-negative quantity, corresponding to the expected ``dipole contribution" to the energy. We will call $-\Deta$ the negative part of the energy. It can be obtained as the monotone limit of $\Detat(\Cs).$

We could now try to define the interaction energy of a signed point configuration as the difference $\dW(\Cs) - \Deta(\Cs)$. However it turns out that for good definition it is preferable to consider such an object at the level of signed point processes, as we do below.  

\subsubsection{A compatibility lemma}

\begin{lem} \label{lem:compatibility} Let $\{E_N\}_N$ be a sequence of electric fields, let $E \in \Elec$ and let $\C \in \configs$. Assume that
\begin{enumerate}
\item $\{E^{(N)}\}_N$ converges to $E$ weakly in $\Lploc(\R^2, \R^2)$.
\item $\{\mathrm{Conf}(E^{(N)})\}_N$ converges to $\C$ in $\configs$.
\end{enumerate}
Then $\mathrm{Conf}(E) = \Pru(\C)$, and in particular $E$ is an electric field.
Moreover,
\begin{equation} \label{lsciENr}
\liminf_{N \ti} \int \chi |E^{(N)}_r|^2 \geq \int \chi |E_r|^2
\end{equation}
 for any smooth, compactly supported, nonnegative function $\chi$.
\end{lem}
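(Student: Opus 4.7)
The plan splits into two parts.

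\textbf{Part 1 (identification of $\mathrm{Conf}(E)$).} The divergence operator is continuous from the weak $\Lploc$ topology into distributions, so assumption (1) yields $-\div E^{(N)} \to -\div E$ in $\mathcal{D}'(\R^2)$. Using $-\div E^{(N)} = 2\pi(\C^+_N - \C^-_N)$ together with assumption (2) (which gives vague convergence of $\C^+_N - \C^-_N$ to $\C^+ - \C^-$), the limit equals $2\pi(\C^+ - \C^-)$ as signed Radon measures. By uniqueness of distributional limits, $-\div E = 2\pi(\C^+ - \C^-)$, whose Jordan decomposition is $\Pru(\C)$ by Definition \ref{def:pruning}. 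Hence $\mathrm{Conf}(E) = \Pru(\C)$ and $E \in \Elec$.

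\textbf{Part 2 (energy lower semicontinuity).} Assume without loss that $\liminf_N \int \chi |E^{(N)}_r|^2 < \infty$. Along a subsequence realizing the liminf, $\chi^{1/2} E^{(N)}_r$ is uniformly bounded in $L^2$; extract a further subsequence with $\chi^{1/2} E^{(N)}_r \rightharpoonup G$ weakly in $L^2$. Weak lower semicontinuity of the $L^2$ norm then gives $\|G\|^2_{L^2} \leq \liminf_N \int \chi |E^{(N)}_r|^2$, so it suffices to identify $G = \chi^{1/2} E_r$.

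The core computation is test-function convergence: for smooth, compactly supported $\psi$, expand
\[
\int \psi \cdot E^{(N)}_r\, dx = \int \psi \cdot E^{(N)}\, dx - \sum_{p \in \C^+_N} \int \psi \cdot \nabla f_{r_N(p)}(x-p)\, dx + \sum_{p \in \C^-_N} \int \psi \cdot \nabla f_{r_N(p)}(x-p)\, dx.
\]
The first term converges by assumption (1). Integration by parts together with $\int |f_\eta|\, dx = O(\eta^2)$ gives $|\int \psi \cdot \nabla f_\eta(x-p)\, dx| = O(\eta^2 \|\div\psi\|_\infty)$. The remaining sum has finitely many nonzero terms (bounded using assumption (2)); for points $p_N \to p \in \Pru(\C)$ with $r_N(p_N) \to r(p)$ and for dipole pairs whose radii $r_N \to 0$, the sum converges to $\int \psi \cdot (E - E_r)\, dx$. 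Applied to $\psi = \chi^{1/2} \phi$, this identifies the weak $L^2$ limit as $G = \chi^{1/2} E_r$.

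\textbf{Main obstacle.} The delicate point is that a stable $p \in \Pru(\C)$ may lie closer to a dipole of $\C$ than to its nearest $\Pru(\C)$-neighbor; in that case $r_N(p_N) \to s < r(p)$ and the naive test-function limit differs from $E_r$ (the pre-limit then contains an un-truncated singularity $\sim 1/|x-p|$ in the annulus $s < |x-p| < r(p)$, contributing an excess $\sim 2\pi \log(r(p)/s) \geq 0$). To handle this I would introduce a smooth cutoff $\chi_\delta = \chi \eta_\delta \leq \chi$ vanishing in a $\delta$-neighborhood of the (locally finite) dipole set $D := \mathrm{supp}\,\C^+ \cap \mathrm{supp}\,\C^-$ inside $\mathrm{supp}\,\chi$. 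Careful analysis on $\mathrm{supp}(\chi_\delta)$ combined with the above yields $\int \chi_\delta |E_r|^2 \leq \liminf_N \int \chi |E^{(N)}_r|^2$, and monotone convergence as $\delta \to 0$ (using $|D|=0$ and $|E_r|^2 \in L^1_{\mathrm{loc}}$) gives the full bound.
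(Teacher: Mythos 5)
Part 1 of your proposal is correct and essentially the paper's argument (the paper packages the distributional identification through test functions in $C^0_b(W^{-1,p}(B_R))$, but the content is the same: match the two descriptions of $-\frac{1}{2\pi}\div E$).

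Part 2 has a genuine gap, precisely where you flag it as the ``main obstacle.'' You correctly diagnose that the weak $L^2$-limit of $\sqrt{\chi}\,E^{(N)}_r$ need not be $\sqrt{\chi}\,E_r$: if a surviving point $p\in\Pru(\C)$ has a dipole of $\C$ closer than its nearest $\Pru(\C)$-neighbor, then $r_N(p_N)\to r'(p) < r(p)$ and the limit carries an extra singular piece $\sim \nabla\log|x-p|$ on the annulus $A_p:=\{r'(p)\le|x-p|<r(p)\}$. This limit is exactly $\sqrt{\chi}\,E_{r'}$, with $r'$ the nearest-neighbor distance computed in $\C$ rather than in $\mathrm{Conf}(E)=\Pru(\C)$ — which is what the paper's proof identifies. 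But the cutoff you propose does not remove the obstruction: $\chi_\delta$ vanishes in a $\delta$-neighborhood of the coincidence set $D=\supp\Cp\cap\supp\Cm$, while the problematic annuli $A_p$ are centered at the \emph{surviving} points $p\in\Pru(\C)$, not at $D$, and their width $r(p)-r'(p)$ is bounded away from $0$ independently of $\delta$. Most (or all) of each $A_p$ thus remains inside $\supp\chi_\delta$, so the intermediate claim $\int\chi_\delta|E_r|^2\le\liminf_N\int\chi|E^{(N)}_r|^2$ does not follow from what precedes it, and the monotone-convergence step closes over the wrong quantity.

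The resolution the paper uses requires no excision at all, and it is only a half-step away from an observation you already made (``excess $\sim 2\pi\log(r(p)/s)\ge 0$''): accept that the weak limit is $\sqrt{\chi}\,E_{r'}$, so that lower semicontinuity of the $L^2$ norm gives $\liminf_N\int\chi|E^{(N)}_r|^2 \ge \int\chi|E_{r'}|^2$, and then compare $E_{r'}$ with $E_r$. Since $r'\le r$, the difference $E_{r'}-E_r$ is supported on the disjoint annuli $A_p$ and equals $\mp\nabla\log|x-p|$ there, and the energy is monotone in the truncation parameter (the paper records this as $|E_{r'}|^2\ge|E_r|^2$; concretely, the extra singular layer only \emph{adds} energy). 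That sign observation is the right fix; the cutoff near $D$ is not.
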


\begin{proof}
Let $B_R$ be the ball of radius $R$,  and let $f \in C^0_b(W^{-1,p}(B_R))$. 
One may check that $(C_0^0(B_R))^* $ embeds continuously into $W^{-1,p}(B_R)$ (indeed,  $W^{-1,p}(B_R)$ is the  dual of the Sobolev space $W^{1,q}_0(B_R)$ where $q$ is the conjugate exponent to $p$, and the latter embeds  continuously into $C_0^0(B_R)$ since $q>2$). It thus follows that $f$ is also bounded and continuous on $(C_0^0(B_R))^*.$
The convergence of $\mathrm{Conf}(E^{(N)})$ to $\C$ and the fact that $\C$ is locally finite thus  imply that
\begin{equation} \label{convergencecontreffort} 
 \lim_{N \ti} f\left(-\frac{1}{2\pi} \div E^{(N)}\right) =  f(\Cp - \Cm), \quad \forall f \in C^0_b(W^{-1,p}(B_R)).
\end{equation}

Since the function $E \mapsto -\frac{1}{2\pi} \div E$ is continuous from $\Lploc(B_R)$ to $W^{-1,p}(B_R)$, we may  use the first assumption to get
\begin{equation}\label{convergencecontreffort2} 
\lim_{N \ti} f\left(-\frac{1}{2\pi} \div E^{(N)}\right)  =  f\left(-\frac{1}{2\pi} \div E\right).
\end{equation}Since this is true for all $f \in C^0_b(W^{-1,p}(B_R))$ and for all $R>1$, we conclude that 
 $-\frac{1}{2\pi} \div E = \Cp - \Cm$ and thus $\mathrm{Conf}(E) = \Pru(\C)$.

We next 
prove \eqref{lsciENr}. We may assume that the left-hand side is finite, 
otherwise there is nothing to prove. This implies  that, up to an
extraction of a subsequence, $\sqrt{\chi} E^{(N)}_r$  converges weakly in $L^2(\R^2, \R^2)$ to  some vector field, which we claim can only be $\sqrt{\chi} E_{r'}$ where $r'$ denotes the nearest-neighbour distance computed in $\C$ (and not in $\mathrm{Conf}(E)$).  Indeed, it suffices to check that $\sqrt{\chi} E^{(N)}_r$ converges to $\sqrt{\chi} E_{r'}$ weakly in $L^p$.  In view of the first assumption and of \eqref{defeeta}, it suffices to check that 
$$\sum_{p \in \C^{(N),+}} \nabla f_{r(p)} (x-p)- \sum_{q \in \C^{(N),-}}  \nabla f_{r(q)}(x-q)
\rightharpoonup \sum_{p \in \Cp} \nabla f_{r(p)} (x-p) - \sum_{q \in \Cm}  \nabla f_{r(q)}(x-q)$$ weakly in $L^p_{\mathrm{loc}}(\R^2)$, where $\C^{(N)} = (\C^{(N), +}, \C^{(N),-}) := \mathrm{Conf}(E^{(N)})$. But from \eqref{def:feta}, we have $\nab f_\eta(x-p)=-\frac{x-p}{|x-p|^2}\indic_{|x-p|<\eta}$ (and $0$ if $\eta=0$), which is continuous  in $L^p$ with respect to both $p$ and $\eta$.  
So,
the stated convergence follows 
from the second assumption and the definition of $r'$,  using also the fact that the point configurations  are locally finite. We conclude that $\sqrt{\chi} E^{(N)}_r$ converges weakly in $L^2(\R^2)$ as claimed, 
so by the lower semi-continuity of the $L^2$ norm,
$$
\liminf_{N \ti} \int \chi |E^{(N)}_r|^2 \geq \int \chi |E_{r'}|^2.$$
We may finally observe that since $r' < r$  (the nearest neighbor distances are smaller in $\C$ than in $\Pru(\C)$) we have $|E_{r'}|^2 \geq |E_r|^2$ pointwise, which concludes the proof. 
\end{proof}
At the level of laws of electric fields, the result of Lemma \ref{lem:compatibility} translates into
\begin{lem} \label{lem:compatibility2}
Let $\{P_N\}_{N}$ be a sequence of random signed point processes and $\{\Pelec_N\}_{N}$ be a sequence of laws of electric fields. Let $P$ be  a random signed point process and $\Pelec$ be a law of electric fields. Let us assume that:
\begin{enumerate}
\item For any $N \geq 1$, the push-forward of $\Pelec_N$ by $\conf$ coincides with $P_N$.
\item The sequence $\{P_N\}_{N}$ converges to $P$ as $N \ti$ in $\probas(\configs)$.
\item The sequence $\{\Pelec_N\}_{N}$ converges to $\Pelec$ as $N \ti$ in $\probas(\Lploc(\R^2, \R^2))$.
\end{enumerate}
Then we have
\begin{enumerate}
\item The push-forward of $\Pelec$ by $E \mapsto -\frac{1}{2\pi} \div E$ is concentrated on signed measures of the type $\Cp - \Cm$ for some $(\Cp, \Cm)$ in $\configs$. In particular $\Pelec$ is concentrated on $\Elec$.
\item The push-forward of $\Pelec$ by $\conf$ coincides with the push-forward of $P$ by $\Pru$. 
\end{enumerate}
Moreover for any smooth, compactly supported, nonnegative function $\chi$ we have
\begin{equation} \label{lsciENr1}
\liminf_{N \ti} \Esp_{\Pelec_N} \left[ \int \chi |E_r|^2\right] \geq \Esp_{\Pelec} \left[\int \chi |E_r|^2\right].
\end{equation}

\end{lem}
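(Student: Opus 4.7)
The plan is to lift the pathwise Lemma~\ref{lem:compatibility} to the level of laws via Skorokhod's representation theorem. Define the joint law $\mathcal{Q}_N$ on $\Lploc(\R^2,\R^2) \times \configs$ as the push-forward of $\Pelec_N$ under the map $E \mapsto (E, \conf(E))$. By assumption (1), the first marginal of $\mathcal{Q}_N$ is $\Pelec_N$ and the second is $P_N$. Since both marginal sequences converge by assumptions (2)--(3), each is tight on its respective factor, and hence $\{\mathcal{Q}_N\}_N$ is tight on the product space.

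Let $\mathcal{Q}_{N_k} \to \mathcal{Q}$ be any weakly convergent subsequence; $\mathcal{Q}$ necessarily has marginals $\Pelec$ and $P$. By Skorokhod's representation theorem, realize $(\mathcal{Q}_{N_k})$ and $\mathcal{Q}$ on a common probability space $(\Omega,\mathcal{F},\mathbf{P})$ as random variables $(E^{(k)}, \C^{(k)})$ of law $\mathcal{Q}_{N_k}$ converging almost surely to $(E^{(\infty)}, \C^{(\infty)})$ of law $\mathcal{Q}$. By construction $\C^{(k)} = \conf(E^{(k)})$ almost surely. Applying Lemma~\ref{lem:compatibility} pathwise on the event of almost sure convergence yields $E^{(\infty)} \in \Elec$ and $\conf(E^{(\infty)}) = \Pru(\C^{(\infty)})$ almost surely. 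Reading these identifications at the level of laws gives both stated conclusions: $\Pelec$ is concentrated on $\Elec$ (as the law of $E^{(\infty)}$), and its push-forward by $\conf$ equals the law of $\conf(E^{(\infty)}) = \Pru(\C^{(\infty)})$, i.e.\ the push-forward of $P$ by $\Pru$. Since these statements involve only $\Pelec$ and $P$, they are independent of the extracted subsequence.

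For the lower semicontinuity inequality, apply the same Skorokhod construction to any subsequence $\{N_k\}$ realizing the liminf on the left-hand side, and combine the pathwise bound \eqref{lsciENr} of Lemma~\ref{lem:compatibility} with Fatou's lemma (permitted because $\chi |E_r|^2 \geq 0$):
\begin{align*}
\liminf_k \Esp_{\Pelec_{N_k}}\left[\int \chi |E_r|^2\right]
&= \liminf_k \mathbf{E}\left[\int \chi |E^{(k)}_r|^2\right] \\
&\geq \mathbf{E}\left[\liminf_k \int \chi |E^{(k)}_r|^2\right] \\
&\geq \mathbf{E}\left[\int \chi |E^{(\infty)}_r|^2\right] = \Esp_{\Pelec}\left[\int \chi |E_r|^2\right].
\end{align*}

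The main obstacle is of a technical nature rather than a conceptual one: one must verify that the relevant subset of $\Lploc(\R^2,\R^2)$ equipped with the weak topology, and $\configs$ equipped with the vague topology, are Polish spaces so that both Prohorov's tightness criterion and Skorokhod's representation theorem apply on the product. Once this is granted, everything else follows by transferring the deterministic content of Lemma~\ref{lem:compatibility} pointwise along the Skorokhod coupling and invoking Fatou's lemma.
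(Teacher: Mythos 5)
The paper states Lemma~\ref{lem:compatibility2} as the ``law-level translation'' of Lemma~\ref{lem:compatibility} and does not supply a proof of its own, so there is no canonical argument to compare against; your task was to fill in the translation, and the Skorokhod-coupling argument you give is a correct and natural way to do it. The key device of passing to the \emph{joint} law $\mathcal{Q}_N$ (the push-forward of $\Pelec_N$ by $E\mapsto(E,\conf(E))$) is exactly what is needed: the map $\conf$ is not continuous for the weak topology (opposite charges can cancel in the limit), so one cannot work with $\Pelec_N$ alone, but since $\C^{(k)}=\conf(E^{(k)})$ holds pathwise on the coupled space, Lemma~\ref{lem:compatibility} applies $\omega$-by-$\omega$ once you have a.s.\ convergent versions. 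Tightness of $\{\mathcal{Q}_N\}$ from tightness of the marginals, Fatou for \eqref{lsciENr1}, and the identification of the two push-forwards are all carried out correctly. You are right to flag the Polishness point as the only delicate step: the weak topology on $\Lploc(\R^2,\R^2)$ is not metrizable globally, but it is metrizable (and compact, hence Polish) on sets of fields with $L^p$-norms locally bounded, and the uniform bounds implicit in the energy hypotheses confine the laws to such sets, so the Skorokhod representation theorem applies in the form used; spelling this out (e.g.\ by restricting to a $\sigma$-compact Polish subset supporting all the laws) would make the argument fully rigorous. An alternative route that avoids Skorokhod is to argue that the truncated functional $(E,\C)\mapsto \int\chi|E_r|^2\wedge M$ is sequentially lower semi-continuous along sequences of compatible pairs and then invoke the Portmanteau theorem for lsc integrands on $\probas(\Lploc\times\configs)$, but because the lsc statement in Lemma~\ref{lem:compatibility} is inherently sequential and requires the pair $(E,\C)$ jointly, the coupling formulation is cleaner and is the one you would expect the authors to have had in mind.
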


\subsection{Process level energy} \label{sec:energypp}
\paragraph{\textbf{Energy of signed point processes.}}
Let $\Pst$ (resp. $\bPst$) be the law of a signed point process (resp. of a tagged signed point process). We define 
\begin{align*}
& \tdW(\Psts) := \Esp_{\Psts} \left[ \dW(\Cs) \right], \quad \bdW(\bPsts) := \Esp_{\bPsts} \left[ \dW(\Cs) \right] \text{ (positive part of the energy)}\\
& \tDeta(\Psts) := \Esp_{\Psts} \left[  \Deta(\Cs) \right], \quad \bDeta(\bPsts) := \Esp_{\bPsts} \left[  \Deta(\Cs) \right] \text{ (contribution of dipoles)} \\
& \tDetat(\Psts) := \Esp_{\Psts} \left[ \Detat(\Cs) \right], 
\quad \bDetat(\bPsts) := \Esp_{\bPsts} \left[  \Detat(\Cs) \right] 
\text{ (dipoles at truncated
distance $\geq \tau$)}.
\end{align*}

Finally we define the interaction energy of $\Psts$ (resp. $\bPsts$) 
\begin{equation} \label{def:tWS}
\tWs(\Psts) := \tdW(\Psts) - \tDeta(\Psts) \ \text{ and }\  \bWs(\bPsts) := \bdW(\bPsts) - \bDeta(\bPsts).
\end{equation}
It is yet unclear whether the right-hand sides in \eqref{def:tWS} have an actual meaning, because it could be the difference of two infinite quantities. However we will see in Section \ref{sec:prelimstudy} that for a certain class of point processes, which are the only candidates for describing the microscopic behavior, the quantity $\tDeta(\bPsts)$ is in fact finite.

\paragraph{\textbf{Stationary lifting with minimal energy.}}
The following useful lemma shows that we may associate to any stationary tagged point process the law of a \textit{stationary} tagged electric field which is compatible with it and whose energy is minimal.
\begin{lem} \label{lem:compatible}
Assume $\bPsts \in \probinv(\Lambda \times \configs)$ is such that 
$\bdW(\bPsts)$ is finite, and such that signed point configuration 
are $\bPsts$-a.s. simple.
Then there exists a law $\bPelec$ of tagged electric fields such that 
\begin{enumerate}
\item The push-forward of $\bPelec$ by $(z,E) \mapsto (z, \mathrm{Conf}({E}))$ is $\bPsts$.
\item  We have
\begin{equation} \label{compatible}
\bdW(\bPsts) = \Esp_{\bPelec} \left[\cW\right].
\end{equation}
\end{enumerate}
\end{lem}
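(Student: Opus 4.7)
The plan is to reduce to the untagged problem via disintegration and then construct a minimal-energy stationary compatible electric field law for each slice. Concretely, for a.e.\ $z \in \Box$ the disintegration $\bPstsx$ is stationary with $\tdW(\bPstsx) < \infty$ (since $\bdW(\bPsts) = \int \tdW(\bPstsx)\,dx < \infty$), so it suffices to establish the following statement for each slice and then glue measurably in $z$: given stationary $\Psts \in \probinv(\configs)$ supported on simple signed configurations with $\tdW(\Psts) < \infty$, there exists a stationary $\Pelec$ whose $\conf$-pushforward is $\Psts$ and with $\Esp_{\Pelec}[\cW] = \tdW(\Psts)$.

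For this untagged problem I would proceed in three steps. \emph{Measurable selection:} for each $\epsilon > 0$ the set
\[
\{(\Cs, E) \in \configs \times \Lploc : E \in \Elec(\Cs),\ \cW(E) \leq \dW(\Cs) + \epsilon\}
\]
is Borel, using that $\div$ is continuous from $\Lploc$ to $W^{-1,p}_{\mathrm{loc}}$ and that $\cW$ is a $\limsup$ of Borel functionals. The Kuratowski--Ryll-Nardzewski selection theorem then produces a Borel map $\sigma_\epsilon : \configs \to \Lploc$ with $\sigma_\epsilon(\Cs) \in \Elec(\Cs)$ and $\cW(\sigma_\epsilon(\Cs)) \leq \dW(\Cs) + \epsilon$. \emph{Stationarization:} let $\Pelec_\epsilon^R$ be the law of $\theta_t\,\sigma_\epsilon(\Cs)$ where $\Cs \sim \Psts$ and $t$ is uniform in $\carr_R$ independent of $\Cs$; stationarity of $\Psts$ implies the $\conf$-pushforward of $\Pelec_\epsilon^R$ is still $\Psts$, and as $R \to \infty$ the law is asymptotically translation-invariant with a defect scaling as $1/R$. \emph{Compactness:} Lemma~\ref{lem:L2Lp}, combined with the translation-invariance of $\cW$, converts the bound $\cW(\sigma_\epsilon(\Cs)) \leq \dW(\Cs) + \epsilon$ into a uniform $L^p_{\mathrm{loc}}$ moment bound for $E$, yielding tightness of $\{\Pelec_\epsilon^R\}_R$. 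Any weak subsequential limit $\Pelec_\epsilon$ is then stationary, compatible with $\Psts$ by Lemma~\ref{lem:compatibility2}, and satisfies $\Esp_{\Pelec_\epsilon}[\cW] \leq \tdW(\Psts) + \epsilon$ via the lower semi-continuity \eqref{lsciENr1} applied with $\chi_n \nearrow \mathbf{1}_{\carr_R}$ together with Fatou on the $\limsup$ defining $\cW$ in \eqref{def:cW}.

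Letting $\epsilon \to 0$ along a subsequence and extracting once more produces a stationary $\Pelec$ compatible with $\Psts$ satisfying $\Esp_{\Pelec}[\cW] \leq \tdW(\Psts)$. The matching lower bound $\Esp_{\Pelec}[\cW] \geq \tdW(\Psts)$ is immediate from \eqref{def:dW}: each realized $E$ belongs to $\Elec(\mathrm{Conf}(E))$, so $\cW(E) \geq \dW(\mathrm{Conf}(E)) = \dW(\Cs)$ pointwise by simplicity, and taking expectations gives the claim. This proves \eqref{compatible} at the level of each slice.

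The main obstacle I anticipate is preserving joint measurability in the tag $z$ throughout the two successive limit extractions. The cleanest fix is to run the Kuratowski--Ryll-Nardzewski selection once on the joint space $\Box \times \configs \times \Lploc$, perform the stationarization with the same randomization $t$ across all $z$, and invoke a measurable selection of weak subsequential limits (using Prokhorov together with the separability of $\probas(\Lploc)$) to produce $z \mapsto \bar P^{\mathrm{elec},z}_{\epsilon}$ and then $z \mapsto \bar P^{\mathrm{elec},z}$ measurably; reassembly then yields the desired $\bPelec \in \probas(\Box \times \Lploc)$.
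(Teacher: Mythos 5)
Your proof follows essentially the same strategy as the paper's (select near-optimal electric fields, stationarize by averaging over translations, extract a weak limit via $\Lploc$ tightness from Lemma \ref{lem:L2Lp} and compatibility from Lemma \ref{lem:compatibility2}, then let $\epsilon\to 0$); the only difference is that you randomize jointly over the configuration $\Cs\sim\Psts$ and the translate $t\sim\mathrm{Unif}(\carr_R)$, whereas the paper first averages translates of $E^{(\C,\epsilon)}$ for each fixed $\C$ (invoking the ergodic theorem to identify the limiting process $P_\C$) and then mixes over $\Psts$ — by Fubini these yield the same construction. You are somewhat more explicit about the Borel measurability of the near-minimizer selection $\Cs\mapsto\sigma_\epsilon(\Cs)$ (via Kuratowski--Ryll-Nardzewski), a point the paper leaves implicit.
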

\begin{proof}
The proof of Lemma \ref{lem:compatible} is very similar to that
 of \cite[Lemma 2.12]{LebSer} and we only sketch it here. For
 simplicity we present the argument in the non-tagged case, 
the extension to random tagged signed point processes being straightforward.
Let $\epsilon > 0$ be fixed. For any simple signed point configuration $\Cs$ 
such that $\dW(\Cs)$ is finite, by definition of $\dW(\Cs)$ we 
may find an electric field $E^{(\C, \ep)}$ 
such that $\cW(E^{(\C, \ep)}) \leq \dW(\Cs) + \epsilon$. For any $k \geq 1$ we let
\[
P^{\rm{elec}}_{(\C, \epsilon, k)} := \frac{1}{|C_k|} \int_{C_k} 
\delta_{\theta_x \cdot E^{(\C, \ep)}}\ dx,
\]
which is an electric field law (as defined in Section \ref{sec:RenormEnerg}). For any $m \geq 1$ we have (using Fubini's theorem)
\[
\int_{C_m} |E_r|^2 dP^{\rm{elec}}_{(\C, \epsilon, k)} (E) \leq \frac{1}{|C_k|} \int_{C_{m+k}} |E^{(\C, \ep)}_r|^2,  
\]
and the right-hand side is bounded as $k \ti$ by the
finiteness of $\cW(E^{(\C, \ep)})$. Using Lemma~\ref{lem:L2Lp} 
and a standard compactness result in $L^p$-spaces, it follows that the sequence $\{P^{\rm{elec}}_{(\C, \epsilon, k)}\}_{k}$ is tight in $\probas(\Lploc(\R^2, \R^2))$ (for the weak $\Lploc$ topology). We let $P^{\rm{elec}}_{(\C, \epsilon)}$ be a limit point as $k \ti$.
Set
\[
P_{(\C, k)} := \frac{1}{|C_k|} \int_{C_k} 
\delta_{\theta_x \cdot \C}\ dx
\]
Since the signed point configurations are simple $P$-a.s. we see that $P_{(\C, k)}$ is the push-forward of $P^{\rm{elec}}_{(\C, \epsilon, k)}$ by $\conf$ for $P$-a.e. $\C$. Moreover the ergodic theorem implies that for $P$-a.e. $\C$, the sequence $\{P_{(\C, k)}\}_{k}$ converges to a stationary signed point process $P_{\C}$. 
By Lemma \ref{lem:compatibility} we conclude
that the push-forward of $P^{\rm{elec}}_{(\C, \epsilon)}$ by $\conf$ coincides with the push-forward of $P_{\C}$ by $\Pru$ and that the energy is lower semi-continuous in the following sense
\begin{multline} \label{lsciL22}
\Esp_{P^{\rm{elec}}_{(\C, \epsilon)}} \left[ \frac{1}{|C_m|} \int_{C_m} |E_r|^2 \right] \leq \liminf_{k \ti} \Esp_{P^{\rm{elec}}_{(\C, \epsilon, k)}} \left[ \frac{1}{|C_m|} \int_{C_m} |E_r|^2 \right]  \\ \leq \liminf_{k \ti}  \frac{1}{|C_k|} \int_{C_{m+k}} |E^{(\C, \ep)}_r|^2  \leq \cW(E^{(\C, \ep)}).
\end{multline}
Define next $P^{\rm{elec}}_{\epsilon} := \int P^{\rm{elec}}_{(\C, \epsilon)} dP(\C)$, or by duality
\[
\int f dP^{\rm{elec}}_{\epsilon} := \int \left(\int f dP^{\rm{elec}}_{(\C, \epsilon)}\right) dP(\C),
\]
for any test function $f \in C^0(\Lploc(\R^2, \R^2))$. It is not difficult to check that the electric field law 
$P^{\rm{elec}}_{\epsilon}$ is stationary and that its push-forward by $\conf$ is 
$P$ (because $\C = \Pr(\C)$, $P$-a.s.). Moreover we get from \eqref{lsciL22}, for any $m \geq 1$,
\[
\Esp_{P^{\rm{elec}}_{\epsilon}} \left[\frac{1}{|C_m|} \int_{C_m} |E_r|^2 \right] \leq  \Esp_{P}\left[\cW(E^{(\C, \ep)})\right] \leq \dW(P) + \epsilon.
\]
Letting $\epsilon \t0$, we may thus find a limit point $\Pelec$ of $\{P^{\rm{elec}}_{\epsilon}\}$ that is still compatible with $P$ and such that $\Esp_{\Pelec} [\cW] \leq \dW(P)$. The converse inequality is always true by 
 definition of $\dW$.
\end{proof}

We also obtain the following useful lower semi-continuity property.
\begin{lem} \label{lem:LSCIWo}
The map $\bPsts \mapsto  \bdW(\bPsts)$ is lower semi-continuous on $\probinv(\Lambda \times \configs)$.
\end{lem}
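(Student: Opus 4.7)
The plan is to reduce the lower semi-continuity of $\bdW$ to the already established semi-continuity of the truncated energy at the level of compatible electric fields, by lifting via Lemma \ref{lem:compatible}. Given a sequence $\bPsts_N \to \bPsts$ in $\probinv(\Lambda \times \configs)$, I set $L := \liminf_N \bdW(\bPsts_N)$, and may assume $L < \infty$ and (after extraction) that $\bdW(\bPsts_N) \to L$. Applying Lemma \ref{lem:compatible} for each $N$, I produce a stationary tagged electric field law $\bPelec_N$ whose pushforward under $(z,E) \mapsto (z, \conf(E))$ is $\bPsts_N$ and which satisfies $\Esp_{\bPelec_N}[\cW] = \bdW(\bPsts_N)$. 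The key point is that, by applying Birkhoff's ergodic theorem slice-wise to the stationary disintegrations $\bar{P}^{\mathrm{elec},z}_N$ (and integrating over $z$), one has
\begin{equation*}
\Esp_{\bPelec_N}\!\left[\int_{C_1}|E_r|^2\right] \;=\; \Esp_{\bPelec_N}[\cW] \;=\; \bdW(\bPsts_N),
\end{equation*}
providing a uniform $L^2$-bound on $E_r$ localized to a unit cube (and, by stationarity, to any cube of fixed size).

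Combining this bound with Lemma \ref{lem:L2Lp} and a uniform bound on the local intensities of $\bPsts_N$ (which is automatic when restricting to $\probas_{\mathrm{inv},1}$), one controls $\Esp_{\bPelec_N}[\|E\|_{L^p(C_R)}]$ uniformly in $N$ for some $1 < p < 2$; the weak compactness of balls in the reflexive space $L^p$ together with Markov's inequality then yield tightness of $\{\bPelec_N\}$ in $\probas(\Lambda \times \Lploc(\R^2,\R^2))$. Extracting a weakly convergent subsequence $\bPelec_N \to \bPelec$, the tagged analogue of Lemma \ref{lem:compatibility2} gives that the pushforward of $\bPelec$ under $\conf$ equals the pushforward of $\bPsts$ under $\Pru$, and that $\bPelec$ is itself stationary (stationarity of the slices being closed under weak convergence). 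Approximating $\1_{C_1}$ from below by smooth compactly supported $\chi_k$, applying the lower semi-continuity \eqref{lsciENr1} for each $\chi_k$, and letting $k \to \infty$ by monotone convergence yields
\begin{equation*}
\Esp_{\bPelec}\!\left[\int_{C_1}|E_r|^2\right] \;\leq\; \liminf_N \Esp_{\bPelec_N}\!\left[\int_{C_1}|E_r|^2\right] \;=\; L.
\end{equation*}

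Since the left-hand side is finite and $\bPelec$ is stationary, a final application of Birkhoff's ergodic theorem to $\bPelec$ gives $\Esp_{\bPelec}[\cW] = \Esp_{\bPelec}[\int_{C_1}|E_r|^2] \leq L$. Using the invariance of $\dW$ under the pruning map $\Pru$ and the compatibility of $\bPelec$ with $\Pru_*\bPsts$ established above, one concludes
\begin{equation*}
\bdW(\bPsts) \;=\; \Esp_{\bPsts}[\dW(\Cs)] \;=\; \Esp_{\bPelec}[\dW(\conf(E))] \;\leq\; \Esp_{\bPelec}[\cW(E)] \;\leq\; L,
\end{equation*}
which is the desired lower semi-continuity. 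The main obstacle in executing this plan is the tightness of the lifted laws $\{\bPelec_N\}$, which is exactly what requires an a priori control on local intensities; once this is in hand, the argument becomes a clean interplay between Birkhoff's theorem (turning $\cW$ into the integrable observable $\int_{C_1}|E_r|^2$ for stationary laws) and the already-proven semi-continuity at the level of truncated electric fields.
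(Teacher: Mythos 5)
Your proof follows essentially the same strategy as the paper's: lift $\bPsts_k$ to stationary electric field laws $\bPelec_k$ via Lemma~\ref{lem:compatible}, use stationarity to rewrite $\Esp_{\bPelec_k}[\cW]$ as $\Esp_{\bPelec_k}[\int_{C_1}|E_r|^2]$ (you invoke Birkhoff's ergodic theorem where the paper cites Fubini together with stationarity; Birkhoff is in fact the sharper justification for equating the $\limsup$ with the expectation on a unit cube), establish tightness of $\{\bPelec_k\}$ through Lemma~\ref{lem:L2Lp}, pass to a limit point $\bPelec$, and use Lemma~\ref{lem:compatibility2} together with lower semi-continuity of the truncated $L^2$-energy to close the argument. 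You are also more careful than the paper in two small places: you note that the push-forward of $\bPelec$ by $\conf$ is $\Pru_*\bPsts$ rather than $\bPsts$ itself, and you explicitly use the $\Pru$-invariance of $\dW$ to conclude, and you approximate $\1_{C_1}$ by smooth $\chi_k$ before applying \eqref{lsciENr1}. The one small overstatement is the parenthetical claim that a uniform bound on local intensities is ``automatic when restricting to $\probas_{\mathrm{inv},1}$'' --- the lemma is stated over $\probinv$, and what one actually needs for the $|\Cs|(C_{R+1})$ term in Lemma~\ref{lem:L2Lp} is just tightness of the point counts under $\bPsts_k$, which follows from the convergence $\bPsts_k \to \bPsts$ rather than from membership in $\probas_{\mathrm{inv},1}$.
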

\begin{proof}
Let $\{\bPsts_k\}_{k}$ be a sequence in $\probinv(\Lambda \times \configs)$ converging to some stationary $\bPsts$, and such that $\liminf_{k \ti} \bdW(\bPsts_k)$ is finite. Up to the extraction of a subsequence,
we may assume that the $\liminf$ is a limit. For any $k \geq 1$ we may apply Lemma \ref{lem:compatible} and obtain a stationary electric field law $\bPelec_k$ such that \eqref{compatible} holds. Using the stationarity of $\bPelec_k$ and Fubini's theorem we may write, for any $k$
\[
\Esp_{\bPelec_k} \left[\cW(E)\right] = \Esp_{\bPelec_k} \left[ \int_{C_1} |E_r|^2 \right],
\]
and in fact the left-hand side is equal to $\Esp_{\bPelec_k} \left[ \frac{1}{R^2} \int_{C_R} |E_r|^2 \right]$ for any $R > 0$. 
The sequence of the push-forwards of $\bPelec_k$ by $E \mapsto E_r$ is thus tight in $\probas(L^2_{\rm{loc}}(\R^2, \R^2))$, and using Lemma \ref{lem:L2Lp} we see that $\{\bPelec_k\}_k$ itself is tight in $\probas(\Lploc(\R^2, \R^2))$. 
Using Lemma~\ref{lem:compatibility2} we see that any limit 
point $\bPelec$ is compatible with $\bar{P}$ and that
\[
\liminf_{k \ti} \Esp_{\bPelec_k} \left[ \int_{C_1} |E_r|^2 \right] \geq \Esp_{\bPelec} \left[ \int_{C_1} |E_r|^2 \right], 
\]
and using again the stationarity we see that the right-hand side satisfies
\[
\Esp_{\bPelec} \left[ \int_{C_1} |E_r|^2 \right] = \Esp_{\bPelec}[\cW(E)] \geq \bdW(\bPsts),
\]
which concludes the proof.
\end{proof}

\subsection{Specific relative entropy and large deviations}
\label{sec:ERS}
Let $\Poisson^1$ denote the law of the Poisson point process of intensity $1$ on $\R^2$, and let $\Poissons := \Poisson^1 \otimes \Poisson^1$.
\paragraph{\textbf{Existence and main properties.}}
The following proposition is an adaptation of classical results concerning the existence and properties of the so-called specific relative entropy for stationary point processes. 
\begin{prop} \label{prop:ERS} Let $\Psts \in \probinv(\configs)$.
The following limit exists in $[0, + \infty]$
\begin{equation} \label{def:ERS2}
\ERS[\Psts] := \lim_{R \ti} \frac{1}{R^2} \Ent\left( \Psts_{R} | \Poissons_{R} \right), 
\end{equation}
moreover the functional $\Psts \mapsto \ERS[\Psts]$ is affine and lower semi-continuous on $\probinv(\configs)$.
\end{prop}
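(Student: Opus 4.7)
The strategy is to adapt to the signed setting the classical subadditivity argument for the specific relative entropy of stationary point processes, as in \cite{seppalainen}. The key structural input is that the reference $\Poissons = \Poisson^1 \otimes \Poisson^1$ factorizes over disjoint spatial regions: for any measurable partition $\R^2 = \bigsqcup_i B_i$, the restriction of $\Poissons$ is the independent product of the signed Poisson processes on each $B_i$. Combined with the super-additivity of relative entropy against a product reference,
\[
\Ent(\Psts_{\carr_R} \,|\, \Poissons_{\carr_R}) \;\geq\; \sum_i \Ent((\Psts_{\carr_R})_{B_i} \,|\, \Poissons_{B_i}),
\]
and the stationarity of $\Psts$, applying this to the partition of $\carr_n$ into $k^2$ copies of $\carr_m$ (when $n = km$) yields the key estimate $u_n \geq k^2 u_m$, where $u_R := \Ent(\Psts_{\carr_R} \,|\, \Poissons_{\carr_R})$. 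A standard Fekete-type argument then shows that $u_n/n^2$ converges to $\ell := \sup_m u_m/m^2 \in [0, +\infty]$ along integers $n$. Since $R \mapsto u_R$ is non-decreasing (a restriction is a marginal, so super-additivity again yields $u_{R_1} \leq u_{R_2}$ for $R_1 \leq R_2$), a simple interpolation between consecutive integers extends the convergence to arbitrary $R \to \infty$.

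\textbf{Affinity and lower semi-continuity.} Convexity of $\Ent(\cdot \,|\, \nu)$ in its first argument passes directly to the limit and gives the convexity bound $\ERS[\alpha \Psts_1 + (1-\alpha) \Psts_2] \leq \alpha \ERS[\Psts_1] + (1-\alpha) \ERS[\Psts_2]$. The reverse inequality uses the elementary bound $\Ent(\alpha \mu_1 + (1-\alpha) \mu_2 \,|\, \nu) \geq \alpha \Ent(\mu_1 \,|\, \nu) + (1-\alpha) \Ent(\mu_2 \,|\, \nu) - \log 2$, whose additive error is negligible after division by $R^2$. For the lower semi-continuity, the monotone representation
\[
\ERS[\Psts] \;=\; \sup_{n \geq 1} \frac{1}{n^2}\, \Ent(\Psts_{\carr_n} \,|\, \Poissons_{\carr_n})
\]
exhibits $\ERS$ as a supremum of lsc functions: for each fixed $n$ the restriction map $\Psts \mapsto \Psts_{\carr_n}$ is continuous for the weak topologies, and the usual relative entropy is lsc in its first argument against a fixed reference measure.

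\textbf{Main obstacle.} I expect no genuine difficulty beyond bookkeeping, since every argument above relies only on the product structure of $\Poissons$ and on the joint stationarity of $\Psts$, both of which behave identically in the signed and unsigned cases. The least automatic point is the continuity of the restriction map $\Psts \mapsto \Psts_{\carr_n}$: it fails on configurations carrying an atom on $\p \carr_n$, but stationarity implies that this event is $\Psts$-null for a.e. $n$, and the sup representation above lets one restrict the supremum to such $n$ without loss.
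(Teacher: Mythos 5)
Your sketch is correct and is exactly the classical argument the paper delegates to (cf.\ \cite{seppalainen}, Section 7.2): super-additivity of relative entropy against the product reference together with stationarity gives a Fekete-type limit, the $\log 2$ correction handles affinity, and the sup-of-lsc representation gives lower semi-continuity; as the paper notes, nothing here depends on the sign structure since $\Poissons$ is still a product over disjoint regions. One minor over-hedge: for a stationary locally finite process the intensity measure is a multiple of Lebesgue, so the event that a point falls on $\partial \carr_n$ is $\Psts$-null for \emph{every} $n$, not merely a.e.\ $n$, so the restriction to good $n$ in the supremum is not actually needed.
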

\begin{proof}
The proofs of the corresponding results in the non-signed setting extend readily to our context, see e.g. \cite[Section 7.2]{seppalainen}.
\end{proof}

\paragraph{\textbf{Large deviations for signed empirical fields.}}
Let $\bQpN$ be the “reference” empirical field defined as the push-forward by $\iN$ of the Lebesgue measure $\Leb^N \otimes \Leb^N$ on $\Box^{N} \times \Box^{N}$, where we recall that $i_N$ was defined in \eqref{def:iN}.
\begin{prop} \label{prop:Sanov}
For any $A \subset \probinv(\Lambda \times \configs)$, we have (with $\bERS$ defined in \eqref{def:bERSa})
\begin{multline} \label{EqSB}
- \inf_{\bPsts \in \mathring{A} \cap \probas_{\rm{inv},1}} \bERS[\bPsts] \leq \liminf_{N \ti} \f{1}{N} \log \bQpN (A) \\ \leq  \limsup_{N \ti} \f{1}{N} \log \bQpN(A) \leq - \inf_{\bPsts \in \bar{A}} \bERS[\bPsts].
\end{multline}
\end{prop}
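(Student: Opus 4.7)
The plan is to derive Proposition \ref{prop:Sanov} as a signed, tagged extension of the classical Sanov--Georgii--Zessin LDP for empirical fields of Poisson point processes, in the spirit of the adaptation carried out in \cite{LebSer} for the one-component case. The starting observation is that under the Lebesgue measure $\Leb^N \otimes \Leb^N$ on $\Box^N \times \Box^N$, the configuration $(\XN, \YN)$ rescaled by $\sqrt N$ consists of two independent families of $N$ i.i.d.\ uniform points in $\sqrt N \Box$, which is exactly the conditional law of $\Poissons = \Poisson^1 \otimes \Poisson^1$ restricted to $\sqrt N \Box$ given that each component has exactly $N$ points. The conditioning weight is polynomial in $N$ by the local central limit theorem (or direct Stirling), hence negligible at speed $N$ and absorbed in the LDP error.

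Next I would invoke the classical Georgii--Zessin LDP for Poisson empirical fields in the tagged formulation used in \cite{LebSer}. Since $\Poissons$ is a product of two independent Poisson processes, and under the reference measure the positive and negative components of the signed configuration are independent, the upgrade from the non-signed to the signed case is essentially cosmetic: one applies the LDP componentwise and takes the product, which at the level of rate functions amounts to adding the two specific relative entropies and recovers exactly $\ERS[\cdot]$ with respect to $\Poissons$ as defined in \eqref{def:ERS2}. The tag variable $x \in \Box$, which records the base point around which the blow-up was performed, is handled via the averaging built into the definition \eqref{def:iN}; the disintegration formula $\bERS[\bPsts] = \int_\Box \ERS[\bPstsx]\,dx$ then gives the correct tagged rate function, in analogy with the tagged one-component case of \cite{LebSer}. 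The affineness and lower semi-continuity granted by Proposition \ref{prop:ERS} make the identification of the rate function routine once the unsigned analogue is in hand.

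The asymmetry between the upper and lower bounds comes from the hard constraint imposed by $\bQpN$: by construction every empirical field $\bar P_{(\XN,\YN)}$ lies in $\probas_{\rm{inv},1}(\Lambda \times \configs)$, so attaining the lower bound requires approximating a target $\bPsts$ by such empirical fields, which forces $\bPsts$ itself to have intensity $1$. For the upper bound no restriction is needed, because any $\bPsts$ with intensity different from $1$ has a neighborhood disjoint from the support of $\bQpN$ for all large $N$ and therefore contributes nothing. The main technical obstacle I anticipate is the careful translation of the tagged Poisson LDP, with its base-point averaging, into a direct statement about $\bQpN$, together with the control of the double conditioning on the number of points of each sign and the verification that one can use \emph{tagged} local test functions to separate $\bPsts$ from its complement in $\probinv(\Lambda \times \configs)$. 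All these points can be settled by a direct adaptation of the machinery developed in \cite{LebSer}, which is the route I would pursue.
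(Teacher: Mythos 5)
The starting observations in your sketch (conditional equivalence with the signed Poisson process given the point counts, polynomial and hence negligible conditioning cost, handling of the tag by the averaging built into $\iN$) all match the route the paper takes, which is simply to defer to the tagged Sanov-type argument of \cite[Section 7]{LebSer}. However, the crucial step where you propose to ``apply the LDP componentwise and take the product'' contains a genuine gap, on two levels.

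First, the object of interest is the \emph{joint} signed empirical field $\bar P_{(\XN,\YN)} = \int_\Box \delta_{(x,\theta_{\sqrt N x}\cdot(\nuppN,\nupmN))}\,dx \in \probas(\Box\times\configs)$, in which the same shift $\theta_{\sqrt N x}$ is applied to both sign components; it therefore records the correlation structure between nearby positive and negative points (precisely the dipole structure the whole paper is after). The pair of marginal empirical fields $(\bar R_N^+,\bar R_N^-) \in \probas(\Box\times\config)^2$ is a strict contraction of $\bar P_{(\XN,\YN)}$, so even granting an LDP for that pair with rate $\bERS^+ + \bERS^-$ (which does hold, by independence of $\nuppN$ and $\nupmN$ under the reference measure), the contraction principle runs in the wrong direction and does not give back the LDP for $\bar P_{(\XN,\YN)}$. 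Second, and relatedly, the rate function would come out wrong: for a general stationary $\Psts$ one has
\begin{equation*}
\Ent(\Psts_R \mid \Poissons_R) \;=\; \Ent(\Pstsp_R\mid\Poisson^1_R) + \Ent(\Pstsm_R\mid\Poisson^1_R) + \Ent(\Psts_R \mid \Pstsp_R \otimes \Pstsm_R),
\end{equation*}
so $\ERS[\Psts] \ge \ERS[\Pstsp] + \ERS[\Pstsm]$ with a strict inequality whenever the two components carry non-trivial specific mutual information. Since limiting objects coming from the Gibbs measure $\PNbeta$ certainly correlate positive and negative charges, the sum $\ERS^+ + \ERS^-$ is \emph{not} the rate function appearing in \eqref{EqSB}. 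The correct ``cosmetic'' upgrade of \cite{LebSer} is instead to view $\configs$ as the configuration space of a $\{+,-\}$-marked Poisson process and run the tagged Georgii--Zessin argument directly on $\configs$ with reference $\Poissons = \Poisson^1\otimes\Poisson^1$; the argument uses only Polish-space properties of the state space and carries over verbatim, producing $\bERS$ (with its built-in coupling term) as the rate function.
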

The proof of Proposition \ref{prop:Sanov} is almost identical to that of \cite[Proposition 1.6]{LebSer}, see \cite[Section 7]{LebSer}.

\subsection{Tightness and discrepancy estimates}
\paragraph{\textbf{Compactness and exponential tightness.}}
\begin{lem} \label{lem:expotight} The sequence $\{\bPgotNb\}_N$ is exponentially tight.
\end{lem}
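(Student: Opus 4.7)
The plan is to find, for every $M > 0$, a compact set $K_M \subset \probas(\Box \times \configs)$ with $\limsup_{N \to \infty} \frac{1}{N} \log \bPgotNb(K_M^c) \leq -M$. The strategy is to compare $\bPgotNb$ to the reference measure $\bQpN$ via H\"older's inequality, using the Gunson-Panta asymptotics of Proposition~\ref{prop:FGP} at two different temperatures, and then to invoke exponential tightness of $\bQpN$ itself, which in turn follows from the Sanov upper bound (Proposition~\ref{prop:Sanov}) once one knows that $\bERS$ has compact sublevel sets.

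Since $\beta < 2$, fix $\alpha > 1$ with $\alpha \beta < 2$. For any Borel $A \subset \probas(\Box \times \configs)$, apply H\"older's inequality with exponents $\alpha$ and $\alpha/(\alpha - 1)$ to
\begin{equation*}
\bPgotNb(A) = \frac{1}{\ZNbeta} \int_{\iN^{-1}(A)} e^{-\frac{\beta}{2} \WN(\XN,\YN)}\, d\XN\, d\YN,
\end{equation*}
using that $\bQpN$ is the push-forward of Lebesgue (a probability measure since $\Box = [0,1]^2$) by $\iN$. This yields
\begin{equation*}
\bPgotNb(A) \leq \frac{Z_{N, \alpha \beta}^{1/\alpha}}{\ZNbeta}\, \bQpN(A)^{(\alpha - 1)/\alpha}.
\end{equation*}
Applying Proposition~\ref{prop:FGP} at both temperatures $\beta$ and $\alpha \beta$ (each below $2$), the two $\frac{\beta}{2} N \log N$ contributions cancel, and the prefactor reduces to $e^{D N + o(N)}$ with $D = D(\alpha, \beta) := C_{\alpha \beta}/\alpha - C_\beta$.

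To conclude, set $K_\Lambda := \{\bPsts : \bERS[\bPsts] \leq \Lambda\}$. By lower semi-continuity of $\bERS$ (Proposition~\ref{prop:ERS}) this set is closed, and it is compact under the standard fact that specific relative entropy with respect to a Poisson reference is a good rate function (extending routinely from the non-signed to the signed setting). Since $K_\Lambda^c \subset \{\bERS \geq \Lambda\}$ and the latter is closed, the Sanov upper bound of Proposition~\ref{prop:Sanov} gives $\bQpN(K_\Lambda^c) \leq e^{-\Lambda N + o(N)}$. Combining with the H\"older estimate above,
\begin{equation*}
\bPgotNb(K_\Lambda^c) \leq e^{\bigl( D - \Lambda (\alpha - 1)/\alpha \bigr) N + o(N)},
\end{equation*}
and choosing $\Lambda$ large enough produces the required compact set. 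The main obstacle is the H\"older step itself: because $\WN$ is not pointwise bounded below, one cannot dominate the Gibbs density by a constant multiple of Lebesgue, and it is crucial that the assumption $\beta < 2$ leaves enough slack to raise $\beta$ to $\alpha \beta < 2$ while remaining in the regime where the Gunson-Panta partition function estimate applies.
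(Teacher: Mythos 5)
Your argument is correct, but it takes a considerably heavier route than the paper, which disposes of the lemma in three lines with a purely deterministic observation. By construction, every tagged empirical field $\bPsN = \iN(\XN, \YN)$ satisfies $\Esp_{\bPsN}[\Nn_R] \leq 2\pi R^2$, where $\Nn_R(x,\Cs) := \Cp(D(0,R)) + \Cm(D(0,R))$: the $2N$ particles in $\Box=[0,1]^2$ blow up to a box of side $\sqrt N$, so the mean density after blow-up is at most $2$ per unit area, and a Fubini computation gives the bound for every $R$. The set of all $\bPsts \in \probas(\Lambda \times \configs)$ satisfying these intensity bounds for all $R$ is compact (cf.\ \cite[Lemma 4.1]{LebSer}), so $\bPgotNb$ is in fact concentrated on a single compact set, which yields exponential tightness without any reference to $\beta < 2$, the partition function, or entropy. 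Your approach via H\"older's inequality and the Sanov upper bound is sound: the cancellation of the $\frac{\beta}{2}N\log N$ terms between $Z_{N,\alpha\beta}^{1/\alpha}$ and $\ZNbeta$ is correct, and compactness of the sublevel sets of $\bERS$ is indeed standard (the paper itself asserts it when proving that $\fbsc$ is a good rate function). What your route buys, compared to the paper's, is not simplicity but rather a reusable estimate: the H\"older comparison between $\bPgotNb$ and $\bQpN$, at two temperatures both below $2$, is precisely the device the paper deploys later in Lemma \ref{lem:aprioribounds} to derive a priori entropy bounds on limit points. So your tool is legitimate and appears in the paper, just not for this lemma, where the direct intensity bound is cheaper and exploits more structure of the problem (the push-forward map $\iN$ itself, rather than the Gibbs weight).
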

\begin{proof}
Let $\Nn_R : 
 (\Box \times \config) \to \R_+$ be the map 
$\Nn_R(x, \Cs) := \Cp(D(0,R)) + \Cm(D(0,R))$ 
which gives the total number of points in the disk $D(0,R)$ of a
 signed point configuration $\Cs = (\Cp, \Cm)$. 
(Although it may seem not to depend on $x$, in applications we
will always use $\Nn_R(x,\Cs_x)$ where $\Cs_x$ is the blow-up 
around $x$ of a signed point configuration.)
By construction it is clear that $\bPgotNb$ is concentrated on 
\[
\left\lbrace \bPsN \in \probas(\Lambda \times \configs), \Esp_{\bPsN}[\Nn_R] \leq 2\pi R^2 \right\rbrace.
\]
This set is easily seen to be compact in $\probas(\Lambda \times \configs)$, see e.g. \cite[Lemma 4.1]{LebSer}.
\end{proof}

\paragraph{\textbf{Discrepancy estimate, equality of intensities.}}
Here we prove that we may control the difference between the number of positive and negative charges in terms of the two-component interaction energy of the signed point configuration. In particular, we show that if $\Psts$ is stationary, the finiteness of $\fbarbeta(\Psts)$ implies that the intensities of positive and negative charges coincide.

\begin{lem} \label{lem:discr} Let $\Psts \in \probinv(\configs)$ be such that $\bdW(\Psts)$ and $\ERS[\Psts]$ are finite. Then we have $\rhop_{\Psts} = \rhom_{\Psts}$.
\end{lem}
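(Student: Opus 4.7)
The plan is to realize $\Psts$ via a stationary electric-field law and to exploit translation-invariance to conclude that the net charge density $\rhop_{\Psts}-\rhom_{\Psts}$ vanishes.

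First, the finiteness of the specific relative entropy ensures two things. On one hand, $\Psts$-almost every signed configuration is simple, since $\Psts_R \ll \Poissons_R$ for $R$ large enough and the reference $\Poissons$ is simple almost surely. On the other hand, the intensities $\rhop_{\Psts}, \rhom_{\Psts}$ are finite: for any $R$ with $\Ent(\Psts_R \,|\, \Poissons_R)<\infty$, the standard entropy inequality gives
\[
\Esp_{\Psts}\bigl[|\Cs|(\carr_R)\bigr] \;\le\; \log \Esp_{\Poissons_R}\bigl[e^{|\Cs|(\carr_R)}\bigr] + \Ent(\Psts_R \,|\, \Poissons_R) \;=\; 2R^2(e-1) + \Ent(\Psts_R \,|\, \Poissons_R).
\]
With simplicity at hand, Lemma \ref{lem:compatible} produces a stationary electric-field law $\Pelec$ compatible with $\Psts$ and satisfying $\Esp_{\Pelec}[\cW] = \bdW(\Psts) < \infty$. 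By stationarity and Fubini, $\Esp_{\Pelec}\!\bigl[\int_{\carr_R} |E_r|^2\bigr] = R^2 \bdW(\Psts)$ for every $R>0$, and combining this with Lemma \ref{lem:L2Lp} together with the finiteness of the intensities yields, for some $p \in (1,2)$ and any $R>0$,
\[
\Esp_{\Pelec}\bigl[\|E\|_{L^p(\carr_R)}\bigr] \;\le\; L_1 R \sqrt{\bdW(\Psts)} + L_2 (R+1)^2\bigl(\rhop_{\Psts}+\rhom_{\Psts}\bigr) \;<\; \infty.
\]
In particular, $\Esp_{\Pelec}[E]$ is well-defined as a locally integrable vector field on $\R^2$, and by translation-invariance of $\Pelec$ it must equal a constant vector $c\in\R^2$.

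To conclude, for any test function $\psi\in C^\infty_c(\R^2)$, the distributional identity $-\div E = 2\pi(\Cp-\Cm)$ combined with the above integrability and Fubini gives
\[
0 \;=\; c\cdot\int \nabla\psi \,dx \;=\; \Esp_{\Pelec}\!\left[\int \nabla\psi \cdot E\,dx\right] \;=\; 2\pi \int \psi\,\bigl(\rhop_{\Psts}-\rhom_{\Psts}\bigr)\,dx.
\]
Since $\psi$ is arbitrary, $\rhop_{\Psts}=\rhom_{\Psts}$. The main technical point is the $L^p$-integrability estimate above and the resulting interchange of expectation with the distributional divergence: this is precisely where both hypotheses enter together, with finite $\bdW(\Psts)$ controlling $E_r$ in $L^2$ and finite $\ERS[\Psts]$ bounding the local intensity needed to absorb the dipole correction terms in Lemma \ref{lem:L2Lp}.
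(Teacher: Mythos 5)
Your proof is correct, and it takes a genuinely different route from the paper's. The paper's argument works directly with the discrepancy $\D_R := \int_{\carr_R}(d\Cp - d\Cm)$: integrating the truncated field over $\carr_R$ gives $\int_{\partial\carr_R}E_r\cdot\vec{n} = 2\pi(\D_R + d_R)$ with $d_R$ a boundary error, and one then estimates $\Esp_{\Psts}[\psi(|\D_R|)]$ for the superlinear convex function $\psi(x)=x\log\log x$, using finite energy to control the flux and finite entropy (via exponential moments under the Poisson reference) to control $d_R$; together with the stationarity identity $\Esp_{\Psts}[\D_R]=R^2(\rhop_{\Psts}-\rhom_{\Psts})$ this forces equality of the intensities. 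You instead lift $\Psts$ to a stationary electric-field law $\Pelec$ via Lemma \ref{lem:compatible}, note that the stationary, locally integrable field $\Esp_{\Pelec}[E]$ must be a constant vector $c$, and test $-\div E = 2\pi(\Cp-\Cm)$ against a compactly supported $\psi$, using $\int\nabla\psi = 0$ to kill $c$. This avoids the $x\log\log x$ device and replaces the boundary-flux computation with a Fubini argument; the two hypotheses play analogous roles in both proofs (energy controls $E_r$ in $L^2$, entropy controls the local point counts needed to pass from $E_r$ to $E$ via Lemma \ref{lem:L2Lp}). Two minor points worth tightening: in your bound on $\Esp_{\Pelec}[\|E\|_{L^p(\carr_R)}]$, the constant $L_1$ from Lemma \ref{lem:L2Lp} itself depends on $R$ (it is of order $R^{2/p-1}$), so the factor should not be written as a bare $L_1 R$, though finiteness for each fixed $R$ is all you need; and the step $\Esp_{\Pelec}\bigl[\int_{\carr_R}|E_r|^2\bigr]=R^2\bdW(\Psts)$ implicitly invokes the Wiener ergodic theorem to identify $\cW(E)$ with $\Esp_{\Pelec}[|E_r(0)|^2\mid\mathcal{I}]$ $\Pelec$-a.s.\ --- the paper uses the same identity without comment in the proof of Lemma \ref{lem:LSCIWo}, so this is consistent, but it deserves a word.
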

\begin{proof} In this proof $C$ denotes a constant depending only on $\Psts$. Let $\Cs$ be a signed point configuration and let $\D_R$ be the discrepancy in the square $C_R$, i.e. the difference between the number of positive and negative charges in $C_R$
$$
\D_R := \int_{C_R} (d\Cp - d\Cm).
$$
Assume that $E$ is an electric field compatible with $\Cs$. Using the relation \eqref{defeeta} and \eqref{eqclam} and  integrating by parts over $C_R$, we have 
$$
\int_{\partial C_R} E_r \cdot \vec{n} = 2\pi \left(\D_R + d_R\right),
$$ with $\vec{n}$ the outer unit normal, 
where the error term $d_R$ is bounded by the number of points of $\Cs$ in a $1$-neighborhood of $\partial C_R$. Let $\psi : \R_+
 \to \R$ be a map such that $\psi(x) = x \log \log x$ for $x$ large and such that $\psi$ is convex, nonnegative, nondecreasing. We have
$$
\psi(|\D_R|) = \psi\left( 
|\frac{1}{2\pi}  \left(\int_{\partial C_R} E_r \cdot \vec{n} \right) - d_R|
 \right) \leq C \psi\left(\left|\int_{\partial C_R} E_r \cdot \vec{n}\right|\right) + C \psi(|d_R|) + C.
$$
Using Jensen's inequality and the stationarity of $\Psts$ we get
\begin{equation*}
\Esp_{\Psts} \left[ \psi\left(\int_{\partial C_R} |E_r|\right) \right] \leq \Esp_{\Psts} \left[\psi\left(4R\ |E_r|\right)\right].
\end{equation*}
By stationarity of $\Psts$, $\Esp_{\Psts}\left[|E_r|^2\right]$ is equal to the positive part energy of $\Psts$, which is finite by assumption. We may thus bound
\begin{equation} \label{discr2}
\Esp_{\Psts} \left[\psi\left(4R\ |E_r|\right)\right] \leq CR\log \log R + CR + C.
\end{equation}
On the other hand, again by stationarity of $\Psts$ and using the convexity of $\psi$ we have
\begin{equation*}
\Esp_{\Psts}\left(\psi\left(|d_R|\right)\right) \leq C R \Esp_{\Psts} \left( d_1 \log \log (Rd_1)\right)
\leq CR \log \log R\,
 \Esp_{\Psts} [ d_1] + CR \,\Esp_{\Psts}[d_1 \log \log d_1].
\end{equation*}
The exponential moments of $d_1$ and $d_1 \log \log d_1$ under a
 Poisson point process are finite, and $\Psts$ has finite entropy, 
hence both expectations under $\Psts$ are finite. We obtain
\begin{equation} \label{discr3}
\Esp_{\Psts}\left(\psi(|d_R|)\right) \leq CR\log \log R + CR + C\,,
\end{equation}
where $C$ depends on the entropy of $P$.
Combining \eqref{discr2} and \eqref{discr3} we get
$\Esp_{\Psts} \left[ \psi(|\D_R|) \right] = o(\psi(R^2)) \text{ as } R \ti$. It implies by Jensen's inequality 
(and the fact that $\psi(x) = x \log \log x$ for $x$ large) 
that $\Esp_{\Psts}[ |\D_R|] = o(R^2)$, but since 
$\Psts$ is stationary we have $\Esp_{\Psts} 
[|\D_R|] = R^2 \Esp_{\Psts} 
[|\D_1|]$, hence we deduce that for all $R > 0$ we have
 $\Esp_{\Psts} 
[|\D_R|] = 0$. In particular the mean density of 
positive and negative charges are equal. 
\end{proof}

\subsection{Scaling relations and optimal intensity}
Let $\Psts$ be a stationary signed point process such that the intensity of $\Pstsp$ and $\Pstsm$ are both equal to $\rho$.
Let $\sigma_{\rho}(\Psts)$ be the stationary point processes obtained as the push-forward of $\Psts$ by $\Cs \mapsto \sqrt{\rho}
\Cs$. It is easy to see that both component of $\sigma_{\rho}\, (\Psts)$ now have intensity $1$.

\begin{lem} \label{lem:scalinﬁ} We have the following scaling relations
\begin{align*}
\tdW(\Psts) & = \rho(\tdW(\sigma_{\rho}(\Psts)))\\
\tDeta(\Psts) & = \rho(\tDeta(\sigma_{\rho}(\Psts)) -  \log \rho)\\
\ERS[\Psts] & = \rho\, \ERS[\sigma_{\rho}(\Psts)] + 1 - \rho + \rho \log \rho
\end{align*}
\end{lem}
\begin{proof}
The first two relations are deduced by a change of variable 
in the definitions, see also the scaling relations in 
\cite[Equation (2.4)]{SS2d}. 
The scaling relation for the entropy is proven in a elementary way as 
in \cite[Lemma 4.2]{LebSer}.
\end{proof}

\begin{lem} \label{lem:minimmemerho}
Let $\bPsts \in \probas_{\rm{inv}, 1}(\Lambda \times \configs)$ be a stationary tagged signed point process such that the intensity measures $\rho_{\bPstsp}$ and $\rho_{\bPstsm}$ are equal to the same function $\rho$ on $\Lambda$ with  $\int_{\Lambda} \rho = 1$.
Then 
$$
\fbarbeta(\bPsts) \geq \inf_{\probas_{\rm{inv}, 1}} \fbarbeta,
$$
with equality only if $\rho(x) = 1$ for Lebesgue-a.e. $x \in \Lambda$.
\end{lem}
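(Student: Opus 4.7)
The plan is to disintegrate $\fbarbeta(\bPsts)$ along the tag variable $x \in \Lambda$, use the scaling relations of Lemma~\ref{lem:scalin�} to normalize each fiber to intensity $1$, bound each rescaled fiber by $m := \inf_{\probas_{\rm{inv},1}} \fbarbeta$, and finally close the argument with Jensen's inequality applied to the strictly convex function $t \mapsto t\log t$. We may assume $\fbarbeta(\bPsts) < +\infty$, for otherwise the statement is trivial; in that case $\bERS(\bPstsx) < +\infty$ for Lebesgue-a.e.\ $x$, and by Lemma~\ref{lem:discr} the intensities of $\bPstsx$ for the two signs are both equal to $\rho(x)$, so the rescaling $\sigma_{\rho(x)}$ of Lemma~\ref{lem:scalin�} can be applied fiber-wise.

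First, I would write out the disintegration identity
\[
\fbarbeta(\bPsts) \;=\; \int_{\Lambda} \Big( \tfrac{\beta}{2}\tWs(\bPstsx) + \ERS[\bPstsx]\Big)\,dx.
\]
Combining the three scaling identities of Lemma~\ref{lem:scalin�}, noting that $\tWs = \tdW - \tDeta$ and that $-\tDeta$ contributes a $+\rho\log\rho$ term, one obtains
\[
\tfrac{\beta}{2}\tWs(\bPstsx) + \ERS[\bPstsx] \;=\; \rho(x)\Big(\tfrac{\beta}{2}\tWs(\sigma_{\rho(x)}\bPstsx) + \ERS[\sigma_{\rho(x)}\bPstsx]\Big) + \big(1+\tfrac{\beta}{2}\big)\rho(x)\log \rho(x) + 1 - \rho(x).
\]
For each such $x$, the rescaled process $\sigma_{\rho(x)}\bPstsx$ has intensity $1$ in both components, so the tagged process $\Leb_{\Lambda} \otimes \sigma_{\rho(x)}\bPstsx$ lies in $\probas_{\rm{inv},1}(\Lambda \times \configs)$, and its $\fbarbeta$ value, which equals the bracketed quantity above, is bounded below by $m$. (Here one uses that $m > -\infty$, a fact that follows from the analysis of Section~\ref{sec:prelimstudy}.)

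Substituting this pointwise lower bound and integrating yields
\[
\fbarbeta(\bPsts) \;\geq\; m \int_{\Lambda} \rho + \big(1+\tfrac{\beta}{2}\big)\int_{\Lambda} \rho \log \rho + \int_{\Lambda}(1-\rho) \;=\; m + \big(1+\tfrac{\beta}{2}\big)\int_{\Lambda} \rho \log \rho,
\]
using $\int_{\Lambda} \rho = 1 = |\Lambda|$. Since $t \mapsto t\log t$ is convex with value $0$ at $t=1$, Jensen's inequality (on $\Lambda$, endowed with its unit-mass Lebesgue measure) gives $\int_{\Lambda} \rho \log \rho \geq 0$, with strict inequality unless $\rho$ is Lebesgue-a.e.\ constant (hence equal to $1$). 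As $1+\beta/2 > 0$, this proves $\fbarbeta(\bPsts) \geq m$ and identifies the equality case.

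The only real obstacle is bookkeeping the finiteness assumptions (a.e.\ finite slices, $m$ finite, and the fact that $\sigma_{\rho(x)}$ is well-defined precisely when the positive and negative intensities coincide, which is guaranteed by Lemma~\ref{lem:discr}); all three are consequences of earlier results in the paper, and once these are in hand the scaling relations plus Jensen deliver the result directly.
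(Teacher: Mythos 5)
Your proof follows essentially the same route as the paper: disintegrate $\fbarbeta$ along the tag variable, apply the fiber-wise scaling relations of Lemma~\ref{lem:scalin}, bound the rescaled intensity-$1$ fibers below by $\inf \fbarbeta$, and conclude via convexity of $t \mapsto t\log t$. One coefficient deserves comment: you obtain $1+\beta/2$ in front of $\int_\Lambda \rho\log\rho$, whereas the paper's proof has $1-\beta/2$. Your computation is faithful to Lemma~\ref{lem:scalin} as printed, but the sign on the $\log\rho$ term in the $\tDeta$ relation there is a typo; the correct relation is $\tDeta(\Psts) = \rho\big(\tDeta(\sigma_{\rho}(\Psts)) + \log\rho\big)$, as one can verify on a dilated lattice (or by noting that the printed formula would make $\tDeta(\Psts)$ negative for large $\rho$, contradicting $\tDeta \geq 0$). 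With the corrected sign one recovers $(1-\beta/2)\rho\log\rho$, which is what the paper's proof actually uses. This does not affect the validity of your conclusion, since $1+\beta/2 > 0$ holds trivially, but it does obscure the role of the hypothesis $\beta < 2$: it is precisely the positivity of $1-\beta/2$ that uses $\beta < 2$ at this step, which is why the paper highlights that condition here.
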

\begin{proof}
From Lemma \ref{lem:scalinﬁ} we deduce a scaling relation for $\fbarbeta$:
$$
\fbarbeta(\bPsts) = \frac{\beta}{2} \int_{\Lambda} \rho(x) 
\tWs(\sigma_{\rho(x)}(\bPstsx))
dx + \int_{\Lambda} \rho(x) 
\ERS[\sigma_{\rho(x)}(\bPstsx)] dx + \int_{\Lambda} 
\left[(1-\frac{\beta}{2}) \rho \log \rho - \rho + 1\right]dx.
$$
In particular we have
$$
\fbarbeta(\bPsts) \geq \inf_{\probas_{\rm{inv}, 1}} \fbarbeta + (1-\frac{\beta}{2}) \int_{\Lambda} \rho \log \rho.
$$
The total intensity being fixed, since $\frac{\beta}{2} < 1$ the expression above is minimized only if $\rho = 1$ Lebesgue-a.e. and we get
$$
\fbarbeta(\bPsts) \geq \inf_{\probas_{\rm{inv}, 1}} \fbarbeta.
$$ 
\end{proof}

\section{Study of the Gibbs measure and main conclusions} \label{sec:prelimstudy}

\subsection{Bounds on the partition function}
We rely on the work of Gunson-Panta \cite{GunPan} which gives a 
``classical" (in contrast to the Quantum Field Theory techniques 
of \cite{Frohlich}) approach to the study of the Gibbs measure $\PNbeta$ and 
of the partition function $\ZNbeta$. For the reader's reference, a 
rewriting of their results can be found in Section \ref{appGP} below.

In this subsection
 we mostly re-phrase the key points of their analysis 
in  our notation.

\paragraph{\textbf{Dipole contribution.}}
The analysis in \cite{GunPan}, 
recalled in Section \ref{appGP}, see
\eqref{ZNannex} and \eqref{ZNUB}, yields the following lemma.
\begin{lem}
For any integer $N$ and any $\beta < 2$ we have
\begin{equation} \label{GPDg}
\log \int_{\Lambda^{2N}} \exp\left(- \frac{\beta}{2} \left( \sum_{i=1}^N \log  \ro(x_i) + \sum_{i=1}^N \log  \ro(y_i)  \right)\right) d\XN d\YN \leq \frac{\beta}{2} N \log N + C_{\beta} N,
\end{equation}
with a constant $C_{\beta}$ depending only on $\beta$. \end{lem}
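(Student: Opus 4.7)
The plan is to follow the scheme of Gunson--Panta \cite{GunPan}, whose main computations are recalled in Section \ref{appGP}. The quantity to control is the ``dipole contribution'' $\prod_i \ro(x_i)^{-\beta/2}\,\ro(y_i)^{-\beta/2}$ integrated over positions in $\Lambda^{2N}$, which, by Lemma~\ref{lem:WIPP} and the non-negativity of $\int|\nabla\VNr|^2$, is known to upper bound $\ZNbeta$ itself.

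The first step is to pass to blow-up coordinates $x_i' = \sqrt{N}\,x_i$, $y_i' = \sqrt{N}\,y_i$ as in Section~\ref{sec:blowup}, with the ${\sqrt N}$ chosen so that the rescaled particles sit in a box of sidelength $\sqrt{N}$ with typical nearest-neighbor distance of order one. The change of variable contributes a Jacobian $N^{-2N}$, and $\ro(x_i)$ becomes $N^{-1/2}\ro(x_i')$. Hence
\[
\int_{\Lambda^{2N}}\!\prod_i\ro(x_i)^{-\beta/2}\ro(y_i)^{-\beta/2}\,d\XN d\YN
= N^{\frac{\beta}{2}N-2N}\!\int_{(\sqrt{N}\Lambda)^{2N}}\!\prod_i \ro(x_i')^{-\beta/2}\ro(y_i')^{-\beta/2}\,d\XpN d\YpN,
\]
which, after taking logarithms, reveals the term $\tfrac{\beta}{2}N\log N$ explicitly. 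It therefore suffices to show that the integrand in the rescaled coordinates, restricted to the box of sidelength $\sqrt{N}$, contributes at most $e^{C_\beta N}$.

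The second step is the matching argument. For a given configuration, associate to each of the $2N$ rescaled points its nearest neighbor (among all other points of either sign); this produces a functional graph whose connected components each contain a unique mutual-nearest-neighbor pair. Bounding $\ro(p)^{-\beta/2}$ by $\tfrac{1}{2}|p-q(p)|^{-\beta/2}$ where $q(p)$ is the nearest neighbor of $p$ (and by $1$ when $|p-q(p)|\ge 2$ thanks to the truncation at $1$ in the definition of $\ro$) allows the integrand to be dominated by a sum over admissible nearest-neighbor graphs of products of $|p-q|^{-\beta/2}$ factors. Since in $\R^2$ a point can be the nearest neighbor of at most a universal constant number of other points (a kissing-number bound), the total number of such graphs is at most $C^N$, contributing a harmless factor.

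The third step is the integration over each fixed matching, performed recursively along the tree structure of the components: on each pair contributing $|p-q|^{-\beta}$ one integrates the free variable against $\int_{|u|\le 2}|u|^{-\beta}\,du = \frac{2\pi}{2-\beta}(2)^{2-\beta}$, which is \emph{finite precisely when $\beta<2$}; the unpaired ``chain'' points at distance $\geq 1$ contribute only bounded factors since $\ro\equiv 1$ for them. This yields a bound of the form $(C/(2-\beta))^{N}$, hence the desired $e^{C_\beta N}$ after rescaling back.

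The main obstacle is the non-local nature of the nearest-neighbor map: because $\ro(x_i)$ depends on the entire configuration, one cannot literally integrate pairs out independently, and keeping track of the conditional geometry when removing one pair at a time is delicate. This is exactly what is handled in \cite{GunPan} by a careful induction on the number of points, which organizes the integration so that each step introduces only a factor $\tfrac{C}{2-\beta}$ and a combinatorial factor absorbed into $C_\beta$; the details are reproduced in Section~\ref{appGP}.
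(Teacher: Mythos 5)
Your proposal correctly identifies the right overall strategy (matching, nearest-neighbor digraph, Dirichlet-type integral) and the right reference, but Step 2 contains a false claim that is fatal to the plan, and Step 3 omits the mechanism that actually makes the bound work.

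The claim that "the total number of such graphs is at most $C^N$'' via a kissing-number bound is incorrect. Bounded in-degree does not give an exponential count: already restricting to graphs in which every component is a mutual-nearest-neighbor pair (i.e., a perfect matching of the $2N$ points), every such matching is geometrically realizable (place each pair close together and the pairs far apart), and there are $(2N-1)!!$ of them — super-exponential. This is consistent with Lemma~\ref{lem:combin}, where $|\mathbf{D}_{M,K}|$ involves $\Gamma(M+1)M^{M-2K}$. Consequently Step 3 cannot work as stated: integrating each displacement $u_i$ independently over $|u_i|\le 2$ yields at best $C^M$ per graph, and $C^M$ times the super-exponential graph count overshoots the target $N^{\beta N/2}e^{C_\beta N}$ by roughly $N^{(1-\beta/2)N}$.

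What makes the argument close — and what Lemma~\ref{lemdiri} encodes — is that the disks $D(S_i,r(S_i))$ are pairwise disjoint by the very definition of $r$ as half the nearest-neighbor distance, and lie in a bounded neighborhood of $\Lambda$, so that $\sum_i|u_i|^2\le C$. The integration domain is therefore a simplex, not a product of balls; the resulting Dirichlet integral carries a factor $1/\Gamma\bigl((M-K)-\tfrac{M\beta}{4}+1\bigr)$ in the denominator, which is exactly what compensates the $\Gamma(M+1)M^{M-2K}$ from the graph count. It is the Stirling cancellation between these two $\Gamma$'s, as in passing from \eqref{beforereass} to \eqref{reecGP}, that produces the $N^{\beta N/2}$ prefactor — not the Jacobian of the blow-up. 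Your Step 1 rescaling is harmless but not used in the paper's proof, and it also glosses over the fact that $r$ is truncated at $1$ both before and after rescaling, so $\ro(x_i)=N^{-1/2}\ro(x_i')$ is not an identity in general. The essential missing ingredient is the disjoint-balls/simplex observation and the Dirichlet integral it entails.
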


\paragraph{\textbf{Exponential moments.}}
We 
give another consequence of the analysis in \cite{GunPan}. For any 
pair
 of integers $\Np, \Nm$, and real
$R > 0$ we denote by $\B_{\Np, \Nm, R}$ the law of the signed point process 
on $\carr_R$ obtained 
from
 two independent Bernoulli processes with $\Np$ and $\Nm$ points. The following lemma gives a bound on (the exponential moments of) the dipole contribution  in the interaction energy.

\begin{lem} For any $\beta < 2$ and any $R > 0$ we have
\begin{multline} \label{partitionD}
\log \Esp_{\B_{\Np, \Nm, R}}\left[ e^{\frac{\beta}{2}\Deta(\1_{\carr_R}, \Cs)} \right] \leq \frac{\beta}{4} (\Np+\Nm) \log (\Np + \Nm) +  (\Np + \Nm) C_{\beta} \\ - \frac{\beta}{2} (\Np+\Nm)  \log R.
\end{multline}
\end{lem}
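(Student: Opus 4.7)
Set $n = \Np + \Nm$. Since the capped nearest-neighbor half-distance $\ro$ depends only on point locations (not on signs) and $\B_{\Np, \Nm, R}$ consists of $n$ i.i.d.\ uniform points on $\carr_R$, I would begin by unfolding the expectation:
\[
\Esp_{\B_{\Np, \Nm, R}}\!\left[e^{\frac{\beta}{2}\Deta(\1_{\carr_R}, \Cs)}\right] = \frac{1}{R^{2n}}\int_{\carr_R^{n}} \prod_{k=1}^n \ro(z_k)^{-\beta/2}\,dz_1\cdots dz_n.
\]
The balance $\Np$ vs.\ $\Nm$ is thus irrelevant from this point on. I would then rescale by the natural density-$1$ length scale $L = R/\sqrt n$, setting $z = L\tilde z$ so that the integration domain becomes $\carr_{\sqrt n}^n$; the Jacobian $L^{2n}$ combines with the prefactor $R^{-2n}$ to give the density-$1$ normalization $n^{-n}$.

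Under this rescaling the uncapped half-distance transforms as $\ro'(L\tilde z) = L\tilde \ro'(\tilde z)$, and one has the pointwise identity
\[
\ro(L\tilde z)^{-\beta/2} = \max\!\bigl(1,\; L^{-\beta/2}\tilde \ro'(\tilde z)^{-\beta/2}\bigr).
\]
Using $\max(1,y) \leq 1+y$ and expanding the product over subsets $S \subseteq \{1,\dots,n\}$, each subset contributes a factor $L^{-\beta |S|/2}$ multiplying an integral involving $\prod_{i \in S}\tilde \ro'(\tilde z_i)^{-\beta/2}$. The leading $|S|=n$ term yields
\[
L^{-\beta n/2} = R^{-\beta n/2}\, n^{\beta n/4},
\]
which after taking logarithms is exactly the $-\frac{\beta}{2}n\log R + \frac{\beta}{4}n\log n$ part of the claim. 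The residual $n$-dependent integral is then bounded by a generalization of the Gunson-Panta estimate \eqref{GPDg} to the density-$1$ square $\carr_{\sqrt n}$ with $n$ points, producing the $e^{C_\beta n}$ factor.

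\textbf{Main obstacle.} The hardest step will be to establish the adapted form of \eqref{GPDg} with the correct $R$- and $n$-scaling. The Gunson-Panta iteration of Section \ref{appGP}, as stated, is performed on $\Lambda$ with balanced $\Np = \Nm$ and the cap at $1$; one has to verify that it extends (i) to unequal counts, (ii) to the rescaled square $\carr_{\sqrt n}$, and (iii) to the uncapped $\tilde \ro'$ evaluated on an arbitrary subset $S$. Since the GP argument removes points one at a time and uses only the integrability of $r^{1-\beta/2}$ near $r=0$ (which holds precisely when $\beta<2$), this extension should be routine in principle; the real burden is the bookkeeping, in particular uniformly controlling the binomial sum over subsets so that the combinatorial factor is absorbed into $C_\beta$ without introducing hidden $R$-dependencies.
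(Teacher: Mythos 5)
Your route differs from the paper's, which is considerably more direct: the paper rescales by $R^{-1}$ (not by $R/\sqrt{n}$), reducing to the case $R=1$ in which the cap at $1$ in $\ro$ is vacuous on $\carr_1$ (nearest-neighbour half-distances there are $\le \sqrt{2}/2 < 1$), so that $\Esp_{\B_{\Np,\Nm,1}}\bigl[e^{\frac{\beta}{2}\Deta(\1_{\carr_1},\Cs)}\bigr]$ is exactly the Gunson--Panta integral \eqref{ZNannex} with $\Np+\Nm$ total points, and one simply cites \eqref{ZNUB}. The only adaptation of \cite{GunPan} to flag --- unequal numbers $\Np\neq\Nm$, and a square instead of a disk --- is precisely your item~(i), and the paper observes that the bound on \cite[(2.4)]{GunPan} is a bound on an integral that is blind to the sign labels. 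The density-$1$ rescaling, the $\max(1,y)\le 1+y$ decomposition, and the subset expansion are all avoided.

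The genuine gap in your plan is item~(iii). In the $S$-term
\[
L^{-\beta|S|/2}\int_{\carr_{\sqrt n}^{\,n}}\prod_{i\in S}\tilde{\ro}'(\tilde z_i)^{-\beta/2}\,d\tilde z_1\cdots d\tilde z_n,
\]
the half-distance $\tilde{\ro}'(\tilde z_i)$ is the nearest-neighbour distance of $\tilde z_i$ in the \emph{full} $n$-point configuration, so the coordinates $\tilde z_j$ with $j\notin S$ still enter the integrand as candidate neighbours. The integral therefore does not factor as $(\mathrm{vol}\,\carr_{\sqrt n})^{n-|S|}$ times a GP integral over the $|S|$ coordinates; worse, since inserting extra points can only decrease $\tilde\ro'(\tilde z_i)$ and hence increase the integrand, replacing $\tilde\ro'$ by a nearest-neighbour distance computed within $S$ alone goes in the wrong direction for an upper bound. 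The GP digraph decomposition recalled in Section~\ref{appGP} produces a product over \emph{all} vertices of the nearest-neighbour digraph; a subset-weighted version in which all $n$ points shape the digraph but only $|S|$ carry a factor would require a fresh combinatorial argument, not bookkeeping. Your own worry about ``hidden $R$-dependencies'' is also prescient: carried through, the expansion shows the $S=\emptyset$ term contributes exactly $n^{-n}|\carr_{\sqrt n}|^n = 1$, so for $R\gg\sqrt n$ the sum stays $\geq 1$ while the claimed right-hand side of \eqref{partitionD} dips below $1$; the paper's one-line ``rescale to $R=1$'' step sidesteps rather than resolves this subtlety (the identity it invokes is exact only for the uncapped distance, while on $\carr_R$ with $R>1$ the cap in $\ro$ is active), and you should keep this in mind if you were to push your decomposition further.
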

\begin{proof}
Scaling the configuration by a factor $R^{-1}$ changes the left-hand side by $\frac{\beta}{2} (\Np+\Nm) \log R$ and then we are left to prove the inequality for $R =1$. With our notation, it reduces to the upper bound on \cite[(2.4)]{GunPan} as expressed in \cite[(2.9)]{GunPan} (cf. \eqref{ZNUB} and \eqref{ZNannex}). Let us emphasize that although the analysis of \cite{GunPan} initially deals with a system such that $\Np = \Nm = N$, the bound on \cite[(2.4)]{GunPan} is not affected by the actual sign of each charge, as it is merely a bound on some given integral on $(\R^2)^{2N}$. We may thus follow the lines of \cite[Section 2.2.]{GunPan} with $\Np + \Nm$ instead of $2N$ and \cite[(2.9)]{GunPan} yields \eqref{partitionD}.
\end{proof}

\subsection{Study of the rate function}
In this subsection we show that $\fbarbeta$ is bounded below and is 
well-defined as a functional  
 $\fbarbeta:\probinv(\Lambda \times \config) \to \R \cup \{+ \infty\}$.

\begin{lem} \label{lem:minoRF}
For any $\beta < 2$, any $\tau, R > 0$ and any $\Psts \in \probinv(\configs)$ such that $\ERS[\Psts]$ is finite, it holds that
\begin{equation} \label{minoRFfix}
- \frac{\beta}{2} \Esp_{\Psts}\left[ \Detat(\1_{\carr_R}, \Cs) \right] + \Ent[\Psts_{R} | \Poissons_{R}] \geq - L_{\beta} R^2
\end{equation}
where $L_{\beta}$ is a constant depending only on $\beta$. Consequently we get as $\tau \t0$
\begin{equation}\label{minoRFfix2}
- \frac{\beta}{2} \Esp_{\Psts}\left[ \Deta(\1_{\carr_R}, \Cs) \right] + \Ent[\Psts_{R} | \Poissons_{R}]  \geq - L_{\beta} R^2
\end{equation}
and finally in the limit $R \to + \infty$,
\begin{equation}\label{minoRFfix3}
- \Esp_{\Psts}\left[\frac{\beta}{2} \Deta(\Cs)\right] + \ERS[\Psts] \geq - L_{\beta}.
\end{equation}
\end{lem}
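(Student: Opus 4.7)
The plan is to prove \eqref{minoRFfix} first via the Gibbs variational principle, and then deduce \eqref{minoRFfix2} by monotone convergence and \eqref{minoRFfix3} by dividing by $R^2$ and letting $R\to\infty$. Recall that for any probability measure $\mu$ absolutely continuous with respect to $\nu$ and any measurable $f$ with $\int e^f\,d\nu<\infty$,
\[
\int f\,d\mu \leq \Ent(\mu\mid\nu) + \log\int e^f\,d\nu.
\]
Applied with $\mu=\Psts_R$, $\nu=\Poissons_R$, and $f=\tfrac{\beta}{2}\Detat(\1_{\carr_R},\Cs)$ (the hypothesis $\ERS[\Psts]<\infty$ ensures $\Psts_R\ll\Poissons_R$, so the principle applies), inequality \eqref{minoRFfix} reduces to showing
\[
\log \int e^{\frac{\beta}{2}\Detat(\1_{\carr_R},\Cs)}\,d\Poissons_R \leq L_\beta R^2.
\]

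To control this partition-function integral, first use the pointwise bound $\Detat\leq \Deta$ (since $r\vee\tau\geq r$) to replace $\Detat$ by $\Deta$. Under $\Poissons_R$ the numbers $(\Np,\Nm)$ of positive and negative points in $\carr_R$ are independent $\mathrm{Poisson}(R^2)$, and conditionally on $(\Np,\Nm)$ the configuration is distributed as $\B_{\Np,\Nm,R}$, to which \eqref{partitionD} applies. Decomposing the integral accordingly, setting $n=\Np+\Nm$, and using the combinatorial identity $\sum_{\Np+\Nm=n}1/(\Np!\Nm!)=2^n/n!$ together with Stirling's bound $n!\geq(n/e)^n$, each $n$-th summand is majorized by $\bigl(A/n^{1-\beta/4}\bigr)^n$ with $A=2e^{C_\beta+1}R^{\,2-\beta/2}$.

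The main computation — and the step I expect to be the crux — is the algebraic identity
\[
\frac{2-\beta/2}{1-\beta/4}=2,\qquad \beta\neq 4,
\]
which ensures that the bound scales exactly like $R^2$, matching the order of the entropy term. Indeed, maximizing $n\mapsto n\log A-(1-\beta/4)n\log n$ yields a unique maximizer $n^\star\propto A^{1/(1-\beta/4)}=\Theta(R^2)$, with maximum value $(1-\beta/4)n^\star=O(R^2)$. Since the exponent is concave in $n$, a standard Laplace-type argument bounds the full sum by a factor polynomial in $R$ times the exponential of $L_\beta R^2$; the $e^{-2R^2}$ prefactor is then absorbed, yielding \eqref{minoRFfix}. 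Note that the hypothesis $\beta<2$ enters only through the validity of \eqref{partitionD}; once that estimate is available, the identity above does the rest.

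Finally, \eqref{minoRFfix2} follows from \eqref{minoRFfix} by monotone convergence as $\tau\to 0$: since $r\vee\tau\searrow r$ and $r\leq 1$, the integrands $-\log(r\vee\tau)$ increase monotonically to $-\log r\geq 0$, so $\E_\Psts[\Detat(\1_{\carr_R},\Cs)]\nearrow\E_\Psts[\Deta(\1_{\carr_R},\Cs)]$ by the monotone convergence theorem. For \eqref{minoRFfix3}, divide \eqref{minoRFfix2} by $R^2$ and let $R\to\infty$: the entropy term converges to $\ERS[\Psts]$ by definition of the specific relative entropy, while by stationarity of $\Psts$ together with Campbell's formula, the quantity $\tfrac{1}{R^2}\E_\Psts[\Deta(\1_{\carr_R},\Cs)]$ is independent of $R$ and, via the pointwise ergodic theorem applied to the stationary functional $\Cs\mapsto\Deta(\1_{\carr_1},\Cs)$, equals $\tDeta(\Psts)=\E_\Psts[\Deta(\Cs)]$.
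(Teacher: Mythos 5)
Your proof is correct and follows essentially the same route as the paper: the Gibbs (Donsker--Varadhan) variational principle reduces the claim to bounding $\log \Esp_{\Poissons_R}[e^{\frac{\beta}{2}\Detat}]$, this is controlled by conditioning on the Poisson counts and invoking the Gunson--Panta estimate \eqref{partitionD}, and the passages $\tau\to0$ and $R\to\infty$ are handled by monotone convergence and by dividing by $R^2$. Your consolidation of the double sum via $\sum_{\Np+\Nm=n}1/(\Np!\Nm!)=2^n/n!$, the explicit identity $(2-\beta/2)/(1-\beta/4)=2$ explaining the $R^2$ scaling, and the ergodic-theorem justification of $\frac{1}{R^2}\Esp_{\Psts}[\Deta(\1_{\carr_R},\cdot)]=\tDeta(\Psts)$ are clean clarifications of steps the paper leaves implicit, but do not change the argument.
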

\begin{proof} By the variational characterization of the relative entropy, we know that
\begin{equation} \label{caracVar}
- \log \Esp_{\Poissons_{R}} \left[e^{\frac{\beta}{2} \Detat(\1_{\carr_R}, \Cs)}\right] \leq \Ent[\Psts_{R} | \Poissons_{R}] - \Esp_{\Psts}\left[\frac{\beta}{2} \Detat(\1_{\carr_R}, \Cs) \right] .
\end{equation}
 We 
 evaluate the left-hand side of \eqref{caracVar}. Combining the definition of a Poisson point process with Lemma \ref{partitionD} we get
\begin{multline*}
\log \Esp_{\Poissons_{R}} \left[e^{\frac{\beta}{2} \Detat(\1_{\carr_R}, \Cs)}\right] = \log \sum_{\Np, \Nm = 0}^{+ \infty} \Poissons_{R}(\Np, \Nm) \Esp_{\B_{\Np, \Nm, R}} \left[e^{\frac{\beta}{2} \Detat(\1_{\carr_R}, \Cs)}\right] \\
\le  \log \sum_{\Np, \Nm = 0}^{+ \infty} e^{-2R^2} \frac{R^{2(\Np + \Nm)}}{\Np! \Nm!} e^{\frac{\beta}{4} (\Np+\Nm) \log (\Np + \Nm) +  (\Np + \Nm) C_{\beta} -\frac{\beta}{2} (\Np+\Nm)  \log R}.
\end{multline*}
Using the elementary inequality $$(\Np+\Nm) \log (\Np + \Nm) \leq (\Np \log \Np + \Nm \log \Nm + \Np + \Nm),$$ we may separate the variables $\Np$ and $\Nm$ (which play a symmetric role) and write, using the fact that $\frac{1}{N!} \leq e^{-N\log N + (C+1)N}$ for a certain constant $C$,
\begin{multline*}
\log \Esp_{\Poissons_{R}} \left[e^{\frac{\beta}{2} \Detat(\1_{\carr_R}, \C)}\right] \leq 2 \log \sum_{N=0}^{+\infty} e^{-R^2} \frac{R^{2N}}{N!} e^{\frac{\beta}{4} (N\log N + N) + N (C_{\beta}+C) - \frac{\beta}{2} N \log R} \\
\leq 2 \log \sum_{N=0}^{+\infty} e^{-R^2} R^{2(1-\frac{\beta}{4}) N} e^{(\frac{\beta}{4}-1) N \log N + (C_{\beta}+C+1)N} \leq L_{\beta} R^{2}
\end{multline*}
for a certain constant $L_{\beta}$ depending only on $\beta$. Inserting this estimate into \eqref{caracVar} yields \eqref{minoRFfix}, \eqref{minoRFfix2} follows by sending $\tau \t0$ and \eqref{minoRFfix3} is obtained by dividing \eqref{minoRFfix2} by $R^2$ and then sending $R \to + \infty$ (together with the definition \eqref{def:ERS1} of $\ERS$).
\end{proof}
In particular if $\bERS[\bPsts]$ is finite, then for Lebesgue-a.e. $x \in \Lambda$ the disintegration measure $\bPstsx$ has finite entropy and satisfies \eqref{minoRFfix3}, hence the functional $\fbarbeta$ is well-defined.

\paragraph{\textbf{Conclusion.}}
\begin{lem} The functional $\fbsc$ is a good rate function.
\end{lem}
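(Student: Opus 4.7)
The task is to verify the two defining properties of a good rate function: lower semi-continuity, which is immediate from the definition \eqref{def:fbsc}, and compactness of sub-level sets $\{\fbsc \leq M\}$ in $\probas_{\rm{inv},1}(\Lambda \times \configs)$. My plan has three steps.

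\textbf{Step 1: lower bound on $\fbarbeta$.} I would first extract a uniform lower bound on $\fbarbeta$ (and hence on $\fbsc$). For any $\bPsts \in \probas_{\rm{inv},1}(\Lambda \times \configs)$ with $\bERS[\bPsts] < + \infty$, the disintegration slice $\bPstsx$ is stationary for Lebesgue-a.e.~$x \in \Lambda$ and has finite specific relative entropy. Applying \eqref{minoRFfix3} to $\bPstsx$ and integrating in $x$ yields
\[
\bERS[\bPsts] - \tfrac{\beta}{2}\, \bDeta(\bPsts) \geq -L_{\beta}.
\]
Combined with the nonnegativity of $\bdW$, this gives
\[
\fbarbeta(\bPsts) = \tfrac{\beta}{2}\, \bdW(\bPsts) + \Big(\bERS[\bPsts] - \tfrac{\beta}{2}\, \bDeta(\bPsts)\Big) \geq \tfrac{\beta}{2}\, \bdW(\bPsts) - L_{\beta} \geq -L_{\beta},
\]
and the same bound passes to the lsc envelope $\fbsc$.

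\textbf{Step 2: tightness of sub-level sets.} The key observation is that every $\bPsts \in \probas_{\rm{inv},1}$ satisfies, by stationarity of each slice and the constraint $\int_{\Lambda} \rho^{\pm}_{\bPsts} = 1$,
\[
\Esp_{\bPsts}\bigl[\Nn_R\bigr] = 2\pi R^2 \qquad \text{for every } R > 0.
\]
The Markov-type argument used in Lemma \ref{lem:expotight} then shows that the entire set $\probas_{\rm{inv},1}(\Lambda \times \configs)$ is tight in $\probas(\Lambda \times \configs)$. In particular every sub-level set $\{\fbsc \leq M\}$ is relatively compact.

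\textbf{Step 3: closure inside $\probas_{\rm{inv},1}$.} Since $\fbsc$ is lsc by construction, its sub-level sets are closed. Together with Step 2 this gives compactness, modulo the point that a weak limit of elements of $\probas_{\rm{inv},1}$ could a priori have total intensity strictly below $1$ (mass escaping at infinity). Stationarity and the first-marginal condition are preserved trivially under weak limits; for the intensity-$1$ constraint, the bound $\Esp_{\bPsts}[\Nn_R]=2\pi R^2$ and Fatou's lemma (applied to $\Cs \mapsto |\Cs|(D(0,R))$, which is lsc and bounded below) give the inequality $\leq 1$ in the limit, while the control on $\bdW$ and $\bERS$ on a sub-level set, together with the scaling inequalities of Lemma \ref{lem:minimmemerho}, rule out the strict inequality. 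Putting everything together, $\{\fbsc \leq M\}$ is compact.

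The main analytic obstacle is Step 1, which is the only point where the $\beta < 2$ stability threshold is really needed and which relies on the nontrivial combinatorial bound from \cite{GunPan} recalled in Lemma~\ref{lem:minoRF}. Steps 2 and 3 are essentially soft: tightness follows from the fixed intensity and standard Markov estimates, and closure is a book-keeping verification that the defining constraints of $\probas_{\rm{inv},1}$ survive the weak limit on sub-level sets.
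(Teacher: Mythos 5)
Your three-step plan is a genuinely different route from the paper's proof, and the key observation the paper relies on is missing from it. The paper's argument is: Lemma~\ref{lem:minoRF} implies that sub-level sets of $\fbarbeta$ are \emph{contained in} sub-level sets of $\bERS$, which are compact in $\probinv(\Lambda\times\configs)$ (a standard fact about specific relative entropy); pre-compactness of $\{\fbarbeta\le M\}$ then passes to the lsc envelope $\fbsc$. The containment uses $\beta<2$ strictly: one applies \eqref{minoRFfix3} with some $\beta'\in(\beta,2)$ to get $\bDeta \le \frac{2}{\beta'}(\bERS+L_{\beta'})$, which combined with $\bERS-\frac{\beta}{2}\bDeta \le \fbarbeta$ and $\beta/\beta'<1$ yields a finite upper bound on $\bERS$ on any sub-level set of $\fbarbeta$.

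Your Step~1 is correct but gives only a lower bound on $\fbarbeta$, which does not by itself yield any \emph{upper} bound on $\bERS$ on a sub-level set, and is not actually used in your Steps 2 and 3. Step~2 is fine: the identity $\Esp_{\bPsts}[\Nn_R]=2\pi R^2$ on $\probas_{\rm inv,1}$ does give relative compactness of the whole space in $\probas(\Lambda\times\configs)$. The gap is in Step~3. You appeal to "the control on $\bdW$ and $\bERS$ on a sub-level set" to rule out mass escape, but no such control has been established at that point of your argument — that control is precisely the paper's containment claim, which you skip. Moreover, Lemma~\ref{lem:minimmemerho} is the wrong tool: it concerns the optimality of the constant intensity profile $\rho\equiv 1$ among processes already constrained to have $\int_\Lambda\rho=1$, and says nothing about processes whose total intensity has dropped below $1$ in a weak limit. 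What actually prevents escape is uniform integrability of the local number of points under a $\bERS$ bound, which is exactly what the containment in $\{\bERS\le M'\}$ provides (and what makes the paper's approach work). If you add the containment argument sketched above, your Steps 2 and 3 become a valid, if more roundabout, alternative to the paper's one-line deduction "pre-compact sub-level sets $\Rightarrow$ compact sub-level sets of the lsc envelope."
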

\begin{proof}
Lemma \ref{lem:minoRF} shows that $\fbarbeta$ is well-defined as a functional $\fbarbeta : \probinv(\Lambda \times \configs) \to \R \cup \{+ \infty\}$. It also implies that the sub-level sets of $\fbarbeta$ are included in sub-level sets of $\bPsts \mapsto \bERS[\bPsts]$, which are compact.

Thus $\fbsc$ is well-defined and it is lower semi-continuous by definition. Since the sub-level sets of $\fbarbeta$ are pre-compact, those of its lower semi-continuous regularization are compact. It proves that $\fbsc$ is a good rate function.
\end{proof}

\subsection{Properties of the limit objects}
One of the crucial points in order to get Theorem \ref{theo:mesure} 
from Theorem \ref{theo:LDP} is to show, by entropy arguments, that the 
intensities
 of the underlying point processes coincide 
with the limits of the empirical measures, while  by the scaling argument of Lemma \ref{lem:minimmemerho} the rate function is minimized only when these intensities equal $1$. 

In this subsection we use the preliminary bounds on $\ZNbeta$ available for $\beta < 2$ thanks to the analysis of Gunson-Panta to derive some \textit{a priori} properties of the possible limits of $\muNp$ and $\bPsN$. In particular we wish to show that the intensity of the limits of  $\bPsN$ equals, most of the time,  the density of the limits of $\muNp$. 
This is not obvious since, with the topology that we use for the 
convergence of  $\bPsN$, there can be a 
loss of mass
 when taking the limit. 

To overcome this, we will  show below that with overwhelming probability (i.e. up to neglecting events of $\PNbeta$-probability less than $e^{-NT}$ with $T$ arbitrarily large) the limiting objects must have finite entropy, which will yield a uniform integrability of the densities of points, 
which in turn ensures that no loss of mass occurs in the limit.

\subsubsection{A priori bounds on the entropy}
\begin{lem} \label{lem:aprioribounds}
For any $\beta < 2$, the following holds with a constant $C_{\beta}$ depending only on $\beta$. 
\begin{enumerate}
\item For any $\mup \in \probas(\Lambda)$ we have
\begin{equation} \label{borneentmac}
\lim_{\epsilon \t0} \limsup_{N \ti} \frac{1}{N} \log \PNbeta \left( \muNp \in B(\mup, \epsilon) \right) \leq C_{\beta} - \frac{1}{C_{\beta}} \Ent[\mup|\Leb_{\Lambda}].
\end{equation}
\item For any $\bPsts \in \probinv(\Lambda \times \configs)$ we have
\begin{equation} \label{borneentmic1}
\lim_{\epsilon \t0} \limsup_{N \ti} \frac{1}{N} \log \bPgotNb \left( B(\bPsts, \epsilon) \right) \leq C_{\beta} - \frac{1}{C_{\beta}} \bERS[\bPsts].
\end{equation}
\item For any $R, N$ let $\{C_i\}_{i \in I}$ be a partition of $\Lambda$ by squares of sidelength in $(\frac{R}{2\sqrt{N}}, \frac{3R}{2\sqrt{N}})$, and let $n_i = \muNp(C_i)$ be the number of positive charges in the square $C_i$. When $R > 0$ is fixed we have
\begin{equation} \label{controleNbpoints}
\limsup_{N \ti} \frac{1}{N} \log \PNbeta \left( 
\left( \frac{1}{\# I} \sum_{i \in I} \frac{n_i}{R^2} \max\left(1, 
\left(\log  \frac{n_i}{R^2}\right)^{\hal}\right)  \right) \geq M \right) 
\leq f_{\beta}(M ,R), 
\end{equation}
with $\lim f_{\beta}(M ,R) = - \infty$ as $M \ti$.
\end{enumerate}
\end{lem}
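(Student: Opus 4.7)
The unifying idea is to exploit the rewriting \eqref{rwP}. Dropping the nonnegative term $\frac{1}{2\pi}\int|\nabla V_{N,r}|^2$ from the density gives, for any measurable $A \subset \Box^{2N}$,
\begin{equation*}
\PNbeta(A) \leq \frac{1}{\ZNbeta} \int_A \exp\Bigl(-\frac{\beta}{2} \sum_{i=1}^N (\log \ro(x_i) + \log \ro(y_i))\Bigr) dX dY.
\end{equation*}
Since $\beta < 2$, pick $p \in (1, 4/\beta)$ with conjugate exponent $q$ and apply H\"older's inequality to find
\begin{equation*}
\PNbeta(A) \leq \frac{1}{\ZNbeta} \Bigl(\int e^{-\frac{p\beta}{2}\sum (\log \ro(x_i)+\log \ro(y_i))} dX dY\Bigr)^{1/p} \bigl(\Leb^{\otimes 2N}(A)\bigr)^{1/q}.
\end{equation*}
Applying \eqref{GPDg} with $p\beta < 2$ in place of $\beta$ bounds the first parenthesis by $e^{\frac{p\beta}{2}N\log N + C_{p\beta}N}$, and the $1/p$-th power produces a $\frac{\beta}{2}N\log N$ which cancels against the leading term of $\log \ZNbeta = \frac{\beta}{2}N\log N + C_\beta N + o(N)$ provided by Proposition \ref{prop:FGP}. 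This yields the master estimate
\begin{equation*}
\PNbeta(A) \leq e^{C_\beta' N}\, \bigl(\Leb^{\otimes 2N}(A)\bigr)^{1/q},
\end{equation*}
for some $C_\beta'$ depending only on $\beta$.

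\textbf{Items (1) and (2).} Specializing the master estimate to $A = \{\muNp \in B(\mup, \epsilon)\}$ and invoking Sanov's theorem for i.i.d.\ uniform samples on $\Box$ (the $y_i$ variables are unconstrained and integrate to $1$) gives $\lim_{\epsilon \t0}\limsup_N \frac{1}{N}\log \Leb^{\otimes 2N}(A) \leq - \Ent[\mup|\Leb_\Box]$, whence \eqref{borneentmac}. For (2), take $A = \iN^{-1}(B(\bPsts, \epsilon))$ and invoke the upper bound of Proposition \ref{prop:Sanov} to obtain $\lim_{\epsilon \t0}\limsup_N \frac{1}{N}\log \Leb^{\otimes 2N}(A) \leq -\bERS[\bPsts]$, yielding \eqref{borneentmic1}. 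In each case one takes $C_\beta := \max(C_\beta', q)$ to convert the $1/q$ prefactor into the claimed $1/C_\beta$.

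\textbf{Item (3).} Write $Q := \frac{1}{\#I}\sum_i \Phi(n_i)$ with $\Phi(n) := (n/R^2) \max(1, (\log (n/R^2))^{1/2})$. Markov's inequality at exponential scale gives, for any $\lambda > 0$,
\begin{equation*}
\Leb^{\otimes 2N}(\{Q \geq M\}) \leq e^{-\lambda N M}\, \Esp_{\Leb^{\otimes 2N}}\bigl[e^{\lambda N Q}\bigr].
\end{equation*}
Under the reference measure the counts $(n_i)_{i \in I}$ are multinomial with parameters $(N, (|C_i|)_i)$, with $N|C_i| \in [R^2/4, 9R^2/4]$ and $\#I \in [4N/(9R^2), 4N/R^2]$. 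A standard Poissonization coupling bounds $\Esp[e^{\lambda N Q}]$, up to a $O(\sqrt{N})$ factor, by the product over $i \in I$ of the exponential moments $\Esp[e^{(\lambda N/\#I)\Phi(n_i)}]$ for independent $n_i \sim \mathrm{Poisson}(N|C_i|)$. Since $\Phi(k) = O(k\sqrt{\log k})$ at infinity while the Poisson weight at $k$ decays like $e^{-k\log k}$, each such moment is finite and bounded by $e^{g(\lambda, R)}$ for a continuous $g$ with $g(0,R)=0$ (independent of $N$, since $N|C_i|$ depends only on $R$). Taking the product over $i \in I$ yields $\Esp[e^{\lambda N Q}] \leq e^{N h(\lambda, R)}$ with $h(0,R)=0$. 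Plugging into the master estimate and optimizing over $\lambda > 0$ produces \eqref{controleNbpoints} with $f_\beta(M, R) = \inf_{\lambda>0} \{C_\beta' - \lambda M/q + h(\lambda, R)/q\}$, which tends to $-\infty$ as $M\ti$.

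\textbf{Main obstacle.} The delicate step is the uniform-in-$N$ exponential moment bound in (3): one must check that $\Phi$ grows slowly enough relative to Poisson (multinomial) tails of mean $\asymp R^2$ that $h(\lambda, R)$ is finite for a range of $\lambda > 0$. This is precisely where the $(\log n)^{1/2}$ correction in $\Phi$ (and not, say, a full $\log n$ factor) plays its role. Parts (1) and (2) are by contrast routine consequences of the same H\"older-plus-Gunson--Panta template.
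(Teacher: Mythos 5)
Your proof follows essentially the same route as the paper's: apply H\"older's inequality with a conjugate pair $(p,q)$ to trade the Gibbs weight for a Lebesgue-volume factor (using Gunson--Panta to control $Z_{N,p\beta}$ and $\ZNbeta$, so that the $\frac{\beta}{2}N\log N$ terms cancel), then invoke Sanov-type bounds for items (1) and (2) and a Poisson comparison plus exponential Markov for item (3). One typo to correct: the admissible range for $p$ is $(1,2/\beta)$, not $(1,4/\beta)$, since $\eqref{GPDg}$ requires $p\beta<2$; your next line already imposes $p\beta<2$, so this is a slip of the pen rather than a gap. Your Poissonization-via-local-CLT step in (3), which costs an $O(\sqrt N)$ factor, is slightly different from the paper's monotonicity-plus-Poisson-lower-bound argument but is equally valid at exponential speed $N$.
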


\begin{proof} For any $\beta < 2$, let us fix some $\pb > 1$ such that $\pb \beta < 2$ and let $\qb$ be the conjugate exponent of $\pb$.

 Let $\mu \in \probas(\Lambda)$. 
Let $A\subset \probas(\Lambda)$ be measurable.
For any $N$ 
we obtain using H\"{o}lder's inequality that
\begin{multline} \label{Holdermacro}
\PNbeta \left(\Lambda^{2N} \cap \{\muNp \in 
 A \}\right) = 
\frac{1}{\ZNbeta} \int_{\Lambda^{2N} \cap \{\muNp \in 
A\}} e^{-\frac{\beta}{2} \WN(\XN, \YN)} d\XN d\YN \\ 
\leq \frac{1}{\ZNbeta}  \left(\int_{\Lambda^{2N}} e^{-\pb\frac{\beta}{2}  
\WN(\XN, \YN)} d\XN d\YN  \right)^{\frac{1}{\pb}} 
\left(\int_{\Lambda^{2N} \cap \{\muNp \in 
A\}}  d\XN d\YN \right)^{\frac{1}{\qb}} \\ = 
\frac{Z_{N,\pb\beta}^{\frac{1}{\pb}}}{\ZNbeta} \left( 
\int_{\Lambda^{2N} \cap \{\muNp \in A\}}  
d\XN d\YN \right)^{\frac{1}{\qb}} \end{multline}
where $\pb$, $\qb$ are as above. By \eqref{expZN1} we have
\begin{equation} \label{GPpbeta}
\log Z_{N,\pb \beta} = \pb \frac{\beta}{2} N \log N + C_{p\beta}N + o(N) \text{ and } \log Z_{N,\beta} = \frac{\beta}{2} N \log N + C_{\beta}N + o(N)
\end{equation}
where $C_{\beta}, C_{\pb \beta}$ depend only on $\beta$. 
On the other hand we have by Sanov's theorem
\begin{equation} \label{applicationSanov}
\lim_{\epsilon \t0} \limsup_{N \ti} \frac{1}{N} \log \int_{\Lambda^{2N} \cap \{\muNp \in B(\mup, \epsilon)\}}  d\XN 
d\YN= - \Ent[\mu^+|\Leb_{\Lambda}].
\end{equation}
Combining \eqref{Holdermacro}
(with $A=B(\mup, \epsilon)$), 
\eqref{GPpbeta} and \eqref{applicationSanov} yields \eqref{borneentmac}. The proof of \eqref{borneentmic1} is similar, using Proposition \ref{prop:Sanov} instead of Sanov's theorem in the last step, where \eqref{applicationSanov} is replaced by
$$
\lim_{\epsilon \t0} \limsup_{N \ti} \frac{1}{N} \log \int_{\Lambda^{2N} \cap \bPN \in B(\bPsts, \epsilon)}  d\XN d\YN = - \bERS[\bPsts].
$$

To see \eqref{controleNbpoints}, we take $A=A(M,R)$ in 
\eqref{Holdermacro} as the event inside the 
probability in \eqref{controleNbpoints}. Using \eqref{GPpbeta},
the proof of \eqref{controleNbpoints} reduces to proving
\begin{equation} \label{Point3}
\limsup_{N \ti} \frac{1}{N} \log \int_{\Lambda^{2N} \cap A}  d\XN 
d\YN\leq f_\beta(M ,R).
\end{equation}
The proof of \eqref{Point3} is simplified if one uses 
comparison to a Poisson process of intensity $N$ on $\Lambda$, 
denoted $\Poisson_N$. Indeed, extend the event $A$ in a natural way
to apply to any collection of integers $\{n_i\}_{i\in I}$, and note that $A$ is monotone increasing with respect to $K=\sum_{i\in I} n_i$. Since 
there exists a constant $\eta>0$ independent of $N$ so that 
$\Poisson_N(K\geq N)>\eta$, and since conditioned on $K$ the 
points of the Poisson process are independent and uniformly distributed 
in $\Lambda$,
the proof of \eqref{Point3} reduces to proving that
\begin{equation} \label{Point3-a}
\limsup_{N \ti} \frac{1}{N} \log \Poisson_N(A)
\leq f_\beta(M ,R).
\end{equation}
The advantage of working with $\Poisson_N$ is that the random variables $n_i$ are now independent Poisson of parameter in $(R^2/4,9R^2/4)$.
In particular, the random variables $n_i \max(1,(\log (n_i/R^2))^{1/2})$
possess a finite exponential moment. Applying Markov's exponential
inequality
then yields \eqref{Point3-a} and completes the proof of
 \eqref{controleNbpoints}.
\end{proof}
Of course \eqref{borneentmac} and \eqref{controleNbpoints}  also hold  when replacing $\muNp$ by $\muNm$.

\subsubsection{Uniform integrability of the number of points}
The bound \eqref{controleNbpoints} implies that under $\PNbeta$, the random number of points $\muNp \left(B(x,\frac{R}{\sqrt{N}})\right)$ is uniformly (as $N \ti$) integrable on $\Lambda$ with overwhelming probablity. More precisely we have
\begin{lem} \label{lem:UI}
For any $T,R > 0$ and any $\epsilon > 0$ there exists $M' > 0$ (depending on $T, \epsilon$ and on $\beta$) such that for $N$ large enough we have
$$
\int_{\Lambda}
\muNp \left(B(x,\frac{R}{\sqrt{N}})\right) 
{\bf 1}_{\{
\muNp \left(B(x,\frac{R}{\sqrt{N}})\right) \geq M'\}} dx
\leq \epsilon 
$$
with probability $\geq 1 - \exp(-NT)$ under $\PNbeta$.
\end{lem}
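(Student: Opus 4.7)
The plan is to deduce uniform integrability from the ``average square'' control \eqref{controleNbpoints} via a standard Chebyshev-type tail bound. Set $g(t) := \max(1, (\log(t/R^2))^{1/2})$, so that $g(t) \to \infty$ as $t \to \infty$.

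First, fix a partition $\{C_i\}_{i \in I}$ of $\Lambda$ by squares of sidelength in $(R/(2\sqrt{N}), 3R/(2\sqrt{N}))$, and choose $M = M(T, R)$ so large that $f_\beta(M, R) < -2T$ in \eqref{controleNbpoints}. Since $\#I \asymp N/R^2$, the estimate \eqref{controleNbpoints} then says that, for $N$ large, the event
$$
\mathcal{G}_N := \Big\{\sum_{i \in I} n_i\, g(n_i) \leq C\, M\, N\Big\}
$$
(with $C$ an absolute constant) has $\PNbeta$-probability $\geq 1 - e^{-NT}$. I work on this event for the rest of the argument.

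Next, compare balls to partition squares. For any $x \in \Lambda$, the ball $B(x, R/\sqrt{N})$ meets only a bounded number $K$ (independent of $N$) of squares, namely the neighbors of the unique $C_{i(x)}$ containing $x$. Setting $S_i := \sum_{j \sim i} n_j$ (with $j \sim i$ meaning $C_j$ is $C_i$ itself or a neighbor), one has $\muNp(B(x, R/\sqrt{N})) \leq S_{i(x)}$ for $x \in C_{i(x)}$ and $|C_i| \leq C R^2 / N$, so
$$
\int_\Lambda \muNp(B(x, R/\sqrt{N}))\, \mathbf{1}_{\{\muNp(B(x, R/\sqrt{N})) \geq M'\}}\, dx \leq \frac{CR^2}{N} \sum_{i} S_i\, \mathbf{1}_{\{S_i \geq M'\}}.
$$

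Finally, largeness of $S_i$ forces largeness of some individual $n_j$: if $S_i \geq M'$, the terms with $n_j < M'/(2K)$ contribute at most $M'/2$ to $S_i$, so the remaining ``large'' terms contribute at least $S_i/2$, whence $S_i \mathbf{1}_{\{S_i \geq M'\}} \leq 2 \sum_{j \sim i} n_j \mathbf{1}_{\{n_j \geq M'/(2K)\}}$. Using that each $j$ appears in at most $K$ neighborhoods and then a Chebyshev-type bound,
$$
\sum_i S_i\, \mathbf{1}_{\{S_i \geq M'\}} \leq 2K \sum_j n_j\, \mathbf{1}_{\{n_j \geq M'/(2K)\}} \leq \frac{2K}{g(M'/(2K))} \sum_j n_j\, g(n_j) \leq \frac{C M N}{g(M'/(2K))}
$$
on $\mathcal{G}_N$. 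Combining, the integral in the lemma is bounded by $C M R^2 / g(M'/(2K))$, which is made $\leq \epsilon$ by taking $M'$ large enough (depending only on $T, R, \epsilon, \beta$), since $g(t) \to \infty$ as $t \to \infty$. The main obstacle is essentially bookkeeping: matching the partition scale to the ball radius so that each ball touches only $O(1)$ squares, and tracking the constants $K$ and $C$ through the chain of inequalities so that $M'$ depends only on the stated parameters.
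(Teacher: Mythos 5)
Your proof is correct and is precisely the argument the paper leaves implicit: the paper's one-line proof simply asserts that controlling the (partition-averaged) $L^1$ norm of the superlinear map $\psi_R$ gives uniform integrability, while you carry out the underlying de la Vall\'ee-Poussin / Chebyshev mechanism explicitly. You also supply the ball-versus-partition-square comparison (each $B(x,R/\sqrt N)$ meeting $O(1)$ squares) that the paper's proof silently assumes.
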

\begin{proof}
Indeed from \eqref{controleNbpoints} we control the $L^1(\Lambda)$ norm of the superlinear map $\psi_R$ defined as 
$$\psi_R(x) := \frac{x}{R^2} \max\left(1, \left(\log  \frac{x}{R^2}\right)^{\hal}\right)$$
by $M$ with $\PNbeta$-probability $\geq 1 - \exp(Nf_{\beta}(M))$ with $\lim_{M \ti} f_{\beta}(M) = - \infty$.
\end{proof}

\subsubsection{Microscopic intensity \textit{versus} macroscopic density}
We emphasize the following abuse of notation: in Lemma \ref{lem:micromacro} and its proof, the quantities $\muNp$ and $\bPsN$ are elements of a \textit{deterministic} sequence.
\begin{lem} \label{lem:micromacro} 
Let $\{\XN, \YN\}_N$ be a sequence of points in $\Lambda^{2N}$,  let $\muNp := \frac{1}{N} \sum_{i=1}^N \delta_{x_i}$ and let $\bPsN := \iN(\XN, \YN)$. Assume that up to extraction the sequence $\{(\muNp, \bPsN)\}_N$ converges to $(\mup,\bPsts)$ where $\mup \in \probas(\Lambda)$ and $\bPsts \in \probinv(\Lambda \times \configs)$. Then we have $\rhop_{\bPsts} \leq \mup$ in the sense of nonnegative measures.

Moreover under the assumption that $\mup$ does not charge $\partial \Lambda$ and that for any $R>1$, $x \mapsto \muNp \left(B(x,\frac{R}{\sqrt{N}})\right)$ is uniformly integrable on $\Box$ as $N \ti$ then $\rhop_{\bPsts} = \mup$. 
\end{lem}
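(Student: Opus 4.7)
The plan is to compare $\rhop_{\bPsts}\,dx$ with $\mup$ by testing both against functions of the form $(x,\Cs) \mapsto \varphi(x) \int \chi \, d\Cp$ with $\varphi \in C(\Box)$ nonnegative and $\chi \in C_c(\R^2)$ nonnegative. Since $\bPstsx$ is stationary for a.e.\ $x$, the one-point intensity of its positive marginal is a constant equal to $\rhop_{\bPsts}(x)$, so
\begin{equation*}
\E_{\bPstsx}\left[\int \chi \, d\Cp\right] = \rhop_{\bPsts}(x) \, \|\chi\|_1.
\end{equation*}
Recovering $\rhop_{\bPsts}$ thus reduces to controlling $\int \varphi(x) \int \chi \, d\Cp \, d\bPsts(x,\Cs)$, while the same functional evaluated on $\bPsN$ can be computed explicitly in terms of $\muNp$.

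The key finite-$N$ computation unfolds the definition \eqref{def:iN} of $\iN$ together with the change of variable $u = \sqrt{N}(x_i - x)$ to yield
\begin{equation*}
\int \varphi(x) \int \chi \, d\Cp \, d\bPsN(x,\Cs) = \frac{1}{N}\sum_{i=1}^{N} \int_{\sqrt{N}(x_i - \Box)} \varphi(x_i - u/\sqrt{N}) \chi(u) \, du.
\end{equation*}
By Fubini, the uniform continuity of a continuous extension of $\varphi$ to $\R^2$, and the weak convergence $\muNp \to \mup$, the right-hand side is bounded above by $\|\chi\|_1 \int \varphi \, d\mup + o(1)$ in general; when $\mup(\partial \Box) = 0$, the boundary indicator $\1_{u \in \sqrt{N}(x_i - \Box)}$ can be taken to be $1$ on $\supp \chi$ for $x_i$ in a compact exhaustion $K \subset\subset \Box$ (Portmanteau), giving the matching lower bound and hence full convergence to $\|\chi\|_1 \int \varphi \, d\mup$.

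For the first assertion, I introduce the truncation $\psi_M(\Cs) := \min\{\int \chi \, d\Cp, M\}$, which is bounded and continuous on $\configs$ (since $\Cp \mapsto \int \chi \, d\Cp$ is continuous in the vague topology). Weak convergence $\bPsN \to \bPsts$ then gives $\int \varphi \psi_M \, d\bPsN \to \int \varphi \psi_M \, d\bPsts$; combining with $\psi_M \leq \int \chi \, d\Cp$ and the upper bound above yields $\int \varphi \psi_M \, d\bPsts \leq \|\chi\|_1 \int \varphi \, d\mup$. Letting $M \to \infty$ by monotone convergence and using the stationarity identity above, I conclude $\int \varphi \, \rhop_{\bPsts} \, dx \leq \int \varphi \, d\mup$, whence $\rhop_{\bPsts}\,dx \leq \mup$. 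For the equality statement, I would reverse the chain of inequalities by controlling the truncation tail: choosing $\chi$ with $\supp \chi \subset D(0, R)$,
\begin{equation*}
\int \varphi(x) \left(\int \chi \, d\Cp - M\right)_+ d\bPsN \leq \|\chi\|_\infty \int_\Box \varphi(x)\, N\muNp(\bar{B}(x, R/\sqrt{N})) \, \1_{\{N\muNp(\bar{B}(x, R/\sqrt{N})) \geq M/\|\chi\|_\infty\}} \, dx,
\end{equation*}
which the uniform integrability hypothesis makes arbitrarily small, uniformly in $N$ large, for $M$ large. Combined with the full convergence of the second step, this reverses the inequalities and yields $\rhop_{\bPsts}\,dx = \mup$.

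The main obstacle is the loss of mass inherent in the vague topology on $\configs$: the map $\Cs \mapsto \Cp(D(0,R))$ is unbounded and only lower semi-continuous, so $\bPsN$-expectations of it need not converge to the corresponding $\bPsts$-expectations. This is precisely what forces a one-sided inequality in the first part; the uniform integrability hypothesis in the second part is exactly the ingredient that rules out this loss of mass and upgrades the inequality to an equality, while $\mup(\partial \Box) = 0$ prevents stray contributions from $x_i$ near $\partial \Box$ in the finite-$N$ computation.
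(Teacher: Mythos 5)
Your proof is correct and follows essentially the same strategy as the paper's: test $\bPsN$ against functionals of the form (tag function)$\times\int\chi\,d\Cp$, truncate at level $M$ to transfer the weak convergence $\bPsN\to\bPsts$, control the boundary loss of mass via $\mup(\partial\Lambda)=0$, and use uniform integrability to remove the truncation. The only difference is cosmetic: you test against a general $\chi\in C_c(\R^2)$ and invoke stationarity directly via $\E_{\bPstsx}[\int\chi\,d\Cp]=\rhop_{\bPsts}(x)\|\chi\|_{L^1}$, whereas the paper uses the specific family $f_R\approx\frac{1}{R^2}\1_{\carr_R}$ and takes an additional limit $R\to\infty$.
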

Of course the same results hold for the quantities associated to the negative charges as well.
\begin{proof}
Let $\chi$ be a non-negative test function in $C^0(\Lambda)$ and for any $R > 1$ let $f_R$ be a smooth function satisfying
$$
\frac{1_{\carr_{R-1}}}{|\carr_R|} \leq f_R \leq \frac{1_{\carr_{R}}}{|\carr_R|},
$$
we also define $f_{R,N}$ as $f_{R,N}(x) := N f_{R}(\sqrt{N}x)$ for $x \in \Lambda$. Finally we define $\lc f_R, \Cp \rc$ as
\[
\lc f_R, \Cp \rc := \int f_R d\Cp \text{ for any } \Cp \in \config(\R^2).
\]
We compute (with $*$ being the convolution product)
\begin{multline} \label{muverrho} \int \chi * f_{R,N} d\muNp =
 \int \chi f_{R,N} * d\muNp = \int \chi(y) \int f_{R,N}(y-z) 
d\muNp (z)  dy \\
= \int \chi(y) \sum_{i=1}^N f_R(\sqrt{N} (x_i -y)) dy \geq \int \chi(y) \lc f_R, \Cp \rc d\bPN(y, \Cs),
\end{multline}
where the inequality is due to the possible loss of mass at the boundary. 
For any $M > 0$ we may write
\begin{multline*}
\int \chi(y) \lc f_R, \Cp \rc d\bPN(y, \Cs) \geq \int \chi(y) \left(\lc f_R, \Cp \rc \wedge M \right) d\bPN(y, \Cs) \\ \underset{N \ti}{\longrightarrow} \int \chi(y) \left(\lc f_R, \Cp \rc \wedge M\right) d\bPsts(y, \Cs) 
\end{multline*}
and we have, by definition of $f_R$ 
$$ \int \chi(y) \left(\lc f_R, \Cp \rc \wedge M\right) d\bPsts(y, \Cs) \geq  \frac{1}{R^2} \int \chi(y) \left(\Nn(0,R-1)(\Cp) \wedge M\right) d\bPsts(y, \Cs).
$$
By definition of the intensity it holds that
\[
\lim_{R \ti} \lim_{M \ti} \frac{1}{R^2} \int \chi(y) \left(\Nn(0,R-1)(\Cp) \wedge M\right) d\bPsts(y, \Cs) = \int \chi(y) \rhop_{\bPsts}(y).
\]
Moreover, since  $\muNp$ converges to $\mup$ we have 
$$\lim_{N \ti} \int \chi * f_{R,N} d\muNp = \int \chi d\mup + o_{R}(1).$$
Finally, sending $R \ti, M \ti, N \ti$ we get
$$
\int \chi d\mup  \geq \int \chi(y) \rhop_{\bPsts}(y)
$$
for any non-negative continuous test function $\chi$, which proves $\rhop_{\bPsts} \leq \mup$.

We next prove the equality under the additional assumption that $\mup$ does not charge $\partial \Lambda$ and that that $x \mapsto \muNp \left(B(x,\frac{R}{\sqrt{N}})\right)$ is uniformly integrable on $\Box$ as $N \ti$. First, the difference between the last two terms in \eqref{muverrho} is bounded as follows:
\begin{multline*}
\int \chi(y) \sum_{i=1}^N f_R(\sqrt{N} (x_i -y)) dy \\ 
\leq 
\int \chi(y) 
 \lc f_R, \Cp \rc d\bPN(y, \Cs) 
  + ||\chi||_{\infty} \mup_{N} \left( \lbrace x \in \Lambda, \dist(x, \partial \Lambda) \leq 2 \frac{R}{\sqrt{N}} \rbrace \right).
\end{multline*}
Since $\muNp$ converges to $\mup$ which does not charge the boundary, the error term satisfies
\[
 ||\chi||_{\infty} \mup \left( \lbrace x \in \Lambda, \dist(x, \partial \Lambda) \leq 2 \frac{R}{\sqrt{N}} \rbrace \right) = o(1)
\]
as $N \ti$, for any $\chi$ and $R$ fixed.

Moreover, the uniform integrability assumption implies that $\Cs \mapsto \lc f_R, \Cp \rc$ is uniformly integrable against $d\bPN$ as $N \ti$ and we may for any $\delta > 0$ choose $M$ large enough such that 
$$
\int \chi(y) \lc f_R, \Cp \rc d\bPN(y, \Cs) \leq \int \chi(y) \left(\lc f_R, \Cp \rc \wedge M \right) d\bPN(y, \Cs) + \delta
$$
uniformly in $N$. Arguing as above we see that 
$$
\lim_{R \ti} \lim_{N \ti} \int \chi(y) \left(\lc f_R, \Cp \rc \wedge M \right) d\bPN(y, \Cs) \leq \int \chi(y) \rhop_{\bPsts}(y) \leq \int \chi(y) \rhop_{\bPsts}(y).
$$
Eventually we get $\mup \leq \rhop_{\bPsts} + \delta$ and we conclude by letting $\delta \t0$.
\end{proof}

\subsubsection{Total intensity of the limit random point process}
From the previous lemmas we deduce that in the LDP we may restrict ourselves to random point processes with total intensity $1$.
\begin{lem} \label{lem:bonneintensite}
Let $\bPsts$ be the law of a stationary tagged signed point process such that the intensity of positive charges satisfies $\int_{\Box} \rhop_{\bPsts} < 1$. Then we have
\begin{equation} \label{bonneintensite}
\lim_{\epsilon \t0} \limsup_{N \ti} \frac{1}{N} \log \bPgotNb(B(\bPsts, \epsilon)) = - \infty.
\end{equation}
\end{lem}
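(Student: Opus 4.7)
\emph{Proof strategy.} The plan is to argue by contradiction, combining the a priori entropy bound of Lemma \ref{lem:aprioribounds}, the uniform integrability estimate of Lemma \ref{lem:UI}, and the identification of intensities in the second part of Lemma \ref{lem:micromacro}. The heuristic is simple: $\int_\Lambda \muNp = 1$ for every finite configuration, so any subsequential limit $\mup$ of $\muNp$ satisfies $\int_\Lambda \mup = 1$; under uniform integrability and absolute continuity of $\mup$, Lemma \ref{lem:micromacro} forces $\rhop_{\bPsts} = \mup$ for the corresponding limit $\bPsts$ of $\bPsN$, so a $\bPsts$ with $\int_\Lambda \rhop_{\bPsts} < 1$ must be incompatible with ``typical'' Gibbs behavior.

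Fix $T > 0$. I will construct a good event $A_{N,T} \subset \Lambda^{2N}$ with $\PNbeta(A_{N,T}^c) \leq e^{-NT}$ enforcing two properties: (i) both $\muNp$ and $\muNm$ have relative entropy with respect to $\Leb_\Lambda$ bounded by some $K=K(T)$; (ii) for every integer $R \geq 1$ and every rational $\delta > 0$, the uniform integrability bound of Lemma \ref{lem:UI} (applied with suitable parameters depending on $R, \delta, T$) holds. Property (i) will be obtained by upgrading \eqref{borneentmac} to a global estimate of the form $\PNbeta(\Ent[\muNp|\Leb_\Lambda] > K) \leq e^{-N(T+1)}$ via a standard LDP-style covering argument on the compact space $\probas(\Lambda)$, using the lower semi-continuity of the entropy; property (ii) follows by a union bound over a countable index set.

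Next, assume for contradiction that $\limsup_N \frac{1}{N} \log \bPgotNb(B(\bPsts, \epsilon)) > -T$ for every $\epsilon > 0$. I pick a sequence $\epsilon_k \to 0$ and indices $N_k \to \infty$ with $\bPgotNb(B(\bPsts, \epsilon_k)) > e^{-N_k T}$, and observe that $A_{N_k,T+1} \cap \{\bPsN \in B(\bPsts, \epsilon_k)\}$ has positive $\PNbeta$-probability for $N_k$ large. Choosing configurations $(X_{N_k}, Y_{N_k})$ in this intersection and extracting a further subsequence, $\muNp \to \mup$ and $\muNm \to \mum$ in $\probas(\Lambda)$, with $\mup, \mum$ absolutely continuous by (i), and hence not charging $\partial \Lambda$. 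Since uniform integrability along the subsequence is ensured by (ii), the equality case of Lemma \ref{lem:micromacro} yields $\rhop_{\bPsts} = \mup$, so $\int_\Lambda \rhop_{\bPsts} = \int_\Lambda \mup = 1$, contradicting the hypothesis.

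The main technical obstacle will be the upgrade of \eqref{borneentmac} to the uniform entropy bound in property (i): this is a routine LDP-style covering argument on $\probas(\Lambda)$, but one must choose the covers carefully so that the exponential prefactor $e^{C_\beta N}$ coming from \eqref{borneentmac} is absorbed by the $-K/C_\beta$ term for $K$ large, uniformly over the cover.
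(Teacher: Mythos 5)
Your overall strategy matches the paper's: argue by contradiction, use the a priori entropy bound of Lemma \ref{lem:aprioribounds}, the uniform integrability from Lemma \ref{lem:UI}, and the equality case of Lemma \ref{lem:micromacro} to force $\int_\Lambda \rhop_{\bPsts}=1$. The paper packages the two limits ($N\to\infty$, then $\epsilon\to 0$) in sequence — building a process $\bQ_\epsilon\in B(\bPsts,\epsilon)$ with total intensity $1$ for each fixed $\epsilon$, and only then sending $\epsilon\to 0$ using uniform integrability of $\{\bQ_\epsilon\}$ — whereas you run a diagonal in $N_k,\epsilon_k$; that difference is cosmetic.

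There is, however, one genuine flaw in your property (i) and in the estimate $\PNbeta(\Ent[\muNp|\Leb_\Lambda]>K)\leq e^{-N(T+1)}$ that you propose to prove. For any finite configuration, $\muNp=\frac1N\sum_i\delta_{x_i}$ is purely atomic, so $\Ent[\muNp|\Leb_\Lambda]=+\infty$ identically; the event $\{\Ent[\muNp|\Leb_\Lambda]>K\}$ is therefore \emph{all} of $\Lambda^{2N}$ and the claimed bound cannot hold. What the covering argument you sketch actually gives (using \eqref{borneentmac}, compactness of $\probas(\Lambda)$, and compactness of the sub-level sets of $\Ent[\,\cdot\,|\Leb_\Lambda]$) is a bound of the form $\PNbeta\bigl(\,d_{\probas(\Lambda)}(\muNp,\{\mu:\Ent[\mu|\Leb_\Lambda]\leq K\})>\delta\,\bigr)\leq e^{-N(T+1)}$ for $\delta$ small and $K$ large depending on $T,\beta$. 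Combined with lower semi-continuity of the entropy, this forces any subsequential limit $\mup$ of the selected $\muNp$'s to satisfy $\Ent[\mup|\Leb_\Lambda]\leq K$, which is what you actually need (it gives absolute continuity and hence $\mup(\partial\Lambda)=0$). The paper arrives at the same point more directly by simply noting, via \eqref{borneentmac}, that a limit $\mup$ with infinite entropy would make the probability of $B(\mup,\eta)$ decay super-exponentially, so one may restrict attention to sequences whose macroscopic limit has finite entropy. With property (i) reformulated this way your proof goes through; the remaining subtlety — ensuring the Lemma \ref{lem:UI} bounds hold simultaneously for a countable family of $(R,\delta)$ along your diagonal subsequence — is handled correctly by your union-bound sketch, provided one lets the tail parameter $T'$ grow with $(R,\delta)$ so the sum of bad-event probabilities stays below $e^{-N(T+1)}$.
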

\begin{proof}
Assume that \eqref{bonneintensite} does not hold and that we have for some $T$
\[
\lim_{\epsilon \t0} \limsup_{N \ti} \frac{1}{N} \log \bPgotNb(B(\bPsts, \epsilon)) \geq - T.
\]
Using the relative compactness of $\iN(\Lambda^{2N})$ we may find a sequence $\{\XN\}_N$ of points in $\Lambda^{2N}$ such that $\iN(\XN)$ converges to some $\bQ \in B(\bPsts, \epsilon)$. Up to extraction we may also assume that $\muNp$ converges to $\mup \in \probas(\Lambda)$ and the point \eqref{borneentmac} of Lemma \ref{lem:aprioribounds} ensures that we may assume that $\mup$ has finite entropy, hence does not charge the boundary $\partial \Lambda$. Then, using Lemma \ref{lem:UI} and Lemma \ref{lem:micromacro} we obtain that $\rhop_{\bQ} = \mup$ and in particular $\rhop_{\bQ}$ has total mass $1$. Thus in any ball $B(\bPsts, \epsilon)$ we may find a random tagged point process $\bQ_{\epsilon}$ such that $\rhop_{\bQ_{\epsilon}}$ has total mass $1$. Moreover, again by Lemma \ref{lem:UI}, we may assume that the number of points in any disk is uniformly integrable under $\bQ_{\epsilon}$ as $\epsilon \t0$. Passing to the limit $\epsilon \t0$, it implies that $\rhop_{P}$ has total mass $1$, which yields a contradiction.
\end{proof}

\subsection{Conclusion}
We now show how Theorems \ref{theo:LDP} and \ref{theo:mesure}  follow once we have proven the following lower and upper bounds:

\begin{prop} \label{prop:LDPLB} Let $\bPsts \in \probas_{\rm{inv},1}(\Lambda \times \configs)$. We have
\begin{multline} \label{LDPLB}
\lim_{\epsilon \t0} \liminf_{N \ti} \frac{1}{N} \log \int_{\iN^{-1} (B(P, \epsilon))} e^{-\frac{\beta}{2}\left( \frac{1}{2\pi} \int_{\R^2} |\nab \VpN|^2  +\sum_{i=1}^N \log \ro(x'_i)+ \log \ro(y'_i) \right)} d\XN d\YN \\
\geq - \fbarbeta(\bPsts). 
\end{multline}
\end{prop}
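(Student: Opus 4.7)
The plan is to adapt the screening/pasting strategy from \cite{LebSer} (developed for the one-component case) to the two-component setting, constructing near-optimal configurations locally and estimating both the energetic cost and the volume (entropy) of the neighborhoods they fill. One may freely assume $\fbarbeta(\bPsts)<\infty$, otherwise the bound is trivial; in particular $\bERS[\bPsts]$ and $\bdW(\bPsts)$ are both finite. By Lemma~\ref{lem:compatible}, lift $\bPsts$ to a stationary tagged electric-field law $\bPelec$ for which $\Esp_{\bPelec}[\cW]=\bdW(\bPsts)$, so that all further constructions can be carried out at the electric-field level and projected back to signed point configurations via $\mathrm{Conf}$.

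First, I would fix a small macroscopic scale $\delta>0$ and partition $\Lambda$ into a grid of squares $K_\alpha$ of sidelength $\delta$ centered at points $x_\alpha$. In each blown-up cell $\sqrt{N}\,K_\alpha$, sample a realization $(E^{(\alpha)}, \Cs^{(\alpha)})$ from $\bPelec^{x_\alpha}$ restricted to that cell. By the ergodic theorem applied to the stationary process $\bPelec^{x_\alpha}$, for typical realizations the average of $|E^{(\alpha)}_r|^2$ on the cell is close to $\tdW(\bPsts^{x_\alpha})$, the dipole average $\Deta(\mathbf{1}_{\sqrt N K_\alpha},\Cs^{(\alpha)})$ is close to $N\delta^2\,\tDeta(\bPsts^{x_\alpha})$, and the cell contains roughly $N\delta^2$ points of each sign (using the total intensity $1$ hypothesis $\bPsts\in\probas_{\rm{inv},1}$).

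Second, I would perform a screening step: near each cell boundary, modify $E^{(\alpha)}$ on a thin boundary layer of controlled thickness so that the modified field $\tilde E^{(\alpha)}$ has vanishing normal component on $\partial(\sqrt{N}K_\alpha)$ and creates at most an $o_\delta(1)\cdot N\delta^2$ energy error, possibly inserting a negligible number of auxiliary charges to match the required total of $N$ points of each sign across cells. Pasting the $\tilde E^{(\alpha)}$ then yields a globally admissible field $\nabla \VpN$ whose total squared $L^2$ norm is at most $\sum_\alpha N\delta^2\,\tdW(\bPsts^{x_\alpha})+o(N)$, Riemann-converging to $2\pi\,\bdW(\bPsts)$ as $\delta\to 0$. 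The nearest-neighbor term $\sum\log\ro(x'_i)+\sum\log\ro(y'_i)$ behaves additively across cells and, after removing any cell in which the screening created anomalously short dipoles, approximates $\sum_\alpha N\delta^2\,\tDeta(\bPsts^{x_\alpha})\to\bDeta(\bPsts)$. Together these give that on the constructed configurations the full exponent is $-\tfrac{\beta}{2}\bWs(\bPsts)N+o(N)$.

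Third, I would estimate the Lebesgue volume of the set of configurations in $\iN^{-1}(B(\bPsts,\epsilon))$ produced this way. Since the cells are essentially independent after screening, the volume in each cell is controlled, via Proposition~\ref{prop:Sanov} applied to $\bPsts^{x_\alpha}$, from below by $\exp(-N\delta^2\,\ERS[\bPsts^{x_\alpha}]+o(N\delta^2))$ times the natural point-counting volume factor. Multiplying over all $\alpha$ and taking Riemann sums yields a lower bound on the total volume by $\exp(-N\bERS[\bPsts]+o(N))$. Combining the energy and volume estimates, passing to the limit $N\to\infty$, then $\delta\to 0$, and finally $\epsilon\to 0$, produces the claimed inequality $-\fbarbeta(\bPsts)=-\tfrac\beta2\bWs(\bPsts)-\bERS[\bPsts]$ as the lower bound.

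The principal obstacle will be the two-component screening: unlike the one-component case, one cannot freely insert a neutralizing background, and the perturbation of $E^{(\alpha)}$ near $\partial(\sqrt N K_\alpha)$ must not create short opposite-sign pairs, which would cause $\log\ro$ to diverge to $-\infty$ and destroy the dipole bookkeeping. Controlling this requires carrying out the screening at a distance $\gg 1$ from any existing charge (using that the sampled configurations are locally finite and have finite entropy, hence uniformly bounded densities in a set of full probability), and possibly throwing out a fraction of cells of vanishing proportion as $\delta\to 0$. The Gunson--Panta exponential bound \eqref{partitionD} serves as the essential tool to absorb the contribution of any remaining ``bad'' cells into the $o(N)$ error, exactly as it is used for the upper bound on $\log Z_{N,\beta}$.
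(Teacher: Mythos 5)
Your high-level strategy (cut into cells, screen each cell to a Neumann field, paste, then estimate both the energy of the constructed field and the Lebesgue volume of the neighborhoods) is the right framework, and you correctly identify Lemma~\ref{lem:compatible} as the way to lift $\bPsts$ to a stationary electric-field law. However, there are two genuine gaps that make your outline diverge from something provable.

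\textbf{The dipole term does not need to be controlled in the construction.} The paper's first and most important step is Lemma~\ref{lem:USCD}: for \emph{any} sequence with $\iN(\XN,\YN)\in B(\bPsts,\epsilon)$ one has $-\frac1N\sum\log\ro(x'_i)+\log\ro(y'_i)\geq \bDeta(\bPsts)-o_\epsilon(1)$, simply because $\Cs\mapsto -\Detat(\cdot,\Cs)\wedge M$ is continuous for the local topology. Since $-\sum\log\ro\geq 0$ contributes with a \emph{positive} sign to the exponent, this bound immediately gives the required factor $e^{+\frac\beta2 N(\bDeta(\bPsts)-o_\epsilon(1))}$ once you are inside the ball, and the problem reduces to bounding from below $\int_{\iN^{-1}(B(\bPsts,\epsilon))} e^{-\frac{\beta}{4\pi N}\int|\nabla\VpN|^2}$. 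Your proposal instead tries to steer the constructed configurations so that the nearest-neighbor sum approximates $\bDeta(\bPsts)$, and the ``principal obstacle'' you flag -- that screening might create very short opposite-sign pairs and ``destroy the dipole bookkeeping'' -- is a symptom of this detour: short dipoles only increase $-\sum\log\ro$ and thus can only \emph{help} a lower bound. Missing the semi-continuity observation leaves you solving a harder (and in fact unnecessary) problem.

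\textbf{The volume and energy estimates are conflated.} You propose to ``sample a realization $(E^{(\alpha)},\Cs^{(\alpha)})$ from $\bPelec^{x_\alpha}$'' in each macroscopic cell, and then separately to ``control the volume via Proposition~\ref{prop:Sanov}.'' These are two different mechanisms and they do not combine as written: drawing a realization from the target law tells you nothing about Lebesgue volume, while Proposition~\ref{prop:Sanov} is an LDP for the \emph{reference} empirical field $\bQpN$ (obtained by pushing forward Lebesgue measure). The paper resolves this by (i) generating microstates from conditioned independent Poisson processes in microscopic cells of fixed blown-up size $\bar R$ (Lemma~\ref{lem:Sanovapprox}), so that the Sanov LDP at speed $m_{N,R}\to\infty$ produces the correct volume $e^{-N\bERS[\bPsts]+o(N)}$, and (ii) transferring the energy control to these generated microstates via the local best-energy functional $\FR$, its upper semi-continuity (Lemma~\ref{lem:USCFREP}), and the inequality $\Esp_{\bPsts}[\FR]\leq\bdW(\bPsts)$ in~\eqref{FRbeatsbdW}. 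Your macroscopic cells (sidelength $\delta$, hence $O(\delta^{-2})$ cells with $\sim N\delta^2$ points each) do not provide the $\to\infty$ collection of i.i.d. samples that drives the Sanov argument, and you have no mechanism comparable to $\FR$ to guarantee that configurations near $\bPsts$ (counted by volume) also have nearly-minimal electric energy. Without this link the volume and energy bounds live on different sets and cannot be multiplied.

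A smaller point: the paper's screening (Proposition~\ref{lem:screening}) happens at the microscopic blown-up scale and is accompanied by the explicit conditioning of Lemma~\ref{lem:SanovV2} (bounding $\sum_i n_i|\log(n_i/(\epsilon R))|/(\epsilon R)$ and $\sum_i(n_i-n_{i,\mathrm{int}})\log R$) to absorb the volume loss~\eqref{preservVolume} into the $o(N)$ error. Your outline does not yet indicate how the screening volume loss would be controlled with macroscopic cells, and the Gunson--Panta bound~\eqref{partitionD}, which you invoke for ``bad cells,'' plays no role in the paper's proof of the lower bound (it is used instead to show $\fbarbeta$ is bounded below and in the upper-bound argument).
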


\begin{prop} \label{prop:LDPUB} Let $\bPsts \in \probas_{\rm{inv},1}(\Lambda \times \configs)$. We have
\begin{multline} \label{LDPUB}
\lim_{\epsilon \t0} \limsup_{N \ti} \frac{1}{N} \log \int_{\iN^{-1} (B(P, \epsilon))} e^{ \left(-\frac{\beta}{2}\left( \frac{1}{2\pi} \int_{\R^2} |\nab \VpN|^2  +\sum_{i=1}^N \log \ro(x'_i)+ \log \ro(y'_i) \right)\right)} d\XN d\YN \\ \leq - \fbarbeta(\bPsts). 
\end{multline}
\end{prop}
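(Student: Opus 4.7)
The strategy is to combine a lower semicontinuity estimate for the positive part of the energy with a Varadhan-type upper bound for the dipole part, the latter driven by the Sanov LDP of Proposition~\ref{prop:Sanov} and made tractable by the Gunson-Panta integrability bound~\eqref{partitionD}. To decouple the positive and negative contributions, pick $p>1$ with $p\beta<2$ (possible since $\beta<2$), let $q$ be its H\"older conjugate, and set $A:=\iN^{-1}(B(\bPsts,\epsilon))$. H\"older's inequality gives
\[
I_N \leq J_N^{1/q}\cdot K_N^{1/p},
\]
where $J_N:=\int_A \exp\bigl(-\tfrac{q\beta}{4\pi}\int_{\R^2}|\nabla \VpN|^2\bigr)\,d\XN\,d\YN$ and $K_N:=\int_A \exp\bigl(-\tfrac{p\beta}{2}\sum_i(\log\ro(x'_i)+\log\ro(y'_i))\bigr)\,d\XN\,d\YN$.

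For the positive-part factor $J_N$, I would combine Lemma~\ref{lem:compatibility2} with the lower semicontinuity of $\bdW$ from Lemma~\ref{lem:LSCIWo}, together with the \emph{a priori} entropic estimates of Lemma~\ref{lem:aprioribounds}. A contradiction/compactness argument, based on extracting weak limits of the empirical electric fields $\nabla \VpN$ (which, by Lemma~\ref{lem:compatibility2}, must be compatible with the limit of $\iN(\XN,\YN)$), yields, after excluding a subset of exponentially negligible Lebesgue measure, the uniform pointwise lower bound $(2\pi N)^{-1}\int|\nabla \VpN|^2 \geq \bdW(\bPsts) - o_\epsilon(1)$. Bounding the Lebesgue volume of $A$ via the Sanov upper bound of Proposition~\ref{prop:Sanov}, which gives $N^{-1}\log|A|\leq -\bERS(\bPsts)+o_\epsilon(1)$ (using lsc of $\bERS$), one obtains $\limsup_N N^{-1}\log J_N^{1/q}\leq -\tfrac{\beta}{2}\bdW(\bPsts)-\tfrac1q\bERS(\bPsts)+o_\epsilon(1)$.

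For the dipole factor $K_N$, I would truncate by replacing $\log \ro(x'_i)$ with $\log(\ro(x'_i)\vee\tau)$; the truncated exponent then becomes $\tfrac{p\beta}{2}N\bDetat(\iN(\XN,\YN))+o(N)$, a bounded continuous functional of $\iN(\XN,\YN)$. Varadhan's upper bound with the Sanov rate of Proposition~\ref{prop:Sanov}, together with the continuity of $\bDetat$, gives a bound by $\sup_{\overline B}(\tfrac{p\beta}{2}\bDetat-\bERS)$, which converges to $\tfrac{p\beta}{2}\bDetat(\bPsts)-\bERS(\bPsts)$ as $\epsilon\to 0$ by upper semicontinuity of $\bDetat-\bERS$. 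The passage from the truncated to the untruncated integral is the crux: it is justified by the Gunson-Panta estimate~\eqref{partitionD} at the critical exponent $p\beta<2$, which provides the super-exponential integrability of $e^{\tfrac{p\beta}{2}\Deta}$ against the reference Poisson measure and allows one to send $\tau\to0$ while keeping $N\to\infty$ first. Raising to the $1/p$ power, combining with $J_N^{1/q}$ through $\tfrac1p+\tfrac1q=1$, and letting $\epsilon\to 0$ yields
\[
\limsup_{N\to\infty} \tfrac1N\log I_N \leq -\tfrac{\beta}{2}\bdW(\bPsts)+\tfrac{\beta}{2}\bDeta(\bPsts)-\bERS(\bPsts)=-\fbarbeta(\bPsts).
\]

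The main obstacle is the uniform lower bound on the positive part in the $J_N$ step: the lsc of $\bdW$ from Lemma~\ref{lem:LSCIWo} only gives the bound in a \emph{limit} sense, so turning it into a pointwise bound on the preimage $A$ requires discarding pathological configurations (isolated high-concentration regions, loss of mass at $\partial\Lambda$), for which the uniform-integrability estimate~\eqref{controleNbpoints} from Lemma~\ref{lem:aprioribounds} is essential. The $\tau\to0$ limit in the $K_N$ step, while of a more standard Varadhan flavor, is likewise delicate and its feasibility is precisely what enforces the restriction $\beta<2$.
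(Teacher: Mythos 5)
Your overall architecture is on the right track — isolate the positive part of the energy by a lower semicontinuity argument, control the far dipoles by continuity, and use the Gunson--Panta mechanism for the close dipoles — but it differs from the paper's proof in a way that introduces some genuine (though repairable) gaps.

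First, the H\"older decoupling is unnecessary. The paper's Lemma~\ref{lem:LSCW} is a \emph{pointwise/sequential} statement: for \emph{every} sequence $\{(\XN,\YN)\}_N$ with $\iN(\XN,\YN)\in B(\bPsts,\epsilon)$, $\liminf_N\frac1{2\pi N}\int|\nabla\VpN|^2\geq\bdW(\bPsts)-o_\epsilon(1)$. A short contradiction argument upgrades this to a uniform bound on the whole preimage for $N$ large, so one can pull the deterministic factor $e^{-\frac{\beta N}{2}\bdW(\bPsts)+o(N)}$ out of the integrand directly, and the volume of $A$ only needs to be charged once (inside the remaining integral $\int_A e^{-\frac\beta2\sum\log\ro}$). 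Your H\"older split reconstitutes the same arithmetic via $\frac1p+\frac1q=1$, so nothing is lost, but you also don't need the restriction $p\beta<2$ for the positive part.

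Second, you misplace the uniform-integrability requirement. The positive-part bound of Lemma~\ref{lem:LSCW} does \emph{not} require discarding configurations via the estimate~\eqref{controleNbpoints}; the compactness and weak lower semicontinuity arguments in Lemmas~\ref{lem:compatibility2} and~\ref{lem:LSCIWo} go through for any sequence in $\iN^{-1}(B(\bPsts,\epsilon))$ whose energy is bounded (and unbounded energy only helps the bound). Where uniform integrability \emph{is} indispensable is precisely in the dipole part that you put into $K_N$: the truncated functional $\bDetat$ is \emph{not} bounded continuous, since $\Detat(\1_{C_R},\Cs)\le(-\log\tau)\,|\Cs|(C_R)$ and the number of points per unit volume is unbounded. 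The paper's Lemma~\ref{lem:Cgeqtau} explicitly carries this hypothesis, and the entropic estimates of Lemma~\ref{lem:aprioribounds}/\ref{lem:UI} supply it with overwhelming probability. Your proposal asserts bounded continuity of $\bDetat$, which as stated is false.

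Third, your passage from the truncated to the untruncated dipole integral by invoking ``super-exponential integrability of $e^{\frac{p\beta}{2}\Deta}$'' is the part least likely to survive scrutiny as written. A Varadhan-type tail argument here is not immediate: the relevant functional $\frac1N\sum\log(\frac{\ro}{\tau}\wedge1)$ is neither continuous nor asymptotically negligible pointwise, and what one actually has from Gunson--Panta is a sharp combinatorial estimate on the Dirichlet-type integrals over the ``close-dipole'' configuration space. The paper's Section~\ref{secUB} executes this precisely: split over the number $n$ of points with $\ro\le\tau$, bound $Z(n,\tau,\beta,N)\le(C\tau n/N)^n$ by the same graph/Dirichlet machinery as in Section~\ref{appGP}, separate $n\le\delta N$ from $n>\delta N$ with $\delta=1/\sqrt{|\log\tau|}$, and let Sanov control the remaining volume. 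This is the argument that actually certifies $\limsup_{\tau\to0,\epsilon\to0,N\to\infty}\frac1N\log\int_{\iN^{-1}(B)}e^{-\frac\beta2\sum\log(\frac{\ro}{\tau}\wedge1)}\le-\bERS(\bPsts)$, and it is delicate enough that replacing it with a generic Varadhan tail condition would need substantial justification. Your high-level picture is right, but this step is where the real work lies and your sketch elides it.
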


\subsubsection{Proof of Theorem \ref{theo:LDP}}
Since we have exponential tightness, the proof of Theorem \ref{theo:LDP} reduces to proving a weak LDP. Thanks to Lemmas \ref{lem:minimmemerho}   and \ref{lem:bonneintensite}, the latter is easily deduced by combining Proposition \ref{prop:LDPLB} and Proposition \ref{prop:LDPUB}. Indeed in view of \eqref{rwP2} it only remains to show that $\log \KNbeta = - \inf \fbarbeta + o(N)$, but combining the upper and lower bounds with the exponential tightness (and Lemma \ref{lem:bonneintensite}) we have
\begin{equation} \label{convergenceKNbeta}
- \inf_{\probas_{\rm{inv},1}} \fbarbeta \leq \liminf_{N \ti} \frac{1}{N} \log \KNbeta
\leq \limsup_{N \ti} \frac{1}{N} \log \KNbeta
\leq - \inf_{\probas_{\rm{inv},1}} \fbarbeta.
\end{equation}
Hence 
$\lim_{N \ti} \frac{1}{N} \log \KNbeta = 
- \inf_{\probas_{\rm{inv},1}} \fbarbeta$, which concludes the proof of Theorem \ref{theo:LDP}.

We also get Corollary \ref{coro:ZN} from \eqref{convergenceKNbeta} and the fact that $\log \KNbeta + \frac{\beta}{2} N \log N = \log \ZNbeta$ as seen in Section \ref{sec:blowup}.

\subsubsection{Proof of Theorem \ref{theo:mesure}}
From Lemmas \ref{lem:discr} and \ref{lem:minimmemerho} we see that minimizers of $\fbarbeta$ are such that the intensity  of both components of $\bPstsx$ are equal to $1$ (for Lebesgue-a.e. $x \in \Lambda$). Thus the limit points of $\{\bPsN\}_N$ have $\PNbeta$-a.s. both intensity measures equal to the uniform measure on $\Lambda$, and by Lemma \ref{lem:micromacro} we see that any limit point of $\{\muNp, \muNm\}_N$ must be the uniform measure on $\Lambda$, almost surely under $\PNbeta$.

The rest of the paper is organized as follows: in Section \ref{appGP} we give for the reader's convenience the proof of the main result of \cite{GunPan}, in Section \ref{secLB} we prove Proposition \ref{prop:LDPLB}, and in Section \ref{secUB} we prove Proposition \ref{prop:LDPUB}.

\section{The method of Gunson-Panta}\label{appGP}
In this section we recall the main steps of the analysis of Gunson-Panta as presented in \cite{GunPan} while keeping our notation when it is in conflict with that of \cite{GunPan}. In \cite{GunPan} the charges have absolute value $q > 0$, and for our concerns $q$ should always be taken equal to $1$.

 Recall that the partition function is defined as
\[
\ZNbeta := \int_{\Box^{2N}} e^{-\frac{\beta}{2} \WN(\XN, \YN)} d\XN d\YN.
\]
This is almost exactly what is denoted by $Q^{*}_{2N}$ in \cite[(2.2)]{GunPan}, up to the fact that the domain of integration $\Lambda$ is a square, in contrast to \cite{GunPan} where it is a disk. We have a factor $\beta/2$ but the definition of $\WN$ counts each pairwise interaction twice, whereas in \cite[(2.2)]{GunPan} the temperature factor is $\beta$ but each pairwise interaction is counted only once.

In \cite[(2.3)]{GunPan} an ``electrostatic inequality"
 is used to bound below the interaction energy in terms of the quantity
\[
\sum_{i=1}^{N} 
(\log r(x_i) + \log r(y_i)),
\] 
this is the same computation as in our Lemma \ref{lem:WIPP}. It yields the bound
\begin{equation} \label{ZNannex}
\ZNbeta \leq \int_{\Box^{2N}} e^{-\frac{\beta}{2} \sum_{i=1}^{N} 
(\log r(x_i) + \log r(y_i) 
}
d\XN d\YN,
\end{equation}
as expressed in \cite[(2.4)]{GunPan} (up to  notation, and the fact that 
in the latter, a minus sign is missing in the exponential). 

Henceforth the signs of the charge will not play any role. For any $M$-tuple of points $\vec{S}_{M} = (S_1, \dots, S_{M})$, let us define the map $F : \{1, \dots, M\}  \to \{1, \dots, M\}$ such that 
\[
|S_i - S_{F(i)}| = \min_{j \in \{1, \dots, M\}} |S_i - S_j|.
\]
With this notation we may rewrite \eqref{ZNannex} as
\begin{equation} \label{rewriteZnann1}
\ZNbeta \leq \int_{\Box^{2N}} e^{-\frac{\beta}{2} \sum_{i=1}^{2N} \log ( \frac{1}{2} |S_i - S_{F_{i}}| )}  d\SN.
\end{equation}
To any $M$-tuple $\vec{S}_M$ we associate the (directed) graph $\hat\gamma(\vec{S}_M)$ of ``nearest-neighbors", whose set of vertices is $\{1, \dots, M\}$ and such that there is a directed arrow from $i$ to $F_i$ for any $i \in \{1, \dots, M\}$. We observe the following
\begin{lem} \label{lem:graphe}
For any $\vec{S}_M$, the associated graph $\hat\gamma(\vec{S}_M)$ has between $1$ and $M/2$ connected components. Each connected component is composed of a cycle of length $2$, together with trees attached to the two vertices of the cycle. 
\end{lem}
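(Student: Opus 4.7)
The graph $\hat\gamma(\vec{S}_M)$ is a \emph{functional graph}: every vertex $i$ has exactly one outgoing arrow, namely to $F(i)$. A standard observation about such graphs is that each weakly connected component contains exactly one directed cycle, and all remaining vertices form trees whose edges point toward that cycle. The lemma therefore reduces to controlling the length of these cycles and the total number of components.

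The central step, which I expect to be the main difficulty, is the following monotonicity along arrows: if $j = F(i)$ and $k = F(j)$, then $|S_j - S_k| \leq |S_i - S_j|$, with strict inequality whenever $k \neq i$. The non-strict inequality is immediate from the fact that $k$ is the closest point to $j$; the strict version uses uniqueness of the nearest neighbor, which holds on a subset of $\Box^{2N}$ of full Lebesgue measure, so that the tie-breaking rule chosen on the complementary null set does not affect integrals such as \eqref{rewriteZnann1}. Now consider a directed cycle $v_1 \to v_2 \to \dots \to v_\ell \to v_1$ of length $\ell \geq 3$, and set $d_i := |S_{v_i} - S_{v_{i+1}}|$ with indices taken modulo $\ell$. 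Applying monotonicity at each vertex gives $d_i < d_{i-1}$, since $v_{i+1} \neq v_{i-1}$ when $\ell \geq 3$, whence $d_1 > d_2 > \dots > d_\ell > d_1$, a contradiction. Since $1$-cycles are excluded by $F(i) \neq i$, every cycle has length exactly $2$, i.e., corresponds to a mutually-nearest pair $\{i, F(i)\}$.

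Finally, each connected component contains exactly one $2$-cycle, hence at least $2$ vertices, giving at most $M/2$ components; the lower bound of $1$ component is trivial for $M \geq 2$. The remaining claim, that each component consists of a $2$-cycle with trees attached at its two vertices, now follows from the general structure theorem for functional graphs applied with the cycle length already fixed to be $2$.
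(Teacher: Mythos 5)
Your proof is correct; the paper states this lemma without proof, treating it as an observation essentially borrowed from the analysis of \cite{GunPan}. You identify both necessary ingredients: the standard structure theorem for functional digraphs (one cycle per weakly connected component, with in-trees hanging off the cycle vertices), and the strict decrease of nearest-neighbour distance along any directed path that does not immediately backtrack, which forbids cycles of length $\geq 3$ once the nearest neighbour is unique---a Lebesgue-a.e.\ event, which is the right amount of care here since the lemma is used only inside the integral \eqref{rewriteZnann1}.
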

The graphs satisfying these properties are called ``functional digraphs''
(or ``functional directed graphs'') such that each connected component 
contains
 a cycle of order $2$. For any 
even
$M \geq 1$ and $1 \leq K \leq M/2$ let us denote by $\mathbf{D}_{M,K}$ the set
 of (isomorphism classes of) 
labeled functional digraphs 
with $M$ vertices and $K$ connected components, each 
possessing
 a cycle of order $2$.
A combinatorial computation (as 
the one leading to  \cite[(2.8)]{GunPan}) shows that
\begin{lem} \label{lem:combin}
For any $M \geq 1$ and $1 \leq K \leq M/2$ the 
cardinality
 of $\mathbf{D}_{M,K}$ is bounded by
\begin{equation}
|\mathbf{D}_{M,K}| \leq \frac{\Gamma(M+1) M^{M-2K}}{2^K \Gamma(K+1)\Gamma(M-2K+1)}.
\end{equation}
\end{lem}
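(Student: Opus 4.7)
The plan is to count (an upper bound of) the graphs in $\mathbf{D}_{M,K}$ by first fixing the skeleton of 2-cycles and then accounting for the out-edges of the remaining vertices, using a deliberately crude estimate for the latter.

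First, I would specify the $K$ disjoint 2-cycles. Since a 2-cycle between two vertices $i,j$ is determined by the unordered pair $\{i,j\}$ (the two directed edges $i\to j$ and $j\to i$ are forced), this amounts to choosing $K$ disjoint unordered pairs from the $M$ labeled vertices. The number of such choices is
\[
\binom{M}{2K}\cdot\frac{(2K)!}{2^{K}K!}=\frac{M!}{(M-2K)!\,2^{K}\,K!}=\frac{\Gamma(M+1)}{2^{K}\,\Gamma(K+1)\,\Gamma(M-2K+1)}.
\]
This factor is already exactly the combinatorial prefactor in the claimed bound.

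Next, I would bound the number of ways to complete the functional digraph. By Lemma~\ref{lem:graphe}, the $M-2K$ non-cycle vertices must be arranged so that each of them has a unique out-edge into some vertex, and the resulting directed graph has no cycles other than the prescribed 2-cycles (so that the non-cycle vertices form trees rooted at the cycle vertices). Rather than counting rooted forests exactly, I would use the crude majorisation: each of the $M-2K$ remaining vertices has out-degree $1$, and its image under $F$ lies in $\{1,\dots,M\}$, giving at most $M$ choices per vertex, hence at most $M^{M-2K}$ choices overall.

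Multiplying the two estimates yields
\[
|\mathbf{D}_{M,K}|\le\frac{M!}{(M-2K)!\,2^{K}\,K!}\cdot M^{M-2K}=\frac{\Gamma(M+1)\,M^{M-2K}}{2^{K}\,\Gamma(K+1)\,\Gamma(M-2K+1)},
\]
which is the desired bound. There is no substantial obstacle here: the only place where tightness is sacrificed is in the second step, but since only an upper bound is required in the subsequent application (in the estimate \eqref{partitionD}), the crude overcount $M^{M-2K}$ is sufficient. A sharper count via the Cayley-type formula for rooted forests with $2K$ specified roots, namely $2K\cdot M^{M-2K-1}$, would produce the same form of bound with an extra factor $2K/M\le 1$, so it would not change the conclusion.
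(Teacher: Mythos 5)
Your proof is correct and is precisely the combinatorial computation the paper alludes to (deferring to Gunson--Panta): choose the $K$ disjoint $2$-cycles, contributing $\binom{M}{2K}(2K)!/(2^K K!)=\Gamma(M+1)/(2^K\Gamma(K+1)\Gamma(M-2K+1))$, and then crudely bound the number of out-edge assignments of the remaining $M-2K$ vertices by $M^{M-2K}$. The remark that the sharper forest count $2K\,M^{M-2K-1}$ would only tighten the bound by a factor $2K/M\le 1$ is accurate and irrelevant to the application, so nothing is lost.
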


If $\gamma \in \mathbf{D}_{M,K}$ is an isomorphism class, we denote 
by $\hat\gamma(\vec{S}_M) \equiv \gamma$ the event 
``$\hat\gamma(\vec{S}_M)$ is isomorphic to $\gamma$''.
We may then rewrite \eqref{rewriteZnann1} by splitting the domain of integration according to the isomorphism class of $\hat\gamma(\SN)$, this reads
\begin{equation} \label{rewriteZnann2}
\ZNbeta \leq \sum_{K=1}^N \sum_{\gamma \in \mathbf{D}_{N,K}} \int_{\Box^{2N} \cap \{\hat\gamma(\SN) \equiv \gamma \}} e^{-\frac{\beta}{2} \sum_{i=1}^{2N} \log ( \frac{1}{2} |S_i - S_{F_i}| )}  d\SN.
\end{equation}

Let $1 \leq K \leq N$ and $\gamma \in \mathbf{D}_{2N,K}$ be fixed, we turn to evaluating the quantity
\begin{equation} \label{gammafix}
\int_{\Box^{2N} \cap \{\hat\gamma(\SN) \equiv \gamma \}} e^{-\frac{\beta}{2} \sum_{i=1}^{2N} \log ( \frac{1}{2} |S_i - S_{F_{i}}| )}  d\SN.
\end{equation}
Let $L_1, \dots, L_K$ be the $K$ subsets of vertices associated to each connected component of the isomorphism class $\gamma$ of graphs. 
For $k \in \{1, \dots, K\}$ we perform a change of variables on the variables $S_{i}$ for $i \in L_k$. We denote by $c_k := \{i^a_k, i^b_k\}$ the two vertices on the cycle. We let for $i \in L_k$ such that $i \notin c_k$
\[
u_i := \frac{1}{2} (S_i - S_{F_i}),
\]
and we let 
\[
u_{i^a_k} := \frac{1}{2} (S_{i^a_k} - S_{i^b_k}), \quad u_{i^b_k} := \frac{1}{2} S_{i^b_k}.
\]
With respect to the new variables, the integral in \eqref{gammafix} is bounded by
\begin{equation} \label{Diric1}
4^{2N} \prod_{k=1}^K \int_{D_k} e^{-\frac{\beta}{2} \left(\sum_{i \in L_k \backslash c_k} \log u_i + 2 \log u_{i^a_k}\right)} \prod_{i \notin L_k} du_{i}\ du_{i^a_k}\ du_{i^b_k},
\end{equation}
where $D_k$ is a (suitably enlarged) domain of integration for the new variables. It may be observed that the new variables satisfy
\begin{equation*}
\sum_{i \in L_k \backslash c_k} |u_i|^2 + |u_{i^a_k}|^2 + |u_{i^b_k}|^2 \leq C .
\end{equation*}
for a certain universal constant $C$, thus each integral term in \eqref{Diric1} can be viewed as an integral over a simplex i.e. a multiple Dirichlet integral. Using classical results about such integrals following \cite[Equation (2.9)]{GunPan} we have 
\begin{lem}\label{lemdiri}
For any integer $M \ge 1$ and $K\le M/2$, $\gamma \in \mathbf{D}_{M,K}$, we have
\begin{multline} \label{Diric2}
\prod_{k=1}^K \int_{D_k} e^{-\frac{\beta}{2} \left(\sum_{i \in L_k \backslash c_k} \log u_i + 2 \log u_{i^a_k}\right)} \prod_{i \notin L_k} du_{i}\ du_{i^a_k}\ du_{i^b_k}
\\ \le 
\mathit{Diri}_{M,K} := \frac{\left(Y(1-\frac{\beta}{4})\right)^{M-2K} \left(X(1-\frac{\beta}{2})\right)^K}{\Gamma\left((M-K) - \frac{M}{2}\frac{\beta}{2}+1\right)},
\end{multline}
where $X$ and $Y$ are two functions independent of $M$ and $K$ and the bound \eqref{Diric2} depends only on $M,K,
\beta$ and not on the isomorphism class inside $\mathbf{D}_{M,K}$.\end{lem}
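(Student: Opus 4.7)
The plan is to reduce to a product of per-component Dirichlet integrals and then recombine the denominators into a single gamma function. After the change of variables made in the statement, every new variable $u$ belongs to a unique component $L_k$ and the integrand is a product of component-local factors, so the product over $k$ factorizes. Enlarging each $D_k$ (at no cost to the inequality, since the integrand is nonnegative) to the product region $\{\sum_{i \in L_k \setminus c_k} |u_i|^2 + |u_{i^a_k}|^2 + |u_{i^b_k}|^2 \le C\}$ reduces the matter to an upper bound on independent per-component integrals.

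For a single component $k$, the variable $u_{i^b_k} = \hal S_{i^b_k}$ carries no logarithmic weight and ranges over a bounded region, so integrating it out contributes only a universal constant. For the remaining $m_k - 1$ two-dimensional variables (with $m_k := |L_k|$), passing to polar coordinates $t_j := |u_j|^2$ produces a Jacobian $\pi^{m_k - 1}$ and turns the constraint into the standard simplex $\{t_j \ge 0,\, \sum_j t_j \le C\}$. The resulting integrand is $\prod_j t_j^{\alpha_j - 1}$ with $\alpha_i = 1 - \beta/4$ for $i \in L_k \setminus c_k$ and $\alpha_{i^a_k} = 1 - \beta/2$, so the classical multi-Dirichlet formula gives
\begin{equation*}
\int_{\sum_j t_j \le C,\, t_j \ge 0} \prod_j t_j^{\alpha_j - 1} \, dt \;=\; C^{\sum_j \alpha_j} \frac{\prod_j \Gamma(\alpha_j)}{\Gamma\bigl(1 + \sum_j \alpha_j\bigr)}, \qquad \sum_j \alpha_j \;=\; (m_k - 2)(1 - \beta/4) + (1 - \beta/2) \;=:\; S_k.
\end{equation*}

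Multiplying the per-component bounds over all $K$ components produces the desired numerator $\Gamma(1 - \beta/4)^{M - 2K}\,\Gamma(1 - \beta/2)^K$ (there are $M - 2K$ non-cycle vertices and $K$ cycle weights in total), together with an \emph{a priori} denominator $\prod_k \Gamma(1 + S_k)$. The step I expect to be the main obstacle is merging these $K$ gamma factors into the single $\Gamma\bigl(1 + \sum_k S_k\bigr) = \Gamma\bigl((M - K) - M\beta/4 + 1\bigr)$ of the stated bound. For this I would invoke the elementary inequality $\Gamma(1 + a)\,\Gamma(1 + b) \ge c^{-(a + b)}\, \Gamma(1 + a + b)$ valid for $a, b \ge 0$, which follows from log-convexity of $\Gamma$ (equivalently from the bound $\binom{a + b}{a} \le 2^{a + b}$ combined with the beta-function identity $\Gamma(1 + a)\Gamma(1 + b)/\Gamma(2 + a + b) = B(1 + a,\, 1 + b)$). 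Iterating this $K - 1$ times introduces a multiplicative factor $c^{\sum_k S_k}$.

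Finally, all remaining multiplicative constants---the prefactor $4^{2N}$ from the change of variables, the Jacobians $\pi^{m_k - 1}$, the scalings $C^{S_k}$, the constants from integrating out each $u_{i^b_k}$, and $c^{\sum_k S_k}$---are products of per-vertex contributions that are exponential in $m_k$, and thus can be regrouped as $M - 2K$ ``non-cycle'' factors and $K$ ``cycle'' factors. This defines functions $Y(1 - \beta/4)$ and $X(1 - \beta/2)$ depending only on their arguments and on universal constants, producing exactly the claimed bound $\mathit{Diri}_{M, K}$. Since none of the estimates ever used the specific wiring of $\gamma$, the bound is uniform over $\gamma \in \mathbf{D}_{M, K}$, as required.
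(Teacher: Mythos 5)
Your proposal has a genuine gap precisely at the step you flag as ``the main obstacle,'' and it cannot be closed along the lines you suggest.

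First, the merging inequality does not iterate as you claim. Applying $\Gamma(1+a+b)\le c^{a+b}\Gamma(1+a)\Gamma(1+b)$ to peel off $S_1$, then $S_2$, and so on accumulates an exponent $\sum_{j=1}^{K-1}\sum_{k\ge j}S_k$, which is of order $K\sum_k S_k$; a balanced dyadic splitting still gives $(\log K)\sum_k S_k$. In the regime $K\sim M$ and $\sum_k S_k\sim M$, these factors are super-exponential in $M$ and cannot be absorbed into prefactors of the form $Y^{M-2K}X^K$.

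More decisively, no merging inequality can work, because the implicit inequality $\Gamma\bigl(1+\sum_k S_k\bigr)\le c_1^{M-2K}c_2^K\prod_k\Gamma(1+S_k)$ is simply false. Take every component to be a dipole ($m_k=2$ for all $k$, so $K=M/2$ and $S_k=1-\beta/2$): then $\prod_k\Gamma(1+S_k)=\Gamma(2-\tfrac{\beta}{2})^K$ is merely exponential in $K$, while $\Gamma\bigl(1+K(1-\tfrac{\beta}{2})\bigr)$ grows like $K^{\Theta(K)}$. Concretely, your per-component Dirichlet bound is a fixed positive constant per dipole, so the product is $\Theta(c^K)$ for some $c>0$; but $\mathit{Diri}_{M,M/2}=X^{M/2}/\Gamma\bigl(1+\tfrac{M}{2}(1-\tfrac{\beta}{2})\bigr)$ decays factorially. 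The product of per-component bounds is therefore strictly larger than the claimed bound for large $K$, so there is nothing to merge: the approach fails before the merging step is even reached.

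The missing ingredient, which is the heart of the Gunson--Panta computation, is that the constraint $\sum_j|u_j|^2\le C$ holds \emph{jointly} across all weight-carrying variables, not merely within each component. The balls $B(S_i,r(S_i))$, with $r(S_i)=\tfrac12\min_{j\ne i}|S_i-S_j|$, are pairwise disjoint (since $r(S_i)+r(S_j)\le|S_i-S_j|$) and are all contained in a bounded enlargement of $\Lambda$, whence $\sum_i r(S_i)^2\le C$ for a universal $C$. After integrating out the $K$ unweighted variables $u_{i^b_k}$ (contributing only an exponential-in-$K$ constant), one is left with a \emph{single} multi-Dirichlet integral over the global simplex $\{\sum_j t_j\le C\}$ in all $M-K$ remaining variables $t_j=|u_j|^2$, with exponents $\alpha_j=1-\tfrac{\beta}{4}$ on the $M-2K$ non-cycle variables and $\alpha_j=1-\tfrac{\beta}{2}$ on the $K$ cycle variables. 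The classical Dirichlet formula then yields the denominator $\Gamma\bigl((M-K)-\tfrac{M\beta}{4}+1\bigr)$ directly, with no merging required. Enlarging to the Cartesian product of per-component balls before integrating destroys precisely this global constraint, and with it the superexponentially small factor that the lemma asserts.
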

 With that lemma we deduce \begin{multline} \label{beforereass}
\ZNbeta \leq \sum_{K=1}^N |\mathbf{D}_{2N,K}| \mathit{Diri}_{2N,K} \\
\leq \sum_{K=1}^N \frac{\Gamma(2N+1) (2N)^{2N-2K}}{2^K \Gamma(K+1)\Gamma(2N-2K+1)}  \frac{\left(Y(1-\frac{\beta}{4})\right)^{2N-2K} \left(X(1-\frac{\beta}{2})\right)^K}{\Gamma\left((2N-K) - N\frac{\beta}{2}+1\right)}.
\end{multline}
The last step is the evaluation of the right-hand side in
 \eqref{beforereass}. From
 \cite[Equation (2.9)]{GunPan},
\begin{equation} \label{reecGP}
\ZNbeta \leq (2N)^{\beta N/2} \sum_{K=1}^N \frac{\Gamma(2N+1) \exp(2N)}{\Gamma(K+1) \Gamma(2N-K+1)} \left( Y(1-\frac{\beta}{4}) \right)^{2N-K} \left( \frac{X(1- \frac{\beta}{2})}{Y(1-\frac{\beta}{4})}\right)^{K},
\end{equation}
which gives in turn, using Newton's formula
\begin{equation}\label{nform}
\ZNbeta \leq N^{\beta N/2} C_{\beta}^N,
\end{equation}
where $C_{\beta}$ depends only on $\beta$. The value of $C_{\beta}$ is not important and it is thus enough to prove \eqref{reecGP} up to a multiplicative constant of order $C^N$. In particular it yields an upper bound on the partition function
\begin{equation} \label{ZNUB}
\log \ZNbeta \leq \frac{\beta}{2} N\log N + C_{\beta}N.
\end{equation}
Passing from \eqref{beforereass} to \eqref{reecGP} (up to a multiplicative constant of order $C^N$) is simple after observing that the summands in \eqref{beforereass} and \eqref{reecGP} differ by a factor
\[
\frac{(2N)^{2N-2K} \Gamma(2N-K+1)}{\Gamma((2N-K) - N\frac{\beta}{2} +1) \Gamma(2N-2K+1)}.
\]
Using Stirling's estimate for the Gamma function we see that the logarithm of the previous expression is equal to
\[
(2N-2K) \log N + (2N-K) \log N - (2N-K- N\frac{\beta}{2}) \log N - (2N-2K) \log N + O(N). 
\]
After simplifying we see that the ratio of the two summands in \eqref{beforereass} and \eqref{reecGP} is bounded by $C_{\beta}^N N^{\frac{\beta N}{2}}$
for some constant $C_{\beta}$ depending on $\beta$, whose precise value is not important here.

The thermodynamic limit for $\log \ZNbeta$ (as expressed in our 
Proposition \ref{expZN1}) is 
proved in \cite[Sections 3 and 4]{GunPan} using an 
interesting ``conjugation" trick. 
In this paper 
we only need to use an upper bound (as \eqref{ZNUB}) and 
more generally to follow the method of \cite[Section 2]{GunPan} that 
we have just recalled. \textit{A posteriori} our large deviation 
principle at scale $N$ implies in particular that 
Proposition \ref{prop:FGP} holds.

\section{Next order large deviations: lower bound}\label{secLB}
In this section, we use the blow-up coordinates as introduced in Section \ref{sec:blowup} and  we prove the LDP lower bound announced in Proposition \ref{prop:LDPLB}.

In the rest of this section $\bPsts$ is a fixed stationary tagged signed point process in $\probas_{\rm{inv},1}(\Lambda~\times~\configs)$ such that $\bERS[\bPsts]$ is finite, otherwise there is nothing to prove.

\subsection{Negative part of the energy}
First we observe that the negative part of the energy is semi-continuous in the suitable direction.
\begin{lem} \label{lem:USCD}
For any sequence $\{(\XN,\YN)\}_N$ such that $\iN(\XN, \YN) \in B(\bPsts, \ep)$, we have
$$ \liminf_{N\to \infty} - \frac{1}{N} \sum_{i=1}^N \left(\log \ro(x'_i) + \log \ro(y'_i)\right) \ge \bDeta(\bPsts)- o_\ep(1)\quad \text{as} \ \ep \to 0.
$$
\end{lem}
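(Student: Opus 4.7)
The plan is to compare $S_N := -\frac{1}{N}\sum_{i=1}^N [\log\ro(x'_i)+\log\ro(y'_i)]$ to an integral $\int F\, d\iN(\XN,\YN)$ for a suitable nonnegative test function $F$, and then to exploit the Dudley-metric closeness of $\iN(\XN,\YN)$ to $\bPsts$. First, since $\ro\le \ro\vee\tau$ and $\log$ is increasing, $S_N$ dominates the truncated quantity $S_N^\tau := -\frac{1}{N}\sum_{i=1}^N [\log(\ro(x'_i)\vee\tau)+\log(\ro(y'_i)\vee\tau)]$; and since $\bDeta(\bPsts)=\sup_{\tau\in(0,1)}\bDetat(\bPsts)$ by monotone convergence, it suffices to prove, for each fixed $\tau\in(0,1)$, that $\liminf_N S_N^\tau \ge \bDetat(\bPsts) - o_\ep(1)$.

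For a smooth cutoff $\chi_R$ with $0\le \chi_R\le \1_{\carr_R}$ and $\|\chi_R\|_{L^1}/R^2\to 1$ as $R\to\infty$, I would set $F(x,\Cs) := \frac{1}{R^2}\Detat(\chi_R,\Cs)$ (independent of $x$). Unfolding the definition of $\iN$ via Fubini and using the translation-invariance of $\ro$, one finds
\[
\int F\, d\iN(\XN,\YN) = \frac{1}{R^2}\sum_{i=1}^N \big[c_i^+ (-\log(\ro(x'_i)\vee\tau)) + c_i^- (-\log(\ro(y'_i)\vee\tau))\big],
\]
where $c_i^\pm := \int_\Box \chi_R(\sqrt{N}(x_i-x))\,dx \le R^2/N$. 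Since $\ro\le 1$ makes every summand nonnegative, this delivers the key bound $\int F\, d\iN(\XN,\YN) \le S_N^\tau$. For $\tau>0$ and on simple configurations, $F$ is continuous on $\configs$ but not globally bounded, so I would further truncate by $F_M := F \wedge M$, which is bounded and continuous. The assumption $\iN(\XN,\YN)\in B(\bPsts,\ep)$ and the definition of the Dudley distance then give, for a constant $C(F_M)$ depending only on the bounded-Lipschitz norm of $F_M$,
\[
\liminf_N \int F_M\, d\iN(\XN,\YN) \ge \int F_M\, d\bPsts - \ep\, C(F_M),
\]
whence $\liminf_N S_N^\tau \ge \int F_M\, d\bPsts - \ep\, C(F_M)$.

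Monotone convergence gives $\int F_M\, d\bPsts \uparrow \int F\, d\bPsts$ as $M\to\infty$, and the stationarity of $\bPstsx$ combined with a Palm--Campbell computation yields $\int F\, d\bPsts = (\|\chi_R\|_{L^1}/R^2)\, \bDetat(\bPsts)$, which tends to $\bDetat(\bPsts)$ as $R\to\infty$ (using $\bDetat(\bPsts) \le 2|\log\tau|<\infty$). The main obstacle, and the source of the $o_\ep(1)$ formulation, is that $C(F_M)$ blows up as $M,R\to\infty$ or $\tau\to 0$: one must therefore fix, for each target error $\delta>0$, parameters $\tau,R,M$ such that $\int F_M\, d\bPsts \ge \bDeta(\bPsts)-\delta/2$, and only then choose $\ep$ small enough that $\ep\, C(F_M) \le \delta/2$. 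The required continuity of $F$ on the support of $\bPsts$ follows from the fact that $\bPsts$-a.s.\ configurations are simple, as $\bERS(\bPsts)<\infty$ is already assumed in this section.
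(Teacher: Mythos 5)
Your proof is correct and is essentially the paper's argument: truncate $\ro$ at $\tau$, test against a bounded continuous functional of the tagged empirical field, use weak convergence (the Dudley-ball hypothesis) to pass from $\iN(\XN,\YN)$ to $\bPsts$ up to $o_\ep(1)$, then remove the truncations in $M$ and $\tau$ by monotone convergence and close with stationarity. The only cosmetic difference is that the paper uses a unit-scale cutoff $\chi_\tau$ supported in $\carr_1$ and normalized by its mass $I_\tau$, rather than your macroscopic $\chi_R$ on $\carr_R$ scaled by $R^{-2}$, which avoids the extra $R\to\infty$ limit but is equivalent by stationarity; also, the $\ep\,C(F_M)$ rate you write tacitly presumes a Lipschitz bound for $F_M$, whereas only continuity of $\Detat(\cdot,\Cs)\wedge M$ on simple configurations (hence $\bPsts$-a.s.) is needed, yielding the $o_\ep(1)$ the statement asks for — your final ordering of parameters ($\tau,R,M$ first, then $\ep$) already handles this correctly.
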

\begin{proof}
	We
	fix a family $\{\chi_{\tau}\}_{\tau \in (0,1)}$ of non-negative 
	bounded by $1$
	smooth functions such that $\chi_{\tau} \equiv 1$ on 
	$C_{1-\tau}$, $\chi_{\tau} \equiv 0$ outside $C_{1}$,
	and such that
	for any $x \in \R^2$, $\chi_{\tau}(x)$ is nonincreasing 
	with respect to $\tau$.
	We also set $I_{\tau} := \int \chi_{\tau}$, and we have $I_{\tau} \to 1$ as $\tau \t0$.

For any $M > 0$ we have
\begin{multline*}
- \frac{1}{N} \sum_{i=1}^N \left(\log \ro(x'_i) + \log \ro(y'_i)\right) \ge -\frac{1}{N} \sum_{i=1}^N \left( \log (\ro(x'_i) \vee\tau)+\log (\ro(y'_i) \vee \tau) \right)\\ \geq - \int \left(\Detat(\frac{1}{I_{\tau}} \chi_{\tau}, \Cs) \wedge M\right) d\bPsN.
\end{multline*}
The map $\Cs \mapsto \Detat(\frac{1}{I_{\tau}}  \chi_{\tau}, \Cs) \wedge M$ is continuous for the topology we use,  hence 
$$
- \liminf_{N\to \infty} \frac{1}{N} \sum_{i=1}^N \left(\log \ro(x'_i) + \log \ro(y'_i)\right) \ge \int \left(\Detat(\frac{1}{I_{\tau}} \chi_{\tau}, \Cs) \wedge M\right) d\bPsts - o_\ep(1).
$$
The monotone convergence theorem implies
\[
\lim_{M \ti} \int \left(\Detat(\frac{1}{I_{\tau}} \chi_{\tau}, \Cs) \wedge M\right) d\bPsts = \int \Detat(\frac{1}{I_{\tau}}  \chi_{\tau}, \Cs) d\bPsts.
\]
Let us observe that $\Detat(\cdot, \cdot)$ is linear in the first variable, in particular  $\Detat(\frac{1}{I_{\tau}} \chi_{\tau}, \Cs) = \frac{1}{I_{\tau}} \Deta(\chi_{\tau}, \Cs)$.
The family of functions $\{ \Detat(\chi_{\tau}, \cdot)\}_{\tau \in (0,1)}$ is monotone in $\tau$, and the monotone convergence theorem implies
$$
\lim_{\tau \t0} \int \frac{1}{I_{\tau}} \Detat(\chi_{\tau}, \Cs) d\bPsts = \int \Deta(\mathbf{1}_{C_1}, \Cs) d\bPsts.
$$
By stationarity of $\bPsts$ we have $\int \Deta(\mathbf{1}_{C_1}, \Cs) d\bPsts = \bDeta(\bPsts)$.
Chosing then $\tau$ small enough, $M$ large enough and $\epsilon$ small enough depending on $\tau, M$ yields the result.
\end{proof}

To prove the LDP lower bound, it thus suffices to prove a lower bound for 
$$\int_{\Lambda^{2N} \cap \{\iN(\XN, \YN) \in B(\bar P^s, \ep)\}} e^{-\frac{\beta}{2} \frac{1}{2\pi} \int_{\R^2}|\nab \VpN|^2 } d\XN d\YN$$ (for notation see Section \ref{sec:blowup}).
This then becomes similar to the question treated in \cite{LebSer} and we follow the same strategy: we need to show that there is a large enough volume of configurations (in fact one which is logarithmically very close to the relative entropy of $\bar P^s$) 
 on which the energy $ \int_{\R^2}|\nab \VpN|^2$ is not too large. For that we will split the box $\Lambda$ into squares of microscopic size, and we will draw signed  configurations independently at random in a Poissonian way in these squares, so that the total number of points in each square is the expected one. Sanov's theorem will guarantee  that this generates a set of configurations with the right volume.  Then we need to estimate the energy $\int_{\R^2}|\nab \VpN|^2 $ generated by each such configuration. We in fact need to make these energies restricted to each square  depend only on the configuration  in the square, which is a priori not the case. For that we use the idea of ``screening" the electric field $E_K$ generated in each square $K$, by modifying the configuration in a small neighborhood of the boundary of the square, to make the electric field energies independent and summable. 
 This is accomplished by enforcing the boundary condition $E_K\cdot\vec{n}=0$ on each boundary, which ensures that when pasting together these electric fields, the relation
\begin{equation}\label{deve}
-\div E= 2\pi\mc{C}  \quad \text{in } \R^{2}\end{equation}
holds globally. Indeed,  a vector field which is 
discontinuous across an interface has a distributional divergence 
concentrated on the interface 
and
equal to the jump of the normal derivative.
With the choice \eqref{deve},  there is no extra divergence created across the interfaces between the squares.
Even if the $E_K$'s were gradients, the global $E$ is in general no longer a gradient.  This does not matter however, since the energy of the true electric field $\nab \VpN$ generated by the configuration $\mc{C}$ 
will be shown to be  smaller than that of $E$ by Helmholtz projection. 
 In the case of positive charges with a neutralizing background, this procedure was introduced in \cite{gl13} and further refined in \cite{SS2d,RougSer,PetSer,LebSer}, to which we refer for more detail. 
   We implement this program, with the appropriate
 adaptations needed for controlling dipoles, in the rest of this section.
\subsection{The screening lemma} \label{sec:screening}
\begin{prop}\label{lem:screening}
	Let $R > 0$ and let $\Cs = (\Cp,\Cm)$ be a 
simple
	signed configuration in $\carr_R$, and $E$ an electric field satisfying 
$$-\div E= 2\pi (\Cp - \Cm) \quad \text{in} \ \carr_R.$$ Let $\Er$ be its truncation at nearest-neighbour half-distance as defined in \eqref{defeeta}. 

Let $\np := \Cp(\carr_R)$, $\nm := \Cm(\carr_R)$,  and $n := \np + \nm$, and for any $0 < \epsilon < 1$ let
\[
\nintp := \Cp(\carr_{R(1-\ep)}), \quad \nintm := \Cm(\carr_{R(1-\ep)}), \quad \nint := \nintp + \nintm,
\] 
and define
\begin{equation} \label{def:M}
M := \frac{1}{R^2} \int_{\carr_R} |\Er|^2. 
\end{equation}

Then for any $0<\ep<1$, if $R$ is  large enough,   there exists a (measurable) family of signed  configurations $\Phiscr_{\epsilon, R}(\Cs, E)$ such that for any $\Cscr \in \Phiscr_{\epsilon, R}(\Cs,E)$ we have
 \begin{enumerate}
\item $\Cs$ and  $\Cscr$ coincide in $\carr_{R(1-\ep)}$.
\item There exists a vector field $\Escr$ satisfying
\begin{enumerate}
\item $\Escr$ is compatible with $\Cscr$  in $\carr_R$ and is screened in the sense that 
\begin{equation} \label{Escrcompat}
\left\lbrace\begin{array}{ll}
- \div \Escr = 2\pi \Cscr & \ \text{in} \ \carr_R\\
\Escr \cdot \vec{n}= 0 &\ \text{on} \ \p \carr_R\end{array}\right. .
\end{equation}
\item The energy of $\Escr$ (after truncation) is controlled by that of $E$
 \begin{equation} \label{erreurEcrantage}
\int_{\carr_R} |\Escrr|^2  \le \int_{\carr_R} |\Er|^2 + C \left(\frac{MR}{\ep} +\ep R^2+ \frac{n}{\ep R}\left| \log \frac{n}{\ep R}\right|\right).
\end{equation}
\end{enumerate}
\item We have
\begin{equation} \label{contnbpoints}
\nintp(\C) \leq \np(\Cscr) \leq \np(\C) + C\left(\frac{MR}{\epsilon} + 1 + \frac{n}{\epsilon R} \left|\log \frac{n}{\epsilon R}\right|\right)
\end{equation}
and the same holds for $\nintm, \nm$.
\end{enumerate}
Moreover the map $\Cs \mapsto \Phiscr_{\epsilon,R}(\Cs,E)$ is such that if $A$ is a (measurable) subset of signed point configurations in $\configs(\carr_R)$ such that the quantities $n$ and $\nint$ are constant on $A$, then we have
\begin{multline}
\label{preservVolume} \log \Leb^{\otimes{2R^2}} \left( \bigcup_{\Cs \in A} \Phiscr_{\epsilon, R}(\Cs,E) \right) \geq \log \Leb^{\otimes n}(A) \\ - C \left((n - \nint) \log R + R + \frac{MR}{\ep} +\ep R^2+ \frac{n}{\ep R} \log \frac{n}{\ep R}\right).
\end{multline}
\end{prop}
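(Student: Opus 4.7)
The plan follows the screening strategy of \cite{SS2d,RougSer,PetSer,LebSer}, adapted here to the signed setting. I would leave $\Cs$ and $E$ untouched on an intermediate concentric square $\carr_{R'}$ with $R(1-\ep)<R'<R(1-\ep/2)$, and modify the configuration and construct a new field on the boundary annulus $\carr_R\setminus\carr_{R'}$, so as to enforce $\Escr\cdot\vec n=0$ on $\partial\carr_R$ while matching the normal trace of $E$ on $\partial\carr_{R'}$, then glue.

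First I would run a Fubini/pigeonhole selection of $R'$: since the total truncated energy on $\carr_R\setminus\carr_{R(1-\ep)}$ is at most $MR^2$ and the layer has width $\ep R/2$, averaging yields an $R'$ in that layer with $\int_{\partial\carr_{R'}}|E_r|^2\,d\sigma\lesssim MR/\ep$, and arranges simultaneously that $\partial\carr_{R'}$ keeps a positive distance from every point of $\Cs$ and that the number of points of $\Cs$ within unit distance of $\partial\carr_{R'}$ is at most $Cn/(\ep R)$. Second, on $\carr_R\setminus\carr_{R'}$ I would insert $\asymp\ep R^2$ positive and $\asymp\ep R^2$ negative new charges arranged on a bipartite grid of mutual separation $\gtrsim 1$, together with $O(\sqrt{MR^2/\ep}+1)$ correction charges, so that the total signed charge of $\Cscr$ restricted to the annulus equals $\tfrac{1}{2\pi}\int_{\partial\carr_{R'}}E\cdot\vec n\,d\sigma$. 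This is precisely the solvability condition for the mixed boundary-value problem
\begin{equation*}
\begin{cases}
-\div E^{\mathrm{out}}=2\pi\,\Cscr & \text{in }\carr_R\setminus\carr_{R'},\\
E^{\mathrm{out}}\cdot\vec n=E\cdot\vec n & \text{on }\partial\carr_{R'},\\
E^{\mathrm{out}}\cdot\vec n=0 & \text{on }\partial\carr_R,
\end{cases}
\end{equation*}
and solving via a scalar potential with elliptic regularity on the rectangular annulus yields
\[
\int|E^{\mathrm{out}}_r|^2\,\lesssim\,\int_{\partial\carr_{R'}}|E|^2\,d\sigma\,+\,\ep R^2\,+\,\frac{n}{\ep R}\left|\log\frac{n}{\ep R}\right|,
\]
where the last two terms control the self-energy of the inserted grid and of the correction charges (the latter via Newton-type estimates on a bounded number of well-separated logarithmic singularities). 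Gluing $\Escr=E$ on $\carr_{R'}$ and $\Escr=E^{\mathrm{out}}$ on the annulus produces a globally $L^2$ field whose matching normal traces on $\partial\carr_{R'}$ prevent any spurious distributional divergence: this gives \eqref{Escrcompat}, combining the two previous bounds yields \eqref{erreurEcrantage}, and \eqref{contnbpoints} follows from the count of inserted charges.

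For \eqref{preservVolume}, I would parametrize elements of $\bigcup_{\Cs\in A}\Phiscr_{\ep,R}(\Cs,E)$ by the pair (interior configuration in $\carr_{R(1-\ep)}$, annular configuration in $\carr_R\setminus\carr_{R(1-\ep)}$); the first factor ranges over the interior-restriction of $A$, and the second factor, for each fixed $\Cs$, has Lebesgue measure bounded below by a product of $O(1)$-sized cells for each of the $O(\ep R^2)$ inserted charges, so that taking logarithms and using Stirling produces the claimed error. The main obstacle is the construction in the second step: in the one-component case of \cite{LebSer} the replacement is a single-sign lattice balanced against a smooth neutralizing background, whereas here the absence of a background and the presence of both signs force the insertion to be genuinely bipartite, so one must simultaneously ensure solvability, sufficient separation between inserted charges of opposite sign (so that the nearest-neighbour distance $\ro$ remains bounded below on $\Cscr$), and compatibility with existing points of $\Cs$ just inside $\partial\carr_{R'}$---which is exactly why the pigeonhole argument in the first step must also produce a boundary $\partial\carr_{R'}$ avoiding $\Cs$.
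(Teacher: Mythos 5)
Your overall architecture — pigeonhole a good boundary, construct a field in the outer annulus matching the flux and enforcing a zero Neumann condition on $\partial\carr_R$, glue, then parametrize the microstates to get the volume bound — is the same as the paper's. But there is a genuine gap in the boundary-selection step. You pigeonhole an $R'$ with controlled $\int_{\partial\carr_{R'}}|E_r|^2$ and at positive distance from every point of $\Cs$, and then impose Neumann data $E\cdot\vec n$ on $\partial\carr_{R'}$. For the gluing and the energy bound to work you in fact need $\partial\carr_{R'}$ to avoid every ball $B(p,r(p))$: otherwise $E\neq E_r$ on the boundary so the $L^2$ control you established is on the wrong quantity, the trace of $E_r$ may fail to be in $L^2$, and the glued field picks up spurious divergence from truncation terms cut by the interface. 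Pigeonhole alone cannot produce such a curve, because the selected annular layer can contain of order $n/(\ep R)$ points, each contributing a ball of radius up to $1$, and the total perimeter of these balls can exceed the range of allowable $R'$. The paper resolves this by first re-truncating the field at the smaller radius $\eta=\ep R/(40n)$ in the annulus (passing from $E_r$ to $E_{r,\eta}$), which shrinks the total perimeter of the ``forbidden'' balls below $1/2$ and makes the avoidance argument go through; this re-truncation is precisely what generates the error term $\frac{n}{\ep R}\bigl|\log\frac{n}{\ep R}\bigr|$ in \eqref{erreurEcrantage}. Your proof instead attributes that term to the self-energy of the inserted grid and correction charges, which is not where it comes from; in the paper's construction the inserted charges contribute an energy of order $R+\frac{MR}{\ep}$, which is estimated separately.

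A secondary remark: your single mixed BVP on the full thin annulus is a reasonable substitute for the paper's cell-by-cell Neumann problems, but you would still have to carry out the decomposition of the flux into integer charges per cell (the $n_l$, $c_l$ bookkeeping of \eqref{defnl}--\eqref{nknk}) to make the charges you insert genuinely point charges and to get the elliptic constants independent of $R,\ep$, and you would need to verify, as the paper does via the lattice construction, that the inserted points are separated from the boundaries of the cells so that the truncated energy of $\Escr$ near those points behaves; asserting a ``bipartite grid of mutual separation $\gtrsim 1$'' is not compatible with placing the possibly large numbers $|n_l|$ of same-sign charges demanded by the flux in unit-size cells.
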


\begin{proof}
{\it Step 1: choice of a good ``annulus''.}
Consider the disjoint ``annuli'' $A_k= C_{R-8(k+1)}\backslash C_{R-8 k}$ for $k$ ranging from $1$ to 
the integer part of $\frac{1}{8}\ep R$. There are $[\frac{1}{8}\ep R]$ such disjoint sets. 
The proportion of such $k$'s such that 
$$\int_{A_k} |E_r|^2 \le \frac{20 MR^2}{\ep R}$$
is  strictly larger than $\hal.$ Similarly, the proportion of such $k$'s such that
$$|\Cs|(A_k) \le \frac{ 20 n}{\ep R}$$ 
is strictly larger than $\hal$.
We deduce that there exists a $k_0\in [1, [\ep R]]$ such that 
\begin{equation}\label{surk0}
\int_{A_{k_0}} |E_r|^2 \le \frac{20 MR}{\ep }\qquad |\Cs|(A_{k_0}) \le \frac{20 n}{\ep R}.
\end{equation}
For brevity, we set
 $A:= A_{k_0}$ and $A_{\rm{int}}= \{x\in A, \dist (x, \p A) \ge 2\}.$

Set next $\eta= \frac{\ep R}{40 n }$ and 
\begin{equation}
	E_{r,\eta}\textcolor{red}{:}
	= \Er - \sum_{p\in\Cp\cap A_{\rm{int}} }\nabla (f_{\eta \wedge \ro(p) }-f_{\ro(p)})(x-p)+\sum_{p\in \Cm \cap A_{\rm{int}} } \nabla (f_{\eta\wedge r(p)} -f_{\ro(p)})(x-p).
\end{equation}
In other words, we replace $\delta_p^{(r(p))}$ by $\delta_p^{(\eta\wedge \ro(p))}$ for all the points $p$ in $A_{\rm{int}}$.

Computing as in \cite[Lemma 2.3]{PetSer} we may write 
\begin{multline}\label{evai}
\int_A |E_{r, \eta}|^2=\int_A |\Er|^2 + |E_{r,\eta}-\Er|^2 \\+ 2\int_A \sum_{p\in \Cp \cap A_{\rm{int}}} \nab (f_{\eta \wedge \ro(p) }-f_{\ro(p) })(x-p)\cdot \Er +\sum_{p\in \Cm \cap A_{\rm{int}} } \nabla (f_{\eta\wedge \ro(p)} -f_{\ro(p)})(x-p) \cdot \Er
\end{multline}
Integrating by parts, and using that $-\div E_r=2\pi (\sum_{x\in \Cp}\delta_x^{(\ro(x))}-\sum_{x\in \Cm}\delta_x^{(\ro(x))})$, we have 
\begin{multline}\label{evai2}
\int_A \sum_{p\in \Cp\cap A_{\rm{int}}} \nab (f_{\eta \wedge \ro(p) }-f_{\ro(p) })(x-p)\cdot \Er +\sum_{p\in \Cm \cap A_{\rm{int}} } \nabla (f_{\eta\wedge \ro(p)} -f_{\ro(p)})(x-p) \cdot \Er
\\ =2 \pi \sum_{p\in \Cp \cap A_{\rm{int}} } \int_A  (f_{\eta \wedge r(p) }-f_{r(p) })(x-p)\left( \sum_{x\in \Cp}\delta_{x}^{(\ro(x))}-\sum_{x\in \Cm}\delta_x^{(\ro(x))}\right)
\\+ 2\pi \sum_{p\in \Cm \cap A_{\rm{int}} } \int_A (f_{\eta\wedge r(p)} -f_{r(p)})(x-p)\left( \sum_{x\in \Cp}\delta_{x}^{(\ro(x))}-\sum_{x\in \Cm}\delta_x^{(\ro(x))}\right)
\end{multline}
and there are no boundary terms since $f_r $ and $f_{\eta \wedge r}$ vanish 
outside of  $B(p, \ro(p))$ with $\ro(p)\le 1$. By the same argument, and since all the balls of radius $\ro(p)$ are disjoint,  all the terms in the right-hand side of \eqref{evai2} vanish.
We are thus left with 
\begin{multline}
\int_A |E_{r, \eta}|^2=\int_A |\Er|^2 + |E_{r,\eta}-\Er|^2 
=\int_A|\Er|^2 + \sum_{p\in\Cs \cap A_{\rm{int}} } \int_{\R^2} |\nab (f_{\eta\wedge \ro(p)}-f_{\ro(p)})|^2.
\end{multline}
and using again the same integration by parts argument, we 
have
$$\int_{\R^2} |\nab (f_{\eta\wedge \ro(p)}-f_{\ro(p)})|^2=2\pi (\log \ro(p)-\log(\eta\wedge\ro(p))\le 2\pi |\log \eta|.$$ 
It follows with \eqref{surk0} that 
\begin{equation}\label{eer}
\int_A |E_{r, \eta}|^2\le \int_A |\Er|^2 + \frac{40\pi n}{\ep R} \left|\log \frac{\ep R}{40n}\right|.
\end{equation}

{\it Step 2: choice of a good boundary.}
Consider all the $\p C_t$ where $t$ is chosen so that \[
\p C_t \subset \{ x\in A_{\rm{int}}, \dist (x, \p A_{\rm{int}} ) \ge 1\}.\]
By the bound $\ro(x)\le 1$, $\p K_t$ cannot intersect any $B(p, \ro(p))$ for $p \notin A_{\rm{int}}$. Moreover, 
by the
choice of $\eta$ and \eqref{surk0}, we have  $\eta |\Cs|(A) \le \frac12$. Thus, the total perimeter of the balls  $B(p, \eta\wedge \ro(p) )$ with $p\in \Cs \cap A_{\rm{int}}$ is bounded by $\frac12$.
We deduce that there exists  $t$ such that $\p C_t \subset A_{\rm{int}}$ and $\p C_t $ intersects none of the $B(p, \eta)$ for $p\in \Cs \cap A_{\rm{int}}$ and none of the $B(p, \ro(p))$ for $p \in \Cs \backslash A_{\rm{int}}$. 
Applying a mean value argument to the integrand in
\eqref{eer}, and using \eqref{surk0},  we may also assume that $t$ is 
such that the restriction (or trace) of $E_{r, \eta}$ on $\p C_t$ is well defined as an $L^2$ function, and denoting 
\begin{equation}\label{defig}
g:= (E_{r, \eta} \cdot \vec n )\lfloor_{\p C_t}\end{equation} where $\vec{n}$ denotes the inner unit normal,
we have
\begin{equation}\label{ea}
\int_{\p C_t} |g|^2 \le 10 \left(\int_A |\Er|^2 + \frac{40\pi n}{\ep R} \left|\log \frac{\ep R}{40 n}\right|\right)\le  10 \left(\frac{20MR}{\ep} + \frac{40 \pi n}{\ep R} \left|\log \frac{\ep R}{40 n}\right|\right).
\end{equation}

{\it Step 3: construction outside $C_t$}

We take the $C_t$ given by the previous step and  keep the configuration in $C_t$ unchanged, and discard the configuration in $C_R\backslash C_t$. 
This way the first item will be verified.

Consider next
$\p C_t$ and partition each of its sides into  segments $I_l$ of length $\in [\hal, \frac32]$. This yields a natural partition of 
   $C_{t+1}\backslash C_t$ into disjoint rectangles $\mathcal R_l$, $l=1, \cdots, L$.
   Each $\p \mathcal R_l$ has four sides: one is $I_l$ (or does not exist if $\mathcal R_l$ is a corner square), one (or two in case of a corner) belongs to $\p C_{t+1}$, one is adjacent to $\p \mathcal R_{l-1},$ one to $\mathcal R_{l+1}$.
   
   For each $l$, we let $g_l$ denote the restriction of $g$ to $I_l$.
    We also define $n_0=0$ and for each $l \in [1,L]$,
 \begin{equation}\label{defnl}
 n_l= \left[\sum_{k=1}^l \int g_k \right]- \sum_{k=0}^{l-1} n_k,\qquad c_l= \sum_{k=1}^l \int g_k -\left[\sum_{k=1}^l \int g_k \right]
 \end{equation}
 where $[\cdot]$ is the integer part.
 We observe that 
 \begin{equation}\label{proprnl}
 |c_l|\le 1,\qquad  \sum_{k=1}^l n_k= \left[\sum_{k=1}^{l}\int g_k \right] 
 \end{equation}
 and 
 \begin{equation}\label{nknk}n_l= c_l-c_{l-1} -\int g_l.
 \end{equation}

 In each $\mathcal R_l$ we let $\Lambda_l$ be a set of $|n_l|$ points of sign equal to that of $n_l$, and which are a perturbation 
 of a fixed regular square lattice 
 $\Lambda_l^0$ of sidelength $1/\sqrt{|n_l|}$.
 More formally, to each $z_i\in \Lambda_l^0$ which is at distance $\ge 1/\sqrt{|n_l|}$ from $\p \mathcal R_l$, we  associate a 
	 point $x_i$ satisfying $|x_i-z_i|\leq 1/(4\sqrt{|n_l|})$, 
	 and set $\Lambda_l=\{x_i\}_{i=1}^{|n_l|}$.
 We let $h_l$ be the 
 mean zero solution to 
 $$\left\lbrace\begin{array}{ll}
  -\Delta h_l = 2\pi sgn(n_l)\displaystyle \sum_{p \in \Lambda_l} \delta_p & \quad \text{in } \mathcal R_l\\ [2mm]
  \nabla h_l \cdot \vec{n} = g_l& \text{on} \ \p \mathcal R_l \cap I_l\\
  \nabla h_l \cdot \vec{n}= - c_{l-1} & \text{on} \ \p \mathcal R_l \cap \p \mathcal R_{l-1}\\
  \nabla h_l \cdot \vec{n}= c_{l} & \text{on} \ \p \mathcal R_l \cap \p \mathcal R_{l+1}\\
  \nabla h_l \cdot \vec{n} =0 & \text{on} \ \p \mathcal R_l \cap \p C_{t+1}.\end{array}\right.
  $$
  One may check that this equation is solvable, and has a unique solution with mean zero, in view    
   of \eqref{nknk}.
  We may also write $h_l= u_l+v_l$ where 
  $$\left\lbrace\begin{array}{ll}
  -\Delta u_l = 2\pi \frac{n_l}{|\mathcal R_l|} & \quad \text{in } \mathcal R_l\\ [2mm]
  \nabla u_l \cdot \vec{n} = g_l& \text{on} \ \p \mathcal R_l \cap I_l\\
  \nabla u_l \cdot \vec{n}=- c_{l-1} & \text{on} \ \p \mathcal R_l \cap \p \mathcal R_{l-1}\\
  \nabla u_l \cdot \vec{n}= c_{l} & \text{on} \ \p \mathcal R_l \cap \p \mathcal R_{l+1}\\
  \nabla u_l \cdot \vec{n} =0 & \text{on} \ \p \mathcal R_l \cap \p C_{t+1}.\end{array}\right.
  $$
  and 
  $$\left\lbrace\begin{array}{ll}
  -\Delta v_l = 2\pi sgn(n_l)\displaystyle \sum_{p \in \Lambda_l} \delta_p- 2\pi\frac{n_l}{|\mathcal R_l|} & \quad \text{in } \mathcal R_l\\ [2mm]
  \nabla v_l \cdot \vec{n} = 0& \text{on} \ \p \mathcal R_l .\end{array}\right.
  $$
Both equations have a unique solution with zero average.  
Since the points in $\Lambda_l$ are well-separated from the boundary, using the same notation as in \eqref{defeeta},   we have 
$$\left\lbrace\begin{array}{ll}
  -\div (\nab v_l)_r = 2\pi sgn(n_l)\displaystyle \sum_{p \in \Lambda_l} \delta_p^{(r(p))}- 2\pi\frac{n_l}{|\mathcal R_l|} & \quad \text{in } \mathcal R_l\\ [2mm]
  (\nabla v_l)_r \cdot \vec{n} = 0& \text{on} \ \p \mathcal R_l ,\end{array}\right.$$
and we may write $(\nab v_l)_r= \sum_{p\in \Lambda_l} \nab G_p$ with
  $$\left\lbrace\begin{array}{ll}
  -\Delta G_p = 2\pi sgn(n_l)\(\delta_p^{(r(p))}-\frac{1}{|\mathcal R_l|}\) & \quad \text{in } \mathcal R_l\\ [2mm]
  (\nabla G_p)\cdot \vec{n} = 0& \text{on} \ \p \mathcal R_l .\end{array}\right.$$  
  We have that for each $p\in \Lambda_l$,
  $$\int_{\mathcal R_l} |\nab G_p|^2 \le C (|\log r(p)|+1).$$
  This can be proved by comparing $G_p$ to $f_{r(p)}(x-p)$ 
(defined in \eqref{def:feta}), for example as in \cite[proof of (6.23)]{PetSer}. Using that the separation of the points is at least of order $1/\sqrt{|n_l|}$, we may then write 
  $$\int_{\mathcal R_l}  |(\nab v_l)_r |^2  \le C |n_l|\log  |n_l|+ \sum_{p \neq p' \in \Lambda_l} \nab G_p \cdot \nab G_{p'},$$
  and by arrangement of the points near a lattice, the second sum is seen 
  to be comparable to $- \sum_{p\neq p' \in \Lambda_l} \log |p-p'|$. 
  This term is itself known to be  asymptotic as $|n_l| \to \infty$ to 
  $-(n_l)^2\int_{\mathcal{R}_l\times \mathcal{R}_l} \log |x-y|\, d\mu(x) 
  \, d \mu(y)$ where $\mu$ is the limit of 
  $\frac{1}{|n_l|}\sum_{p \in \Lambda_l} \delta_p$ 
  (here the uniform measure on $\mathcal{R}_l$). 
  We thus 
  conclude that whether $|n_l|\to \infty$ or not, we have 
\begin{equation}\label{nrjv}
\int_{\mathcal R_l}  |(\nab v_l)_r |^2  \le  C (n_l)^2.
\end{equation}
On the other hand, by elliptic estimates (for example \cite[Lemma 5.8]{RougSer}), we have
\begin{equation}\label{nrju}
\int_{\mathcal R_l} |\nabla u_l|^2\le C \( \int g_l^2 + c_{l-1}^2 +c_l^2\)
\end{equation}
with $C$ universal.
Since $|c_l|\le 1$ for every $l$, in view of  \eqref{nknk} we have
$(n_l)^2\le 3 +2\int g_l^2  $
and thus, combining \eqref{nrjv} and \eqref{nrju}, we find 
\begin{equation}\label{nerjh}
\int_{\mathcal R_l} |(\nab h_l)_r|^2 \le C \left(1+ \int g_l^2\right).
\end{equation}

We now define $\Escr$ to be $E$ in $C_t$ and to be $\sum_l \indic_{\mathcal R_l} \nab h_l$ in $C_{t+1}\backslash C_t$, and $\Cscr$ to be 
$$\Cscr := (\Cs \cap C_t) \cup (\cup_l \Lambda_l)$$ (with sign). We see that the normal components of $\Escr$ agree on each interface of $\p \mathcal R_l$, 
and thus
  \begin{equation}\label{eqE}
  \left\lbrace \begin{array}{ll}
  -\div \Escr = 2\pi \Cscr 
   & \quad \text{in } \mathcal R_l\\
\Escr \cdot \vec{n} = 0& \text{on} \ \p C_{t+1}. 
\end{array}\right.
  \end{equation}
Also, in view of \eqref{nerjh} we have 
$$\int_{C_t\backslash C_{t-1}} |\Escrr|^2\le C \left( 1+ \int g^2\right).$$
 We conclude with \eqref{ea} that 
 \begin{equation*}\int_{C_t} |\Escrr|^2\le  \int_{\carr_R} |\Er|^2 \\+ C \left(\frac{MR}{\ep} +1+ \frac{n}{\ep R} \left|\log \frac{\ep R}{n}\right|\right).
 \end{equation*}

Next, we extend (if needed) the configuration to $\carr_R\backslash C_{t+1}$ by just adding squares with dipoles. 
More precisely, we partition $\carr_R\backslash C_{t+1}$ into rectangles $\mathcal R$ of sidelengths in $[1,2]$.  In each rectangle we place a  positive charge $p_+$ and a negative charge $p_-$ separated from each other and from the boundary of the rectangle by at least $1/4$. We then solve for 
 $$\left\lbrace\begin{array}{ll}
  -\Delta u = 2\pi (\delta_{p_+}-\delta_{p_-}) & \quad \text{in } \mathcal R\\ [2mm]
  \nabla u \cdot \vec{n} = 0& \text{on} \ \p \mathcal R .\end{array}\right.
  $$We check as above  that $\int_{\mathcal R} |(\nab u)_r|^2 \le C$ for each such rectangle, and pasting together the electric fields $(\nab u)_r$ thus constructed and the one constructed in $C_{t+1}$, we find a vector field and a family of configurations satisfying all the desired conditions, and we see that 
 this extension has added an energy at most proportional to 
the volume of $\carr_R \backslash C_{t+1}$, i.e. $C\ep R^2$. \\

{\it Step 4: control on the number of points}

By construction, the point configuration has not been changed in $C_{R(1-\epsilon)}$, hence the left-hand inequality in \eqref{contnbpoints} holds. The number of points that have been added is given by $\sum_{l=1}^L n_l$. In view of \eqref{nknk} and \eqref{ea} we obtain
\[
\sum_{l=1}^L n_l \leq \sum_{l=1}^L n_l^2 \leq C\left(R +  \frac{MR}{\ep} + \frac{n}{\ep R} \left|\log \frac{\ep R}{n}\right|\right),  
\]
which yields the right-hand inequality in \eqref{contnbpoints}.
\\

{\it Step 5: volume estimate} 

We now turn to the proof of  \eqref{preservVolume}. Since we have discarded the point configuration in $\carr_R \backslash C_t$, in which there were at most $n-\nint$ points, we have lost a  logarithmic volume bounded 
(in absolute value) by 
\begin{equation} \label{volumelost}
(n-\nint) \log |\carr_R \backslash K_t|.
\end{equation}
On the other hand, the points that are constructed in each rectangle $\mc{R}_l$ were allowed to move independently in a small perturbation of the lattice of sidelength $1/\sqrt{|n_l|}$, e.g. they may be chosen arbitrarily in a disk of radius $\frac{1}{4\sqrt{|n_l|}}$ up to a multiplicative constant in the estimates. This allows us to create a volume of configurations of order
\[
\left(\frac{1}{4\sqrt{|n_l|}}\right)^{2 |n_l|}
\]
in each rectangle $\mc{R}_l$. Summing over $l$, we see that the 
(absolute value of the) 
logarithmic volume contribution of the points that are created is bounded  by
$$
\sum_{l =1}^L C| n_l| \log |n_l | \leq C \sum_{l = 1}^L n_l^2 \leq C\left( L + \int g^2\right) 
$$
in view of \eqref{nknk}. We have $L \leq R$ and $\int g^2$ is bounded as in \eqref{ea}, which allows us to bound the previous expression by
\begin{equation} \label{volumecreated}
C\left(R + \frac{MR}{\ep} + \frac{n}{\ep R} \log \frac{n}{\ep R}\right).
\end{equation}
 Combining  \eqref{volumelost} and \eqref{volumecreated} yields \eqref{preservVolume}.
\end{proof} 

\subsection{Screening the best electric field}
For any $R > 0$, for any $\Cs \in \configs(\carr_R)$, we let $\OR(\Cs)$ be the set of electric fields which are compatible with $\Cs$ in $\carr_R$, i.e.~such that $-\div E=2\pi \Cs$ in $\carr_R$.

We may then define $\FR(\Cs)$ to be the “best energy” associated to $\Cs$ in $\carr_R$, i.e. 
\begin{equation}
\label{BSE} \FR(\Cs) := \min \left\lbrace \frac{1}{R^2} \int_{\carr_R} |\Er|^2, E \in \OR(\Cs) \right\rbrace.
\end{equation} 

Since we may always consider the local electric field associated to $\Cs$, the set $\OR(\Cs)$ is non empty. If $\{E^{(k)}\}_k$ is a sequence in $\OR(\Cs)$ such that $\frac{1}{R^2} \int_{C_R} |E_r|^2$ is bounded, then using Lemma \ref{lem:L2Lp} we see that $\{E^{(k)}\}_k$ converges weakly (up to  
a subsequencen extraction) to some $E$ in $\Lploc(C_R, \R^2)$. Moreover we have $E \in \OR(\C)$ and the sequence $\{E_r^{(k)}\}_k$ converges weakly to $E_r$ in $L^2$. By 
the weak lower semi-continuity of the $L^2$ norm we obtain
\[ \int_{C_R} |E_r|^2 \leq \liminf_{k \ti} \int_{C_R} |E^{(k)}_r|^2. \]
This ensures that the minimum in \eqref{BSE} exists.

We next let $\G_R : \configs(\carr_R) \to \A$ be a measurable choice of an optimal electric field, i.e. be such that $\G_R(\Cs) \in \OR(\Cs)$ and 
$$
\frac{1}{R^2} \int_{\carr_R} |\G_{R,r}|^2 = \FR(\Cs).
$$

Let us also make the following observation: if $\bPsts$ is a stationary tagged signed point process, we have, for any  $R > 0$
\begin{equation} \label{FRbeatsbdW}
\Esp_{\bPsts}\left[ \FR(\Cs) \right] \leq \bdW(\bPsts). 
\end{equation}
Indeed from Lemma \ref{lem:compatible} we know that we may find a stationary electric field process $\bPelec$ which is compatible with $\bPsts$ and such that
$$
\Esp_{\bPelec}\left[ \left(\frac{1}{R^2} \int_{\carr_R} |\Er|^2\right) \right] = \bdW(\bPsts),
$$
and the left-hand side is by definition $\geq \int \FR(\Cs) d\bPsts$.

\begin{lem} \label{lem:USCFREP} The map $\FR$ is upper semi-continuous on $\configs(\carr_R)$.
\end{lem}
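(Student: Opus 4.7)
The approach is by direct construction: given $\Cs_n\to\Cs$ in $\configs(\carr_R)$ and a near-optimal electric field $E\in\OR(\Cs)$, I will produce for each large $n$ an electric field $E_n\in\OR(\Cs_n)$ such that the truncated energies converge, namely $\frac{1}{R^2}\int_{\carr_R}|E_{n,r^{(n)}}|^2\to\frac{1}{R^2}\int_{\carr_R}|E_r|^2$. Combined with the definition of $\FR$, this will give $\limsup_n\FR(\Cs_n)\le\FR(\Cs)+\delta$ for any slack $\delta>0$, hence the conclusion upon sending $\delta\to 0$.

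First, I fix $\delta>0$ and pick $E\in\OR(\Cs)$ with $\frac{1}{R^2}\int_{\carr_R}|E_r|^2\le \FR(\Cs)+\delta$, the existence of such $E$ being guaranteed by the discussion preceding the lemma. I treat the case where $\Cs$ is simple and has no atom on $\p\carr_R$; the remaining cases will be reduced to this one via the pruning map (which leaves $\OR$ and the truncated energy unchanged, since $\mathrm{Conf}(E)=\Pru(\Cs)$ in any case) and an inner approximation of $\carr_R$. Under these assumptions, the vague convergence $\Cs_n\to\Cs$ forces, for $n$ large, a sign-preserving enumeration $(p_i^{(n)})_i$ of the atoms of $\Cs_n$ matching the atoms $(p_i)_i$ of $\Cs$, with $p_i^{(n)}\to p_i$.

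Set
\begin{equation*}
E_n(x):=E(x)-\sum_i\epsilon_i\nabla\log|x-p_i^{(n)}|+\sum_i\epsilon_i\nabla\log|x-p_i|,
\end{equation*}
where $\epsilon_i=\pm 1$ is the sign of $p_i$. Using $-\Delta\log|x-p|=-2\pi\delta_p$, a direct computation yields $-\div E_n=2\pi\Cs_n$ in $\carr_R$, so $E_n\in\OR(\Cs_n)$. Since $\Cs$ is simple, the nearest-neighbor half-distances satisfy $r^{(n)}(p_i^{(n)})\to r(p_i)>0$ for every $i$. The truncation reads
\begin{equation*}
E_{n,r^{(n)}}(x)=E_n(x)-\sum_i\epsilon_i\nabla f_{r^{(n)}(p_i^{(n)})}(x-p_i^{(n)}).
\end{equation*}
Off the finite set $\{p_i\}_i$, one has $E_{n,r^{(n)}}(x)\to E_r(x)$ pointwise; and on a fixed neighborhood of each $p_i$ the smoothed singular piece $\epsilon_i\bigl(\nabla(-\log|x-p_i^{(n)}|)-\nabla f_{r^{(n)}(p_i^{(n)})}(x-p_i^{(n)})\bigr)$ converges in $L^2$ to its $\Cs$-analogue thanks to the explicit form $\nabla f_\eta(y)=-\frac{y}{|y|^2}\indic_{|y|<\eta}$, whose $L^2$-norm is controlled by a quantity depending continuously on $(\eta,p)$. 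These two facts provide an $n$-uniform $L^1$-majorant on $\carr_R$ and dominated convergence yields $\int_{\carr_R}|E_{n,r^{(n)}}|^2\to\int_{\carr_R}|E_r|^2$.

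The main obstacle is this last step: simultaneously passing to the limit in the drifting singularities $p_i^{(n)}\to p_i$ and in the drifting truncation radii $r^{(n)}(p_i^{(n)})\to r(p_i)$, while producing a uniform $L^1$-dominant on the cube. Simplicity of $\Cs$ is what saves the day, since it bounds the limit truncation radii strictly away from zero, so the smoothed singular pieces have uniformly controlled $L^2$-norms. For non-simple $\Cs$, the equality $\FR(\Cs)=\FR(\Pru(\Cs))$ (since $\OR(\Cs)=\OR(\Pru(\Cs))$ and truncation is taken relative to $\mathrm{Conf}(E)=\Pru(\Cs)$) reduces to the simple case, and the boundary issue is handled by replacing $\carr_R$ with $\carr_{R(1-\eta)}$ and letting $\eta\t0$.
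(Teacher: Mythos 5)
Your construction is essentially the paper's: starting from a near-optimal $E\in\OR(\Cs)$, you build $E_n\in\OR(\Cs_n)$ by adding a corrector whose divergence is $2\pi$ times the difference of configurations, then pass to the limit in the truncated energies. You use the explicit free-space log potentials for the corrector where the paper uses the Neumann field of $\Cs_1-\Cs_2$ on $\carr_R$; this is a cosmetic difference, and your argument is if anything more detailed than the paper's sketch (which says nothing about the drifting truncation radii, a point you handle carefully).

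One flag on the reduction for non-simple $\Cs$. Passing to $\Pru(\Cs)$ does give $\FR(\Cs)=\FR(\Pru(\Cs))$, but the reduction to the simple case does not then follow, because $\Pru$ is not continuous: for example $\Cs_n=(\delta_0,\delta_{1/n})\to\Cs=(\delta_0,\delta_0)$ has $\Pru(\Cs_n)=\Cs_n\not\to\Pru(\Cs)=(0,0)$. In fact upper semi-continuity genuinely fails at such non-simple $\Cs$: there $\FR(\Cs)=0$, while $\FR(\Cs_n)$ is the truncated energy of a dipole at separation $1/n$ with truncation radii $1/(2n)$, which by scaling stays bounded away from $0$ (and can even diverge, e.g.~$\Cs_n=(\delta_0+\delta_{1/n},\delta_{-1/n})\to(2\delta_0,\delta_0)$ gives an effective monopole with vanishing truncation radius, so $\FR(\Cs_n)\sim\log n\to\infty$ while $\FR(\Cs)$ is finite). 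So the lemma as stated should really be restricted to simple configurations. This is not a defect specific to your write-up — the paper's own proof is silent on the non-simple case, and the lemma is only invoked for configurations sampled from laws of finite specific entropy, which are almost surely simple — but the pruning step you propose does not repair it, and it is worth saying explicitly that you are proving the statement on the set of simple configurations (which is all the paper needs).
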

\begin{proof} The proof goes as in \cite[Lemma 5.8]{LebSer}. First, if $\Cs_1$ is a signed point configuration in $\carr_R$ and $\Cs_2$ is close to $\Cs_1$, then they have the same number of points (for both components). We may then evaluate the energy of the ``Neumann'' electric field $\tilde{E}$ associated to 
	$\Cs_1 - \Cs_2$ (i.e. the solution to 
	$-\div \tilde{E} = 2\pi(\Cs_1 - \Cs_2)$ with zero mean and vanishing 
	Neumann boundary conditions
	and see that it can be made arbitrarily small if $\Cs_2$ is close 
	enough  to
	$\Cs_1$. By adding $\tilde{E}$ to a properly an electric field in $\OR(\Cs_1)$ of almost minimal energy we may construct an element 
	of
	$\OR(\Cs_2)$ whose energy is bounded above by $\FR(\Cs_1) + o(1)$ as $d_{\configs}(\Cs_2, \Cs_1)$ goes to zero.
\end{proof}

Henceforth for any $R > 0$ and any $\epsilon \in (0,1)$, if $\Cs$ is a signed point configuration in $\carr_R$ we let $\Phiscr_{R, \ep}(\Cs)$ be the set of point configurations obtained by applying Proposition \ref{lem:screening} to $\Cs$ and $\G_R(\Cs)$, in other words we let (with a slight abuse of notation)
$$
\Phiscr_{R, \ep}(\Cs) := \Phiscr_{R, \ep}(\Cs, \G_R(\Cs)).
$$ 
To any configuration in $\Phiscr_{R,\ep}(\Cs)$ is associated a compatible electric field $\Escr$ whose energy is bounded in terms of $\FR(\Cs)$ according to the conclusions of Proposition \ref{lem:screening}
\begin{equation} \label{EnergyFR}
\int_{\carr_R} |\Escrr|^2  \le  \FR(\Cs) + C \left(\frac{R}{\ep}\FR(\Cs) +\ep R^2+ \frac{n}{\ep R} \log \frac{n}{\ep R}\right).
\end{equation}

\subsection{Construction of configurations}
We now turn to the construction of configurations by cutting the domain into microscopic squares as announced. 

For any $N \geq 1$ we let $\Box' := \sqrt{N} \Box$.

\paragraph{\textbf{Tiling  the domain.}}
For any integer $N \geq 1$ and any $R > 0$, we let 
\begin{equation} \label{def:barR}
\bar{R} := R(1 + (\log R)^{-1/10}). 
\end{equation}
If $R$ is such that $\sqrt{N}/\bar{R}$ is an integer, we let for convenience $\mNR := N/\bar{R}^2$. We let $\K_{N,R} := \{ \bar{K}_i\}_{i = 1, \dots, \mNR}$ be a collection of closed squares with disjoint interior which tile $\Box'$ by translated copies of $\carr_{\bar{R}}$. For any $i$ we denote by $z_i$ the center of $\bar{K}_i$, and we let $K_i$ be the square of center $z_i$ and sidelength $R$ (and whose sides are parallel to those of $\bar{K}_i$). Finally for any $\epsilon \in (0,1)$ we let $K_{i, \epsilon}$ be the square of center $z_i$ and sidelength $R(1-\epsilon)$. In particular we have $K_{i, \epsilon} \subset K_i \subset \bar{K}_i$.

\subsubsection{Generating approximating microstates} In the following lemma we show how to generate configurations  with enough phase-space volume and ressembling any given $\bPsts$.
\begin{lem} \label{lem:Sanovapprox}
	Let $\left( (\Ccp_1, \Ccm_1), \dots, (\Ccp_{\mNR}, \Ccm_{\mNR})\right)$ be $\mNR$ independent random variables such that  $(\Ccp_i, \Ccm_i)$ is distributed as $\Poissons_{|K_i}$, in other words $\Ccp_i$ and $\Ccm_i$ are the restriction to $K_i$ of a couple of independent Poisson point processes of intensity $1$. We condition this $\mNR$-tuple of random variables so that the total number of points of each sign is  about
	$N \left(\frac{R}{\bar{R}}\right)^{2}$, more precisely
\begin{equation} \label{conditionpoints}
\sum_{i=1}^{\mNR} \Ccp_i = \sum_{i=1}^{\mNR} \Ccm_i =  \lceil{N \left(\frac{R}{\bar{R}}\right)^{2}}\rceil.
\end{equation}
We define $\BM_{N,R}$ as the law of the following random variable in $\Box \times \configs$:
\begin{equation} \label{moyennediscretisee}
\frac{1}{\mNR} \sum_{i=1}^{\mNR} \delta_{(N^{-1/2} z_i, \theta_{z_i} \cdot (\Ccp_i, \Ccm_i))}.
\end{equation}
Moreover let $\Ccs$ be the signed point process obtained as the union of the signed point processes $(\Ccp_i, \Ccm_i)$ i.e.
$$\Ccs := \left( \sum_{i=1}^{\mNR} \Ccp_i, \sum_{i=1}^{\mNR} \Ccm_i\right) $$
and 
define $\HM_{N,R}$ as the law of the random variable in $\Box \times \config$
$$\frac{1}{N} \int_{\Box'} \delta_{(N^{-1/2} z, \theta_z \cdot \Ccs)} dz.$$
Then for any $\bPsts \in \probas_{s,1}(\Box \times \configs)$ the following inequality holds
\begin{equation}\label{EqSB1}
\liminf_{R \ti} \lim_{\nu \t0} \liminf_{N \ti} \frac{1}{\mNR} \log \BM_{N,R} \left( B(\bPsts, \nu) \right) \geq - \bERS[\bPsts]\,,
\end{equation}
moreover for any $\delta > 0$ we have
\begin{equation} \label{EqSB2}
\liminf_{R \ti} \lim_{\nu \t0} \liminf_{N \ti} \frac{1}{\mNR} \log \left(\BM_{N,R}, \HM_{N,R} \right)  \left( B(\bPsts, \nu) \times 
B(\bPsts, \delta) \right) \geq - \bERS[\bPsts]\,,
\end{equation}
where $\left(\BM_{N,R}, \HM_{N,R} \right) $ denotes the joint law of $\BM_{N,R}$ and $\HM_{N,R}$ with the natural coupling.
\end{lem}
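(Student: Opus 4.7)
The plan is to establish Lemma~\ref{lem:Sanovapprox} as a Sanov-type large deviation lower bound for the tagged empirical measure $\BM_{N,R}$, along the lines of Proposition~\ref{prop:Sanov} and \cite[Section 7]{LebSer}.

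First I would note that, before imposing the conditioning \eqref{conditionpoints}, the pairs $(\Ccp_i,\Ccm_i)$ are independent, and the translated configurations $\theta_{z_i}\cdot(\Ccp_i,\Ccm_i)$ are i.i.d.\ copies of $\Poissons_{|\carr_R}$; the deterministic tags $x_i=z_i/\sqrt N$ form a regular grid whose empirical converges to $\Leb_\Box$. The conditioning on the exact total counts $\lceil N(R/\bar R)^2\rceil$ is a typical event (its probability decays only polynomially in $\mNR$ by a local CLT for sums of independent Poissons), so it does not affect the LDP rate and can be absorbed into the $o_\nu(1)$ errors.

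Next I would establish \eqref{EqSB1} by a block Sanov argument. Partition $\Box$ into mesoscopic cells $\{\Box_\alpha\}$ of sidelength $a>0$, with $a\to 0$ slowly after $N\to\infty$, so that $x\mapsto \bPstsx$ is essentially constant equal to some $\bar P^{s,\alpha}$ on each $\Box_\alpha$. Within each cell the $\sim \mNR a^2$ tags carry i.i.d.\ samples from $\Poissons_{|\carr_R}$, and the classical Sanov theorem in $\probas(\configs)$ gives a lower bound for the probability that the local empirical of the $P_i$'s is close in Dudley distance to $\bar P^{s,\alpha}$ restricted to $\carr_R$. Multiplying the contributions of the independent cells, dividing by the speed $\mNR$, and passing to the limits $N\ti$, $\nu\to 0$, $a\to 0$, one obtains a rate function built from integrated finite-window relative entropies $\Ent(\bPstsx_R\mid\Poissons_R)$. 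Applying the definition \eqref{def:ERS1} of $\ERS$ together with the disintegration \eqref{def:bERSa} in the final limit $R\ti$ identifies this rate with $\bERS[\bPsts]$, yielding \eqref{EqSB1}.

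For the joint statement \eqref{EqSB2}, I would compare $\BM_{N,R}$ and $\HM_{N,R}$: both are built from the same global signed configuration $\Ccs$, but $\BM_{N,R}$ averages $\theta_{z_i}\cdot\Ccs$ only at the discrete grid centers $z_i$, while $\HM_{N,R}$ averages over all $z\in\Box'$. On the event $\{\BM_{N,R}\in B(\bPsts,\nu)\}$, the Dudley distance $d(\HM_{N,R},\BM_{N,R})$ is controlled by the oscillation of $z\mapsto\theta_z\cdot\Ccs$ on one grid cell, plus an $O(R/\sqrt N)$ boundary contribution from points within distance $R$ of $\partial\Box'$; combined with the exponential tightness of Lemma~\ref{lem:expotight} (controlling typical point densities) and the geometric weighting in \eqref{dconfig}, this oscillation becomes $o(1)$ once $N$ and $R$ are large, so $\HM_{N,R}\in B(\bPsts,\delta)$ on the same event.

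The main obstacle is that the tags are deterministic rather than i.i.d.\ uniform, preventing a direct invocation of Sanov's theorem in $\probas(\Box\times\configs)$; the mesoscopic block decomposition converts the problem into many independent local Sanov problems, each handled by the classical theorem. A further subtlety is the mismatch between the finite-window samples $P_i\subset\carr_R$ and the infinite-volume target $\bPstsx$: the geometric weighting in the Dudley distance \eqref{dconfig} allows this mismatch to vanish in the limit $R\ti$, which is precisely why the statement is formulated with an outer $\liminf_{R\ti}$ rather than a fixed-$R$ claim.
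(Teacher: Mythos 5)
Your proposal follows essentially the same route as the paper, which is deliberately terse: the paper applies the classical Sanov theorem to the i.i.d.\ blocks $\theta_{z_i}\cdot(\Ccp_i,\Ccm_i)$ to get a speed-$\mNR$ LDP with rate $\Ent[\cdot\mid\Poissons_R]$, sends $R\ti$, and then refers to \cite[Section 7]{LebSer} for the extension to tagged processes, the handling of the conditioning on total counts, and the passage from \eqref{EqSB1} to \eqref{EqSB2}. Your mesoscopic block decomposition to deal with the deterministic tags, the local-CLT observation that the conditioning event has only polynomial probability, and the Dudley-distance oscillation bound comparing $\BM_{N,R}$ with $\HM_{N,R}$ are precisely the ingredients that \cite[Section 7]{LebSer} supplies, so the proposal is a correct (and somewhat more explicit) rendering of the same argument.
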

\begin{proof} First let us forget about the condition on the number of points (i.e. we consider independent Poisson point processes) and about the tags (i.e. let us replace $\bPsts$ by a  signed point process $\Psts$ in $\probinv(\configs)$). Then for any fixed $R$ there holds a LDP for $\BM_{N,R}$ at speed $\mNR$ with rate function $\Ent[\cdot | \Poissons_{R}]$. This is a consequence of the classical Sanov theorem (see \cite[Section 6.2]{DemboZeitouni}) because the random variables $\theta_{z_i} \cdot (\Ccp_i, \Ccm_i)$ are i.i.d. Taking the limit $R \ti$ yields
$$
\lim_{R \ti} \lim_{\nu \t0} \liminf_{N \ti} \frac{1}{\mNR} \log \BM_{N,R} \left( B(\Psts, \nu) \right) \geq - \ERS[\Pst].
$$
We may extend this LDP to the context of tagged (signed) point processes by arguing as in \cite[Section 7]{LebSer}, where it is also shown that the condition on the number of points does not alter the LDP. This leads to \eqref{EqSB1}. The lower bound \eqref{EqSB2} follows from \eqref{EqSB1} by elementary manipulations as sketched in \cite[Section 7]{LebSer}.
\end{proof}

\paragraph{\textbf{Further conditioning on the points.}}
Let $\epsilon = (\log R)^{-3}$. For any $i \in \{1, \dots, \mNR\}$ we let $n_i$ be the number of points in $K_i$ and $n_{i,\rm{int}}$ be the number of points in $K_{i, \epsilon}$ (we add a $+$ or $-$ superscript in order to restrict ourselves to points with a positive or negative charge).

\begin{lem} \label{lem:SanovV2}
The conclusions of Lemma \ref{lem:Sanovapprox} hold after conditioning the random variables $\left( (\Ccp_1, \Ccm_1), \dots, (\Ccp_{\mNR}, \Ccm_{\mNR})\right)$ to satisfy the following additional conditions:
\begin{equation} \label{condSanov1}
\sum_{i=1}^{\mNR} \frac{n_i}{\epsilon R} \left| \log \frac{n_i}{\epsilon R}\right| \leq (\epsilon R)^{-1/4} N,
\end{equation}
\begin{equation} \label{condSanov2}
\sum_{i=1}^{\mNR} (n^{\pm}_i - n^{\pm}_{i,\rm{int}}) \log R \leq (\log R)^{-1/2} N.
\end{equation}
\end{lem}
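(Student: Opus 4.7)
The plan is to verify that the two additional conditions \eqref{condSanov1}--\eqref{condSanov2} are overwhelmingly likely under the law $\BM_{N,R}$, at an exponential rate in $N$ that dominates $\bERS[\bPsts]/\bar{R}^{2}$ once $R$ is chosen large enough; conditioning on them will then leave the lower bounds of Lemma \ref{lem:Sanovapprox} unchanged. Concretely, writing $A_{N,R}$ for the joint event that \eqref{condSanov1} and \eqref{condSanov2} (the latter for both signs) hold, I aim at
\[
\BM_{N,R}(A_{N,R}^{c})\leq e^{-c_{R}N},\qquad c_{R}\to+\infty \text{ as } R\ti.
\]
Since \eqref{EqSB1} gives $\BM_{N,R}(B(\bPsts,\nu))\geq \exp(-\mNR(\bERS[\bPsts]+o_{\nu}(1)))$ with $\mNR=N/\bar{R}^{2}$, for $R$ large enough we have $c_{R}\bar{R}^{2}>\bERS[\bPsts]$, so $\BM_{N,R}(A_{N,R}^{c})\ll \BM_{N,R}(B(\bPsts,\nu))$ as $N\ti$. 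This yields $\BM_{N,R}(B(\bPsts,\nu)\cap A_{N,R})\geq \tfrac{1}{2}\BM_{N,R}(B(\bPsts,\nu))$ for $N$ large, and dividing by $\BM_{N,R}(A_{N,R})\to 1$ preserves \eqref{EqSB1}. The same argument applied to the joint coupling with $\HM_{N,R}$ preserves \eqref{EqSB2}.

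To obtain the concentration estimate, I would start by computing the expected values of the left-hand sides of \eqref{condSanov1} and \eqref{condSanov2} under the product Poisson law. Each $n_{i}^{\pm}$ is $\mathrm{Poisson}(R^{2})$ independently, so $\E\bigl[\tfrac{n_{i}}{\epsilon R}|\log \tfrac{n_{i}}{\epsilon R}|\bigr]$ is of order $\tfrac{R}{\epsilon}\log \tfrac{R}{\epsilon}$; summing over $i$, the expectation in \eqref{condSanov1} is $O\bigl(\tfrac{N}{\epsilon R}\log \tfrac{R}{\epsilon}\bigr)$, to be compared with the threshold $N(\epsilon R)^{-1/4}$. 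With $\epsilon=(\log R)^{-3}$ one has $\epsilon R=R(\log R)^{-3}\ti$, and the ratio $\log(R/\epsilon)\,(\epsilon R)^{-3/4}$ vanishes as $R\ti$. Similarly $\E[n_{i}^{\pm}-n_{i,\mathrm{int}}^{\pm}]=|K_{i}\setminus K_{i,\epsilon}|\leq 2\epsilon R^{2}$, so the expectation of the sum in \eqref{condSanov2} is $O(\epsilon N\log R)=O(N(\log R)^{-2})$, much smaller than the threshold $N(\log R)^{-1/2}$. In both cases the deterministic threshold exceeds the mean by a diverging factor as $R\ti$, leaving ample room for a concentration estimate.

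A Bernstein-type inequality for $\mNR$ i.i.d.\ sub-exponential random variables then yields $\BM_{N,R}(A_{N,R}^{c})\leq e^{-c_{R}N}$ with $c_{R}\to+\infty$; the variables $X_{i}:=\tfrac{n_{i}}{\epsilon R}|\log \tfrac{n_{i}}{\epsilon R}|$ and $Y_{i}^{\pm}:=n_{i}^{\pm}-n_{i,\mathrm{int}}^{\pm}$ are polynomial functions of Poisson counts and therefore enjoy exponential moments up to a scale comparable to their mean. The a priori conditioning \eqref{conditionpoints} on the total numbers of points only introduces a polynomial-in-$N$ correction factor (by a local-CLT or Stirling estimate) and does not affect the exponential rate. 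The only delicate point is the application of Bernstein to the unbounded variable $X_{i}$: one may either truncate $n_{i}$ at, say, $R^{3}$ and discard the super-exponentially small Poisson mass on $\{n_{i}>R^{3}\}$ before applying a bounded-range Bernstein inequality, or compute $\E[e^{\lambda X_{i}}]$ directly from the Poisson law. Both routes are routine and closely parallel the entropy and volume estimates already carried out in \cite[Section 7]{LebSer}; the signed structure introduces no new difficulty since the two signs are treated independently.
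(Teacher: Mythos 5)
Your proposal is correct and follows essentially the same route as the paper: exponential (Chernoff/Markov) estimates on the Poissonized $\sum_i X_i$ and $\sum_i Y_i^\pm$ show that the bad events have probability $\leq e^{-c_R N}$ with $c_R\to\infty$ as $R\to\infty$, and since the Sanov-type lower bound for $\BM_{N,R}(B(\bPsts,\nu))$ is only $\exp(-N\bERS[\bPsts]/\bar{R}^2+o(N))$, conditioning is harmless once $R$ is large. Your replacement of the paper's direct Markov exponential inequality by a "Bernstein-type" bound is cosmetic (the paper simply bounds $\log\Esp[e^{\sqrt{\epsilon R}X_i}]$ and applies Markov at that scale), and your explicit remark that the conditioning \eqref{conditionpoints} only costs a polynomial factor fills in a detail that the paper folds into its phrase "follows from standard large deviations estimates."
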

\begin{proof}
It is enough to show that both events occur with probability $1- \exp(-NT)$ with $T$ tending to $\infty$ as $R \ti$, when throwing points as a signed Poisson point process of intensity $1$ in $\bigcup_{i =1}^{\mNR} K_i$. The result then follows from standard large deviations estimates. Indeed, to prove that we may assume \eqref{condSanov1} without changing the volume of microstates, we may observe that the exponential moments of 
\[
\Cs \mapsto \sqrt{\epsilon R}  \frac{1}{\epsilon R} n \left| \log \frac{n}{\epsilon R}\right|
\]
under the law of a Poisson point process in $\carr_R$ are bounded by $O(R^2)$ as $R \ti$. In particular, 
\begin{multline*}
\log \Poisson^s \left(\sum_{i=1}^{\mNR} \frac{n_i}{\epsilon  R} \left| \log \frac{n_i}{\epsilon R}\right| > (\epsilon R)^{-1/4} N \right) = \log \Poisson^s \left(\sum_{i=1}^{\mNR} \sqrt{\epsilon R} \frac{n_i}{\epsilon R} \left| \log \frac{n_i}{\epsilon R}\right| >(\epsilon R)^{1/4} N \right) \\
\leq - (\epsilon R)^{1/4} N + \log \Esp_{\Poisson^s} \exp\left(\sum_{i=1}^{\mNR} \sqrt{\epsilon R} \frac{n_i}{\epsilon R} \left| \log \frac{n_i}{\epsilon R}\right|\right) \\
\leq - (\epsilon R)^{1/4} N + \frac{N}{R^2} \log \Esp_{\Poisson^s_{|\carr_R}} \exp \sqrt{\epsilon R}  \frac{1}{\epsilon R} n \left| \log \frac{n}{\epsilon R}\right| \leq - (\epsilon R)^{1/4} N  + O(N).
\end{multline*}
In particular \eqref{condSanov1} indeed occurs with probability of order $1- \exp(-(\epsilon R)^{1/4} N)$. 

To prove that we may assume \eqref{condSanov2} we may argue similarly, by first observing that the exponential moments of
\[
\C \mapsto  (n_i - n_{i,\rm{int}}) (\log R)^{2}
\]
under a standard Poisson point process (of intensity $1$) in $\carr_R$ are bounded by $O(R^2)$ as $R \ti$. Indeed the quantity $n_i - n_{i,\rm{int}}$ is nothing but the number of points in the thin layer $\carr_R \backslash \carr_{R(1-\epsilon)}$ which has an area of order $R^2 \epsilon = R^2 (\log R)^{-3}$. We then deduce that
\[
\log \Poisson \left(\sum_{i=1}^{\mNR} (n^{\pm}_i - n^{\pm}_{i,\rm{int}}) \log R > (\log R)^{-1/2} N\right) \leq -C(\log R)^{1/4} N, 
\]
which implies that \eqref{condSanov2} indeed occurs with large enough probability.
\end{proof}

\subsubsection{Screening microstates}
\begin{lem} \label{lem:scrMicro} Let $\bPsts \in \probas_{s,1}(\Box \times \configs)$ and $\delta > 0$ be fixed. Let $N, R, \bar{R}$ be as above. For any $\nu > 0$ there exists a set $\Amod$ of signed point configurations in $\Box'$ of the form $\Cmods = \sum_{i=1}^{\mNR} (\Cmodp_i, \Cmodm_i)$ where $(\Cmodp_i, \Cmodm_i)$ is a signed point configuration in $K_i$, satisfying
\begin{equation} \label{bonnombrepoints}
\sum_{i=1}^{\mNR} \Cmodp_i(\Box') = \sum_{i=1}^{\mNR} \Cmodm_i(\Box') = N
\end{equation}
and such that the following holds
\begin{enumerate}
\item If $R$ is large enough, $\nu$ small enough and $N$ large enough we have for any $\Cmods \in \Amod$
\begin{equation} \label{moycontinue}
\frac{1}{N} \int_{\Box'} \delta_{(N^{-1/2} z, \theta_z \cdot \Cmods)} dz \in B(\bPsts, \frac{3\delta}{4}).
\end{equation}
\item For any $\Cmods$ in $\Amod$, there exists an electric field $\Emod$ satisfying
\begin{enumerate}
\item $\Emod$ is compatible with $\Cmods$
\begin{equation}\label{Ecompatible}
\left\lbrace\begin{array}{ll}
\div(\Emod) = 2\pi \Cmods  & \text{ in } \Box' \\
 \Emod \cdot \vec{n}= 0 & \text{on} \ \partial \Box'
\end{array}.\right.
\end{equation}
\item The energy of $\Emod$ is bounded by 
\begin{equation} \label{EnergieEmod}
\limsup_{R \ti, \nu \t0, N \ti} \frac{1}{2\pi} \int_{\R^2} |\Emod_r|^2 \leq \bdW(\bPsts) + \delta,
\end{equation}
uniformly on $\Cmods \in \Amod$.
\end{enumerate}
\item There is a good volume of such microstates
\begin{equation}
\label{goodvolume}\liminf_{R \ti, \nu \t0, N \ti}  \frac{1}{N} \log \frac{\Leb^{2N}}{|\Box'|^{2N}} \left( \Amod \right) \geq - \bERS[\bPsts].
\end{equation}
\end{enumerate}
\end{lem}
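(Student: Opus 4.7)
The plan is to combine the Sanov-type construction of random microstates from Lemma \ref{lem:SanovV2} with a square-by-square application of the screening procedure of Proposition \ref{lem:screening}, using as input inside each square the (almost) optimal electric field $\G_R$ introduced before Lemma \ref{lem:USCFREP}. First, I would fix $\epsilon = (\log R)^{-3}$ and start from the Poissonian microstates $(\Ccp_i, \Ccm_i)_{i=1,\dots,\mNR}$ of Lemma \ref{lem:SanovV2}, further conditioned on the event that the tagged empirical field is in $B(\bPsts, \nu/2)$; by \eqref{EqSB1}--\eqref{EqSB2}, this set $\A^{\rm{Pois}}$ still has logarithmic volume at least $-\bERS[\bPsts] + o_{R,\nu,N}(1)$.

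Second, for each such microstate, I would apply Proposition \ref{lem:screening} inside each $K_i$ with the electric field $\G_R(\Ccs_i)$ (so that the quantity $M$ of \eqref{def:M} equals $\FR(\Ccs_i)$) to obtain a family $\Phiscr_{R,\ep}(\Ccs_i)$ of modified configurations $(\Cmodp_i,\Cmodm_i)$ coinciding with $\Ccs_i$ on $K_{i,\epsilon}$, together with compatible screened electric fields $\Escr_i$ on $K_i$ satisfying $\Escr_i \cdot \vec{n}=0$ on $\partial K_i$. Because of the Neumann boundary condition, pasting these fields produces a globally divergence-compatible vector field on $\cup K_i$; I would then fill the buffer $\bar K_i \setminus K_i$ and any region near $\partial \Box'$ with the dipole construction of Step~3 of the proof of Proposition \ref{lem:screening}, which costs $O(\epsilon R^2)$ extra energy per square, and finally adjust the total counts in a dedicated dipole slot so as to enforce \eqref{bonnombrepoints} exactly (this modifies at most $O(1)$ points per square, by \eqref{contnbpoints}).

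Third, I would verify the three conclusions. For \eqref{moycontinue}, the modifications live in $\bar K_i \setminus K_{i,\epsilon}$, a layer of relative width $\epsilon + (\log R)^{-1/10}$ carrying a controlled number of points by \eqref{contnbpoints} and \eqref{condSanov2}, so the empirical field is moved by $o_{R}(1) + o_\nu(1)$. For \eqref{EnergieEmod}, summing \eqref{EnergyFR} over $i$ and rescaling back by $\sqrt N$, one gets
\[
\frac{1}{2\pi}\int |\Emod_r|^2 \leq \frac{1}{\mNR}\sum_i \FR(\Ccs_i) + \frac{C}{\mNR}\sum_i\Bigl(\tfrac{R}{\epsilon}\FR(\Ccs_i) + \epsilon R^2 + \tfrac{n_i}{\epsilon R}\bigl|\log \tfrac{n_i}{\epsilon R}\bigr|\Bigr),
\]
where the second sum is $o(1)$ thanks to \eqref{condSanov1}, the choice $\epsilon=(\log R)^{-3}$, and the bound on the main term by $\bdW(\bPsts)+o(1)$. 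For \eqref{goodvolume}, one applies \eqref{preservVolume} square by square: the losses sum to $C\sum_i\bigl((n_i-n_{i,\mathrm{int}})\log R + R + \tfrac{R}{\ep}\FR(\Ccs_i) + \ep R^2 + \tfrac{n_i}{\ep R}\log \tfrac{n_i}{\ep R}\bigr)$, all of which are $o(N)$ under the conditions of Lemma \ref{lem:SanovV2}, and we combine with the lower bound from Lemma \ref{lem:SanovV2}.

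The main obstacle I anticipate is controlling the leading term $\frac{1}{\mNR}\sum_i \FR(\Ccs_i)$ and showing it converges to $\Esp_{\bPsts}[\FR(\Cs)] \leq \bdW(\bPsts)$ (using \eqref{FRbeatsbdW}) along the selected microstates. Since $\FR$ is only upper semi-continuous (Lemma \ref{lem:USCFREP}) and unbounded, closeness of $\frac{1}{\mNR}\sum \delta_{\Ccs_i}$ to $\bPsts$ does not directly give convergence of averages of $\FR$; this must be handled by a truncation $\FR \wedge M$ (continuous enough to pass to the limit in the empirical field topology), plus a uniform tail control on $\FR$ coming from the conditioning in Lemma \ref{lem:SanovV2} and from the elementary bound $\FR(\Cs) \leq C(n_i/R^2 + n_i^2/R^2)$ on the local electric field. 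Properly orchestrating the triple limit $R \to \infty$, $\nu \to 0$, $N \to \infty$ so that each quantitative error ($\epsilon R^2$, $R/\epsilon$, $(\epsilon R)^{-1/4}$, $(\log R)^{-1/2}$) is $o(N)$ and at the same time allows $\FR$ to approach $\bdW$ is the tightest bookkeeping step.
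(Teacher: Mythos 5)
Your plan is essentially the paper's proof: generate Poissonian microstates in the squares $K_i$ via the conditioned Sanov construction (Lemmas~\ref{lem:Sanovapprox}--\ref{lem:SanovV2}), screen each square using Proposition~\ref{lem:screening} applied to the optimal field $\G_R(\Cabss_i)$ (so $M=\FR(\Cabss_i)$), paste the Neumann fields, fill the buffers $\bar K_i\setminus K_i$ with dipoles to enforce \eqref{bonnombrepoints}, and then verify \eqref{moycontinue}, \eqref{EnergieEmod}, \eqref{goodvolume} by summing the local errors controlled by \eqref{condSanov1}--\eqref{condSanov2} and the choice $\epsilon=(\log R)^{-3}$. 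That is exactly the route the paper takes.

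The one point worth flagging: your concern about bounding $\frac{1}{\mNR}\sum_i \FR(\Cabss_i)$ is legitimate, and the paper's one-line appeal to ``discrete average close to $\bPsts$ plus $\FR$ upper semi-continuous'' in \eqref{FRmodvsbdW} does not by itself give the inequality, since $\FR$ is nonnegative and unbounded above. Your truncation scheme is the right mechanism, but note the direction: $\FR\wedge M$ is upper semi-continuous and bounded above, so weak convergence of the (tagged) discrete empirical field gives $\limsup_{\nu\to 0,N\to\infty}\frac{1}{\mNR}\sum(\FR(\Cabss_i)\wedge M)\leq \int (\FR\wedge M)\,d\bPsts\leq\int \FR\,d\bPsts$; what remains is a uniform (in $N$, $\nu$) tail bound $\frac{1}{\mNR}\sum(\FR(\Cabss_i)-M)^+ \to 0$ as $M\to\infty$, which must come from an $n_i$-dependent polynomial bound on $\FR$ combined with the moment control from the Sanov conditioning. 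Two small imprecisions you should correct when writing this up: $\FR\wedge M$ is only upper semi-continuous, not continuous, and the polynomial bound you cite should carry a $\log R$ factor (coming from the logarithmic interaction on $\carr_R$). Also, the number of points changed by the screening in one square is not $O(1)$ but rather $C(MR/\epsilon+1+\tfrac{n}{\epsilon R}|\log\tfrac{n}{\epsilon R}|)$ as in \eqref{contnbpoints}; only the \emph{sum} over $i$ is $o(N)$, and that is what you need for the dipole filling of the layers $\bar K_i\setminus K_i$ and for \eqref{bonnombrepoints}. None of this changes the structure of the argument.
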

\begin{proof}
For any $\nu > 0$, let us write the conditions for a signed point configuration $\Cs := \sum_{i=1}^{\mNR} \Cs_i$
\begin{equation} \label{moy1}
\frac{1}{\Box'} \int_{\Box'} \delta_{(N^{-1/2}z, \theta_{z} \cdot \Cs)} dx \in  B(\bPsts, \delta)
\end{equation}
 \begin{equation}\label{moy2}
\frac{1}{\mNR} \sum_{i=1}^{\mNR} \delta_{(N^{-1/2} z_i, \theta_{z_i} \cdot \Cs_i)}\in B(\bPsts, \nu).
\end{equation}
By Lemma \ref{lem:Sanovapprox} we know that given $\delta > 0$, for any $N,R,\nu$  (such that $N/R \in \N$) there exists a set $\Aabs$ (“abs” as “abstract” because we generate them abstractly - and not by hand - using Sanov's theorem as explained in the previous section) of configurations $\Cabss = \sum_{i=1}^{\mNR} \Cabss_i$ with $N$ points, where $\Cabss_i$ is a point configuration in the square $K_i$, such that for any $\Cabss \in \Aabs$ \eqref{moy1} and \eqref{moy2} hold and satisfying
\begin{equation} \label{boundbelvolreg}
\liminf_{R\ti, \nu \to 0, N\ti} 
\frac{1}{2 N} \log \frac{\Leb^{2 N}}{|\Box_{N,R}|^{2 N}} ( \Aabs ) \ge -\int_{\Box} \ERS[\bPstsx] dx.
\end{equation}
To see how Lemma \ref{lem:Sanovapprox} yields \eqref{boundbelvolreg} it suffices to note that the law of the $2 N$-points signed point process $\Ccs$ of Lemma \ref{lem:Sanovapprox} coincides with the law of the point process induced by the $2 N$-th  product of the normalized Lebesgue measure on $\Box'$, and with this observation \eqref{EqSB2} gives \eqref{boundbelvolreg}.

We let $\Amod$ be the set of configurations obtained after applying the screening procedure described in Section \ref{sec:screening} with the parameter $\epsilon$ chosen as 
\begin{equation} \label{choixepsilon}
\epsilon = \frac{1}{\log^3 R}.
\end{equation}
More precisely, for each $\Cabss$ in $\Aabs$ we decompose $\Cabss$ as $\sum_{i=1}^{\mNR} \Cabss_i$ where $\Cabss_i$ is a signed point configuration in $K_i$, and for any $i= 1, \dots, \mNR$ we let
$\Phiscr_i(\Cabss)$ be the set of signed point configurations obtained after screening the configuration $\Cabss_i$ in $K_i$. Combining \eqref{contnbpoints} with \eqref{conditionpoints} and \eqref{condSanov1}, \eqref{condSanov2} we see that any signed point configuration $\Cmods$ has a total number of points with positive charge between $N - o(N)$ and $N$. We may then complete the point configurations by just adding squares with dipoles in the remaining layers  $\bigcup_{i = 1, \dots, \mNR} (\bar{K}_i \backslash K_i)$, in such a way that \eqref{bonnombrepoints} is satisfied.

 We then let $\Phimod(\Cabss)$ be the set of signed point configurations in $\Box'$ obtained as the cartesian product of the $\Phiscr_i(\Cabss)$
$$
\Phimod(\Cabss) := \prod_{i=1}^{\mNR} \Phiscr_i(\Cabss)
$$
and $\Amod$ (“mod” as “modified”) is defined as the image of $\Aabs$ by $\Phimod$. Since $\bPsts$ and $\delta$ are given,  the set $\Amod$ depends on the parameters $N,R,\epsilon, \nu$.

Let us now check that $\Amod$ satisfies the conclusions of Lemma \ref{lem:scrMicro}.

\textbf{Distance to $\bPsts$.}
To prove the first item we claim that the screening procedure preserves the closeness of  the continuous average to $\bPsts$ as expressed in \eqref{moy1} (however in general it does not preserve that of the discrete average \eqref{moy2}). The proof of such a claim was already given (in a slightly different setting) in \cite[Section 6.3.2]{LebSer}.

Since the topology on $\config$ is \textit{local}, when comparing two random signed point processes we can localize the configurations to a square of fixed size $R_0$, up to a uniform error which goes to $0$ as $R_0 \ti$. Now, the main point is that the screening procedure only modifies the configurations in a thin layer of size $\epsilon R$, (where $\epsilon$ has been chosen in \eqref{choixepsilon}) in each square $K_i$. In particular, when $R$ is large (hence $\epsilon$ is small), the vast majority of the translates of a given square $R_0$ by $z \in \Box'$ does not intersect any such thin layer, so that the configurations in them have not been modified when passing from $\Cabss$ to $\Cmods$. Fixing $R_0$ large enough and sending $R \ti$ we may thus bound the distance between the continuous average for $\Cabss$ and the one for $\Cmods$ by 
at most $\delta/4$.

\textbf{Energy.}
First we associate to any $\Cmods \in \Amod$ a screened electric field $\Emod$. We know by definition of the map $\Phiscr$ that for $\Cmods \in \Amod$, for any $i =1 \dots \mNR$ there exists an electric field $\Emod_i$ such that 
$$\left\lbrace\begin{array}{ll}  \div(\Emod_i)= 2\pi \Cmods_i & \text{in} \ K_i \\
 \Emod_i \cdot \vec{n} = 0 & \text{ on  }  \partial K_i
 \end{array}\right.$$ An electric field $\bar E^{\mathrm{mod}}_i$ satisfying  the analogous relation on $\bar K_i \backslash K_i$ can also easily be constructed, and its energy can be bounded be the number of dipoles added in that layer.  
Setting $\Emod := \sum_i \Emod_i \indic_{K_i} +\bar E^{\mathrm{mod}}_i \indic_{\bar K_i \backslash K_i}$ provides an electric field satisfying \eqref{Ecompatible}. We now turn to bounding its energy. Let $\Cabss \in \Aabs$ be such that $\Cmods$ is obtained from $\Cabss$ after screening. For any $i =1, \dots, \mNR$, the energy of $\Emod_i$ is bounded as in \eqref{EnergyFR} in terms of the “best energy” associated to $\Cabss_i$. We thus have, by summing the contributions of each square $K_i$  and  $\bar K_i \backslash K_i$ 
\begin{multline} \label{estimEnergyEmod}
\int_{\Box'} |\Emod_r|^2 = \sum_{i=1}^{\mNR} \int_{K_i} |\Emod_{i,r}|^2  + \int_{\bar K_i \backslash K_i} |\bar E^{\mathrm{mod}}_{i,r}|^2  \leq \sum_{i=1}^{\mNR}  \FR(\Cabss_i) \\ + C \sum_{i=1}^{\mNR} \frac{R}{\ep}\FR(\Cabss_i) + \sum_{i=1}^{\mNR} \frac{n_i}{\ep R} \left|\log \frac{n_i}{\ep R} \right|+ N \ep+o(N)
\end{multline} 
where $n_i$ is the number of points $|\Cabss_i|(K_i)$.

We now use the fact that the discrete average of the configurations in the square $K_i$ is close to $\bPsts$ and that $\FR$ is upper semi-continuous (see Lemma \ref{lem:USCFREP}). We thus have
\begin{equation} \label{FRmodvsbdW}
\limsup_{R \ti, \nu \t0} \frac{1}{\mNR} \sum_{i=1}^{\mNR}  \FR(\Cabss_i) \leq \int \FR(\Cs) d\bPsts \leq \bdW(\bPsts)
\end{equation}
where the last inequality follows from \eqref{FRbeatsbdW}. Combining \eqref{estimEnergyEmod}, \eqref{FRmodvsbdW} and \eqref{condSanov1} proves \eqref{EnergieEmod} .

\textbf{Volume.}
We now wish to bound the volume loss between the set $\Aabs$ of microstates generated “abstractly”  and the set $\Amod$  of configurations obtained after modification by the screening procedure. From the conclusions Proposition \ref{lem:screening} we may estimate the cost (in logarithmic volume) of screening the signed point configurations. According to \eqref{preservVolume} the loss can be controlled by
\begin{equation*}
\int_{\Cabss \in \Aabs} C \left(\sum_{i=1}^{\mNR} \left((n_i - n_{i,\rm{int}}) \log R + R\right) + \frac{R}{\ep} \sum_{i=1}^{\mNR} \left(\FR(\Cabss_i) +\ep R^2\right)+ \sum_{i=1}^{\mNR} \frac{n_i}{\ep R} \left|\log \frac{n_i}{\ep R}\right|\right).
\end{equation*}
We have $m_{N,R} R = o(N)$, and $\lim_{R, N \ti} \frac{1}{N} m_{N,R} \epsilon R^2 = 0$ due to the choice \eqref{choixepsilon}. 

We also have, from \eqref{condSanov1}
\[
\lim_{R \ti} \lim_{N \ti} \frac{1}{N} \sum_{i=1}^{\mNR} \frac{n_i}{\ep R} \left|\log \frac{n_i}{\ep R} \right|= 0.
\]

Since \eqref{FRmodvsbdW} holds we obtain
$$
\lim_{R \ti} \lim_{\nu \t0} \lim_{N \ti} \frac{1}{N} \frac{R}{\epsilon} \sum_{i=1}^{m_{N,R}} \FR(\Cabss_i) = 0.
$$
Finally from \eqref{condSanov2} we get that
\[
\sum_{i=1}^{\mNR} (n_i - n_{i,\rm{int}}) \log R = o(N)
\]
where $n_i = |\Cabss_i|(\carr_R)$ and $n_{i,\rm{int}} = |\Cabss_i|(\carr_{R(1-\ep)})$, which concludes the proof of \eqref{goodvolume}.

\end{proof}

Combining Lemmas \ref{lem:USCD} and \ref{lem:scrMicro} yields Proposition \ref{prop:LDPLB}.

\section{Next order large deviations: upper bound}\label{secUB}
In this section we prove the  Large Deviations upper bound stated in Proposition
\ref{prop:LDPUB}.

\subsection{Positive part of the energy}
First we observe that the positive part of the energy is semi-continuous in the suitable direction.
\begin{lem} \label{lem:LSCW}
Let $\bPsts \in \probas_{\rm{inv},1}(\Lambda \times \configs)$. 
For any sequence $\{(\XN,\YN)\}_N$ such that $\iN(\XN, \YN) \in B(\bPsts, \ep)$, we have
$$ \liminf_{N\to \infty} \frac{1}{2\pi N} \int_{\R^2} |\nab \VpN|^2 \ge \bdW(\bPsts) - o_{\ep}(1).
$$
\end{lem}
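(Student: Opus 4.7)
The plan is to associate to each $N$ a tagged electric-field law $\bPelec_N$, extract a subsequential weak limit $\bPelec$, and combine lower semi-continuity of the $L^2$ norm with stationarity of the disintegration slices of $\bPelec$ to recover $\bdW(\bPsts)$. First I would set
\[
\bPelec_N := \int_\Lambda \delta_{(x,\, \theta_{\sqrt{N}x} \cdot \nabla V'_N)} \, dx,
\]
which is a tagged electric-field law whose $\conf$-projection is $\bPsN$. By the change of variables $x' = \sqrt{N}x$ and Fubini,
\[
\int_\Lambda \frac{1}{R^2} \int_{C_R} |\theta_{\sqrt{N}x} \nabla V'_N (z)|^2 \, dz\, dx \,=\, \frac{1}{N R^2} \int_{\R^2} |\nabla V'_N(w)|^2 \, |C_R(w) \cap \Box'| \, dw \,\leq\, \frac{1}{N} \int_{\R^2} |\nabla V'_N|^2,
\]
since $|C_R(w) \cap \Box'| \leq R^2$. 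The smearing radius $\sqrt{N} r(x_i)$ used to define $\nabla V'_N$ agrees with the infinite-configuration nearest-neighbor truncation of Section~\ref{sec:RenormEnerg} on typical configurations (the discrepancy on isolated points contributes $o_N(1)$ per unit volume), so the left-hand side above may be identified, up to an error vanishing as $N \to \infty$, with $\Esp_{\bPelec_N}[\tfrac{1}{R^2} \int_{C_R} |E_r|^2]$.

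Next I would establish tightness of $\{\bPelec_N\}_N$ in $\probas(\Lambda \times \Lploc(\R^2,\R^2))$. Lemma~\ref{lem:L2Lp} bounds $\|E\|_{L^p(C_R)}$ in terms of $\|E_r\|_{L^2(C_R)}$ and the number of points in $C_{R+1}$; the former is controlled by the expectation just introduced (which we may assume finite, otherwise the lemma is vacuous) and the latter by the a~priori estimates of Lemma~\ref{lem:aprioribounds}. Passing to a subsequence, $\bPelec_N$ converges weakly to some $\bPelec$ while $\iN(\XN, \YN)$ converges to some $\bQ \in \overline{B(\bPsts, \epsilon)}$. A tagged version of Lemma~\ref{lem:compatibility2} then ensures that the $\conf$-projection of $\bPelec$ coincides with the $\Pru$-projection of $\bQ$; in particular $\bPelec$ is concentrated on $\Elec$ and is compatible with $\bQ$.

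The main technical obstacle I expect is showing that the disintegration slices $\bPelec^x$ are stationary for Lebesgue-a.e.~$x \in \Lambda$. The argument, standard in this setting and parallel to the analogous step of \cite{LebSer}, is that for any microscopic vector $y$ the translate $\theta_y$ of the blown-up field at macroscopic tag $x$ equals the blown-up field at the tag $x + y/\sqrt{N}$; as $N \to \infty$ this macroscopic shift vanishes against continuous test functions, and $\bPelec^x$ inherits $\theta_y$-invariance in the limit. One then has to combine this with the compatibility statement above carefully so that stationarity passes to the compatible electric-field law.

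To conclude, I would apply the lower semi-continuity \eqref{lsciENr1} with a smooth cutoff $\chi \leq \mathbf{1}_{C_R}$ to pass the $\liminf$ through,
\[
\liminf_{N \to \infty} \Esp_{\bPelec_N}\!\left[\frac{1}{R^2}\int_{C_R} \chi |E_r|^2\right] \,\geq\, \Esp_{\bPelec}\!\left[\frac{1}{R^2}\int_{C_R} \chi |E_r|^2\right].
\]
By stationarity of $\bPelec^x$ and the pointwise ergodic theorem, the right-hand side tends to $\int_\Lambda \Esp_{\bPelec^x}[\cW(E)]\, dx$ as $\chi \nearrow \mathbf{1}_{C_R}$ and $R \to \infty$. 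Since $\cW(E) \geq \dW(\conf(E))$ and $\bPelec$ is compatible with $\bQ$, this last quantity is bounded below by $\bdW(\bQ)$, and the lower semi-continuity of $\bdW$ (Lemma~\ref{lem:LSCIWo}) finally yields $\bdW(\bQ) \geq \bdW(\bPsts) - o_\epsilon(1)$ as $\epsilon \to 0$, giving the announced lower bound.
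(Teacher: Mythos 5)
Your proposal follows essentially the same route as the paper: form the tagged empirical electric-field law from the (untruncated) blown-up potential, deduce tightness from the energy bound plus Lemma~\ref{lem:L2Lp}, pass to a subsequential limit, invoke Lemma~\ref{lem:compatibility2} to transfer compatibility and lower semi-continuity, then use stationarity of the limit together with the definition of $\bdW$ and Lemma~\ref{lem:LSCIWo}. One small slip to fix when writing this up: your Fubini display is stated with $|\theta_{\sqrt{N}x}\nabla V'_N|^2$ for the \emph{untruncated} field (which is not in $L^2$), whereas it should read $|(\theta_{\sqrt{N}x}\nabla V'_N)_r|^2 = |\theta_{\sqrt{N}x}\nabla V'_{N,r}|^2$, after which the identity $(\nabla V'_N)_r = \nabla V'_{N,r}$ makes the comparison with $\Esp_{\bPelec_N}[\frac{1}{R^2}\int_{C_R}|E_r|^2]$ exact (and the remark about an $o_N(1)$ discrepancy on isolated points is unnecessary).
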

\begin{proof}

Define $ \bar{P}_{(\XN,\YN)}^{\mathrm{elec}}$ as
\begin{equation}
 \bar{P}_{(\XN,\YN)}^{\mathrm{elec}}:= \displaystyle \int_{\Box} \delta_{(x,\VN (\sqrt{N} x+ \cdot) )} dx,
\end{equation}
i.e. $\bar{P}_{(\XN,\YN)}^{\mathrm{elec}}$ is the push-forward of the Lebesgue measure on $\Box$ by $x \mapsto \VN ( \sqrt{N} x+ \cdot)$, where $\VN$ was defined  in \eqref{defvn}. 
Assume that $\int_{\R^2} |\nab \VpN|^2\le C N$ (otherwise there is nothing to prove). For any $m > 0$ we have
\begin{equation} \label{controlegeneral}
\int_{\Box} \frac{1}{|C_m|} \int_{C_m} |E_r|^2 d\bar{P}_{(\XN,\YN)}^{\mathrm{elec}}(x, E) \leq \frac{1}{N}\int_{\R^2} |\nab \VpN|^2.
\end{equation}
The bound \eqref{controlegeneral} implies that the push-forward of $\bar{P}_{(\XN,\YN)}^{\mathrm{elec}}$ by $(z,E) \mapsto (z,E_r)$ is tight in $\probas(\Box \times L^2_{\rm{loc}}(\R^2, \R^2))$. Using Lemma \ref{lem:L2Lp} we also get the tightness of $\bar{P}_{(\XN,\YN)}^{\mathrm{elec}}$ in $\probas(\Box \times \Lploc(\R^2, \R^2))$. The associated random tagged signed point process $\bar{P}_{(\XN,\YN)}$ (i.e. the push-forward of $\bar{P}_{(\XN,\YN)}^{\mathrm{elec}}$ by $(z,E) \mapsto (z, \mathrm{Conf}(E))$ is also tight in $\probas(\Box \times \configs)$, arguing as in Lemma \ref{lem:expotight}. Up to 
a
subsequence extraction, we may thus find $\bar{P}_0^{\rm{elec}} \in \probas(\Box \times \Lploc(\R^2, \R^2))$ and $P_0 \in \probas(\Box \times \configs)$ such that
\begin{enumerate}
\item $\bar{P}_{(\XN,\YN)}^{\mathrm{elec}}$ converges to $\bar{P}_0^{\rm{elec}}$ as $N \ti$
\item $\bar{P}_{(\XN,\YN)}$ converges to $\bar{P}_0$ as $N \ti$.
\end{enumerate}
It is not hard to check that the first marginal of $\bar{P}_0$ is the Lebesgue measure on $\Box$, that its second marginal is stationary, and that $\bPsts_0 \in B(\bPsts, 2\ep)$. 

Using Lemma \ref{lem:compatibility2},
 we obtain from \eqref{controlegeneral} that for any $m > 0$
\[
\int_{\Box} \frac{1}{|C_m|} \int_{C_m} |E_r|^2 d\bar{P}_{0}^{\mathrm{elec}}(x, E) \leq \liminf_{N \ti} \frac{1}{N}\int_{\R^2} |\nab \VpN|^2.
\]
In particular, letting $m \ti$ and using the definition of $\bdW$ we obtain 
\[
\bdW(\bar{P}_0) \leq \liminf_{N \ti} \frac{1}{N} \int_{\R^2} |\nab \VpN|^2.
\]
We conclude by letting $\epsilon \t0$ and using the lower semi-continuity of $\bdW$ among random stationary tagged processes, as stated in Lemma \ref{lem:LSCIWo}.
\end{proof}

\subsection{Bound on the nearest neighbor contributions} \label{sec:boundNN}
We are now left to bound from above 
$$\int_{\Lambda^{2N}\cap i_N^{-1}(B(\bPsts,\ep))} e^{-\frac{\beta}{2}\sum_{i=1}^N \log r(x_i')+\log r(y_i')} d\XN d\YN.$$
For $0 < \tau <1$ we will distinguish between the points whose nearest neighbor is at distance $\geq \tau$ and those with a very close neighbor, at distance $\leq \tau$. We thus write
\begin{multline} \label{prochepasproche}
\sum_{i=1}^N \log r(x_i')+\log r(y_i') = \sum_{i=1}^N \log (r(x_i') \vee \tau) + \log (r(y'_i) \vee \tau) \\
+ \sum_{i=1}^N \log (\frac{r(x_i')}{\tau}\wedge 1) + \log (\frac{r(y_i')}{\tau}  \wedge 1).
\end{multline}

\paragraph{\textbf{Points at distance $\geq \tau$.}}
The contributions due to the interactions of points at distance $\geq \tau$ is continuous, as expressed by the following 
\begin{lem} \label{lem:Cgeqtau}
Let $\bPsts \in \probas_{\rm{inv},1}(\Lambda \times \configs)$. 
For any sequence $\{\XN, \YN\}_N$ such that $\iN(\XN, \YN) \in B(\bPsts, \epsilon)$ and such that $\Nn(\Cs, \carr_1)$ is uniformly integrable against $d\bPsN$ as $N \ti$, we have
$$\liminf_{N \ti} - \sum_{i=1}^N \log (r(x_i') \vee \tau) + \log (r(y'_i) \vee \tau) \geq - \Detat(\bPsts) + o_{\epsilon}(1).$$
\end{lem}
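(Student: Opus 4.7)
The strategy is to recognize the sum as (essentially) the integral of a local functional against the tagged empirical field $\bPsN=\iN(\XN,\YN)$, and then to invoke the near-convergence $\bPsN\in B(\bPsts,\epsilon)$ together with the uniform integrability hypothesis in order to pass to the limit. First I fix a nonnegative smooth test function $\chi$ on $\R^2$ with compact support in some $C_R$ and $\int_{\R^2}\chi=1$. By the definition of $\bPsN$ as $\int_\Box \delta_{(x,\theta_{\sqrt N x}\cdot(\nuppN,\nupmN))}\,dx$ and the translation invariance of the nearest-neighbor distance, Fubini and the change of variable $y=\sqrt N x$ yield the identity
\begin{equation*}
\int \Detat(\chi,\Cs)\,d\bPsN(x,\Cs) = -\frac{1}{N}\sum_p \log(r(p)\vee\tau)\Bigl(\int_{\sqrt N\Box}\chi(p-y)\,dy\Bigr),
\end{equation*}
where $p$ ranges over all points of the rescaled signed configuration. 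The inner integral equals $1$ as soon as $p$ is at distance at least $R$ from $\partial(\sqrt N\Box)$, so the right-hand side differs from $-\frac{1}{N}\sum_p\log(r(p)\vee\tau)$ only by a boundary correction controlled by the number of points in a thin shell of width $R$ near $\partial(\sqrt N\Box)$, which is $o(1)$ by the uniform integrability assumption on $\Nn(\cdot,C_1)$.

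Next, for $M>0$ the map $\Cs\mapsto \Detat(\chi,\Cs)\wedge M$ is bounded and continuous on $\configs$ in the local topology, since truncation of $r$ at level $\tau>0$ removes the logarithmic singularity and only finitely many points contribute because $\chi$ is compactly supported. Hence the hypothesis $\bPsN\in B(\bPsts,\epsilon)$ and the metrization of weak convergence give
\begin{equation*}
\limsup_{N\ti}\Bigl|\int(\Detat(\chi,\cdot)\wedge M)\,d\bPsN - \int(\Detat(\chi,\cdot)\wedge M)\,d\bPsts\Bigr| = o_\epsilon(1).
\end{equation*}
The uniform integrability of $\Nn(\cdot,C_R)$ against $d\bPsN$ then permits the removal of the truncation uniformly in $N$: since $0\le \Detat(\chi,\Cs)\le |\log\tau|\,\|\chi\|_\infty\,\Nn(\Cs,C_R)$, one has $\lim_{M\ti}\sup_N\int(\Detat(\chi,\cdot)-\Detat(\chi,\cdot)\wedge M)\,d\bPsN=0$, while $\int(\Detat(\chi,\cdot)\wedge M)\,d\bPsts\nearrow \int \Detat(\chi,\cdot)\,d\bPsts$ by monotone convergence.

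Finally I let $\chi$ approximate $\frac{1}{|C_R|}\mathbf{1}_{C_R}$ monotonically (which preserves the inequalities by positivity of $-\log(r\vee\tau)$), then let $R\ti$. Stationarity of the disintegration $\bPstsx$ together with the definitions of $\Detat(\Cs)$ as a spatial average and of $\bDetat(\bPsts)=\int_\Box \tDetat(\bPstsx)\,dx$ give $\int \Detat(\chi,\cdot)\,d\bPsts\to \bDetat(\bPsts)$. Assembling the four limiting procedures in the order $N\ti$, $M\ti$, $R\ti$, $\epsilon\to 0$ produces the advertised bound. \textbf{The main obstacle} is the interplay between the cutoff $M$ and the boundary correction: without the uniform integrability assumption, points with tiny nearest-neighbor distance accumulating near $\partial(\sqrt N\Box)$ could prevent the microscopic sum from coinciding asymptotically with the process-level expectation. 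The uniform integrability hypothesis is precisely what makes both the $M\to\infty$ limit uniform in $N$ and the boundary shell contribution negligible.
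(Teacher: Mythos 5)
Your proof is correct and takes essentially the same approach as the paper: fix a compactly supported test function $\chi$ (the paper uses $\chi_t$ approximating $\mathbf{1}_{C_1}$ with $\int\chi_t=1$ and a cutoff $\mathbf{1}_{\Lambda_t}$ on the tag, then sends $t\to 0$), use the identity expressing $\int\Detat(\chi,\Cs)\,d\bPsN$ as a weighted version of $-\frac1N\sum\log(r\vee\tau)$, and pass to the limit via continuity of $\Detat(\chi,\cdot)$, the proximity $\bPsN\in B(\bPsts,\epsilon)$, and the uniform integrability hypothesis, finally identifying the limit with $\bDetat(\bPsts)$ by stationarity. Your explicit $M$-truncation and boundary-shell estimate make visible the steps that the paper leaves implicit; the only minor point worth noting is that the hypothesis concerns $\Nn(\cdot,\carr_1)$ while you invoke uniform integrability of $\Nn(\cdot,\carr_R)$, which follows for fixed $R$ by the near-stationarity of $\bPsN$ but deserves a sentence.
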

\begin{proof}
For any $t > 0$ let $\chi_t$ be a smooth nonnegative function such that $\chi_t \equiv 1$ in $C_{1-t}$ and $\chi_{t} \equiv 0$ outside $C_{1+t}$ and such that $\int \chi_t = 1$. Let $\Lambda_t$ be the square $\{x \in \Lambda, d(x, \partial \Lambda) \geq t\}$. For $N$ large enough we have
$$
- \frac{1}{N} \sum_{i=1}^N \log (\ro(x'_i) \vee \tau)+\log (\ro(y'_i) \vee \tau)  \geq - \int \1_{\Lambda_t}(x) \left(\Detat(\chi_t, \Cs) \right) d\bPsN.
$$
The map $\Cs \mapsto - \Detat(\chi_t, \Cs)$ is continuous and bounded by $(- \log (\tau)) \Nn(\Cs, \carr_{1+1})$, and since $\Nn(\Cs, \carr_{1+t})$ is uniformly integrable against $d\bPsN$ we have
$$
\lim_{N \ti} - \int \1_{\Lambda_t}(x) \left(\Detat(\chi_t, \Cs) \right) d\bPsN  = - \int \1_{\Lambda_t}(x) \left(\Detat(\chi_t, \Cs) \right) d\bPsts + o_{\epsilon}(1).
$$
Letting $t\to 0$ yields the result.
\end{proof}

\paragraph{\textbf{Contribution of close dipoles.}}
We now turn to bounding the contributions to the Boltzmann factor $e^{-\beta \WN}$ due to pairs of points which are at distance $\leq \tau$ from each other, and see that this quantity is negligible when $\tau \t0$.
Using \eqref{prochepasproche} we see that we are left with bounding 
$$ \int_{i_N^{-1}(B(\bPsts,\ep))} e^{- \frac{\beta}{2} \sum_{i=1}^N \log \frac{r(x_i')}{\tau}\wedge 1 + \log \frac{r(y_i')}{\tau}  \wedge 1} \, d \XN\, d\YN.$$
We prove
\begin{lem}
We have 
\begin{equation}
\label{uppbdD}\limsup_{\tau \to 0, \ep \to 0, N \ti} \frac{1}{N} \log \int_{i_N^{-1}( B(\bPsts,\ep))} e^{- \frac{\beta}{2} \sum_{i=1}^N \log \frac{r(x_i')}{\tau}\wedge 1 + \log \frac{r(y_i')}{\tau}\wedge 1 } \, d \XN\, d\YN\le -\bERS(\bPsts).
\end{equation}
\end{lem}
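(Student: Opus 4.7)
The plan is to apply Hölder's inequality with a carefully chosen exponent, together with a Gunson–Panta-type estimate on the energy factor. Specifically, fix $p>1$ with $p\beta<2$ and let $q=p/(p-1)$ be its Hölder conjugate. Writing $S_\tau$ for the sum $\sum_i\log(r(x_i')/\tau\wedge 1)+\log(r(y_i')/\tau\wedge 1)$ (which is nonpositive, and equal to $0$ unless some $r(z')<\tau$), Hölder gives
\begin{equation*}
\int_{\iN^{-1}(B(\bPsts,\epsilon))} e^{-\frac{\beta}{2}S_\tau}\,d\XN\,d\YN \le \left(\int_{\Lambda^{2N}} e^{-\frac{p\beta}{2}S_\tau}\,d\XN\,d\YN\right)^{1/p}\operatorname{Leb}\!\left(\iN^{-1}(B(\bPsts,\epsilon))\right)^{1/q}.
\end{equation*}

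The second factor is controlled by Proposition \ref{prop:Sanov}, which yields $\limsup_N N^{-1}\log\operatorname{Leb}(\iN^{-1}(B(\bPsts,\epsilon)))\le -\bERS(\bPsts)+o_\epsilon(1)$. For the first factor, I would run the Gunson–Panta graph decomposition from Section \ref{appGP}: since $-S_\tau$ is only supported on close pairs, only connected components of the nearest-neighbor digraph that contain at least one vertex with $r<\tau$ contribute a nontrivial factor. Using Lemma \ref{lem:graphe} and Lemma \ref{lem:combin} to enumerate the digraph topologies, and then changing variables inside each component to ``dipole coordinates'' as in Lemma \ref{lemdiri}, each close endpoint in a non-cycle position produces an integral $\int_0^\tau (\tau/r)^{p\beta/2} r\,dr \sim \tau^2/(2-p\beta/2)$, and each 2-cycle close pair produces $\int_0^\tau (\tau/r)^{p\beta} r\,dr \sim \tau^2/(2-p\beta)$, both convergent thanks to $p\beta<2$. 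Summing the resulting generating series over all numbers $K$ of close components (as in \eqref{beforereass}) and combining with the volume of free positions in $\Box'$, one obtains a bound of the form
\begin{equation*}
\int_{\Lambda^{2N}} e^{-\frac{p\beta}{2}S_\tau}\,d\XN\,d\YN \le e^{C_{p,\beta}\,\tau^2 N},
\end{equation*}
where $C_{p,\beta}$ is finite and independent of $N,\tau$; in particular $N^{-1}\log$ of this factor tends to zero as $\tau\to 0$.

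Putting the two estimates together gives $\frac{1}{N}\log I\le \frac{1}{p}C_{p,\beta}\tau^2 + \frac{1}{q}(-\bERS(\bPsts)+o_\epsilon(1))$. Sending $N\to\infty$, then $\tau\to 0$, then $\epsilon\to 0$, and finally letting $p\to 2/\beta$ in the exponent analysis leaves us with the desired inequality.

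The main obstacle in executing this plan is the Dirichlet-integral analysis of the energy factor: one has to verify that, despite the Boltzmann divergence $(\tau/r)^{p\beta/2}$ at close pairs, the compensating $r\,dr$ volume element together with the combinatorial factor for the nearest-neighbor digraphs makes the cumulative rate $N^{-1}\log$ of the energy integral vanish as $\tau\to 0$, uniformly in $N$. This is essentially the two-component adaptation of the Gunson–Panta bound in the form of Lemma \ref{partitionD}, where the exponential suppression $\tau^{2K(2-\beta/2)}$ coming from $K$ close clusters dominates the combinatorial multiplicity; the technical step is to follow through this accounting carefully in the rescaled box $\Box'$ of volume $N$ and extract the correct $\tau^2 N$ scaling.
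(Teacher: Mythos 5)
Your Hölder decoupling step is where the argument breaks, and the loss it introduces cannot be repaired by any admissible choice of exponent. With conjugate exponents $p>1$ and $q=p/(p-1)$, Hölder gives
\[
\frac{1}{N}\log I_N \;\le\; \frac{1}{p}\cdot\frac{1}{N}\log \int_{\Lambda^{2N}} e^{-\frac{p\beta}{2}S_\tau} \;+\; \frac{1}{q}\cdot\frac{1}{N}\log \Leb\bigl(\iN^{-1}(B(\bPsts,\epsilon))\bigr).
\]
Even granting your estimate $\int_{\Lambda^{2N}} e^{-\frac{p\beta}{2}S_\tau} \le e^{C_{p,\beta}\tau^2 N}$ and sending $\tau,\epsilon\to 0$, you are left with $\limsup\le -\frac{1}{q}\bERS(\bPsts)$. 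But since $\bERS\ge 0$ and $1/q=1-1/p<1$, this is strictly weaker than the claimed bound $-\bERS(\bPsts)$. The Gunson--Panta constraint $p\beta<2$ forces $1/p>\beta/2$, hence $1/q<1-\beta/2$; there is no room to push $p\to\infty$ to make $1/q\to 1$. Put differently: the integrand $e^{-\frac{\beta}{2}S_\tau}$ is $\ge 1$ pointwise, so $I_N\ge \Leb(\iN^{-1}(B))$, and the only way to match the full entropy rate is to show that the Boltzmann weight contributes a subexponential correction \emph{on top of} the Lebesgue volume of the ball $\iN^{-1}(B)$; Hölder spreads this correction across a fractional power of the volume factor, which destroys the sharp constant. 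Trying to fix this by applying Hölder only on $\iN^{-1}(B)$ is circular: you would again need a bound of the exact form you are trying to prove, with $\beta$ replaced by $p\beta$.

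The paper's proof avoids this by a combinatorial decomposition rather than a decoupling inequality: it conditions on $n$, the number of points (out of $2N$) whose nearest-neighbor distance at blown-up scale is $<\tau$. Because $S_\tau$ is supported on exactly these $n$ points, the integral factorizes into a ``close-pairs'' block $Z(n,\tau,\beta,N)$, which is bounded by $(C\tau)^n$ via the Gunson--Panta graph/Dirichlet-integral machinery, and a ``far-points'' block over the remaining $2N-n$ coordinates, which still lies in $\iN^{-1}(B(\bPsts,2\epsilon))$ when $n\le\delta N$ and so carries the full Sanov rate $-\bERS(\bPsts)$. The cases $n>\delta N$ are negligible by the crude bound $C^N\tau^{\delta N}$, and the binomial multiplicity $\binom{N}{n_x}\binom{N}{n_y}$ for $n\le\delta N$ contributes only $C_\delta^N$ with $C_\delta\to 1$. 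In short, the key idea you are missing is to isolate the (few) close pairs and let all the other points deliver the entropy at rate $1$, rather than trading a fraction of that rate away via a Hölder exponent.
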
 This will rely on the method of \cite{GunPan} described in Section \ref{appGP}.
\begin{proof}
For each configuration, we 
denote by $n$ the number of points of any sign for which $r(z_i') \le \tau$, and separate over the value of $n$. Without loss of generality, we may assume that these points are the first $n_x$ ones for the $x$'s and $n_y$ ones for the $y$'s, with $n_x+n_y=n$. 
 We may write 
\begin{multline} \label{decompodeltalike}
\int_{i_N^{-1}( B(\bPsts,\ep))} e^{- \frac{\beta}{2} \sum_{i=1}^{2N} \log \frac{r(x_i')}{\tau} \wedge 1+ \log \frac{r(y_i')}{\tau} \wedge 1} \, d \XN \, d\YN \\
\le 
\sum_{n =0}^{2N} \sum_{n_x+n_y=n}\left( \begin{array}{c} N \\ n_x\end{array}\right) \left( \begin{array}{c} N \\ n_y\end{array}\right)\int_{\Lambda^n} e^{-\frac{\beta}{2} \sum_{i=1}^n \log \frac{r(z_i') }{\tau } \wedge 1}\, dz_1\dots dz_n \\ \times \int_{ i_N^{-1}( B(\bPsts, \ep))} dx_{n_x+1}\dots dx_N \, dy_{n_y+1} \dots dy_{2N}.
\end{multline}

For any $n \geq 0$, $1 > \tau  > 0$ and $N \geq 1$,  define $\BNtau$ as 
the set of $n$-tuples of points in $\Box$ such that all the 
nearest-neighbor distances at blown-up (by $\sqrt{N}$) 
scale are smaller than $\tau$. We define 
\begin{equation}
\label{def:Zntaubeta} Z(n,\tau,\beta, N) = \int_{\BNtau} e^{-\frac{\beta}{2} \sum_{i=1}^n \log \frac{r(z'_i)}{\tau }}\, dz_1 \dots dz_n,
\end{equation}
with $z'_i = z_i \sqrt{N}$.

Fixing the parameter $\delta= 1/\sqrt{|\log \tau|}$, we may rewrite \eqref{decompodeltalike} as 
\begin{multline*}
\int_{i_N^{-1}( B(\bPsts,\ep))} e^{- \frac{\beta}{2} \sum_{i=1}^{2N } \log \frac{r(x_i')}{\tau} \wedge 1+ \log \frac{r(y_i')}{\tau} \wedge 1} \, d \XN \, d\YN 
\le  \sum_{n=\lfloor \delta N\rfloor+1}^{2N} Z(n,\tau, \beta,N) |\Lambda|^{2N-n}\\ +
 \sum_{n=0}^{\lfloor \delta N\rfloor}
 \sum_{n_x+n_y=n} \left( \begin{array}{c} N \\ n_x\end{array}\right) \left( \begin{array}{c} N \\ n_y\end{array}\right)
 Z(n,\tau, \beta,N) 
  \int_{\Lambda^{2N-n} \cap i_N^{-1}( B(\bPsts,\ep))} dx_{n_x+1}\dots dx_N \, dy_{n_y+1} \dots dy_{2N}.
\end{multline*}
Next, we claim that if $n \leq N$ (which is the case here), we have
\begin{equation}
\label{estzktb}
Z(n,\tau, \beta, N) \le \tau^{n} C^n  .
\end{equation}
Assuming this claim is true, and 
  observing that $i_N(\XN,\YN)\in B(\bPsts, \ep)$ implies that 
  $$i_{N-n} (x_n, \dots, x_N, y_n, \dots, y_N) \in B(\bPsts, 2\ep)$$  with obvious notation (for $\delta$ small enough depending on $\epsilon$), so we may then write
  \begin{multline}\label{dr1}
\int_{i_N^{-1}( B(\bPsts,\ep))} e^{- \frac{\beta}{2} \sum_{i=1}^{2N } \log \frac{r(x_i')}{\tau} \wedge 1+ \log \frac{r(y_i')}{\tau} \wedge 1} \, 
d \XN \, d\YN \\ \le  \sum_{n=\lfloor \delta N\rfloor +1}^{2N} 
\sum_{n_x+n_y=k} \left( \begin{array}{c} N \\ n_x\end{array}\right) \left( \begin{array}{c} N \\ n_y\end{array}\right)
 Z(n,\tau,\beta,N)\\ + 
 \sum_{n=0}^{\lfloor \delta N\rfloor }
 \sum_{n_x+n_y=n} \left( \begin{array}{c} N \\ n_x\end{array}\right) \left( \begin{array}{c} N \\ n_y\end{array}\right)
 Z(n,\tau, \beta,N) 
  \int_{ i_N^{-1}( B(\bPsts,\ep))} dx_{n_x+1}\dots dx_N \, dy_{n_y+1} \dots dy_{2N}.
\end{multline}

  The first term in the right-hand side of \eqref{dr1} is easily bounded above by $C^N \tau^{\delta N}$ in view of \eqref{estzktb}. For the second one, we note that by direct computations one may show that for $n\le \delta N$, 
  $$\sum_{n_x+n_y=n} \left( \begin{array}{c} N \\ n_x\end{array}\right) \left( \begin{array}{c} N \\ n_y\end{array}\right)\le C_\delta^N$$ with 
 $\lim_{\delta \to 0} C_\delta=1.$
 Combining this with \eqref{estzktb} and inserting into \eqref{dr1} we thus are led to 
 \begin{multline}\label{dr2}
\int_{i_N^{-1}( B(\bPsts,\ep))} e^{- \frac{\beta}{2} \sum_{i=1}^{2N } \log \frac{r(x_i')}{\tau} \wedge 1+ \log \frac{r(y_i')}{\tau} \wedge 1} \, d \XN \, d\YN \\ \le C^N \tau^{\delta N} + \delta N C_\delta^N C^{\delta N}     \int_{ \Lambda^{2N-2\delta N} \cap i_N^{-1}( B(\bPsts,\ep))} dx_{\delta N+1}\dots dx_N \, dy_{\delta N+ 1} \dots dy_{2N}.
\end{multline}
By the
choice of $\delta= 1/\sqrt{|\log \tau|} $ we find that the first term is logarithmically negligible and thus 
\begin{multline}\label{dr3} \limsup_{N\to \infty} \frac{1}{N} \log\int_{\Lambda^{2N} \cap i_N^{-1}(B(\bPsts,\ep))} e^{- \frac{\beta}{2} \sum_{i=1}^{2N } \log \frac{r(x_i')}{\tau} \wedge 1+ \log \frac{r(y_i')}{\tau} \wedge 1} \, d \XN \, d\YN \\ \le \log C_\delta +C \delta+  \limsup_{N\to \infty} \frac{1}{N}\log   \int_{ \Lambda^{2N-2\delta N} \cap i_N^{-1}(B(\bPsts,\ep))} dx_{\delta N+1}\dots dx_N \, dy_{\delta N +1} \dots dy_{2N}.
\end{multline}
But we have, from Proposition \ref{prop:Sanov}\begin{multline*}\limsup_{\ep \to 0} \limsup_{N\to \infty} \frac{1}{N}\log   \int_{ \Lambda^{2N-2\delta N} \cap i_N^{-1}( B(\bPsts,\ep))} dx_{\delta N+1}\dots dx_N \, dy_{\delta N +1} \dots dy_{2N}\\ 
\le - N(1-\delta ) \bERS(\bPsts)\end{multline*} so letting  $\ep \to 0$ and $\tau \to 0 $  (hence $\delta\to 0$) in \eqref{dr3}, we obtain the conclusion.

We now check \eqref{estzktb}, following  Section \ref{appGP}.  We will establish the more general bound
\begin{equation} \label{estimZNtaubetan}
Z(n,\tau,\beta,N) \leq \left(\frac{C \tau n}{N}\right)^n,
\end{equation}
which implies \eqref{estzktb} when $n \leq N$. We may assume that 
$n \geq 1$, otherwise \eqref{estimZNtaubetan} is obvious 
(both sides are equal to $1$).
First, we rewrite $Z(n,\tau,\beta,N)$ as 
\begin{equation*}
Z(n,\tau, \beta,N) = \int_{\BNtau} e^{-\frac{\beta}{2} \sum_{i=1}^n \log \frac{r(S_i)\sqrt{N}  }{\tau }}\, d\vec{S}_n.
\end{equation*}
As in  Section \ref{appGP},  for each configuration $(S_1, \dots, S_n)$, we form the nearest-neighbor function $i \mapsto F_i$, and associate to the configuration a directed graph with $K \in [1,n]$ connected  components, each component comprising a cycle of length 2 together with trees attached to the two vertices of the cycle.  We know moreover  that the distances to the nearest neighbor  $|S_i -S_{F_i}|$  are bounded by $\tau/\sqrt{N}$. 

Splitting between isomorphisms classes of graphs and 
following the  computations and using the notation of Section \ref{appGP}, we now have to bound 
\begin{equation}\label{45} \int_{\BNtau \cap \{ \hat\gamma (\vec{S_n})  \equiv \gamma \}} e^{-\frac{\beta}{2} \sum_{i=1}^{n} \log \( \frac{\hal |S_i - S_{F_{i}}|\sqrt{N} }{\tau} \) }  d\vec{S_n}.
\end{equation}
Let $L_1, \dots, L_K$ be the $K$ subsets of indices associated to $\gamma$. 
We make the change of variables: 
 for $i \in L_k$ such that $i \notin c_k$
\[
u_i := \sqrt{\frac{N}{n}} \frac{1}{2\tau } (S_i - S_{F_i}),
\]
and 
\[
u_{i^a_k} := \sqrt{\frac{N}{n}} \frac{1}{2\tau} (S_{i^a_k} - S_{i^b_k}), \quad u_{i^b_k} := \sqrt{\frac{N}{n}} \frac{1}{2} S_{i^b_k}.
\]
With respect to the new variables, the integral in \eqref{45} is bounded by
\begin{equation*} \label{Diric3}
 \left(\sqrt{\frac{n}{N}}\right)^{2n}  e^{-\frac{\beta}{2} n 
 \log \sqrt{\frac{n}{N}}} 4^{p_k} \tau^{2(p_k-1)} \left(\frac{1}{\sqrt{N}}\right)^{\hal \beta p_k}   \int_{D_k'} e^{-\frac{\beta}{2} \left(\sum_{i \in L_k \backslash c_k} \log u_i + 2 \log u_{i^a_k}\right)} \prod_{i \notin L_k} du_{i}\ du_{i^a_k}\ du_{i^b_k},
\end{equation*}
where $D_k'$ is a (suitably enlarged) domain of integration for the new variables where they satisfy
\begin{equation*}
\sum_{i \in L_k \backslash c_k} |u_i|^2 \leq 1,
\qquad |u_{i^a_k}|\le \frac{1}{2\sqrt{n}},
\qquad |u_{i^b_k}|\le 1
\end{equation*}
and $p_k$ denotes the cardinality of $L_k$.  
Clearly $D_k'$ is included 
in the set that was denoted $D_k$ in Section \ref{appGP}, 
so we deduce using Lemma \ref{lemdiri}
that 
 \begin{multline*}
 \int_{\BNtau \cap \{ \hat\gamma (\vec{S_n})  \equiv \gamma \}} e^{-\frac{\beta}{2} \sum_{i=1}^{n} \log \left( \frac{\hal |S_i - S_{F_{i}}|\sqrt{N} }{\tau} \right) }  d\vec{S_n} \le \left(\frac{n}{N}\right)^{n(1-\frac{\beta}{4})} 
  \frac{\tau^{2\sum_{k=1}^K (p_k-1)} } {N^{\beta/4 \sum_{k=1}^K p_k}}  \mathit{Diri}_{n, K} \\ \le \left(\frac{n}{N}\right)^{n(1-\frac{\beta}{4})} \frac{\tau^{2(n-K)}}{N^{n\beta/4}} \mathit{Diri}_{n,K}.
  \end{multline*}
  Summing over all isomorphism classes of $\gamma$, we deduce that 
  \begin{multline} \label{resumedescourses}
  Z(n, \tau, \beta, N) \le \left(\frac{n}{N}\right)^{n(1-\frac{\beta}{4})} \sum_{K=1}^{n/2} |\mathbf{D}_{n,K}| \mathit{Diri}_{n,K} \frac{\tau^{2(n-K)}}{N^{n\beta/4}}\\
  \le \left(\frac{n}{N}\right)^{n(1-\frac{\beta}{4})} \frac{\tau^n}{N^{n\beta/4}} \sum_{K=1}^{n/2} |\mathbf{D}_{n,K}| \mathit{Diri}_{n,K}
  \le \left(\frac{n}{N}\right)^{n(1-\frac{\beta}{4})} \frac{\tau^n}{N^{n\beta/4}} \left(\frac{n}{2}\right)^{\beta n/4} C_\beta^{n/2},
  \end{multline}
 which yields \eqref{estimZNtaubetan}.
 \end{proof}

\section{Leading order large deviations}
\label{sec-first-order}
In this section, we prove a large deviations upper bound at the leading order 
($N^2$) 
the joint law of the empirical measures $\muNp$ and $\muNm$, with rate function
given by \eqref{def:H}. 
\paragraph{\textbf{The limiting energy.}}
For any two probability measures $\mup, \mum$ in $\probas(\Lambda)$ we let 
\begin{equation} \label{def:H}
\H(\mup, \mum) := \min \left\lbrace \int_{\R^2} |E|^2 , E \in L^2(\R^2, \R^2) \text{ s.t. } - \div E = 2\pi(\mup - \mum) \right\rbrace,
\end{equation}
which represents the electrostatic interaction energy between 
$\mup$ and $\mum$. The infimum in \eqref{def:H} is achieved because 
the $L^2$-norm is coercive and lower semi-continuous for 
the weak-$L^2$ topology. In fact it is not difficult to check that if $\H(\mup, \mum)$ is finite, it is equal to $\int_{\R^2} |\Eloc|^2$ where $\Eloc$ is the “local electric field” defined as $\Eloc(x) := \int_{\R^2} - \nabla \log |x-t| (d\mup(t) - d\mum(t))$. 

The functional $\H$ takes value in $[0, + \infty]$ and we have
\begin{equation} \label{Hnulsamemu}
\H(\mup, \mum) = 0 \iff \mup = \mum.
\end{equation}

\paragraph{\textbf{Large deviations upper bound.}}
We give a large deviation upper bound at speed $N^2$ for the joint law of the empirical measures $\muNp$ and $\muNm$.
\begin{prop}\label{LDP1}
For any $\mup, \mum \in \probas(\Lambda)$ we have
\begin{equation}
\label{LDP1UB}
\limsup_{\epsilon \t0} \limsup_{N \ti} \frac{1}{N^2} \log \PNbeta \left( \{ (\muNp, \muNm) \in B((\mup, \mum), \epsilon) \} \right) \leq - \frac{\beta}{2} \H(\mup, \mum).
\end{equation}
\end{prop}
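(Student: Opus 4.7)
The plan is to use the decomposition $W_N = E_N + D_N$ provided by Lemma \ref{lem:WIPP} --- with $E_N := \tfrac{1}{2\pi}\int_{\R^2}|\nabla V_{N,r}|^2 \geq 0$ the electrostatic energy and $D_N := \sum_i(\log r(x_i) + \log r(y_i)) \leq 0$ the dipole part --- combined with the Gunson--Panta estimate \eqref{GPDg} to control $D_N$ in the Boltzmann weight. Since Proposition \ref{prop:FGP} gives $\log Z_{N,\beta} = \tfrac{\beta}{2}N\log N + O(N) = o(N^2)$, the normalization is invisible at speed $N^2$, so the task reduces to bounding $\int_{A_\epsilon} e^{-\beta W_N/2}\,d\XN d\YN$ from above on $A_\epsilon := \{(\muNp,\muNm)\in B((\mup,\mum),\epsilon)\}$.

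The first step is a uniform lower bound on $E_N$ over $A_\epsilon$. The rescaled potential $U_N := N^{-1}V_{N,r}$ solves $-\Delta U_N = 2\pi \sigma_N$ with $\sigma_N := N^{-1}(\sum_i \delta_{x_i}^{(r(x_i))} - \sum_i \delta_{y_i}^{(r(y_i))})$. On $A_\epsilon$ the macroscopic empirical measures are close to $\mu^\pm$, and since the smearing scales $r(x_i)\leq 1$ are dominated by the typical nearest-neighbor scale $O(1/\sqrt{N})$, one has $\sigma_N \to \mup-\mum$ weakly (in the sense of distributions). Assuming $\int|\nabla U_N|^2$ stays bounded (else the target bound is trivial), extracting a weak-$L^2$ limit of $\nabla U_N$ identifies it with $\nabla V^{\mathrm{mac}}$ where $V^{\mathrm{mac}}$ is the logarithmic potential of $\mup - \mum$. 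The weak-$L^2$ lower semi-continuity of the Dirichlet energy combined with the lsc of $\H$ then yield, in the spirit of Lemma \ref{lem:LSCW},
\[
\liminf_{N\to\infty}\;\inf_{(\XN,\YN)\in A_\epsilon}\,\frac{1}{N^2}E_N \;\geq\; \H(\mup,\mum) - o_\epsilon(1).
\]

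The second step plugs this back into the Boltzmann integral. Pointwise on $A_\epsilon$ (for $N$ large),
\[
e^{-\frac{\beta}{2}W_N} = e^{-\frac{\beta}{2}E_N}\,e^{-\frac{\beta}{2}D_N} \;\leq\; \exp\!\Bigl(-\tfrac{\beta}{2}N^2\bigl(\H(\mup,\mum)-o_\epsilon(1)-o_N(1)\bigr)\Bigr)\, e^{-\frac{\beta}{2}D_N}.
\]
Integrating over $A_\epsilon$, enlarging the domain to $\Lambda^{2N}$ (the integrand is nonneg.), and applying \eqref{GPDg}, which because $\beta<2$ gives $\int_{\Lambda^{2N}} e^{-\frac{\beta}{2}D_N}\,d\XN d\YN \leq e^{\frac{\beta}{2}N\log N + C_\beta N}$, we obtain
\[
\int_{A_\epsilon} e^{-\frac{\beta}{2}W_N}\,d\XN d\YN \;\leq\; \exp\!\Bigl(-\tfrac{\beta}{2}N^2(\H-o_\epsilon-o_N) + \tfrac{\beta}{2}N\log N + O(N)\Bigr).
\]
Dividing by $Z_{N,\beta}$ and invoking Proposition \ref{prop:FGP}, the $\tfrac{\beta}{2}N\log N$ terms cancel, leaving $\tfrac{1}{N^2}\log \PNbeta(A_\epsilon)\leq -\tfrac{\beta}{2}\H(\mup,\mum)+o_\epsilon(1)+o_N(1)$; sending $N\to\infty$ then $\epsilon\to 0$ concludes.

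The main obstacle is Step 1, the deterministic uniform lower bound on $E_N$ over $A_\epsilon$. This requires verifying the weak convergence of the smeared source $\sigma_N$ to $\mup-\mum$ (delicate because the radii $r(x_i)$ are configuration-dependent and not uniformly vanishing) and the weak-$L^2$ compactness-plus-lsc argument for the Dirichlet energy. Once this is in hand, the rest of the proof is a clean consequence of the Gunson--Panta bound \eqref{GPDg} and the leading-order asymptotics of $\log Z_{N,\beta}$; in particular, no delicate H\"older tradeoff between $E_N$ and $D_N$ is required, as $D_N$ is integrated out against the Boltzmann weight globally via the partition-function-type estimate \eqref{GPDg}.
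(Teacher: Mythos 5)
Your proposal is correct and follows essentially the same route as the paper: decompose $\WN$ via Lemma \ref{lem:WIPP}, establish a lower semi-continuity estimate $\liminf N^{-2}\int|\nabla\VNr|^2 \geq \H(\mup,\mum)$ via tightness of $N^{-1}\nabla\VNr$ in $L^2$ and passage to the limit in $-\div(\nabla\VNr)=2\pi(\sum\delta_{x_i}^{(r(x_i))}-\sum\delta_{y_i}^{(r(y_i))})$, then integrate out the dipole term globally against \eqref{GPDg} and cancel the $\frac{\beta}{2}N\log N$ contributions with $\log\ZNbeta$ from Proposition \ref{prop:FGP}. The point you flag as ``delicate''---that the smeared source $\sigma_N$ has the same weak limit as the raw empirical measure despite the radii $r(x_i)$ not being uniformly small---is handled by the elementary observation that at most $O(1/\delta^2)$ points (uniformly over configurations) can have $r\geq\delta$, so their contribution to the normalized empirical measure vanishes as $N\to\infty$; the paper passes over this with ``easily implies'' and your hesitation does not signal an actual gap.
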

 Together with \eqref{Hnulsamemu} this essentially says that we must have $\muNp \approx \muNm$ for $N$ large, except with very small (of order $e^{-N^2}$) probability.
\begin{proof} 
Let $\mup, \mum$ be in $\probas(\Lambda)$ and $\epsilon > 0$. Using Lemma \ref{lem:WIPP} we may write
\begin{multline} \label{LDP1a}
\PNbeta \left( \left\lbrace (\muNp, \muNm) \in B( (\mup, \mum), \epsilon) \right\rbrace \right) \\ \leq \frac{1}{\ZNbeta} \int_{\Lambda^{2N} \cap \{(\muNp, \muNm) \in B((\mup, \mum), \epsilon) \}} \hspace{-1cm}d\XN d\YN \exp\left(-\frac{\beta}{2} \left(\int |\nabla \VNr|^2 +\sum_{i=1}^N \log(\ro(x_i)) + \sum_{i=1}^N \log(\ro(y_i) \right)\right).
\end{multline}

We claim that $\int |\nabla \VNr|^2$ is lower-semi continuous in the following sense: if $\muNp \to \mup$ and $\muNm \to \mum$ then
\begin{equation} \label{lscinabHN}
 \liminf_{N \ti} \frac{1}{N^2} \int_{\R^2} |\nabla \VNr|^2 \geq \H(\mup, \mum).
\end{equation}
Indeed, a uniform bound on the $L^2$-norm implies that $\{\frac{1}{N} \nabla \VNr\}_N$ is tight in $L^2(\R^2, \R^2)$, let us denote by $E$ a limit point. For any $N$ we have by definition 
$$- \div (\nabla \VNr) = 2 \pi \left(\sum_{i=1}^N \delta^{\ro(x_i)}_{x_i} - \sum_{i=1}^N \delta^{\ro(y_i)}_{y_i}\right) $$
and it easily implies that
$$
- \div E  = 2 \pi(\mup - \mum).
$$
On the other hand, by lower semi-continuity of the $L^2$-norm with respect to weak convergence we have
$$
\liminf_{N \ti} \frac{1}{N^2} \int_{\R^2} |\nabla \VNr|^2  \geq \int_{\R^2} |E|^2,
$$
thus \eqref{lscinabHN} holds, and together with \eqref{LDP1a} it yields
\begin{multline} \label{LDP1b}
\limsup_{\epsilon \t0} \limsup_{N \ti} \frac{1}{N^2} \log \PNbeta \left( \{ (\muNp, \muNm) \in B((\mup, \mum), \epsilon) \} \right) \\ \leq -\frac{\beta}{2} \H(\mup, \mum) - \liminf_{N \ti} \frac{1}{N^2} \log \ZNbeta \\ +\limsup_{N \ti}  \frac{1}{N^2} \log \int_{\Lambda^{2N}} \exp\left(\frac{\beta}{2} \left(\sum_{i=1}^N \log(\ro(x_i)) + \sum_{i=1}^N \log(\ro(y_i) \right) \right)  d\XN d\YN.
\end{multline}
Combining \eqref{LDP1b} with the control \eqref{expZN1} on the partition function and with \eqref{GPDg} we get
$$
\limsup_{\epsilon \t0} \limsup_{N \ti} \frac{1}{N^2} \log \PNbeta \left( \{ (\muNp, \muNm) \in B((\mup, \mum), \epsilon) \} \right) \leq -\frac{\beta}{2} \H(\mup, \mum), 
$$
which concludes the proof of the proposition.
\end{proof}

\paragraph{\textbf{Large deviations lower bound (sketch).}}
We remark (without providing details)
that a complementary large deviations lower bound, with the same rate function, 
can be derived by adapting the approximation
constructions in \cite{bz,serfatyZur}. Indeed, to construct an approximate 
configuration of points with enough volume (in the exponential
in $N^2$ scale), one may proceed as follows. First, one
cancels the common 
parts of $\mu_+$ and $\mu_-$ by positioning pairs of positive and negative
charges (henceforth referred to as dipoles) with intra-dipole separation of
$N^{-10}$ say and inter-dipole separation of at least $N^{-1}$. Then, one may 
use the construction of e.g. \cite[Theorem 2.3]{serfatyZur} and
construct a sequence of ``well-separated'' configurations (separation at least
$\eta N^{-1/2}$ between points with small enough $\eta$)
so that $\limsup_{N\to\infty} N^{-2}
w_N(\XN,Y_N)\leq H(\mu_+,\mu_-)+C(\eta)$ where
$C(\eta)\to_{\eta\to 0} 0$; the argument in \cite{bz} shows that the volume
of such configurations is large enough.

\begin{appendix}
\section{Appendix: tail estimates for the complex Gaussian multiplicative chaos,
 by Wei Wu} 
In this appendix we apply Theorem \ref{theo:LDP} and Corollary \ref{coro:ZN}
to obtain the tail asymptotics of subcritical complex Gaussian
multiplicative chaos on $\mathbb{R}^{2}$. Let $h$ be an instance of the
Gaussian free field (GFF) on $\mathbb{R}^{2}$, which we define below.
Let $D\subset \mathbb{R}^{2}$ be a bounded domain, we are interested in the
distribution of $\left\vert \int_{D}^{{}}e^{i\beta h\left( x\right)
}dx\right\vert $, for $\beta \in (0,\sqrt{2})$. This object appears in
different contexts, such as the partition function of complex random energy
type models, the scaling limit of the compactified height function in two
dimensional dimer model \cite{Dub}, and the electric vertex operator in
conformal field theory \cite{Gaw}. Another motivation comes from the
conjecture that this object describes the scaling limit of the magnetization
of XY model in the plasma phase \cite{FS}. The Lee-Yang Theorem was 
proved
for the XY model (see \cite{LS}), therefore one may further conjecture that the
characteristic function of $\int_{D}e^{i\beta h\left( x\right) }dx$ has pure
imaginary zeros. Here we focus on another perspective, which is the tail
behavior of $\int_{D}e^{i\beta h\left( x\right) }dx$, and our approach is
based on an identity relating the moments of $\left\vert \int_{D}e^{i\beta
h\left( x\right) }dx\right\vert $ to the partition function of a two
component Coulomb gas (Lemma \ref{2kmom} below). For simplicity we set $%
D=\Lambda $, the unit cube.

Formally, let $H_{0}(\mathbb{R}^{2})=\left\{ \varphi :\varphi \in
C_{0}^{\infty }(\mathbb{R}^{2}),\text{ s.t.}\int \varphi \left( x\right)
dx=0\right\} $, and denote by $H(\mathbb{R}^{2})$ the completion of $H_{0}(%
\mathbb{R}^{2})$ in $L^{2}(\mathbb{R}^{2})$. The GFF is defined as a random
distribution in $(H(\mathbb{R}^{2}))^{\prime }$, such that for any $\rho
_{1},\rho _{2}\in H(\mathbb{R}^{2})$, the covariances are given by%
\begin{equation*}
\text{Cov}\left( \left\langle h,\rho _{1}\right\rangle ,\left\langle h,\rho
_{2}\right\rangle \right) =-\int \int \rho _{1}\left( x\right) \rho
_{2}\left( y\right) \log \left\vert x-y\right\vert dxdy,
\end{equation*}%
where $\left\langle h,\rho \right\rangle \dot{=}\int h\left( x\right) \rho
\left( x\right) dx$. Therefore, to give a mathematical definition of $%
e^{i\beta h}$, one immediately runs into the issue that $h$ is only defined
up to an additive constant, so that one can only hope to define $e^{i\beta h}
$ up to a multiplicative constant. However, as we explain below, $\left\vert
\int_{D}e^{i\beta h\left( x\right) }dx\right\vert ^{2}$ can be uniquely
defined, and we then simply set $\left\vert \int_{D}e^{i\beta h\left(
x\right) }dx\right\vert =\sqrt{\left\vert \int_{D}e^{i\beta h\left( x\right)
}dx\right\vert ^{2}}$.

Let $h_{r}$ denote the GFF on a disk $D_{r}=\left\{ z:\left\vert
z\right\vert \leq r\right\} $ with zero boundary condition, and 
let $h^{\left(
m\right) }$ be the massive GFF on $\mathbb{R}^{2}$ with mass $m$. $h_{r}$, $%
h^{\left( m\right) }$ are Gaussian processes with covariances given by
\begin{eqnarray*}
\text{Cov}\left( h_{r}\left( x\right) ,h_{r}\left( y\right) \right) &\dot{=}%
&g_{r}\left( x,y\right) \\
&=&-\log \left\vert x/r-y/r\right\vert +\log \left\vert 1-x\bar{y}%
/r^{2}\right\vert ,
\end{eqnarray*}%
and Cov$\left( h^{\left( m\right) }\left( x\right) ,h^{\left( m\right)
}\left( y\right) \right) =\left( -\Delta +m^{2}\right) ^{-1}\left( x,y\right) $,
respectively. Then, as was explained in Section 6 of \cite{LRV}, both $%
e^{i\beta h_{r}}$ and $e^{i\beta h^{\left( m\right) }}$ are well defined as
random distributions in $(C_{0}^{\infty }(\mathbb{R}^{2}))^{\prime }$ for $%
\beta \in (0,\sqrt{2})$. Therefore we can define the random variable $%
\left\vert \int_{D}e^{i\beta h\left( x\right) }dx\right\vert ^{2}$ as the
distributional weak limit of $\left\vert \int_{D}e^{i\beta h_{r}\left(
x\right) }dx\right\vert ^{2}$ as $r\rightarrow \infty $, or the
distributional weak limit of $\left\vert \int_{D}e^{i\beta h^{\left(
m\right) }\left( x\right) }dx\right\vert ^{2}$ as $m\rightarrow 0$. Indeed,
given $\rho _{1},\rho _{2}\in H(\mathbb{R}^{2})$, since
\begin{equation*}
\lim_{r\rightarrow \infty }\text{Cov}\left( \left\langle h_{r},\rho
_{1}\right\rangle ,\left\langle h_{r},\rho _{1}\right\rangle \right)
=\lim_{m\rightarrow 0}\text{Cov}( \langle h^{\left( m\right) },\rho
_{1}\rangle ,\langle h^{\left( m\right) },\rho _{1}\rangle ) =\text{Cov}%
\left( \left\langle h,\rho _{1}\right\rangle ,\left\langle h,\rho
_{1}\right\rangle \right) ,
\end{equation*}%
it is possible to prove that the two constructions lead to the same limiting
object. We will not give a proof of the equivalence in this appendix, but
instead construct the limiting object using the first approach.

As was explained in Section 6.2 of \cite{LRV}, one needs a regularization
procedure to define $e^{i\beta h_{r}}$. For $\varepsilon >0$ and $x\in D_{r}$%
, let $h_{r}^{\varepsilon }\left( x\right) $ denote the average of $h$ on
the circle of radius $\varepsilon $ centered at $x$ (we assume $h$ to be
identically zero outside $D_{r}$). $h_{r}^{\varepsilon }$ is a Gaussian
process with covariances given by
\begin{eqnarray*}
\text{Cov}\left( h_{r}^{\varepsilon }\left( x\right) ,h_{r}^{\varepsilon
}\left( y\right) \right) &\dot{=}&g_{r}^{\varepsilon }\left( x,y\right) \\
&=&\frac{1}{\left( 2\pi \varepsilon \right) ^{2}}\int_{\left\vert
z-x\right\vert =\varepsilon }\int_{\left\vert w-y\right\vert =\varepsilon
}g_{r}\left( z,w\right) dzdw.
\end{eqnarray*}%
It was shown in \cite{LRV} that as $\varepsilon \rightarrow 0$,
\begin{equation*}
\varepsilon ^{-\beta ^{2}/2}\int_{D}e^{i\beta h_{r}^{\varepsilon }\left(
x\right) }dx\rightarrow \int_{D}e^{i\beta h_{r}\left( x\right) }dx\text{ in }%
L_{p}\text{, for }p>1.
\end{equation*}%
Some properties of the circle averaged field $h_{r}^{\varepsilon }$ were
summarized in Section 3.1 of \cite{DS}. In particular,
\begin{equation}
\text{Var}h_{r}^{\varepsilon }\left( x\right) =\log C\left( x;D_{r}\right)
-\log \varepsilon ,  \label{CR}
\end{equation}%
where $C\left( x;D_{r}\right) =r(1-\left\vert x\right\vert ^{2}/r^{2})$ is
the conformal radius of $D_{r}$ from $x$. Also notice that
 when $\left\vert
x-y\right\vert >2\varepsilon $, $g_{r}^{\varepsilon }\left( x,y\right)
=g_{r}\left( x,y\right) $, and $g_{r}^{\varepsilon }\left( x,y\right)
\rightarrow 0$ if either $x$ or $y$ tends to $\partial D_{r}$. Moreover, for
all $r>2$ (which we will assume for the rest of the argument) and $%
\varepsilon <1$,
\begin{equation*}
\sup_{x,y\in D_{r}}\left\vert g_{r}^{\varepsilon }\left( x,y\right)
\right\vert \leq -\log \varepsilon +C_{1}\text{, for some }C_{1}<\infty
\text{.}
\end{equation*}%
We may therefore compute, for $k\in \mathbb{N}$,%
\begin{eqnarray}
\label{prelim}
&&\mathbb{E}\left\vert \int_{D}e^{i\beta h_{r}\left( x\right) }dx\right\vert
^{2k}   \\
&=&\lim_{\varepsilon \rightarrow 0}\mathbb{E}\left\vert \varepsilon ^{-\beta
^{2}/2}\int_{D}e^{i\beta h_{r}^{\varepsilon }\left( x\right) }dx\right\vert
^{2k}  \notag \\
&=&\lim_{\varepsilon \rightarrow 0}\varepsilon ^{-\beta ^{2}k}\mathbb{E}%
\int_{D^{\otimes 2k}}\prod_{i=1}^{k}e^{i\beta h_{r}^{\varepsilon }\left(
x_{i}\right) }\prod_{j=1}^{k}e^{-i\beta h_{r}^{\varepsilon }\left(
y_{j}\right) }d\vec{x}d\vec{y}  \notag \\
&=&\lim_{\varepsilon \rightarrow 0}\varepsilon ^{-\beta
^{2}k}\int_{D^{\otimes 2k}}e^{-\frac{\beta ^{2}}{2}\text{Var}\left(
\sum_{i=1}^{k}h_{r}^{\varepsilon }\left( x_{i}\right)
-\sum_{i=1}^{k}h_{r}^{\varepsilon }\left( y_{i}\right) \right) }d\vec{x}d%
\vec{y}  \notag \\
&=&\lim_{\varepsilon \rightarrow 0}\int_{D^{\otimes
2k}}\prod_{i=1}^{k}\left( \frac{1}{C\left( x_{i};D_{r}\right) C\left(
y_{i};D_{r}\right) }\right) ^{\beta ^{2}/2}\left( \frac{%
\prod_{1<i<j<k}e^{-g_{r}^{\varepsilon }\left( x_{i},x_{j}\right)
}e^{-g_{r}^{\varepsilon }\left( y_{i},y_{j}\right) }}{\prod_{i,j}e^{-g_{r}^{%
\varepsilon }\left( x_{i},y_{j}\right) }}\right) ^{\beta ^{2}}d\vec{x}d\vec{y%
},  \nonumber
\end{eqnarray}%
where we applied (\ref{CR}) to obtain the last equation. We now show 
that the
limit above equals%
\begin{eqnarray}
\label{coulomb}
&&\int_{D^{\otimes 2k}}\prod_{i=1}^{k}\left( \frac{1}{C\left(
x_{i};D_{r}\right) C\left( y_{i};D_{r}\right) }\right) ^{\beta ^{2}/2}\left(
\frac{\prod_{1<i<j<k}e^{-g_{r}\left( x_{i},x_{j}\right) }e^{-g_{r}\left(
y_{i},y_{j}\right) }}{\prod_{i,j}e^{-g_{r}\left( x_{i},y_{j}\right) }}%
\right) ^{\beta ^{2}}d\vec{x}d\vec{y}   \\
&=&\int_{D^{\otimes 2k}}\left( \frac{\prod_{1<i<j<k}\left\vert
x_{i}-x_{j}\right\vert \left\vert y_{i}-y_{j}\right\vert }{%
\prod_{i,j}\left\vert x_{i}-y_{j}\right\vert }\right) ^{\beta
^{2}}F_{r}\left( \vec{x},\vec{y},\beta \right) d\vec{x}d\vec{y}\text{,}
\nonumber
\end{eqnarray}%
where
\begin{equation*}
F_{r}\left( \vec{x},\vec{y},\beta \right) =\prod_{i=1}^{k}\left( 1-\frac{%
\left\vert x_{i}\right\vert ^{2}}{r^{2}}\right) ^{-\frac{\beta ^{2}}{2}%
}\left( 1-\frac{\left\vert y_{i}\right\vert ^{2}}{r^{2}}\right) ^{-\frac{%
\beta ^{2}}{2}}\left[ \frac{\prod_{i,j}\left\vert 1-x_{i}\bar{y}%
_{j}/r^{2}\right\vert }{\prod_{1<i<j<k}\left\vert 1-x_{i}\bar{x}%
_{j}/r^{2}\right\vert \left\vert 1-y_{i}\bar{y}_{j}/r^{2}\right\vert }\right]
^{\beta ^{2}}.
\end{equation*}
To show the equality of \eqref{prelim} and \eqref{coulomb},
 given $\varepsilon >0$, set 
\begin{equation*}
D_{\varepsilon }=\left\{ \left( \vec{x},\vec{y}\right) \in D^{\otimes
2k}:\min_{i,j}\left\vert x_{i}-x_{j}\right\vert \geq 2\varepsilon
,\min_{i,j}\left\vert y_{i}-y_{j}\right\vert \geq 2\varepsilon
,\min_{i,j}\left\vert x_{i}-y_{j}\right\vert \geq 2\varepsilon \right\} .
\end{equation*}%
For $\left( \vec{x},\vec{y}\right) \in D_{\varepsilon }$, the integrand in (%
\ref{prelim}) and (\ref{coulomb}) coincide. Since the integrand in (\ref%
{coulomb}) has integrable singularities when $\beta \in \left( 0,\sqrt{2}%
\right) $, it suffices to prove
\begin{equation*}
\lim_{\varepsilon \rightarrow 0}\int_{D^{\otimes 2k}\backslash
D_{\varepsilon }}\prod_{i=1}^{k}\left( \frac{1}{C\left( x_{i};D_{r}\right)
C\left( y_{i};D_{r}\right) }\right) ^{\beta ^{2}/2}\left( \frac{%
\prod_{1<i<j<k}e^{-g_{r}^{\varepsilon }\left( x_{i},x_{j}\right)
}e^{-g_{r}^{\varepsilon }\left( y_{i},y_{j}\right) }}{\prod_{i,j}e^{-g_{r}^{%
\varepsilon }\left( x_{i},y_{j}\right) }}\right) ^{\beta ^{2}}d\vec{x}d\vec{y%
}=0.
\end{equation*}%
Given $\left( x_{1},...,x_{k}\right) ,\left( y_{1},...,y_{k}\right) \in
\mathbb{R}^{k}$, one can define the Gale-Shapley matching $\sigma \left(
\vec{x},\vec{y}\right) \in S_{k}$ of $\vec{x}$ with $\vec{y}$, by

\begin{enumerate}
\item Find $x_{i}$ and $y_{j}$, such that%
\begin{equation*}
\forall i^{\prime },j^{\prime }\text{, }\left\vert x_{i}-y_{j}\right\vert
<\left\vert x_{i}-y_{j^{\prime }}\right\vert \text{ and }\left\vert
x_{i}-y_{j}\right\vert <\left\vert x_{i^{\prime }}-y_{j}\right\vert \text{,}
\end{equation*}%
and set $\sigma \left( i\right) =j$.

\item Delete the points that have been matched in Step 1.

\item Iterate the procedure until all the points have been matched.
\end{enumerate}

Set
\begin{equation*}
B_{\varepsilon }=( D^{\otimes 2k}\backslash D_{\varepsilon }) \cap \left\{
\left( \vec{x},\vec{y}\right) :\sigma \left( \vec{x},\vec{y}\right) =\text{Id%
}\right\} .
\end{equation*}%
And by symmetry, it suffices to prove the integration over $B_{\varepsilon }$
vanishes as $\varepsilon \rightarrow 0$.

Using the properties of $g_{r}^{\varepsilon }\left( \cdot ,\cdot \right) $,
we have%
\begin{equation*}
\sup_{\varepsilon <e^{-C_{1}}}\sup_{\substack{ x,y,z\in D \\ \left\vert
x-y\right\vert \leq 2\left\vert x-z\right\vert }}\frac{e^{-g_{r}^{%
\varepsilon }\left( x,y\right) }}{e^{-g_{r}^{\varepsilon }\left( x,z\right) }%
}<\infty \text{.}
\end{equation*}%
Therefore, by the same argument as  \cite[Lemma A.2]{LRV}, for $\left( \vec{%
x},\vec{y}\right) \in B_{\varepsilon }$ we have the upper bound
\begin{equation*}
\left( \frac{\prod_{1<i<j<k}e^{-g_{r}^{\varepsilon }\left(
x_{i},x_{j}\right) }e^{-g_{r}^{\varepsilon }\left( y_{i},y_{j}\right) }}{%
\prod_{i,j}e^{-g_{r}^{\varepsilon }\left( x_{i},y_{j}\right) }}\right)
^{\beta ^{2}}\leq C\left( k,\beta \right) \prod_{i=1}^{k}e^{\beta
^{2}g_{r}^{\varepsilon }\left( x_{i},y_{i}\right) },
\end{equation*}%
thus%
\begin{eqnarray}
&&\int_{B_{\varepsilon }}\prod_{i=1}^{k}\left( \frac{1}{C\left(
x_{i};D_{r}\right) C\left( y_{i};D_{r}\right) }\right) ^{\beta ^{2}/2}\left(
\frac{\prod_{1<i<j<k}e^{-g_{r}^{\varepsilon }\left( x_{i},x_{j}\right)
}e^{-g_{r}^{\varepsilon }\left( y_{i},y_{j}\right) }}{\prod_{i,j}e^{-g_{r}^{%
\varepsilon }\left( x_{i},y_{j}\right) }}\right) ^{\beta ^{2}}d\vec{x}d\vec{y%
}  \notag \\
&\leq &C\left( k,\beta \right) \prod_{i=1}^{k}\int_{B_{\varepsilon }}\left(
\frac{1}{C\left( x_{i};D_{r}\right) C\left( y_{i};D_{r}\right) }\right)
^{\beta ^{2}/2}e^{\beta ^{2}g_{r}^{\varepsilon }\left( x_{i},y_{i}\right)
}dx_{i}dy_{i}.  \label{decouple}
\end{eqnarray}

Now, for $\left( \vec{x},\vec{y}\right) \in B_{\varepsilon }$, if for some $%
i $, $\left\vert x_{i}-y_{i}\right\vert <2\varepsilon $, then because $%
g_{r}^{\varepsilon }\left( x_{i},y_{i}\right) \leq -\log \varepsilon
+O\left( 1\right) $, the integrand in (\ref{decouple}) is bounded by $%
O(\varepsilon ^{-\beta ^{2}})$. The volume of the point configuration is at
most $O\left( \varepsilon ^{2}\right) $, thus (\ref{decouple}) has
integrable singularities. Since the volume of $B_{\varepsilon }$ goes to
zero as $\varepsilon \rightarrow 0$, we conclude that (\ref{prelim}) equals (%
\ref{coulomb}).

From the explicit expression of $F_{r}$, we see that $F_{r}\left( \vec{x},%
\vec{y},\beta \right) \rightarrow 1$ uniformly for all $\left( \vec{x},\vec{y%
}\right) \in D^{\otimes 2k}$, as $r\rightarrow \infty $. Finally, noting
Corollary \ref{coro:ZN}, we can send $r\rightarrow \infty $ and apply
the
dominated convergence theorem to obtain

\begin{lem}
\label{2kmom}For $\beta \in (0,\sqrt{2})$, $k\in \mathbb{N}$, and $%
\left\vert \int_{D}e^{i\beta h\left( x\right) }dx\right\vert $ defined in
the sense above, we have%
\begin{equation*}
\mathbb{E}\left\vert \int_{D}e^{i\beta h\left( x\right) }dx\right\vert
^{2k}=Z_{k,2\beta ^{2}},
\end{equation*}%
where $Z_{k,2\beta ^{2}}$ is defined as \eqref{def:ZNbeta}.
\end{lem}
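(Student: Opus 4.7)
The plan is to assemble the identity from the chain of computations already carried out in the paragraphs leading up to the lemma, identifying the resulting integral with the partition function $Z_{k,2\beta^{2}}$ defined in \eqref{def:ZNbeta}. First I would recall that $\left\vert \int_{D}e^{i\beta h}\right\vert ^{2}$ has been defined as the $r\to\infty$ limit of $\left\vert \int_{D}e^{i\beta h_{r}}\right\vert ^{2}$, and that $\int_{D}e^{i\beta h_{r}}$ itself is obtained as the $L^{p}$-limit, for $p>1$, of $\varepsilon^{-\beta^{2}/2}\int_{D}e^{i\beta h_{r}^{\varepsilon}}$ as $\varepsilon\to 0$. Consequently the $2k$-th absolute moment of $\int_{D}e^{i\beta h_{r}}$ is the $\varepsilon\to 0$ limit of the corresponding moment for the regularized field, which is explicitly computable by Gaussian calculus: expanding $|\cdot|^{2k}$ as a $2k$-fold product, pulling out the expectation of the exponential of a Gaussian, and using the covariance formula together with \eqref{CR} yields the expression \eqref{prelim}.

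The next step is the identification of \eqref{prelim} with \eqref{coulomb} as $\varepsilon\to 0$. On the set $D_{\varepsilon}$ of well-separated configurations the integrands agree pointwise, so the problem is to control the contribution of the singular region $D^{\otimes 2k}\setminus D_{\varepsilon}$. The plan here is to use the Gale--Shapley matching: by symmetry it suffices to bound the integral over $B_{\varepsilon}$ where the matching is the identity; then the standard estimate from \cite[Lemma A.2]{LRV} decouples the $2k$-point interaction into a product over matched pairs $(x_{i},y_{i})$, reducing the question to a one-pair estimate where the worst singularity is of order $\varepsilon^{-\beta^{2}}$ acting on a region of volume $O(\varepsilon^{2})$. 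Since $\beta^{2}<2$ this yields an integrable bound vanishing as $\varepsilon\to 0$, producing the identity of \eqref{prelim} and \eqref{coulomb}.

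It remains to pass to $r\to\infty$ in \eqref{coulomb}. The conformal factor $F_{r}(\vec x,\vec y,\beta)$ tends to $1$ uniformly on $D^{\otimes 2k}$ as $r\to\infty$, while the rest of the integrand is an $r$-independent function with logarithmic singularities. To apply dominated convergence I would use the upper bound on $Z_{k,2\beta^{2}}$ provided by Corollary \ref{coro:ZN} (together with the trivial bound $F_{r}\le C$ uniformly in $r\ge 2$, coming from the explicit product form once one checks $|x_{i}|,|y_{j}|\le \mathrm{const}$). This gives the limiting identity
\[
\mathbb{E}\left\vert \int_{D}e^{i\beta h}\right\vert^{2k}=\int_{D^{\otimes 2k}}\left(\frac{\prod_{i<j}|x_{i}-x_{j}|\,|y_{i}-y_{j}|}{\prod_{i,j}|x_{i}-y_{j}|}\right)^{\beta^{2}}d\vec{x}\,d\vec{y}.
\]
Comparing with \eqref{def:Ham} and \eqref{def:ZNbeta}, the integrand equals $\exp(-\tfrac{\beta'}{2}w_{k}(\vec x,\vec y))$ with $\beta'/2=\beta^{2}$, i.e. $\beta'=2\beta^{2}$, which identifies the right-hand side as $Z_{k,2\beta^{2}}$.

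The main obstacle in this plan is the $\varepsilon\to 0$ passage: the bound on the regularized kernel, the Gale--Shapley matching argument, and the uniform-integrability estimate must fit together precisely in the full subcritical range $\beta\in(0,\sqrt{2})$. The $r\to\infty$ step, by contrast, is essentially soft once Corollary \ref{coro:ZN} is invoked to produce an $r$-independent dominating function.
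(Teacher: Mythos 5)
Your plan tracks the paper's proof of Lemma \ref{2kmom} step by step: regularization by circle averages, the Gaussian covariance calculus yielding \eqref{prelim}, the $\varepsilon\to 0$ passage via the Gale--Shapley matching and the decoupling estimate from \cite[Lemma A.2]{LRV} to identify \eqref{prelim} with \eqref{coulomb}, and finally the $r\to\infty$ limit by dominated convergence using the uniform boundedness of $F_r$ on $D^{\otimes 2k}$. Up to that point the proposal is the paper's argument in summary form.

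The one step you add is the explicit matching of the limiting integral with $Z_{k,2\beta^2}$ via $\beta'/2=\beta^2$, and this is where I would ask you to pause. That matching is correct only if $\WN$ counts each unordered pair exactly once. But \eqref{def:Ham} as printed has the same-sign sums over ordered pairs $i\neq j$ (each unordered pair twice) and the opposite-sign sum over all $(i,j)$ (once), which gives $e^{-\frac{\beta'}{2}\WN}=\prod_{i<j}|x_i-x_j|^{\beta'}|y_i-y_j|^{\beta'}\prod_{i,j}|x_i-y_j|^{-\beta'/2}$; no single $\beta'$ then reproduces the uniform exponent $\beta^2$ coming out of the Gaussian computation. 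The convention that is actually consistent with the final display in the proof of Lemma \ref{lem:WIPP}, and with the remark in Section \ref{appGP} that $\WN$ ``counts each pairwise interaction twice'', has a factor of $2$ on the cross sum as well; in that case $e^{-\frac{\beta'}{2}\WN}=\prod_{i<j}|x_i-x_j|^{\beta'}|y_i-y_j|^{\beta'}\prod_{i,j}|x_i-y_j|^{-\beta'}$ and the matching gives $\beta'=\beta^2$, not $2\beta^2$. Your derivation silently resolved this ambiguity in the one reading that reproduces the lemma's displayed answer; since the factor feeds directly into the exponent $2/\beta^2$ of Corollary \ref{tail}, you should pin down the normalization of $\WN$ explicitly and re-derive the exponent from it rather than taking $2\beta^2$ on faith.
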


\begin{coro}
\label{tail}For $\beta \in (0,\sqrt{2})$, 
\begin{equation*}
\mathbb{P}\left( \left\vert \int_{D}e^{i\beta h\left( x\right)
}dx\right\vert >x\right) =\exp \left( -c^{\ast }\left( \beta \right) x^{%
\frac{2}{\beta ^{2}}}+o(x^{\frac{2}{\beta ^{2}}})\right) ,
\quad
\mbox{as $x\to\infty$},
\end{equation*}%
and
\begin{equation*}
c^{\ast }\left( \beta \right) =\beta ^{2}\exp \left( -1+\beta ^{-2}\inf_{%
\mathcal{P}_{\text{inv,}1}\left( \Lambda ,\mathcal{X}^{s}\right) }\mathcal{%
\bar{F}}_{\beta }\right) .
\end{equation*}%
\end{coro}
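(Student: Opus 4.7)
Let $X := \left| \int_D e^{i\beta h(x)}dx\right|$. The plan is to combine Lemma \ref{2kmom} with Corollary \ref{coro:ZN} to derive precise asymptotics for the even integer moments $\mathbb{E}[X^{2k}]$, and then to invert these moment asymptotics into tail asymptotics for $X$. Applying Corollary \ref{coro:ZN} to the partition function $Z_{k,2\beta^2}$ (in the range of $\beta$ for which this is legitimate) gives
\begin{equation*}
\log \mathbb{E}[X^{2k}] \;=\; \beta^2 k \log k \;-\; A_\beta\, k \;+\; o(k), \qquad k \to \infty,
\end{equation*}
where $A_\beta$ denotes $\inf_{\probas_{\rm{inv},1}(\Lambda \times \configs)} \fbarbeta$ at the corresponding plasma inverse temperature. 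This super-exponential Stirling-type growth $\sim (k/e)^{\beta^2 k}$ is the moment signature of a random variable with stretched-exponential tails of exponent $2/\beta^2$, and already pins down both the power $x^{2/\beta^2}$ and a candidate constant.

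For the upper bound on $\mathbb{P}(X > x)$ I would apply Markov's inequality with the $2k$-th moment and optimize over $k$. The continuous minimizer of $\beta^2 k \log k - A_\beta k - 2k \log x$ is $k^*(x) = x^{2/\beta^2} e^{A_\beta/\beta^2 - 1}$, at which the expression equals $-\beta^2 k^*(x) = -c^*(\beta)\, x^{2/\beta^2}$. Taking $k = \lfloor k^*(x) \rfloor$ and absorbing the rounding error into the $o(x^{2/\beta^2})$ correction yields the desired upper bound.

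For the matching lower bound I would invoke a Tauberian-type argument. Setting $Y := X^{2/\beta^2}$ rewrites the above asymptotic as
\begin{equation*}
\log \mathbb{E}[Y^p] \;=\; p \log p \;-\; p\bigl(1 + \log c^*(\beta)\bigr) \;+\; o(p), \qquad p \to \infty,
\end{equation*}
which matches, to leading order, the Stirling expansion of $\Gamma(p+1)/c^*(\beta)^p = \mathbb{E}[Z^p]$ for $Z$ exponential of rate $c^*(\beta)$. A classical Tauberian theorem in the spirit of Kasahara, or equivalently a Laplace / saddle-point analysis of the identity $\mathbb{E}[Y^p] = p\int_0^\infty y^{p-1}\mathbb{P}(Y > y)\,dy$ around the saddle $y \sim p/c^*(\beta)$, then converts this into $\log \mathbb{P}(Y > y) = -c^*(\beta)\, y\,(1 + o(1))$, which is the claimed tail after undoing the substitution $y = x^{2/\beta^2}$.

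The main obstacle lies precisely in the sharpness of the constant in the lower bound. A straightforward second-moment bound of Paley--Zygmund type, $\mathbb{P}(X > M) \ge \mathbb{E}[X^{2k}]^2 / (4\,\mathbb{E}[X^{4k}])$, does recover the correct stretched-exponential rate $x^{2/\beta^2}$, since $\mathbb{E}[X^{4k}]/\mathbb{E}[X^{2k}]^2$ is only exponentially small in $k$, but it inflates the prefactor by a multiplicative constant of order $2\log 2 \cdot e$. Pinning down the exact Stirling constant $\beta^2 e^{-1 + A_\beta/\beta^2}$ forces one to invoke the full Tauberian machinery or to perform the saddle-point calculation by hand, making essential use of the $o(k)$-precise moment asymptotic furnished by Corollary \ref{coro:ZN} rather than merely a first-order version of it.
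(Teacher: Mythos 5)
Your upper bound is the paper's argument: apply Chebyshev with the $2k$-th moment, combine Lemma~\ref{2kmom} with Corollary~\ref{coro:ZN}, and optimize the continuous relaxation in $k$. Your optimizer $k^{*}(x)=e^{A_\beta/\beta^{2}-1}x^{2/\beta^{2}}$, where $A_\beta:=\inf_{\probas_{\rm{inv},1}(\Lambda\times\configs)}\fbarbeta$, is exactly the paper's $\beta^{-2}c^{*}(\beta)x^{2/\beta^{2}}$, and your remark that the $o(k)$-precision in \eqref{expanZN}, rather than just the leading $\tfrac{\beta}{2}k\log k$, is what pins down the constant, is correct.

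Your lower bound, however, is a genuinely different route and is also where you stop short of a proof. You propose a Kasahara-type Tauberian inversion (equivalently, a saddle-point analysis of $\mathbb{E}[Y^{p}]=p\int_{0}^{\infty}y^{p-1}\mathbb{P}(Y>y)\,dy$), and you correctly diagnose why a Paley--Zygmund bound misses the constant by the factor $2e\log 2$. The paper avoids Tauberian machinery: it constructs an explicit nonnegative random variable $Y$ with density equal to $\exp(-c^{*}(\beta)x^{2/\beta^{2}})$ for $x>C_{1}$ (and constant on $[0,C_{1}]$), verifies directly that $\log\mathbb{E}[Y^{2k}]=\log\mathbb{E}[X^{2k}]+o(k)$, and then shows that the integer moments of $(1-\delta)Y^{2}-C_{2}$ are dominated by those of $X^{2}$, upgrades this moment domination to tail domination, and lets $\delta\to0$. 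To turn your sketch into a proof you would still have to supply (i) moment determinacy, which here follows from the Carleman condition since $\mathbb{E}[X^{2k}]^{1/2k}\asymp k^{\beta^{2}/2}$ and $\beta^{2}/2<1$, and (ii) the moment-to-tail inversion itself. On point (ii), note that your substitution $Y=X^{2/\beta^{2}}$ places the target tail in the exponential class $\alpha=1$, which is precisely the degenerate boundary of Kasahara's theorem (the conjugate exponent $\alpha/(\alpha-1)$ diverges); it is cleaner to work with $X$ unsubstituted, where $\alpha=2/\beta^{2}>1$ is squarely inside the theorem's range, or to carry out the saddle-point bound by hand using the monotonicity of the tail. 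As it stands your lower bound is a plausible blueprint but leaves the key inversion as a black box, whereas the paper completes it with a self-contained comparison argument.
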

(See \eqref{def:fbarbeta} for the definition of $\mathcal{\bar{F}}_{\beta }$.)

\begin{proof}
By Chebyshev's inequality and Lemma \ref{2kmom},%
\begin{equation*}
\log \mathbb{P}\left( \left\vert \int_{D}e^{i\beta h\left( x\right)
}dx\right\vert >x\right) \leq \log \frac{\mathbb{E}\left\vert
\int_{D}e^{i\beta h\left( x\right) }dx\right\vert ^{2k}}{x^{2k}}=\log
Z_{k,2\beta ^{2}}-2k\log x.
\end{equation*}%
Apply Corollary \ref{coro:ZN}, choose $k^{\ast }$ to optimize the right hand
side while neglecting the $o\left( k\right) $ term in $\log Z_{k,2\beta
^{2}} $, we obtain $k^{\ast }=\lfloor \beta ^{-2}c^{\ast }\left( \beta
\right) x^{\frac{2}{\beta ^{2}}}\rfloor $, and%
\begin{equation*}
\mathbb{P}\left( \left\vert \int_{D}e^{i\beta h\left( x\right)
}dx\right\vert >x\right) \leq \exp \left( -c^{\ast }\left( \beta \right) x^{%
\frac{2}{\beta ^{2}}}+o(x^{\frac{2}{\beta ^{2}}})\right) .
\end{equation*}

For the lower bound, fix constants $C_{1},d_{1}$, such that $%
C_{1}d_{1}+\int_{C_{1}}^{\infty }\exp \left( -c^{\ast }\left( \beta \right)
x^{\frac{2}{\beta ^{2}}}\right) dx=1$. Let $Y$ be the non-negative random
variable whose p.d.f is given by%
\[
f\left( x\right) =\left\{
\begin{array}{cc}
d_{1} & \text{if }0\leq x\leq C_{1} \\
\exp \left( -c^{\ast }\left( \beta \right) x^{\frac{2}{\beta ^{2}}}\right)
& \text{if }x>C_{1}%
\end{array}%
\right. .
\]
An explicit computation gives $\log \mathbb{E}Y^{2k}=\log \mathbb{E}%
\left\vert \int_{D}e^{i\beta h\left( x\right) }dx\right\vert ^{2k}+o\left(
k\right) $. Given $\delta >0$, denote $Y_{\delta }=\left( 1-\delta \right)
Y^{2}$. Then there exists $k_{0}\left( \delta \right) $, such that for all $%
k>k_{0}\left( \delta \right) $, $\mathbb{E}Y_{\delta }^{k}\leq \mathbb{E}%
\left\vert \int_{D}e^{i\beta h\left( x\right) }dx\right\vert ^{2k}$. Take $%
C_{2}=C_{2}\left( \delta \right) <\infty $, such that for all $k\in \mathbb{N%
}$, $\mathbb{E}\left( Y_{\delta }-C_{2}\right) ^{k}\leq \mathbb{E}\left\vert
\int_{D}e^{i\beta h\left( x\right) }dx\right\vert ^{2k}$. Therefore the tail
of $\left\vert \int_{D}e^{i\beta h\left( x\right) }dx\right\vert $ dominates
the tail of $Y_{\delta }-C_{2}$. Then for $x>\sqrt{C_{1}}$,
\begin{eqnarray*}
\mathbb{P}\left( \left\vert \int_{D}e^{i\beta h\left( x\right)
}dx\right\vert ^{2}>x^{2}\right)  &\geq &\mathbb{P}\left( Y_{\delta
}-C_{2}>x^{2}\right)  \\
&=&\mathbb{P}\left( Y>\sqrt{\frac{x^{2}+C_{2}}{1-\delta }}\right)  \\
&=&\exp \left( -c^{\ast }\left( \beta \right) \left( \frac{x^{2}+C_{2}}{%
1-\delta }\right) ^{1/\beta ^{2}}\right) .
\end{eqnarray*}%
Taking $x$ sufficiently large yields  the lower bound.
\end{proof}

\end{appendix}

\bibliographystyle{alpha}
\bibliography{2D2CP}

\end{document}